\documentclass[english]{myjflart}	
\usepackage[utf8]{inputenc}
\usepackage[T1]{fontenc}



\usepackage[utf8]{inputenc}


\usepackage{enumerate}
\usepackage{array}
\usepackage{tikz-cd}


\usepackage{mathtools}
\usepackage{stmaryrd}
\usepackage{cmll}




\usepackage[%
  full,		
]{optional}


\author{Colin Riba}
\author{Solal Stern}
\authorrunning{Riba and Stern}

\affil{  ENS Lyon, Universit{\'e} de Lyon, LIP%
  \footnote{UMR 5668 CNRS ENS Lyon UCBL INRIA}
}

\newcommand\mytitle{Liveness Properties in Geometric Logic for Domain-Theoretic Streams}


\usepackage{macros}
\newcommand{\fntext}{}
\newcommand{\fn}{\footnote{\fntext}}

\newcommand\colin[1]{\opt{draft}{{\color{red}\sffamily #1}}}

\jfla{35}{2024}

\title{\mytitle}

\begin{document}

%


\maketitle

\begin{abstract}
We devise a version of Linear Temporal Logic ($\LTL$) on a denotational
domain of streams.
We investigate this logic in terms of domain theory, (point-free) topology
and geometric logic.
This yields the first steps toward
an extension of the ``Domain Theory in Logical Form'' paradigm
to temporal liveness properties.

We show that the negation-free formulae of $\LTL$ induce
sober subspaces of streams,
but that this is in general not the case in presence of negation.
We propose a direct, inductive, translation of negation-free $\LTL$ to geometric logic.
This translation reflects the approximations
used to compute the usual fixpoint representations of $\LTL$ modalities.

As a motivating example,
we handle a natural input-output
specification for the usual $\filter$ function on streams.
\end{abstract}

\section{Introduction}
\label{sec:intro}

%

We are interested in input-output
properties of higher-order programs that handle
infinite 
data,
such as streams or non-wellfounded trees.
Consider for instance the usual $\filter$ function
\[
\begin{array}{r c l}
  \filter
& :
& (\Base \to \Bool)
  ~\longto~
  \Stream \Base
  ~\longto~
  \Stream \Base
\\


  \filter\ p\ (a \Cons x)
& =
& \term{if}~ (p\ a)
  ~\term{then}~ a \Cons (\filter\ p\ x)
  ~\term{else}~ (\filter\ p\ x)
\end{array}
\]

\noindent
where $\Stream\Base$ stands for the type of streams on $\Base$.
Assume $p : \Base \to \Bool$ is a total function that
tests for a property $P$.
If $x$ is a stream on $\Base$,
then $(\filter\ p\ x)$ retains those elements of $x$ which satisfy $P$.
The stream produced by $(\filter\ p\ x)$ is thus only partially defined,
unless $x$ has infinitely many elements satisfying $P$.

Logics like $\LTL$, $\CTL$ or the modal $\mu$-calculus are
widely used to formulate, on infinite objects,
safety and liveness properties (see e.g.~\cite{hr07chapter,bs07chapter}).
Safety properties state that
some ``bad'' event will not occur,
while liveness properties 
specify that ``something good'' will happen 
(see e.g.~\cite{bk08book}).
One typically uses
temporal modalities like $\Box$
(\emph{always})
or $\Diam$
(\emph{eventually})
to write properties of streams and specifications of programs over such data.

\renewcommand\fntext{In the setting of~\cite{jr21esop},
we would assume $\Base = \sum_{i=1}^n \one$,
with $\Phi_i$ representing the image of the $i$th injection.}

A possible specification for 
$\filter$
asserts that $(\filter\ p \ x)$ is a totally defined stream whenever
$x$ is a totally defined stream with infinitely many elements satisfying $P$.
We express this with the temporal modalities $\Box$ and $\Diam$.
Let $\Base$ be finite, and assume given,
for each $a$ of type $\Base$,
a formula $\Phi_a$ which holds on $b : \Base$
exactly when $b$ equals $a$.\fn\ 
Then
$\Box \bigvee_a \Phi_a$
selects those streams on $\Base$
which are totally defined.
The formula $\Box\Diam P$ expresses that a stream
has infinitely many elements satisfying $P$.
We can thus state that for all streams $x : \Stream\Base$,
\begin{equation}
\label{eq:intro:spec}
\begin{array}{c}
  \text{$x$ satisfies $\Box\bigvee_a \Phi_a$ and $\Box\Diam P$}
  \quad\longimp\quad
  \text{$(\filter\ p\ x)$ satisfies $\Box\bigvee_a \Phi_a$}
\end{array}
\end{equation}


The question we address in this paper is the following.
Having in mind that a stream (as opposed to e.g.\ an integer)
is inherently an infinite object,
what do we mean exactly by ``the stream $x$ satisfies $\Box\Diam P$''?
In our view, the above specification for $\filter$
should hold for any stream whatsoever, and not only
for those definable in a given programming language.

This leads us to investigate temporal properties
on infinite datatypes at the level of denotational semantics.
Logics on top of domains are known since quite a long time.
Our reference is the paradigm of
``Domain Theory in Logical Form'' (DTLF) \cite{abramsky91apal}
(see also~\cite{zhang91book}),
which allows one to systematically generate a logic from a 
domain representing a type.
These logics are actually obtained by Stone duality,
which
is at the core of a
rich interplay between domain theory, logic and (point-free) topology.
This area is presented under various perspectives in a number of sources.
We refer to~\cite{abramsky91apal,ac98book}
and (e.g.)
\cite{johnstone82book,vickers89book,vickers07chapter,goubault13book,gg23book}.
Some key ideas are put at work in~\cite{cz00csl}.


However, logics on domains given by Stone duality
are usually restricted to safety properties.
To our knowledge, there is no systematic investigation
of liveness properties, such as the ones used
in the specification for $\filter$ above.

This paper reports on preliminary works, mostly based on an internship
of the second author during summer 2023.
We devise a version of the logic $\LTL$
on a domain of streams $\I{\Stream\Base}$
determined by the recursive type equation
$\Stream \Base \cong \Base \times \Stream\Base$.
Each formula $\Phi$ of $\LTL$ yields a subset $\I\Phi \sle \I{\Stream\Base}$.
We investigate such $\LTL$-definable subsets
in terms of domain theory, of (point-free) topology
and of a logic called geometric logic.


Our first step is to view domains as topological spaces, so as to
benefit from the rich notion of subspace.
For instance,
(with $\Base$ finite)
the set of $\omega$-words
$\Base^\omega = \I{\Box \bigvee_a \Phi_a}$
turns out to be a
discrete sub-poset of $\I{\Stream\Base}$.
But as a subspace of $\I{\Stream\Base}$,
it becomes equipped with its usual product topology
(in the sense of e.g.~\cite{pp04book}).
We observe that $\LTL$ formulae without negation
induce subspaces of $\I{\Stream\Base}$
which are sober,
but that this may fail in presence of negation.
The notion of sobriety originates from point-free topology, and has become 
quite important for
the general (point-set) topology of domains (see e.g.~\cite{goubault13book}).

%
%
%
%

We then turn to geometric logic.
The idea is roughly the following. 
DTLF
rests on the fact that
finite approximations in a domain
can be represented in a 
propositional logic
generated from the topology of the domain.
But this is too weak to handle infinitary properties
such as those definable with the modalities $\Box$ and $\Diam$.
On the other hand, the sobriety of $\I\Phi$ means that we can
reason using an abstract notion of approximation induced by the subspace
topology.
Geometric logic is an infinitary propositional logic
which allows for concrete representations of topologies.
We provide a direct, inductive, translation of negation-free $\LTL$
to a geometric logic based on the domain $\I{\Stream\Base}$.
This translation reflects the approximations
used to compute the usual fixpoint representations of $\Box,\Diam$.
This shows that for the negation-free fragment, the semantics
of $\LTL$ can be concretely represented by approximations which
live in a natural extension of DTLF for the domain $\I{\Stream\Base}$.

We also 
check
that our translation of negation-free $\LTL$
indeed
conveys the good approximations to prove that the denotation of $\filter$
meets the specification~\eqref{eq:intro:spec} above.

\renewcommand\fntext{Actually, as well as
e.g.~\cite{nukt18lics,su23popl}, despite a fundamentally different
approach (see~\S\ref{sec:conc}).}

Let us finally mention the scientific context of this work.
It is undecidable whether a given higher-order program satisfies a 
given input-output temporal property written with formulae
of the modal $\mu$-calculus~\cite{ktu10popl}.
A previous work with the first author
provided a refinement type system for proving such properties~\cite{jr21esop}.
This type system handles the alternation-free modal $\mu$-calculus
on (finitary) polynomial types, which includes $\LTL$.
But it is based on guarded recursion
and does not allow for non-productive functions such as $\filter$.
We ultimately target a similar refinement type systems
for a language based on $\FPC$
(which extends Plotkin's seminal $\PCF$~\cite{plotkin77tcs} with
recursive types, see e.g.~\cite{pierce02book}).
We think that the present work is a significant step in this direction.
On the one hand,
DTLF allows for reasoning on denotations using (finitary)
type systems~\cite{abramsky91apal}.
On the other hand,
it has been advocated in~\cite{kt14lics} that a form of oracle
is needed to handle liveness properties in type systems.
And indeed, \cite{jr21esop}
incorporates
such oracles in a notion of ``iteration term'',
which in fact makes the system infinitary.\fn\
We think that our representation of negation-free $\LTL$ in geometric
logic can lead to an infinitary type system which extends~\cite{abramsky91apal},
and whose infinitary part
can be simulated using iteration terms.
%

\paragraph{Organization of the paper.}
The preliminary~\S\ref{sec:prelim} introduces background on domain theory,
and the logic $\LTL$ on 
$\I{\Stream\Base}$.
The (point-free) topological approach is presented in~\S\ref{sec:frames},
and~\S\ref{sec:geom} is devoted to geometric logic.
\opt{full}{Section~\ref{sec:free} deals with a deduction system for geometric
logic, in connection with the notion of spatiality. }%
The specification of $\filter$ is discussed in~\S\ref{sec:scott}.
We conclude in~\S\ref{sec:conc}.
%
\opt{short}{Proofs are available in the full version~\cite{rs23full},
which also contains additional material on deduction for geometric logic.}%
\opt{long}{Proofs are available in the Appendices,
which also contain additional material on deduction for geometric logic.}%
\opt{full}{Proofs are available in the Appendices.}


\colin{TODO FOR SUBMISSION:
\begin{itemize}
\item Enable final option (for final version).
\end{itemize}}

%
%
%

\section{A Linear Temporal Logic on a Domain of Streams}
\label{sec:prelim}

Let $\Base$ be a set.
A (finite) \emph{word} on $\Base$ is an element of $\Base^*$.
$\Base^\omega$
is the set of \emph{$\omega$-words} on $\Base$, i.e.\ the set of all
functions $\sigma \colon \NN \to \Base$.
We write $\word \sle \wword$ when 
$\word \in \Base^*$ is a prefix of $\wword \in \Base^* \cup \Base^\omega$.
The concatenation of 
$\word \in \Base^*$ with $\wword \in \Base^* \cup \Base^\omega$
is denoted $\word \cdot \wword$ or $\word \wword$.
Given $\iword \in \Base^\omega$ and $k \in \NN$,
we let $\iword\restr k \in \Base^\omega$
be the $\omega$-word with $(\iword\restr k)(n) = \iword(k+n)$
for all $n \in \NN$.
For instance, $\iword\restr 0$ is $\iword$, while
\(
  \iword\restr 1
  =
  \iword(1) \cdot \iword(2) \cdots \iword(n+1) \cdots
\)
is $\iword$ deprived from its first letter.

\subsection{Domains}
\label{sec:prelim:domain}

The basic idea of domain theory is to represent a type
by partial order $(X,\leq_X)$ thought about as an ``information order''.
The intuition is that $x \leq_X y$ means that
$y$ has ``more information'' than $x$, or that
$x$ is ``less defined'' than $y$.
Domains are often required to have a least element
(representing plain divergence),
and are always asked to be stable under certain supremums
(so that infinite objects can be thought about as limits of their finite approximations).
Our presentation mostly follows~\cite[\S 1]{ac98book}.
See also~\cite{abramsky91apal,goubault13book}.

\paragraph{Dcpos and Cpos.}
Let $(X,\leq)$ be a partial order (or \emph{poset}).
An \emph{upper bound} of a subset $\SP \sle X$
is an element $x \in X$ such that $(\forall s \in \SP)(s \leq x)$.
A \emph{least upper bound} (or \emph{supremum}, \emph{sup}) of $\SP$
is an upper bound $\ell$ of $\SP$ such that 
$\ell \leq x$ for every 
upper bound $x$ of $\SP$.
The sup of $\SP$ is unique whenever it exists, 
and is usually denoted $\bigvee \SP$.
The notion of \emph{greatest lower bound}
(or \emph{infimum}, \emph{inf}) is defined dually.
A subset $D \sle X$ is \emph{directed} if $D$ is non-empty and
for every $x,y \in D$,
there is some $z \in D$ such that $x \leq z$ and $y \leq z$.
%


We say that $(X,\leq)$ is a \emph{dcpo} if every directed $D \sle X$
has a sup $\bigvee D \in X$.
A \emph{cpo} is a dcpo with a least element (usually denoted $\bot$).
Note that each set $\Base$ is a dcpo for the discrete order
(in which $x$ is comparable with $y$ if, and only if, $x = y$).
However, such a dcpo $\Base$ is not a cpo unless $\Base$ is a singleton.

\begin{exem}[Flat Domains]
Given a set $\Base$,
the \emph{flat domain} $\I\Base$ is the disjoint union $\{\bot\} + \Base$
equipped with the partial order $\leq_{\I\Base}$, where
$x \leq_{\I\Base} y$
iff $x = y$ or ($x = \bot$ and $y \in \Base$).


It is easy to see that
$(\I\Base,\leq_{\I\Base})$ is a cpo
whose directed subsets have at most
one element from $\Base$.
For instance, the domain $\I\Bool$ can be represented by the following Hasse diagram.
\[
\begin{tikzcd}[row sep=tiny]
  \term{tt}
&
& \term{ff}
\\
& \bot
  \arrow[dash]{ul}
  \arrow[dash]{ur}
\end{tikzcd}
\]
\end{exem}
\paragraph{Scott-Continuous Functions.}
Let $X = (X,\leq_X)$ and $Y = (Y,\leq_Y)$ be dcpos.
A function $f \colon X \to Y$ is \emph{Scott-continuous}
if $f$ is monotone
($x \leq_X x'$ implies $f(x) \leq_Y f(x')$)
and if moreover $f$ preserves directed sups, in the sense that
for each directed $D \sle X$, we have
\[
\begin{array}{l l l}
  f(\bigvee D)
& =
& \bigvee \left\{f(d) \mid d \in D \right\}
\end{array}
\]

\noindent
We write $\CPO$ (resp.\@ $\DCPO$) for the category with
cpos (resp.\@ dcpos) as objects and with
Scott-continuous functions as morphisms. 
We say that $f \in \CPO\funct{X,Y}$ is strict if $f(\bot_X) = \bot_Y$.
A non-strict monotone map between flat domains is necessarily constant.

Given dcpos $X = (X,\leq_X)$ and $Y = (Y,\leq_Y)$, the set of
Scott-continuous functions $\DCPO\funct{X,Y}$
is itself a dcpo w.r.t.\ the \emph{pointwise order}
\[
\begin{array}{l l l}
  f \leq_{\DCPO\funct{X,Y}} g
& \text{iff}
& \forall x \in X,~
  f(x) \leq_{Y} g(x)
\end{array}
\]


\noindent
If $Y$ is actually a cpo, then $\DCPO\funct{X,Y}$ is a cpo whose
least element is the constant function
$x \in X \mapsto \bot_Y \in Y$, where $\bot_Y$ is the least element of $Y$.

\begin{exem}[Streams]
Let $\Base$ be a set.
We let
$\I{\Stream\Base}$, the cpo of \emph{streams over $\Base$},
be $\DCPO\funct{\NN,\I\Base}$ with $\NN$ discrete.
We unfold this important example.
Since $\NN$ is discrete, $\I{\Stream\Base}$ actually consists
of the set $\I\Base^\omega$ equipped with the partial order
\[
\begin{array}{l !{\quad\text{iff}\quad} l}
  \stream \leq_{\I{\Stream\Base}} \streambis
& \forall n \in \NN,~ \stream(n) \leq_{\I\Base} \streambis(n)
\end{array}
\]

\noindent
A set $D \sle \I{\Stream\Base}$ is directed if, and only if,
$D$ is non-empty and
each $D(n) = \left\{ x(n) \mid x \in D\right\}$
has at most one element from $\Base$.
Then $\bigvee D \in \I{\Stream\Base}$
takes $n \in \NN$ to
the largest element of $D(n)$.
The least element of $\I{\Stream\Base}$ is the
stream
$\bot^\omega$ of constant value $\bot \in \I\Base$.

Note that $\I{\Stream\Base}$ has ``partially defined'' elements.
Besides the least element $\bot^\omega$,
we have e.g.\ the stream $\word \cdot \bot^\omega$
(which agrees with $\word \in \Base^*$ and then is $\bot$ at all sufficiently
large positions) or $(a \cdot \bot)^\omega$ (which is $a$ at all even positions,
and is $\bot$ everywhere else).
The $\omega$-words on $\Base$ are precisely those streams
$\stream \in \I{\Stream\Base}$
which never take
the value $\bot$. Such streams are called \emph{total}.
Note that if $\stream$ is total, then
\[
\begin{array}{*{3}{l}}
  \stream
& =
& \bigvee
  \left\{
    \word \cdot \bot^\omega \mid
   \text{$\word \in \Base^*$ and $\word \sle \stream$}
  \right\}
\end{array}
\]

%
%
\end{exem}

\begin{remark}
\label{rem:prelim:stream:domain-equation}
The cpo $\I{\Stream\Base}$ is 
the usual solution in the category $\CPO$
of the
\emph{domain equation}
\[
\begin{array}{l l l}
  X
& \cong
& \I\Base
  \times
  X
\end{array}
\]

\noindent
(where $\I\Base \times X$ is equipped with the pointwise order),
see e.g.~\cite[Theorem 7.1.10 and Proposition 7.1.13]{ac98book}.
In particular, the 
constructor $(- \Cons -)$ of the type $\Stream\Base$
is interpreted as the isomorphism
taking
$(a,\stream) \in \I\Base \times \I{\Stream\Base}$
to
$a \cdot \stream \in \I{\Stream\Base}$,
with inverse $\stream \mapsto (\stream(0), \stream\restr 1)$.
Note that $\I{\Stream\Base}$ differs from the usual \emph{Kahn domain}
$\Base^* \cup \Base^\omega$
(see e.g.~\cite[Definition 3.7.5 and Example 5.4.4]{vickers89book}
or~\cite[\S 7.4]{dst19book}, see also~\cite{vvk05concur}).
\end{remark}
\begin{remark}
\label{rem:prelim:filter}
Each 
$f \colon X \to_\CPO X$
has a \emph{least fixpoint}
$\term{Y}(f) \deq \bigvee_{n \in \NN} f^n(\bot) \in X$.
In particular, $\filter$ is interpreted as the
Scott-continuous function $\I\filter$
taking
$p \colon \I\Base \to_{\CPO} \I\Bool$
to the least fixpoint of the following function $f_p$,
where $X$ is the cpo
$\I{\Stream\Base} \to_{\CPO} \I{\Stream\Base}$.
\[
\begin{array}{*{7}{l}}
  f_p
& \deq
& \lambda g.\lambda \stream.~
  \term{if}~ p (\stream(0))
  ~\term{then}~ \stream(0) \cdot g(\stream \restr 1)
  ~\term{else}~ g(\stream \restr 1)
& :
& X
& \longto_{\CPO}
& X
\end{array}
\]
%
\end{remark}

%
%

\paragraph{Algebraicity.}
Among the many good properties of $\I{\Stream\Base}$,
algebraicity is the crucial one in this work.
This property is not used right away, but will be the
main assumption of various statements later on.

\renewcommand\fntext{Finite elements are called \emph{compact}
in~\cite{ac98book}.}
Let $(X,\leq)$ be a dcpo.
We say that $x \in X$ is \emph{finite}\fn\ if for every directed
$D \sle X$ such that $x \leq \bigvee D$, there is some $d \in D$
such that $x \leq d$. 
We say that $X$ is \emph{algebraic} if for every $x \in X$,
the set $\{d \in X \mid \text{$d$ finite and $d \leq x$} \}$
is directed and has sup $x$.
Each discrete or flat dcpo is algebraic.


\begin{exem}[Streams]
\label{ex:prelim:stream:algebraic}
The cpo $\I{\Stream\Base}$ is algebraic, and its finite
elements admit a particularly simple description.
The \emph{support} of $x \in \I{\Stream\Base}$ is
the set $\supp(x)$ of ``defined letters'' of $x$:
\[
\begin{array}{l l l}
  \supp(x)
& \deq
& \left\{ n \in \NN \mid x(n) \neq \bot \right\}
\end{array}
\]

\noindent
We say that a stream $\stream$ has finite support
when $\supp(\stream)$ is a finite set.
For instance, given a finite word $\word \in \Base^*$
and $n \in \NN$,
the stream $\bot^n \cdot \word\cdot \bot^\omega$ has finite support.
On the other hand, total streams, as well as e.g. $(a\cdot \bot)^\omega$,
do not have finite support.

For each $x \in \I{\Stream\Base}$,
the set 
\(
 \left\{
   d \mid
   \text{$d$ of finite support and $d \leq_{\I{\Stream\Base}} \stream$}
  \right\}
\)
is directed and has sup $\stream$.
Moreover, 
the finite elements of $\I{\Stream\Base}$
are exactly those of finite support.
%
\end{exem}

\subsection{Linear Temporal Logic ($\LTL$)}
\label{sec:prelim:ltl}

\paragraph{Syntax and Semantics.}
Let $\Base$ be a set.
The formulae of $\LTL = \LTL(\Base)$ 
are given by 
\[
\begin{array}{l l l}
  \Phi,\Psi
& \bnf
& a
  \gs
  \True
  \gs
  \False
  \gs
  \Phi \land \Psi
  \gs
  \Phi \lor \Psi
  \gs
  \lnot \Phi
  \gs
  \Next \Phi
  \gs
  \Phi \Ushort \Psi
  \gs
  \Phi \Wshort \Psi
\end{array}
\]

\noindent
where $a \in \Base$.
Hence, besides pure propositional logic, $\LTL(\Base)$
has \emph{atomic formulae} $a \in \Base$, and \emph{modalities}
$\Next \Phi$ (read ``next $\Phi$''),
$\Phi \Ushort \Psi$ (read ``$\Phi$ until $\Psi$'')
and $\Phi \Wshort \Psi$ (read ``$\Phi$ weak until $\Psi$'' or ``$\Phi$ unless $\Psi$''). 

The $\LTL$ formulae over $\Base$ are usually interpreted
on $\omega$-words over $\Base$, see e.g.~\cite[\S 5]{bk08book}.
The interpretation of modalities actually implicitly relies on the bijection
$\Base^\omega \cong \Base \times \Base^\omega$.
We similarly rely on the isomorphism
$\I{\Stream\Base} \cong_{\CPO} \I\Base \times \I{\Stream\Base}$
for interpreting $\LTL(\Base)$ formulae in $\I{\Stream\Base}$.
We define $\MI\Phi \sle \I{\Stream\Base}$ by induction on $\Phi$. 
The propositional connectives of $\LTL$ are interpreted using the usual
Boolean algebra structure of the powerset $\Po(\I{\Stream\Base})$.
For $a \in \Base$,
we let $\I{a} \deq \{ \stream \in \I{\Stream\Base} \mid \stream(0) = a\}$.
The modalities are interpreted as follows.
\[
\begin{array}{r c l}

  \I{\Next \Phi}
& \deq
& \left\{ \stream \in \I{\Stream\Base} \mid
  \stream \restr 1 \in \I\Phi
  \right\}
\\


  \I{\Phi \Ushort \Psi}
& \deq
& \left\{
  \stream \in \I{\Stream\Base} \mid
  \exists i \in \NN,~
  \stream \restr 0,\dots, \stream \restr(i-1) \in \I\Phi
  ~\text{and}~
  \stream \restr i \in \I{\Psi}
  \right\}
\\


  \I{\Phi \Wshort \Psi}
& \deq
& \left\{
  \stream \in \I{\Stream\Base} \mid
  \forall i \in \NN,~ \stream \restr i \in \I\Phi
  \right\}
  \cup
  \I{\Phi \Ushort \Psi}

\end{array}
\]


\noindent
We say that $\stream \in \I{\Stream\Base}$ \emph{satisfies} a formula
$\Phi$ (notation $\stream \forces \Phi$) when $\stream \in \MI\Phi$.
It is often convenient to decompose 
$\I{\Next \Phi}$ as $\I\Next(\I\Phi)$,
where
$\I\Next \colon \Po(\I{\Stream\Base}) \to \Po(\I{\Stream\Base})$
takes $\SP$ to
\(
\left\{ \stream \in \I{\Stream\Base} \mid \stream\restr 1 \in \SP \right\}
\).
The modalities $\Ushort$ and $\Wshort$ may not be easy to grasp.
Given $\LTL$ formulae $\Phi$ and $\Psi$,
we let
\[
\begin{array}{r !{~~} r c l !{~~} r !{=} l}
  \text{``eventually $\Psi$''}
& \Diam \Psi
& \deq
& \True \Ushort \Psi
& \big(
  \I{\Diam\Psi}
& \left\{
  \stream \in \I{\Stream\Base} \mid
  \exists i \in \NN,~ \stream \restr i \in \I\Psi
  \right\}
  \big)
\\

  \text{``always $\Phi$''}
& \Box \Phi
& \deq
& \Phi \Wshort \False
& \big(
  \I{\Box\Phi}
& \left\{
  \stream \in \I{\Stream\Base} \mid
  \forall i \in \NN,~ \stream \restr i \in \I\Phi
  \right\}
  \big)
\end{array}
\]

\begin{exem}
\label{ex:ltl:base}
Consider a stream $\stream \in \I{\Stream\Base}$.
\begin{enumerate}[(1)]
\item
We have
$\stream \forces \Next a$ if, and only if, $\stream(1) = a$.
For instance, $\bot a \bot^\omega \forces \Next a$
but $a \bot^\omega \not\forces \Next a$.

\item
We have $\stream \forces \Diam a$ if, and only if,
$\stream(i) = a$ for some $i \in \NN$.
For instance, $\bot^n a \bot^\omega \forces \Diam a$ for every $n \in \NN$.
But $b^\omega \not\forces \Diam a$ if $b \neq a$.

\item
\label{ex:ltl:base:total}
We have $\stream \forces \Box a$ if, and only if, $\stream = a^\omega$.
If $\Base$ is finite, then $\stream$ is total
iff
$\stream \forces \Box \bigvee_{a \in \Base} a$.

\item
We have $\stream \forces \Box\Diam a$ if, and only if,
$\stream(i) = a$ for infinitely many $i \in \NN$.
E.g.\ $(\bot a)^\omega \forces \Box\Diam a$.

\item
We have
$\stream \forces \Diam\Box a$ if, and only if,
$\stream(i) = a$ for ``ultimately all $i \in \NN$''.
This means that for some $n \in \NN$, we have $\stream(i) = a$
for all $i \geq n$.
For instance,
$\bot^n a^\omega \forces \Diam\Box a$
for all $n \in \NN$.
But $(\bot a)^\omega \not\forces \Diam\Box a$.
\end{enumerate}
\end{exem}

Say that $\Phi$ and $\Psi$
are (logically) \emph{equivalent},
notation $\Phi \equiv \Psi$, if $\I\Phi = \I\Psi$.
$\LTL$ has many redundancies w.r.t.\ logical equivalence.
Besides the usual De Morgan laws,
we have e.g.
\[
\begin{array}{l !{\equiv} l !{\qquad} l !{\equiv} l !{\qquad} l !{\equiv} l}
  \lnot \Box \Phi
& \Diam \lnot \Phi

& \lnot \Diam \Phi
& \Box \lnot \Phi

& \Phi \Wshort \Psi
& (\Phi \Ushort \Psi) \lor \Box \Phi
\end{array}
\]

\begin{remark}
\label{rem:ltl:fix}
The modalities $\Ushort$ and $\Wshort$ are also ``De Morgan'' duals,
in the following sense.
Given $\Phi$ and $\Psi$, it is well-known that
$\I{\Phi \Ushort \Psi}$ and $\I{\Phi \Wshort \Psi}$
are respectively the least and the greatest fixpoint of the
(monotone) map on $\Po(\I{\Stream\Base})$
taking $\SP$
to $\I\Psi \cup \left( \I\Phi \cap \I\Next(\SP) \right)$.
See e.g.~\cite[Lemmas 5.18 and 5.19]{bk08book}.
But $\Po(\I{\Stream\Base})$ is a complete atomic Boolean algebra, and
given a monotone endo-function $f$
on such a Boolean algebra,
the least and greatest fixpoints of $f$ are related by
$\lfp(f) = \lnot \gfp(b \mapsto \lnot f(\lnot b))$
and
$\gfp(f) = \lnot \lfp(b \mapsto \lnot f(\lnot b))$.
%
%
\end{remark}

\paragraph{Negation-Free $\LTL$.}
Our main positive results only hold for the negation-free fragment of $\LTL$.
An $\LTL$ formula is  \emph{negation-free}
(\emph{n.-f.})
if it contains no negation ($\lnot(-)$).
Hence, the negation-free formulae of $\LTL$ are generated by the above
grammar for $\LTL$, but without the production $\lnot \Phi$.

\begin{exem}
\label{ex:ltl:nf}
All formulae of Example~\ref{ex:ltl:base} are negation-free.
Moreover, 
the negation-free fragment is closed under $\Box(-)$ and $\Diam(-)$.


Assume $\Base$ is finite.
For any $\SP \sle \Base$,
there is a negation-free formula $\Psi_\SP$ such that
$\stream \forces \Psi_\SP$ iff $\stream(0) \in \SP$.
It follows that for any Scott-continuous $p \colon \I\Base \to \I\Bool$,
there is a negation-free formula $\Psi_p$ such that
$\stream \forces \Box\Diam\Psi_p$ if, and only if,
$\stream$ has infinitely many elements satisfying $p$.
\end{exem}

Most
redundancies of $\LTL$ mentioned above disappear in the
negation-free fragment.
This is why we have chosen this set of connectives from the start.
In negation-free $\LTL(\Base)$,
all connectives have a De Morgan dual.
But negated atomic formulae ($\lnot a$ for $a \in \Base$)
are 
not available.
Hence, in contrast with positive normal forms
(see e.g.~\cite[Definition 5.20]{bk08book}),
negation is \emph{not} definable in negation-free $\LTL$.
This positive character 
is reflected in the following fundamental fact,
proved by induction on formulae.

\begin{lemm}
\label{lem:ltl:up}
If $\Phi$ is n.-f.\ then $\I\Phi$
is upward-closed 
(if $\stream \in \I\Phi$ and $\stream \leq_{\I{\Stream\Base}} \streambis$
then $\streambis \in \I\Phi$).
\end{lemm}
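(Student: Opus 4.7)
The plan is to proceed by induction on the structure of the negation-free formula $\Phi$, exploiting the fact that the operations involved in the semantics --- taking a letter and shifting --- interact well with the pointwise order on $\I{\Stream\Base}$.

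For the base cases, $\True$ and $\False$ are trivial. For $\Phi = a$ with $a \in \Base$, suppose $\stream \in \I{a}$ so that $\stream(0) = a$, and let $\stream \leq_{\I{\Stream\Base}} \streambis$. By definition of the pointwise order we have $a = \stream(0) \leq_{\I\Base} \streambis(0)$ in the flat domain $\I\Base$. Since every element of $\Base$ is maximal in $\I\Base$, this forces $\streambis(0) = a$, i.e.\ $\streambis \in \I{a}$. (This is precisely where negation would break the argument: $\I{\lnot a}$ contains $\bot^\omega$ but not $a \cdot \bot^\omega$.)

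For the propositional inductive cases ($\land$ and $\lor$), the upward-closed subsets of $\I{\Stream\Base}$ are closed under arbitrary unions and intersections, so these follow immediately from the induction hypotheses. For the modalities, the key auxiliary observation is that for each $k \in \NN$, the shift map $\stream \mapsto \stream\restr k$ is monotone on $\I{\Stream\Base}$; indeed $(\stream \restr k)(n) = \stream(k+n)$ so $\stream \leq \streambis$ entails $\stream \restr k \leq \streambis \restr k$ pointwise. Then for $\Next \Phi$, if $\stream \restr 1 \in \I\Phi$ and $\stream \leq \streambis$, then $\streambis \restr 1 \geq \stream \restr 1$, and the induction hypothesis applied to $\Phi$ gives $\streambis \restr 1 \in \I\Phi$. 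The cases of $\Phi \Ushort \Psi$ and $\Phi \Wshort \Psi$ are handled in the same way: a witnessing index $i$ (resp.\ a universal quantification over $i$) for $\stream$ transfers to $\streambis$ via the monotonicity of each $(-)\restr j$ together with the induction hypotheses on $\Phi$ and $\Psi$.

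There is no real obstacle beyond correctly identifying that the positive character of the atomic case relies on maximality of $\Base$ inside $\I\Base$, and that every modality in the grammar is built from monotone operations on $\Po(\I{\Stream\Base})$ once restricted to upward-closed arguments.
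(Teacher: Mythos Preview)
Your proof is correct and follows essentially the same route as the paper: induction on the structure of $\Phi$, using flatness of $\I\Base$ for the atomic case, closure of upward-closed sets under unions and intersections for the propositional connectives, and monotonicity of the shift maps $\stream \mapsto \stream\restr i$ for the modalities.
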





\begin{coro}
\label{cor:ltl:dcpo}
Let $\Phi$ be negation-free.
Then $\I\Phi$ is closed in $\I{\Stream\Base}$ under directed sups.
Moreover,
the inclusion $\I\Phi \emb \I{\Stream\Base}$
is a Scott-continuous order-embedding.
\end{coro}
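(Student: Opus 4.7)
The plan is to derive the corollary as a routine consequence of Lemma~\ref{lem:ltl:up}. The only nontrivial content has already been isolated there: that $\I\Phi$ is upward-closed in $\I{\Stream\Base}$ when $\Phi$ is negation-free.

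First, I would establish closure under directed sups. Let $D \sle \I\Phi$ be directed, and let $\bigvee D$ denote its supremum computed in $\I{\Stream\Base}$ (which exists since $\I{\Stream\Base}$ is a cpo). By definition of a directed set, $D$ is non-empty, so pick any $d \in D$. Then $d \leq_{\I{\Stream\Base}} \bigvee D$ and $d \in \I\Phi$, so upward-closedness from Lemma~\ref{lem:ltl:up} yields $\bigvee D \in \I\Phi$. This shows that the sup of $D$ computed in $\I{\Stream\Base}$ belongs to $\I\Phi$, and in particular that $\I\Phi$ endowed with the restricted order is itself a dcpo.

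For the second assertion, the inclusion $\iota \colon \I\Phi \emb \I{\Stream\Base}$ is an order-embedding by construction, since $\I\Phi$ carries the order inherited from $\I{\Stream\Base}$. For Scott-continuity, if $D \sle \I\Phi$ is directed, then the previous step shows that $\bigvee D$ (taken in $\I{\Stream\Base}$) lies in $\I\Phi$ and therefore is also the sup of $D$ in $\I\Phi$. Hence $\iota(\bigvee D) = \bigvee \iota(D)$, which is precisely the preservation of directed sups.

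There is no real obstacle here: once Lemma~\ref{lem:ltl:up} is in hand, the argument is just the standard observation that any non-empty upward-closed subset of a dcpo is closed under directed sups, and that such an inclusion is automatically a Scott-continuous order-embedding.
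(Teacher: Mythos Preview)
Your proposal is correct and matches the paper's (implicit) approach: the paper states the result as a corollary of Lemma~\ref{lem:ltl:up} without further argument, and you have supplied exactly the routine details one would expect---upward-closedness plus non-emptiness of directed sets gives closure under directed sups, and the inclusion is then automatically a Scott-continuous order-embedding.
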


Hence, $\I\Phi$ is a sub-dcpo of $\I{\Stream\Base}$ when $\Phi$ is n.-f.
But this may not give much information on $\I\Phi$.
For instance, 
$\Base^\omega = \I{\Box\bigvee_{a \in \Base} a}$
(Example~\ref{ex:ltl:base}(\ref{ex:ltl:base:total}), $\Base$ finite)
is a discrete dcpo.
%
Building on Lemma~\ref{lem:ltl:up},
we are going to exhibit much more structure on 
such inclusions $\I\Phi \emb \I{\Stream\Base}$.
But before, we note that
Lemma~\ref{lem:ltl:up} and Corollary~\ref{cor:ltl:dcpo} may fail
in presence of negation.

\begin{exem}
\label{ex:ltl:counter}
Consider the formula $\lnot \Box a$.
Note that $a^\omega \not\forces \lnot \Box a$.
But for every finite $d \leq_{\I{\Stream\Base}} a^\omega$,
we have $d \forces \lnot \Box a$.
Hence $\I{\lnot \Box a}$ is not upward-closed.
Moreover, $\{\text{$d$ finite} \mid d \leq_{\I{\Stream\Base}} a^\omega\}$
is a directed subset of $\I{\lnot \Box a}$ which has no sup in $\I{\lnot \Box a}$.
Hence $\I{\lnot \Box a}$ is not a dcpo w.r.t.\ the restriction of
$\leq_{\I{\Stream\Base}}$.
\end{exem}

\section{The Topological Approach}
\label{sec:frames}

We shall now look at inclusions
$\I\Phi \emb \I{\Stream\Base}$ 
from a topological perspective.
We recall in~\S\ref{sec:frames:spaces}
that the categories $\CDCPO$ can be embedded in the category $\Top$
of topological spaces and continuous functions.
The highlight is that $\Top$ has a 
much richer notion of substructures (called subspaces)
than $\CDCPO$.

Actually, when looking at (d)cpos as topological spaces,
the notion of sobriety
from point-free (or ``element-free'') topology
comes to the front.
Ample mathematical justifications for
the importance of sober spaces in domain theory
are gathered in~\cite{goubault13book}.
We shall content ourselves with
more informal motivations in~\S\ref{sec:frames:frames}.
In~\S\ref{sec:frames:sub}, we abstractly prove that $\I\Phi$ 
induces a sober subspace of $\I{\Stream\Base}$
when $\Phi$ is negation-free.
\opt{full,long}{This will be refined to concrete representations
in~\S\ref{sec:geom} and~\S\ref{sec:free}, using geometric logic.}%
\opt{short}{This will be refined to concrete representations
in~\S\ref{sec:geom} (and also~\cite[\S 5]{rs23full}), using geometric logic.}

\subsection{Topological Spaces}
\label{sec:frames:spaces}

\noindent
A \emph{topological space} is a pair $(X,\Open(X))$ of a set $X$
and a collection $\Open = \Open(X)$ of subsets of $X$, called \emph{open sets}.
$\Open$
is called a \emph{topology} on $X$,
and is asked to be stable under arbitrary unions and under finite intersections.
In particular, $\emptyset$ and $X$ are open in $X$
(respectively as the empty union and the empty intersection).

A set $C \sle X$ is \emph{closed} if its complement $X \setminus C$ is open.
Closed sets are stable under finite unions and arbitrary
intersections.
Hence, any $\SP \sle X$ is contained in a least closed
set $\clos \SP  \sle X$. 
Each space $(X,\Open)$ is equipped with a \emph{specialization} (pre)order
$\leq_\Open$ on $X$, defined as
\[
\begin{array}{l l l}
  x \leq_\Open y
& \text{iff}
& (\forall U \in \Open)
  \left(
  x \in U
  ~\longimp~
  y \in U
  \right)
\end{array}
\]

\noindent
Given $x \in X$, we have
$\clos{\{x\}} = \down x \deq \{ y \in X \mid y \leq_{\Open} x\}$
(see e.g.~\cite[Lemma 4.2.7]{goubault13book}).
A topology $\Open$ 
is \emph{$T_0$} when $\leq_\Open$ is a partial order
(see e.g.~\cite[Proposition 4.2.3]{goubault13book}).

Given spaces $(X,\Open(X))$ and $(Y,\Open(Y))$,
a function $f \colon X \to Y$ is \emph{continuous}
when its inverse image $f^{-1} \colon \Po(Y) \to \Po(X)$
restricts to a function $\Open(Y) \to \Open(X)$,
i.e.\ when $f^{-1}(V) \in \Open(X)$ for all $V \in \Open(Y)$.
We write $\Top$ for the category of topological spaces and
continuous functions.
An \emph{homeomorphism} is an isomorphism in $\Top$.

\paragraph{The Scott Topology.}
The following is well-known.
See e.g.~\cite[\S 1.2]{ac98book} or~\cite[\S 4]{goubault13book}.


Let $(X,\leq_X)$ be a dcpo.
A subset $U \sle X$ is \emph{Scott-open} if $U$ is upward-closed,
and if moreover $U$ is inaccessible by directed sups,
in the sense that if $\bigvee D \in U$
with $D \sle X$ directed,
then $D \cap U \neq \emptyset$.
This equips $X$ with a $T_0$ topology, called the \emph{Scott topology},
whose specialization order coincides with $\leq_X$.
Note that $C \sle X$ is Scott-closed precisely when $C$ is downward-closed
and stable under directed sups.

\begin{exem} 
\label{ex:topo:stream}
When $(X,\leq)$ is algebraic, the sets
$\up d \deq \{x \in X \mid d \leq x\}$ with $d$ finite
form a sub-basis for the Scott topology.
For instance,
the Scott-open subsets of $\I{\Stream\Base}$ are arbitrary unions
of sets of the form
\[
\begin{array}{l l l}
  \up d
& =
& \left\{
  x \in \I{\Stream\Base} \mid
  \forall i \in \supp(d),~
  x(i) = d(i)
  \right\}
\end{array}
\]

\noindent
with $\supp(d)$ finite.
In particular, given $x \in U$ with $U \sle \I{\Stream\Base}$ Scott-open,
there is a \emph{finite} set $\{i_1,\dots,i_k \} \sle \NN$ 
such that
\(
  \left\{ y \mid y(i_1)=x(i_1),\dots,y(i_k)=x(i_k)
  \right\}
  \sle U
\).

Beware that $\I\Phi$ may not be an open
nor a closed subset of $\I{\Stream\Base}$, even when $\Phi$ is negation-free.
Consider for instance
$\Base^\omega = \I{\Box\bigvee_{a \in \Base} a}$
(with $\Base$ non-empty and finite),
which contains $a^\omega$ but no finite $d \leq_{\I{\Stream\Base}} a^\omega$.
\end{exem}

A function $f \colon X \to Y$ between dcpos is
Scott-continuous precisely when $f$ is continuous w.r.t.\
the Scott-topologies on $X$ and $Y$.
It follows that $\DCPO$ and $\CPO$ are full subcategories of $\Top$.
From now on, we shall mostly look at $\CDCPO$ in this way.
Unless stated otherwise, dcpos will always be equipped with their Scott
topology.

\paragraph{Subspaces.}
Our motivation for moving from $\CDCPO$ to $\Top$
is that $\Top$ has a rich notion of subspace.
We refer to~\cite[\S 1.2]{bbt20book}.
Given a 
space $(X,\Open)$ and a subset $P \sle X$,
the \emph{subspace topology} on $P$ is
\[
\begin{array}{l l l}
  \Open\restr P
& \deq
& \left\{
  U \cap P \mid U \in \Open
  \right\}
\end{array}
\]

\noindent
The subspace topology on $P$ makes the inclusion function
$\incl \colon P \emb X$ continuous.
It is the ``best possible'' topology on $P$
in the following sense:
given a space $(Y,\Open(Y))$,
a function $f \colon Y \to P$ is continuous if, and only if,
the composition $\incl \comp f \colon Y \to X$ is continuous.
\[
\begin{tikzcd}[row sep=tiny]
& P
  \arrow[hookrightarrow]{dr}{\incl}
\\
  Y
  \arrow{ur}{f}
  \arrow{rr}[below]{\incl \comp f}
&
& X
\end{tikzcd}
\]

\begin{exem}
\label{ex:topo:omegaword}
Generalizing Example~\ref{ex:ltl:base}(\ref{ex:ltl:base:total})
and Lemma~\ref{lem:ltl:up},
$\Base^\omega$ is a discrete sub-dcpo of $\I{\Stream\Base}$.

On the other hand, 
the subspace topology
$\Open(\I{\Stream\Base}) \restr \Base^\omega$
is the usual \emph{product topology}
on $\Base^\omega$
(see e.g.~\cite[\S III]{pp04book} or~\cite{kechris95book}).
The sets of the form $\Base^\omega \cap \up d$
(with $d$ finite) form a sub-basis for
the subspace topology.
In fact,
its opens
are 
unions of sets of the form
$\{ \iword \in A^\omega \mid \iword(i_1) = a_1, \dots, \iword(i_k) = a_k \}$,
where
$i_1,\dots,i_k \in \NN$ and $a_1,\dots,a_k \in \Base$ with $k \geq 0$.
\end{exem}


\subsection{The Element-Free Setting}
\renewcommand\fntext{Dually, the knowledge of the whole $\omega$-word
$\iword$ may be needed to testify that $\iword \notin U$.}
\label{sec:frames:frames}
The topological setting comes with an intrinsic notion of approximation.

Consider for instance $\omega$-words $\iword \in \Base^\omega$
(Example~\ref{ex:topo:omegaword}).
Similarly as with streams in Example~\ref{ex:topo:stream},
given an $\omega$-word $\iword$ and an open $U$, if $\iword$ belongs to $U$,
then this fact is witnessed
by the knowledge of a finite number of elements of $\iword$.\fn\
We view the opens $U$ such that $\iword \in U$ as 
approximations of $\iword$.

Given a space $(X,\Open)$, we are interested in 
describing the elements of $X$ by 
their approximations, represented as suitable sets of opens $\Filt \sle \Open$.
This is the realm of \emph{element-free} (or \emph{point-free}) \emph{topology}.
Its central objects, called
\emph{frames} (or \emph{locales}),
abstract away from the elements of spaces, and only retain
the lattice structure of open sets.
Besides \cite{goubault13book},
we refer to~\cite{johnstone82book,johnstone83bams,vickers89book,pp12book,pp21book}.
\paragraph{Frames.}
A \emph{complete lattice} is a poset having all sups and all infs.
But recall (from e.g.~\cite[Theorem 2.31]{dp02book})
that a poset has all sups if, and only if, it has all infs.
Hence, we can see complete lattices indifferently as posets with all sups or
as posets with all infs.

A \emph{frame} is a poset $L$ with all sups (and thus all infs),
and which satisfies the following \emph{frame distributive law}:
for all $\SP \sle L$ and all $a \in L$,
\[
\begin{array}{l l l}
  a \wedge \bigvee \SP
& =
& \bigvee \left\{ a \wedge s \mid s \in \SP \right\}
\end{array}
\]


\renewcommand\fntext{Consider e.g.\ a finite (and thus complete) non-distributive
lattice, see~\cite[Example 4.6(6)]{dp02book}.}
\noindent
Not every complete lattice is a frame.\fn\
But every $(\Po(X),\sle)$ is a frame,
and so is the two-elements poset
$\two \deq \{0 \leq 1\}$.

Given frames $L$ and $K$,
a \emph{frame morphism} $f \colon L \to K$
is a function which preserves all sups and all \emph{finite} infs.
Note that frame morphisms are automatically monotone.
We write $\Frm$ for the category of frames and frame morphisms.

\begin{exem}
Let $(X,\Open(X))$ be a space.
Then $\Open(X)$ has all sups and they are given by unions.
Hence $\Open(X)$ is a complete lattice.
Beware that the inf in $\Open(X)$ of an arbitrary $\SP \sle \Open(X)$
is in general not its intersection $\bigcap \SP$,
but the \emph{interior} of $\bigcap \SP$
(the largest open set contained in $\bigcap \SP$).
However, \emph{finite} infs in $\Open(X)$ are given by intersections,
and $\Open(X)$ is a frame.
In fact, the topologies on a given set $X$ correspond exactly to
the sub-frames of $\Po(X)$.


Moreover, if $f \colon X \to Y$ is continuous,
then its inverse image
$f^{-1}$ restricts to a function $\Open(Y) \to \Open(X)$.
This function is actually a frame morphism
$\Open(f) \in \Frm\funct{\Open(Y), \Open(X)}$.
In other words, the operation $(X,\Open(X)) \mapsto \Open(X)$
extends to a functor $\Open$ from the category $\Top$ to $\Frm^\op$,
the opposite of $\Frm$.
The category $\Frm^\op$ is the category of \emph{locales}.
\end{exem}

\paragraph{The Space of Points.}
We see a frame $L$ as a collection of formal approximations.
Suitable subsets of $L$ describe ``converging'' sets of formal approximations,
and constitute the elements of a space, the space of points of $L$.
The idea is as follows.
Given a space $X$ and $x \in X$, let
\[
\begin{array}{l l l}
  \Filt_x
& \deq 
& \left\{
  U \in \Open(X) \mid x \in U
  \right\}
\end{array}
\]

%
%

\noindent
Note the following properties of $\Filt_x$
w.r.t.\ the frame structure of $\Open(X)$.
First,
$\Filt_x$ is stable under finite intersections
($x \in X$, and $x \in  U \cap V$ iff $x \in U$ and $x \in V$).
Second, given $\SP \sle \Open(X)$ with $\bigcup \SP \in \Filt_x$,
we have
$U \in\Filt_x$ for some $U \in \SP$
(if $x \in \bigcup \SP$ then $x \in U$ for some $U \in \SP$).
Hence, the characteristic function of $\Filt_x \sle \Open(X)$
is a frame morphism $\Open(X) \to \two$.

A \emph{point} of a frame $L$ is an element of $\pt(L) \deq \Frm\funct{L,\two}$.
We shall always identify a point $\Filt \in \Frm\funct{L,\two}$
with the set $\{a \in L \mid \Filt(a) = 1 \}$.
Given $a \in L$, let
\[
\begin{array}{l l l}
  \ext(a)
& \deq
& \left\{ \Filt \in \pt(L) \mid a \in \Filt \right\}
\end{array}
\]

\noindent
The function
$\ext \colon L \to \Po(\pt(L))$ is a frame morphism
(see e.g.~\cite[Lemma II.1.6]{johnstone82book}).
In particular, its image is a sub-frame of $\Po(\pt(L))$,
and is thus a topology $\Open(\pt(L))$ on $\pt(L)$.
The \emph{space of points} of $L$ is $(\pt(L),\Open(\pt(L))$.

The operation $L \mapsto \pt(L)$ extends to a functor
$\pt \colon \Frm^\op \to \Top$
which is right adjoint to $\Open$ (see e.g.~\cite[Theorem II.1.4]{johnstone82book}).
The action of $\pt \colon \Frm^\op \to \Top$
on a frame morphism $f \colon L \to K$
is the continuous function $\pt(K) \to \pt(L)$
which takes $\Filt \in \Frm\funct{K, \two}$ to $\Filt \comp f \in\Frm\funct{L, \two}$.
The unit at $X \in \Top$ of the adjunction $\Open \adj \pt$
is the continuous function $\eta_X \colon X \to \pt(\Open(X))$
taking $x$ to $\Filt_x$
(see e.g.~\cite[\S II.1.6]{johnstone82book}).

\paragraph{Sober Spaces.}
\renewcommand\fntext{Actually,
\cite[Lemma II.1.6(ii)]{johnstone82book} states
that each $T_2$ space is sober.}
The function $\eta_X \colon X \to \pt(\Open(X))$
continuously maps the space $X$ to its space of
``converging formal approximations'' $\pt(\Open(X))$.
But $\pt(\Open(X))$ may not correctly represent $X$.
A space $X$ is \emph{sober} if $\eta_X$ is a bijection
(in which case $\eta_X$ is automatically an homeomorphism,
see \cite[\S II.1.6]{johnstone82book}).

Given a frame $L$, the space of points $\pt(L)$ is always sober
(\cite[Lemma II.1.7]{johnstone82book}).
Hence (by functoriality of $\Open$ and $\pt$),
if $X$ is homeomorphic to $\pt(L)$, then $X$ is sober as well.
It follows from~\cite[Lemma II.1.6(ii)]{johnstone82book}
that $\Base^\omega$ is sober for its product topology.\fn\
But not every dcpo is sober (\cite[II.1.9]{johnstone82book}).
For the following, see e.g.~\cite[Theorem VII.2.6]{johnstone82book}.

\begin{prop} 
\label{prop:frames:dcposober}
Algebraic dcpos are sober.
\end{prop}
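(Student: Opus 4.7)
My plan is to verify that the unit
$\eta_X \colon X \to \pt(\Open(X))$, $x \mapsto \Filt_x$,
of the $\Open \adj \pt$ adjunction is a bijection; by the parenthetical remark in~\S\ref{sec:frames:frames}, this automatically makes it a homeomorphism. Injectivity is immediate because the Scott topology on a dcpo is $T_0$ with specialization order equal to $\leq_X$: if $\Filt_x = \Filt_y$, then $x$ and $y$ belong to exactly the same opens, hence $x \leq_X y$ and $y \leq_X x$. All the work lies in surjectivity.

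Fix a completely prime filter $\Filt \in \Frm\funct{\Open(X), \two}$. The key tool is algebraicity, which gives that every Scott-open decomposes as a union of basic opens $\up d$ with $d$ finite (refining Example~\ref{ex:topo:stream} beyond $\I{\Stream\Base}$). The candidate representing point is
\[
  x^* \deq \bigvee D,
  \qquad
  D \deq \{d \in X \mid d \text{ finite},~ \up d \in \Filt\}.
\]
Before this definition is legitimate, I must show that $D$ is directed in $X$. Non-emptiness follows from $X \in \Filt$ (frame morphisms preserve the top) together with complete primality applied to the decomposition $X = \bigcup_{d \text{ finite}} \up d$. For directedness, given $d_1, d_2 \in D$, the open $\up d_1 \cap \up d_2$ lies in $\Filt$ (finite meets are preserved); decomposing it into basic opens and using complete primality again yields a finite $d$ with $\up d \in \Filt$ and $d \in \up d_1 \cap \up d_2$, i.e., $d \geq d_1$ and $d \geq d_2$. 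Hence $x^* = \bigvee D$ exists in $X$.

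It then remains to verify $\Filt = \Filt_{x^*}$ by two inclusions. If $U \in \Filt$, write $U = \bigcup\{\up d \mid d \in U \text{ finite}\}$; complete primality produces a finite $d \in U$ with $\up d \in \Filt$, so $d \in D$, hence $d \leq x^*$, and upward-closure of $U$ gives $x^* \in U$. Conversely, if $x^* \in U$ with $U$ open, Scott-openness of $U$ combined with $x^* = \bigvee D$ directed produces a $d \in D$ already lying in $U$; since $d$ is finite and $U$ upward-closed, $\up d \sle U$, and then $\up d \in \Filt$ propagates to $U \in \Filt$ because a completely prime filter is automatically upward-closed in $\Open(X)$ (from $V \sle U$ and the preservation of finite meets).

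The main obstacle is the directedness of $D$. This is where both algebraicity and the point-structure of $\Filt$ are indispensable: without algebraicity one cannot refine $\up d_1 \cap \up d_2$ into a union of basic opens, and without complete primality one could not extract a single $\up d \in \Filt$ from such a union to serve as the required upper bound. Everything else amounts to standard bookkeeping with Scott-opens and the basis of finite approximations.
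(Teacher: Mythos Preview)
Your argument is correct and is essentially the standard proof one finds in the reference the paper cites: the paper does not actually give a proof of Proposition~\ref{prop:frames:dcposober} but simply defers to~\cite[Theorem VII.2.6]{johnstone82book}. So there is nothing to compare at the level of strategy; you have supplied the details that the paper omits. One small remark: the paper's Example~\ref{ex:topo:stream} only states that the sets $\up d$ with $d$ finite form a \emph{sub-basis}, whereas your argument needs (and correctly uses) the stronger fact that they form a \emph{basis}---every Scott-open is a union of such sets. This follows immediately from algebraicity and the definition of Scott-open, exactly as you use it in the two inclusions for $\Filt = \Filt_{x^*}$, but it is worth making the word ``basis'' explicit since the paper is slightly imprecise on this point.
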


\begin{remark}
\label{rem:frames:scott}
In fact, a sober space
is always $T_0$, and is moreover a dcpo w.r.t.\ its specialization order
(\cite[Lemmas II.1.6(i) and II.1.9]{johnstone82book}).
This provides a functor $\Frm^\op \to \DCPO$
which is actually right adjoint to the composite
$\DCPO \emb \Top \to \Frm^\op$,
yielding the \emph{Scott adjunction} of~\cite{diliberti22act}.

In particular, a sober dcpo is completely determined by the specialization
order of its space of points.
On the other hand, beware that the composite
$\Frm^\op \to \DCPO \to \Frm^\op$
may loose a lot of structure 
(e.g.\ it takes the product topology on $\Base^\omega$
to the discrete (Scott) topology).
\end{remark}

\subsection{Sobriety of Subspaces}
\label{sec:frames:sub}

Proposition~\ref{prop:frames:dcposober} and Remark~\ref{rem:frames:scott}
imply that for an algebraic dcpo $X$, the topological notion of
approximation coincides with the domain-theoretic one.
But we are interested in subspaces of the algebraic cpo $\I{\Stream\Base}$.
Discussing the sobriety of such subspaces
involves going 
further into the point-free setting.
While the main results of this~\S\ref{sec:frames:sub} are important for this paper,
the technical developments
\opt{full,long}{are used again only in~\S\ref{sec:free}.}%
\opt{short}{are used again only in~\cite[\S 5]{rs23full}.}%


Let $(X,\Open)$ be a space, and consider some $P \sle X$.
The subspace inclusion $\incl \colon (P,\Open\restr P) \emb (X,\Open)$
induces
the surjective frame morphism
$\ladj\incl \deq \Open(\incl) \colon \Open \quot \Open\restr P$
which
takes $U \in \Open$ to $(U \cap P) \in \Open\restr P$.
The following is a handy reformulation of sobriety for $(P,\Open\restr P)$.

\begin{lemm}
\label{lem:frames:sub:soberfilt}
Assume that $(X,\Open)$ is sober. Then the following are equivalent.
\begin{enumerate}[(i)]
\item
$(P,\Open\restr P)$ is sober.

\item
For each $x \in X$, we have $x \in P$ if, and only if,
$\Filt_x = \Filtbis \comp \ladj\incl$
for some $\Filtbis \in \pt(\Open\restr P)$.
\[
\begin{tikzcd}[row sep=tiny]
  \Open
  \arrow[twoheadrightarrow]{rr}{\ladj\incl}
  \arrow{dr}[below]{\Filt_x}
&
& \Open\restr P
  \arrow[dashed]{dl}{\Filtbis}
\\
& \two
\end{tikzcd}
\]
\end{enumerate}
\end{lemm}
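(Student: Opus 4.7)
The plan is to reformulate condition (ii) in terms of the continuous map $f \deq \pt(\ladj\incl) \colon \pt(\Open\restr P) \to \pt(\Open)$ and then use sobriety of $X$ to identify $\pt(\Open)$ with $X$ through the homeomorphism $\eta_X$. Under this identification, the factorization $\Filt_x = \Filtbis \comp \ladj\incl$ says exactly that $\Filt_x$ belongs to the image of $f$, so (ii) becomes the assertion that the image of $f$ equals $P$.

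Three routine facts will do the bulk of the work. First, naturality of the unit gives a commuting square $f \comp \eta_P = \eta_X \comp \incl$, since both composites send $p \in P$ to the frame morphism $U \in \Open \mapsto 1$ if $p \in U$ and $0$ otherwise. Second, $f$ is injective: two frame morphisms out of $\Open\restr P$ that agree after precomposition with $\ladj\incl$ must agree on $\ladj\incl(\Open) = \Open\restr P$, and $\ladj\incl$ is surjective by the very definition of the subspace topology. Third, $\eta_P$ is injective, because any subspace of a $T_0$ space is $T_0$ (sets of the form $U \cap P$ with $U \in \Open$ separate points of $P$ as soon as opens of $X$ separate them), and sober spaces are $T_0$.

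With these in hand, (i) $\Rightarrow$ (ii) is a diagram chase: sobriety of $P$ makes $\eta_P$ bijective, whence the image of $f$ is $f(\eta_P(P)) = \eta_X(\incl(P)) = \eta_X(P)$, which equals $P$ once $X$ and $\pt(\Open)$ are identified via $\eta_X$. Conversely, for (ii) $\Rightarrow$ (i), only surjectivity of $\eta_P$ remains to be proved. Given $\Filtbis \in \pt(\Open\restr P)$, its image $f(\Filtbis)$ is a point of $\Open$, hence equals $\Filt_x$ for a unique $x \in X$ by sobriety of $X$; condition (ii) then forces $x \in P$, and the commuting square yields $f(\eta_P(x)) = \Filt_x = f(\Filtbis)$. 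Injectivity of $f$ delivers $\eta_P(x) = \Filtbis$, finishing the proof.

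I expect no genuine obstacle beyond keeping the diagram straight. The only mildly non-obvious ingredient is the injectivity of $f$, which is what really converts (ii) into the surjectivity of $\eta_P$; it is forced entirely by the surjectivity of $\ladj\incl$ built into the subspace topology, together with the fact that a frame morphism is determined by its values on a generating set.
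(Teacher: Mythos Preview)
Your proof is correct and is essentially the same as the paper's, just phrased in slightly more functorial language. The paper writes $\Filt^{\Open\restr P}_x$ where you write $\eta_P(x)$, invokes ``$\ladj\incl$ is surjective hence epi in $\Frm$'' where you invoke injectivity of $f = \pt(\ladj\incl)$, and establishes the forward half of (ii) via a preliminary observation (your naturality square) before reducing the equivalence to the backward half; the two arguments match step for step.
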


Let $L$ be a frame.
A \emph{quotient frame} of $L$ is an isomorphism-class of
surjective frame morphisms $L \quot K$.
We are going to discuss
an abstract but mathematically powerful
representation of the quotient frame $\Open \quot \Open \restr P$.
We use tools from~\cite[\S II.2]{johnstone82book} and~\cite[\S VI.1]{pp12book}
on the dual notion of \emph{sub-locale}.

Everything starts from Galois connections and related adjointness properties,
for which we refer to~\cite[7.23--7.34]{dp02book}.
Fix a frame morphism $f \colon L \to K$.
Since $f \colon L \to K$ preserves all sups, it has an
\emph{upper adjoint} $\radj f \colon K \to L$.
This means that
for all $a \in L$ and all $b \in K$,
\[
  f(a) \leq_K b
  \qquad
  \text{if, and only if,}
  \qquad
  a \leq_L \radj f(b)
\]

\noindent
The pair $(f,\radj f)$ thus forms a Galois connection,
and $\radj f$ is (uniquely) determined by
\[
\begin{array}{l l l}
  \radj f(b)
& =
& \bigvee_L \left\{ a \mid f(a) \leq_K b \right\}
\end{array}
\]

\noindent
The function $\radj f$
is in general not a frame morphism, but it always preserves all infs.

The composition $j \deq \radj f \comp f \colon L \to L$
is a \emph{nucleus} in the sense of~\cite[\S II2.2]{johnstone82book}:
we have (i) $j(a \wedge a') = j(a) \wedge j(a')$,
(ii) $a \leq j(a)$
and (iii), $j(j(a)) \leq j(a)$.
Nuclei are monotone and idempotent.
If $j \colon L \to L$ is a nucleus, then
the set 
$L_j \deq \left\{ a \in L \mid j(a) = a\right\}$
of \emph{$j$-fixpoints}
is a frame and $j \colon L \to L_j$ is a frame morphism
(\cite[Lemma II.2.2]{johnstone82book}).
Note that the finite infs in $L_j$ are those of $L$.
But the sup of $\SP \sle L_j$ in $L_j$ is 
$j(\bigvee_L \SP)$.

%

Consider now a subspace inclusion
$\incl \colon (P,\Open\restr P) \emb (X,\Open)$.
Following~\cite[\S VI.1.1]{pp12book},
we write $\widetilde P$ for the frame of $j$-fixpoints,
where $j \deq \radj\incl \comp \ladj\incl$ and $\radj\incl$ is the upper
adjoint of $\ladj\incl$.
\[
\begin{tikzcd}[column sep=large]
  \Open
  \arrow[twoheadrightarrow]{r}{\ladj\incl}
& \Open\restr P
  \arrow{r}{\radj\incl}
& \Open
\end{tikzcd}
\]


\noindent
We rely on the following description of the nucleus $j$
(from which~\cite[\S VI.1.1]{pp12book}
gives an explicit representation of $j$-fixpoints that we shall not use directly).

\begin{remark}
\label{rem:frames:sub:nucleus}
Given an open $U \in \Open$ of $X$, we have
\[
\begin{array}{l l l}
  j(U)
& =
& \bigcup \left\{ V \in \Open \mid
  V \cap P = U \cap P
  \right\}
\end{array}
\]
\end{remark}

%

The proof of~\cite[Theorem II.2.3]{johnstone82book}
gives Lemma~\ref{lem:frames:sub:quot} below.
Recall that order-isomorphisms preserve all existing sups and infs
(\cite[Lemma 2.27(ii)]{dp02book}).

\begin{lemm}
\label{lem:frames:sub:quot}
The function $\radj\incl \colon \Open\restr P \to \Open$
co-restricts to a frame isomorphism $\radj\incl \colon \Open\restr P \to \widetilde P$.
\end{lemm}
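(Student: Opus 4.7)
The plan is to exhibit $\radj\incl$ and the restriction of $\ladj\incl$ to $\widetilde{P}$ as mutually inverse order isomorphisms, and then invoke the hint from \cite[Lemma 2.27(ii)]{dp02book} to upgrade this to a frame isomorphism. Concretely, I would proceed in the following steps.

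First, since $\ladj\incl$ is surjective, a standard Galois-connection calculation gives $\ladj\incl \comp \radj\incl = \id_{\Open\restr P}$; indeed, for any $V \in \Open\restr P$, picking $U$ with $\ladj\incl(U) = V$ yields $U \leq \radj\incl \ladj\incl(U) = \radj\incl(V)$, whence $V = \ladj\incl(U) \leq \ladj\incl\radj\incl(V)$, and the reverse inequality $\ladj\incl\radj\incl(V) \leq V$ is the counit of the adjunction. In particular, $\radj\incl$ is injective.

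Next, I would show that $\radj\incl$ co-restricts to $\widetilde{P}$. For $V \in \Open\restr P$,
\[
  j(\radj\incl(V))
  ~=~
  \radj\incl(\ladj\incl(\radj\incl(V)))
  ~=~
  \radj\incl(V),
\]
so $\radj\incl(V) \in \widetilde{P}$. Conversely, any $U \in \widetilde{P}$ satisfies $U = j(U) = \radj\incl(\ladj\incl(U))$, so $U$ lies in the image of the co-restricted $\radj\incl$. Combined with injectivity, this shows that $\radj\incl \colon \Open\restr P \to \widetilde P$ is a bijection whose inverse is the restriction of $\ladj\incl$ to $\widetilde{P}$.

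Finally, both $\radj\incl$ and $\ladj\incl$ are monotone, so this bijection is an order isomorphism between the two posets $\Open\restr P$ and $\widetilde{P}$. By \cite[Lemma 2.27(ii)]{dp02book}, an order isomorphism preserves all existing sups and infs; since both sides are frames (the frame structure on $\widetilde{P}$ being the one from~\cite[Lemma II.2.2]{johnstone82book}), the map $\radj\incl \colon \Open\restr P \to \widetilde{P}$ automatically preserves all sups and all finite infs, hence is a frame isomorphism. The main step requiring care is the bookkeeping around the two composites $\ladj\incl\radj\incl = \id$ and $\radj\incl\ladj\incl = j$; once these are set up, the bijection is routine, and the passage from order isomorphism to frame isomorphism is free because the frame operations on a complete lattice are determined by the order.
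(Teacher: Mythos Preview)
Your proof is correct and is precisely the standard argument the paper is pointing to when it says the lemma follows from the proof of~\cite[Theorem II.2.3]{johnstone82book}: use the adjunction identities $\ladj\incl\radj\incl=\id$ (from surjectivity of $\ladj\incl$) and $\radj\incl\ladj\incl=j$ to exhibit $\radj\incl$ and $\ladj\incl|_{\widetilde P}$ as mutually inverse monotone bijections, then conclude via~\cite[Lemma~2.27(ii)]{dp02book}. There is nothing to add.
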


We use
$j = \radj\incl \ladj\incl \colon \Open \quot \widetilde P$
to represent the quotient frame induced by the subspace inclusion
$\incl \colon (P,\Open\restr P) \emb (X,\Open)$.
The frame $\widetilde P$ turns out to be a good tool for studying the sobriety
of $(P,\Open \restr P)$.
Following~\cite[VI.1.3]{pp12book}, given $x \in X$
we let $\widetilde x \deq X \setminus \clos{\{x\}} = X \setminus \down x$.
We shall now see that it is useful
to characterize when
$\widetilde x \in \widetilde P$
(i.e.\ when $j(\widetilde x) = \widetilde x$).

\begin{remark}
\label{rem:frames:sub:widetilde}
Given $x \in X$ and $U \in \Open$, we have $U \sle \widetilde x$
if, and only if, $x \notin U$.
\end{remark}

\begin{lemm}
\label{lem:frames:sub:widefilt}
Let $x \in X$ and
$\widetilde\Filt_x \deq \{ U \in \widetilde P \mid x \in U \}$.
Then
$\widetilde\Filt_x \in \pt(\widetilde P)$
if and only if
$\widetilde x \in \widetilde P$.
\end{lemm}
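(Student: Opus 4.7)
The plan is to reduce both implications to a single sup-primality condition on the nucleus $j$. Since $\widetilde\Filt_x \deq \{U \in \widetilde P \mid x \in U\}$ is automatically upward-closed in $\widetilde P$, contains the top $X \in \widetilde P$, and is closed under binary intersection (meets in $\widetilde P$ being inherited from $\Open$, and $x \in U \cap V$ iff $x \in U$ and $x \in V$), the only non-trivial requirement for $\widetilde\Filt_x$ to be a frame morphism $\widetilde P \to \two$ is the preservation of arbitrary joins. Recalling that $\bigvee_{\widetilde P} \SP = j(\bigcup \SP)$, this amounts to the implication ``$x \in j(\bigcup \SP)$ implies $x \in U$ for some $U \in \SP$'' for every $\SP \sle \widetilde P$, the reverse being free from $\bigcup \SP \sle j(\bigcup \SP)$.

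For the $(\Leftarrow)$ direction I would argue by contraposition. Suppose $\widetilde x \in \widetilde P$, i.e.\ $j(\widetilde x) = \widetilde x$, and let $\SP \sle \widetilde P$ be such that $x \notin U$ for every $U \in \SP$. Remark~\ref{rem:frames:sub:widetilde} gives $U \sle \widetilde x$ for each such $U$, hence $\bigcup \SP \sle \widetilde x$; applying the monotone nucleus $j$ and using the hypothesis yields $j(\bigcup \SP) \sle j(\widetilde x) = \widetilde x$. Since $x \in \down x$ means $x \notin \widetilde x$, this forces $x \notin j(\bigcup \SP)$, as required.

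For the $(\Rightarrow)$ direction I would instantiate the sup-primality condition at the distinguished family $\SP_0 \deq \{U \in \widetilde P \mid x \notin U\}$: as each element of $\SP_0$ misses $x$, sup-preservation gives $x \notin U^\ast \deq j(\bigcup \SP_0) = \bigvee_{\widetilde P} \SP_0$. Being an open $j$-fixpoint missing $x$, $U^\ast \sle \widetilde x$ by Remark~\ref{rem:frames:sub:widetilde}. The remaining step is to prove the converse inclusion $\widetilde x \sle U^\ast$, which together with the previous one gives $\widetilde x = U^\ast \in \widetilde P$. The main obstacle lies precisely here: for each $y \in \widetilde x$ one needs a $U \in \widetilde P$ separating $y$ from $x$, and although $y \not\leq_\Open x$ always furnishes some $V \in \Open$ with $y \in V$ and $x \notin V$, the nucleus $j$ may re-introduce $x$ into $j(V)$, so that constructing a genuine element of $\SP_0$ containing $y$ is the technical heart of the argument.
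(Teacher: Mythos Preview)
Your argument for the $(\Leftarrow)$ direction is correct and is essentially the paper's proof: assume $\widetilde x \in \widetilde P$, take $\SP \sle \widetilde P$ with $x \notin U$ for all $U \in \SP$, use Remark~\ref{rem:frames:sub:widetilde} to get $\bigcup \SP \sle \widetilde x$, then $j(\bigcup \SP) \sle j(\widetilde x) = \widetilde x$, hence $x \notin j(\bigcup \SP)$.

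For the $(\Rightarrow)$ direction you have correctly located the crux, and your worry is well founded. The paper's own proof takes exactly your family $\SP_0 = \{V \in \widetilde P \mid V \sle \widetilde x\}$, derives $j(\bigcup \SP_0) \sle \widetilde x$ from the hypothesis, and then concludes by \emph{asserting} that $\bigcup \SP_0 = \widetilde x$. That assertion is precisely the ``converse inclusion'' you flag as the obstacle, and it is not justified. In fact it can fail, and with it the $(\Rightarrow)$ direction of the lemma as stated: take $X = \{0,1,2\}$ with opens $\emptyset, \{2\}, \{1,2\}, X$ (a sober $3$-chain), $P = \{0,2\}$, and $x = 1$. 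Then $j(\{2\}) = \{1,2\}$, so $\widetilde P = \{\emptyset, \{1,2\}, X\}$ and $\widetilde x = \{2\} \notin \widetilde P$; yet $\widetilde\Filt_x = \{\{1,2\}, X\}$ is a point of the $3$-chain $\widetilde P$. Here $\SP_0 = \{\emptyset\}$ and $\bigcup \SP_0 = \emptyset \neq \widetilde x$.

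So the obstacle you describe is not just ``the technical heart'': there is no way to complete the argument in this generality, and the paper's proof has the same gap. This does not affect the rest of the paper, since only the $(\Leftarrow)$ direction of the lemma is invoked (in the proofs of Proposition~\ref{prop:frames:sub:tilde} and Lemma~\ref{lem:frames:sub:notsober}).
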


\begin{prop}
\label{prop:frames:sub:tilde}
Let $P \sle X$ with $(X,\Open)$ sober.
Then the following are equivalent.
\begin{enumerate}[(i)]
\item
$(P,\Open\restr P)$ is sober.

\item
\label{item:frames:sub:tilde:tilde}
For each $x \in X$, we have $x \in P$ if, and only if,
$\widetilde x \in \widetilde P$.
\end{enumerate}
\end{prop}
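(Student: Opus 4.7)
The plan is to combine the three preceding lemmas. By Lemma~\ref{lem:frames:sub:soberfilt}, since $(X,\Open)$ is sober, it suffices to establish, for each $x \in X$, that $\Filt_x$ factors as $\Filtbis \comp \ladj\incl$ for some $\Filtbis \in \pt(\Open\restr P)$ if and only if $\widetilde x \in \widetilde P$. The bridge between such a factorization and membership in $\widetilde P$ is the frame isomorphism $\radj\incl \colon \Open\restr P \to \widetilde P$ of Lemma~\ref{lem:frames:sub:quot}, whose inverse sends $U \in \widetilde P$ to $\ladj\incl(U) = U \cap P \in \Open\restr P$.

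For the direction $(\Rightarrow)$, I would start from the factorization and observe that for $U \in \widetilde P$ one has $\widetilde\Filt_x(U) = \Filt_x(U) = \Filtbis(U \cap P)$. This exhibits $\widetilde\Filt_x$ as $\Filtbis \comp (\radj\incl)^{-1}$, which is a composition of frame morphisms, so $\widetilde\Filt_x \in \pt(\widetilde P)$; then Lemma~\ref{lem:frames:sub:widefilt} yields $\widetilde x \in \widetilde P$.

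For $(\Leftarrow)$, suppose $\widetilde x \in \widetilde P$, so that $\widetilde\Filt_x \in \pt(\widetilde P)$ by Lemma~\ref{lem:frames:sub:widefilt}. Set $\Filtbis \deq \widetilde\Filt_x \comp \radj\incl \colon \Open\restr P \to \two$, which is a point of $\Open\restr P$ as a composition of frame morphisms. To verify $\Filtbis \comp \ladj\incl = \Filt_x$, I would unfold: for $U \in \Open$, $(\Filtbis \comp \ladj\incl)(U) = \widetilde\Filt_x(j(U))$, which equals $1$ if and only if $x \in j(U)$. Since $U \sle j(U)$ always, the comparison with $\Filt_x(U)$ reduces to the implication $x \in j(U) \Rightarrow x \in U$. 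Here the hypothesis $\widetilde x \in \widetilde P$ is used: if $x \notin U$ then $U \sle \widetilde x$ by Remark~\ref{rem:frames:sub:widetilde}, hence $j(U) \sle j(\widetilde x) = \widetilde x$ by monotonicity of $j$ and the fixpoint assumption, and thus $x \notin j(U)$ by Remark~\ref{rem:frames:sub:widetilde} again.

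The main subtlety is precisely this last implication, which is the only place where the full hypothesis $\widetilde x \in \widetilde P$ intervenes; the rest of the argument is formal bookkeeping along the frame isomorphism from Lemma~\ref{lem:frames:sub:quot}. It is worth noting that this bookkeeping is what forces the use of $\widetilde P$ rather than $\Open\restr P$ directly: it is only on the fixpoint frame $\widetilde P$ that the restriction $\widetilde\Filt_x$ of $\Filt_x$ makes sense as a candidate point.
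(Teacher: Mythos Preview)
Your proof is correct and follows essentially the same strategy as the paper: both reduce via Lemma~\ref{lem:frames:sub:soberfilt} to the equivalence ``$\Filt_x$ factors through $\ladj\incl$'' $\Leftrightarrow$ ``$\widetilde x \in \widetilde P$'', and your $(\Leftarrow)$ direction is verbatim the paper's construction in its proof of (i)$\Rightarrow$(ii).

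The one genuine difference is in the other direction. The paper proves (ii)$\Rightarrow$(i) by contradiction: assuming $x\notin P$ (hence, by (ii), $\widetilde x\notin\widetilde P$), it picks a witness $y\in j(\widetilde x)\setminus\widetilde x$, uses Remark~\ref{rem:frames:sub:nucleus} to find $V\in\Open$ with $y\in V$ and $V\cap P=\widetilde x\cap P$, and derives $\widetilde x\in\Filt_x$. You instead observe directly that the factorization $\Filt_x=\Filtbis\comp\ladj\incl$ exhibits $\widetilde\Filt_x$ as the frame morphism $\Filtbis\comp(\radj\incl)^{-1}$, whence $\widetilde\Filt_x\in\pt(\widetilde P)$ and Lemma~\ref{lem:frames:sub:widefilt} applies. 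This is cleaner: it uses Lemma~\ref{lem:frames:sub:widefilt} symmetrically in both directions and avoids the element-chase, at the small cost of invoking the inverse of the isomorphism from Lemma~\ref{lem:frames:sub:quot} (which, as you note, is just $\ladj\incl$ restricted to $\widetilde P$). The paper's argument, on the other hand, is self-contained in that direction and makes the role of the nucleus description in Remark~\ref{rem:frames:sub:nucleus} more visible.
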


\noindent
In condition~(\ref{item:frames:sub:tilde:tilde}) above,
we actually always have
$\widetilde x \in \widetilde P$ when $x \in P$
(see \cite[VI.1.3.1]{pp12book}).



Proposition~\ref{prop:frames:sub:tilde}
will yield a general sufficient condition for the sobriety
of $(P,\Open \restr P)$ (Theorem~\ref{thm:frames:sub:upsober} below),
from which we will obtain the case of negation-free $\LTL$
(Corollary~\ref{cor:frames:sub:upsober}).

One further step into the point-free setting gives us sharper results.
A space
$(X,\Open)$ is $T_D$ when for each $x \in X$,
there is some open $U \in \Open$ such that $x \in U$ and
$(U \setminus \{x\}) \in \Open$. 
See~\cite[\S I.2]{pp12book}.
It is shown in~\cite[Proposition VI.1.3.1]{pp12book}
that if $X$ is a (possibly not sober) $T_D$ space, then
condition~(\ref{item:frames:sub:tilde:tilde})
of Proposition~\ref{prop:frames:sub:tilde}
holds for \emph{any} $P \sle X$.
It follows that if $X$ is sober \emph{and} $T_D$,
then each $P \sle X$ induces a sober subspace.

Consider now the case of a sober space $(X,\Open)$ which is not $T_D$.
Hence, there is some $x \in X$ such that for all open $U$
with $x \in U$, the set $U \setminus \{x\}$ is not open.

\begin{lemm}
\label{lem:frames:sub:notsober}
Let $x \in X$ as above and set $P \deq X \setminus \{x\}$.
Then $(P,\Open\restr P)$ is not sober.
\end{lemm}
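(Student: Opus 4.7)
The plan is to apply Proposition~\ref{prop:frames:sub:tilde} with the witness $y = x$: since $(X,\Open)$ is sober by hypothesis, to show $(P,\Open\restr P)$ is not sober it suffices to exhibit some $y \in X$ for which the equivalence ``$y \in P$ iff $\widetilde y \in \widetilde P$'' fails. We have $x \notin P$ by definition of $P = X \setminus \{x\}$, so the task reduces to showing $\widetilde x \in \widetilde P$, i.e.\ $j(\widetilde x) = \widetilde x$ where $j = \radj\incl \comp \ladj\incl$.

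By Remark~\ref{rem:frames:sub:nucleus},
\[
  j(\widetilde x) ~=~ \bigcup \left\{ V \in \Open \mid V \cap P = \widetilde x \cap P \right\}.
\]
First I would simplify the right-hand side: since $x \in \down x$ we have $x \notin \widetilde x$, so $\widetilde x \sle P$ and hence $\widetilde x \cap P = \widetilde x$. The inclusion $\widetilde x \sle j(\widetilde x)$ being automatic (take $V = \widetilde x$, noting $\widetilde x$ is open as the complement of the closed set $\clos{\{x\}}$), it remains to prove $V \sle \widetilde x$ for every open $V$ with $V \cap P = \widetilde x$.

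The key step, and where the hypothesis on $x$ is used, is ruling out $x \in V$. Assume for contradiction that $x \in V$. Then $V \cap P = V \setminus \{x\}$, so the assumption $V \cap P = \widetilde x$ reads $V \setminus \{x\} = \widetilde x$. But $\widetilde x = X \setminus \clos{\{x\}}$ is open in $X$, whereas the hypothesis on $x$ says that $V \setminus \{x\}$ is \emph{not} open for any open $V$ containing $x$. This is a contradiction, so $x \notin V$; hence $V = V \setminus \{x\} = V \cap P = \widetilde x$, proving $V \sle \widetilde x$.

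Combining these observations yields $j(\widetilde x) = \widetilde x$, so $\widetilde x \in \widetilde P$ even though $x \notin P$; by Proposition~\ref{prop:frames:sub:tilde}, $(P,\Open\restr P)$ is not sober. The only subtle point is the contradiction step: one must notice that $\widetilde x$ is always open in $X$, so the failure of $T_D$ at $x$ directly forbids any open $V \ni x$ from satisfying $V \setminus \{x\} = \widetilde x$. Apart from that, everything is a straightforward unfolding of the definitions of $\widetilde x$, $\widetilde P$, and the nucleus $j$.
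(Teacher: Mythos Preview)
Your proof is correct. The core step---showing $\widetilde x \in \widetilde P$ via the contradiction that any open $V \ni x$ with $V \setminus \{x\} = \widetilde x$ would make $V \setminus \{x\}$ open---is exactly the argument the paper uses.

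The only difference is in the wrapper: you invoke Proposition~\ref{prop:frames:sub:tilde} directly, so that once $\widetilde x \in \widetilde P$ is established with $x \notin P$, non-sobriety follows immediately. The paper's proof instead goes back to the more primitive Lemma~\ref{lem:frames:sub:soberfilt} (via Lemma~\ref{lem:frames:sub:quot}): it exhibits the point $\widetilde\Filt_x \in \pt(\widetilde P)$ using Lemma~\ref{lem:frames:sub:widefilt} (which needs $\widetilde x \in \widetilde P$), and then checks explicitly that $\Filt_x = \widetilde\Filt_x \comp j$. Your route is more economical since Proposition~\ref{prop:frames:sub:tilde} already packages that last verification; the paper's route is slightly more self-contained in that it does not rely on the full bi-implication of Proposition~\ref{prop:frames:sub:tilde}.
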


\renewcommand\fntext{Actually, each $T_1$ space is $T_D$ (\cite[\S I.2.1]{pp12book}).}
It follows from~\cite[\S I.2.1]{pp12book}
that $\Base^\omega$ is $T_D$
for the product topology.\fn\
But $\I{\Stream\Base}$ is not $T_D$, unless $\Base = \emptyset$.
Consider $a^\omega \in \I{\Stream\Base}$.
Then any Scott-open $U$ containing $a^\omega$ contains
also some finite $d \leq_{\I{\Stream\Base}} a^\omega$.
Hence $U \setminus \{a^\omega\}$ is not upward-closed
and thus not Scott-open.

\begin{coro}
\label{cor:frames:sub:notsober}
$\I{\lnot\Box a} = \I{\Stream\Base} \setminus \{a^\omega\}$
is not a sober subspace of $\I{\Stream\Base}$.
\end{coro}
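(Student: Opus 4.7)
The plan is to apply Lemma~\ref{lem:frames:sub:notsober} directly, with ambient space $X = \I{\Stream\Base}$ and the chosen point $x = a^\omega$. To do so, I need to check three things: that $\I{\Stream\Base}$ is sober, that $a^\omega$ witnesses the failure of the $T_D$ axiom at $\I{\Stream\Base}$, and that $\I{\lnot \Box a}$ really is the complement of $\{a^\omega\}$.

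First, sobriety of $\I{\Stream\Base}$: by Example~\ref{ex:prelim:stream:algebraic} the cpo $\I{\Stream\Base}$ is algebraic, and Proposition~\ref{prop:frames:dcposober} tells us algebraic dcpos are sober in their Scott topology. Second, $a^\omega$ witnesses the failure of $T_D$: this is exactly the observation made just before the corollary. Any Scott-open $U$ with $a^\omega \in U$ contains a finite $d$ with $d \leq_{\I{\Stream\Base}} a^\omega$; then $d \in U \setminus \{a^\omega\}$ but $a^\omega \geq d$ with $a^\omega \notin U \setminus \{a^\omega\}$, so $U \setminus \{a^\omega\}$ is not upward-closed, hence not Scott-open. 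Third, the equality $\I{\lnot \Box a} = \I{\Stream\Base} \setminus \{a^\omega\}$: by Example~\ref{ex:ltl:base}(\ref{ex:ltl:base:total}) we have $\stream \forces \Box a$ iff $\stream = a^\omega$, and the semantics of negation on $\Po(\I{\Stream\Base})$ gives $\I{\lnot \Box a} = \I{\Stream\Base} \setminus \I{\Box a} = \I{\Stream\Base} \setminus \{a^\omega\}$.

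With these three facts in hand, Lemma~\ref{lem:frames:sub:notsober} (instantiated at $x = a^\omega$ and $P = \I{\Stream\Base} \setminus \{a^\omega\}$) immediately yields that $(P, \Open(\I{\Stream\Base}) \restr P)$ is not sober, which is the claim.

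There is no real obstacle here: the corollary is designed as an illustration of Lemma~\ref{lem:frames:sub:notsober}, and all the ingredients are already assembled in the preceding paragraph. The only minor care required is to notice that the argument for the non-$T_D$-ness of $\I{\Stream\Base}$ at $a^\omega$ uses algebraicity in a crucial way (every Scott-open containing $a^\omega$ contains a strictly smaller finite approximation), which is precisely what blocks $\{a^\omega\}$ from being locally closed.
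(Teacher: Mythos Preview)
Your proposal is correct and follows exactly the paper's intended approach: the corollary is stated immediately after Lemma~\ref{lem:frames:sub:notsober} and the paragraph verifying that $a^\omega$ witnesses the failure of $T_D$ in $\I{\Stream\Base}$, so the paper treats it as a direct instantiation of that lemma with $x = a^\omega$. Your three checks (sobriety via algebraicity, failure of $T_D$ at $a^\omega$, and the identification $\I{\lnot\Box a} = \I{\Stream\Base}\setminus\{a^\omega\}$) are precisely the ingredients the paper assembles in the surrounding text.
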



Let $(X,\Open)$ be sober, and let $P \sle X$
be upward-closed for $\leq_\Open$. 
Assume $x \notin P$.
Then $P \setminus \down x = P$,
and thus $P \cap \widetilde x = P$.
Hence
\(
  j(\widetilde x)
  =
  \bigcup \left\{
  V \in \Open \mid V \cap P = P
  \right\}
\),
so $j(\widetilde x) = X$
and $\widetilde x \notin \widetilde P$.
It follows that $x \in P$ precisely when $\widetilde x \in \widetilde P$,
and Proposition~\ref{prop:frames:sub:tilde} gives the following.

\begin{theo}
\label{thm:frames:sub:upsober}
If $(X,\Open)$ is sober and $P \sle X$ is upward-closed for $\leq_\Open$,
then $(P,\Open\restr P)$ is sober.
\end{theo}

\begin{coro}
\label{cor:frames:sub:upsober}
If $(X,\Open)$ is a sober dcpo and if $P \sle X$ is upward-closed,
then $(P,\Open\restr P)$ is sober.
In particular, if $\Phi$ is negation-free, then $\I\Phi$
is a sober subspace of $\I{\Stream\Base}$.
\end{coro}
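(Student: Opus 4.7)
The plan is to reduce the corollary directly to Theorem~\ref{thm:frames:sub:upsober}. The only gap to bridge is that the theorem speaks of subsets upward-closed for the specialization order $\leq_\Open$, whereas the corollary's hypothesis is that $P$ is upward-closed in the dcpo $X$. So the first step is to recall from~\S\ref{sec:frames:spaces} that when $X$ is a dcpo equipped with its Scott topology, the specialization order $\leq_\Open$ coincides with the original partial order $\leq_X$. Hence ``upward-closed'' for $\leq_\Open$ and ``upward-closed'' in the dcpo sense are the same notion, and the first half of the corollary follows immediately by invoking Theorem~\ref{thm:frames:sub:upsober}.

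For the ``In particular'' clause, I would chain together three facts already established in the paper. First, Example~\ref{ex:prelim:stream:algebraic} tells us that $\I{\Stream\Base}$ is algebraic, and Proposition~\ref{prop:frames:dcposober} then guarantees that $\I{\Stream\Base}$ is a sober dcpo. Second, Lemma~\ref{lem:ltl:up} ensures that for any negation-free formula $\Phi$, the subset $\I\Phi \sle \I{\Stream\Base}$ is upward-closed. Applying the first half of the corollary with $X = \I{\Stream\Base}$ and $P = \I\Phi$ yields that $\I\Phi$, equipped with the subspace topology inherited from the Scott topology on $\I{\Stream\Base}$, is sober.

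There is essentially no obstacle here, since all of the hard work has been done in Theorem~\ref{thm:frames:sub:upsober}, Proposition~\ref{prop:frames:dcposober}, and Lemma~\ref{lem:ltl:up}. The only subtle point worth flagging explicitly in the writeup is the identification of the specialization order with the dcpo order on $X$, which is what allows the theorem's hypothesis to match the corollary's hypothesis verbatim.
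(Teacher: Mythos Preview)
Your proposal is correct and matches the paper's approach exactly: reduce to Theorem~\ref{thm:frames:sub:upsober} via the identification of the specialization order with the dcpo order (stated in~\S\ref{sec:frames:spaces}), and for the second clause combine Example~\ref{ex:prelim:stream:algebraic}, Proposition~\ref{prop:frames:dcposober}, and Lemma~\ref{lem:ltl:up}. There is nothing to add.
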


The importance we give to the negation-free fragment of $\LTL$
ultimately rests
on Corollaries~\ref{cor:frames:sub:notsober} and~\ref{cor:frames:sub:upsober}.
But the frame $\widetilde{P}$ seems too abstract to be used concretely.

\section{Geometric Logic}
\label{sec:geom}

Geometric logic is an infinitary propositional logic which describes frames.
Very roughly, the idea is that if a theory $\th{Th}$ in geometric logic
represents a frame $L$, then the models of $\th{Th}$ can be organized in
a space which is homeomorphic to $\pt(L)$, the space of points of $L$.

We will represent the (sober) space $\I{\Stream\Base}$ by a theory
$\th{T}\I{\Stream\Base}$
in geometric logic.
Further, for each negation-free formula $\Phi$ of $\LTL$,
we shall (inductively) devise a theory $\th{T}\I\Phi$
such that $\th T\I{\Stream\Base} \cup \th T\I\Phi$
represents the (sober) subspace induced by $\I\Phi \emb \I{\Stream\Base}$.
This will provide a concrete presentation of the corresponding quotient frame.

Our approach to geometric logic
here
is not the usual one, as presented in e.g.~\cite{vickers07chapter}
(see also~\cite[\S D]{johnstone02book}).
\opt{full,long}{The relations between the two approaches are discussed
in~\S\ref{sec:free}.}%
\opt{short}{The relations between the two approaches are discussed
in~\cite[\S 5]{rs23full}.}


\subsection{Geometric Theories}
\label{sec:geom:th}

%
%

\paragraph{Formulae and Valuations.}
Let $\At$ be a set of \emph{atomic propositions}.
The \emph{conjunctive} 
and the \emph{geometric} formulae 
over $\At$
are respectively defined as
\[
\begin{array}{l l l !{\qquad\text{and}\qquad} l l l}
  \gamma,\gamma'
  \in \Conj(\At)
& \bnf
& p
  \gs
  \true
  \gs
  \gamma \land \gamma'

& \varphi,\psi,\theta \in \Geom(\At)
& \bnf
& 
  \bigvee \SP
\end{array}
\]


\noindent
where $p \in \At$ and $\SP \sle \Conj(\At)$.
%
A \emph{valuation} of $\At$ is a function $\nu \colon \At \to \two$.
Given $\chi \in \Conj(\At) \cup \Geom(\At)$,
the \emph{satisfaction relation} $\nu \models \chi$
is defined by
\[
\begin{array}{l c l !{\qquad} l c l}
  \nu
  \models
  \true
&
&

& \nu
  \models
  \gamma \land \gamma'
& \text{iff}
& \text{$\nu \models \gamma$ and $\nu \models \gamma'$}
\\

  \nu \models p
& \text{iff}
& \nu(p) = 1


& \nu \models \bigvee \SP
& \text{iff}
& \text{there exists $\gamma \in \SP$ such that $\nu \models \gamma$}

\end{array}
\]



%

We let $\false$ be the geometric formula $\bigvee \emptyset$.
We may write
$\gamma$ for the geometric formula $\bigvee\{\gamma\}$.
Given conjunctive formulae $(\gamma_i \mid i \in I)$,
we write $\bigvee_{i \in I} \gamma_i$ 
for the geometric formula $\bigvee \{\gamma_i \mid i \in I \}$.
Note that
$\bigvee_{i \in I} \gamma_i = \bigvee_{j \in J} \gamma'_j$
if there is a bijection $f \colon I \to J$
with $\gamma_i = \gamma'_{f(i)}$.

\begin{remark}
\label{rem:geom:conn}
There is no primitive notion of conjunction or disjunction on geometric formulae,
but they can be defined.
Given $(\varphi_i \mid i \in I)$ with
$\varphi_i = \bigvee \left\{ \gamma_{i,j} \mid j \in J_i \right\}$,
we define $\bigvee_{i \in I} \varphi_i$ to be the geometric formula
$\bigvee \left\{ \gamma_{i,j} \mid \text{$i \in I$ and $j \in J_i$}\right\}$.
We then have
$\nu \models \bigvee_{i \in I} \varphi_i$
iff
$\nu \models \varphi_i$ for some $i \in I$.

Similarly, given $\varphi = \bigvee_{i \in I} \gamma_i$
and $\psi = \bigvee_{j \in J} \gamma'_j$,
we define
$\varphi \land \psi \deq \bigvee\{ \gamma_i \land \gamma'_j \mid (i,j) \in I \times J \}$.
Then
$\nu \models \varphi \land \psi$
iff
$\nu \models \varphi$ and $\nu \models \psi$.
\end{remark}

\paragraph{Sequents and Theories.}
A \emph{sequent} over $\At$ is a pair $\psi \thesis \varphi$
of geometric formulae $\varphi,\psi \in \Geom(\At)$.
A valuation $\nu$ of $\At$
is a \emph{model}
of $\psi \thesis \varphi$ 
if $\nu \models \psi$ implies $\nu \models \varphi$.
Note that $\nu$ is a model of the sequent $\varphi \thesis \false$
if, and only if,
$\nu \not\models \varphi$.

A \emph{geometric theory} over $\At$ is a set $\th{T}$ of sequents over $\At$.
A valuation $\nu$ of $\At$ is a \emph{model} 
of $\th{T}$
if $\nu$ is a model of all the sequents of $\th{T}$.
We write $\Mod(\th{T})$ for the set of models of $\th{T}$.

An \emph{antecedent-free sequent} has
the form $\true \thesis \varphi$, and is denoted $\thesis \varphi$
(or even $\varphi$) for short.
An \emph{antecedent-free theory} consists of antecedent-free sequents only.

\paragraph{Algebraic Dcpos.}
Algebraic dcpos have a natural representation by
geometric theories.
This relies on the following well-known facts, for which we
refer to~\cite[\S 1.1]{ac98book}.

An \emph{ideal} on a poset $(P,\leq)$ is a subset $J \sle P$
which is downward-closed and directed.
The set $\Idl(P)$ of ideals on $P$ is an algebraic dcpo for inclusion.
The operation $P \mapsto \Idl(P)$ is left adjoint
to the forgetful functor
from $\DCPO$
to the category of posets and monotone maps.

Let $X$ be a dcpo, and let 
$\Fin(X)$ be 
its sub-poset of finite elements.
Given an ideal $J \in \Idl(\Fin(X))$, since $J$ is directed
we have $\bigvee J \in X$.
For the following, see e.g.~\cite[Proof of Proposition 1.1.21(2)]{ac98book}.


\begin{lemm} 
\label{lem:geom:algdcpo}
Let $X$ be an algebraic dcpo.
The function $J \in \Idl(\Fin(X)) \mapsto \bigvee J \in X$
is an order-isomorphism.
Its inverse
$X \to \Idl(\Fin(X))$ takes $x$ to $\{ d \in \Fin(X) \mid d \leq x \}$.
\end{lemm}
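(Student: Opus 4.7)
The plan is to exhibit explicit maps in both directions and verify they are mutually inverse and monotone. Define
\[
  \Phi \colon \Idl(\Fin(X)) \longto X, \qquad \Phi(J) \deq \bigvee J,
\]
and
\[
  \Psi \colon X \longto \Idl(\Fin(X)), \qquad \Psi(x) \deq \{ d \in \Fin(X) \mid d \leq x \}.
\]
First I would check that both maps are well-defined. For $\Phi$: any ideal $J$ is directed by definition, hence $\bigvee J$ exists since $X$ is a dcpo. For $\Psi$: the set $\Psi(x)$ is manifestly downward-closed in $\Fin(X)$, and it is directed precisely because $X$ is algebraic (this is exactly the algebraicity hypothesis, which moreover guarantees this set is nonempty when taking the sup).

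Next I would show $\Phi \circ \Psi = \mathrm{id}_X$: this amounts to $x = \bigvee\{ d \in \Fin(X) \mid d \leq x \}$, which is again precisely the algebraicity of $X$. The step to show $\Psi \circ \Phi = \mathrm{id}_{\Idl(\Fin(X))}$ is the only one using a nontrivial property beyond the definitions. Given an ideal $J \in \Idl(\Fin(X))$, the inclusion $J \sle \Psi(\bigvee J)$ is immediate since every $j \in J$ is finite and satisfies $j \leq \bigvee J$. For the reverse inclusion, take $d \in \Fin(X)$ with $d \leq \bigvee J$; by the very definition of finiteness applied to the directed set $J$, there is $j \in J$ with $d \leq j$, and since $J$ is downward-closed in $\Fin(X)$ we conclude $d \in J$.

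Finally, monotonicity of both maps is straightforward: $\Phi$ is monotone because sups respect inclusion of directed sets, and $\Psi$ is monotone because $x \leq x'$ immediately yields $\{d \in \Fin(X) \mid d \leq x\} \sle \{d \in \Fin(X) \mid d \leq x'\}$. A bijection between posets that is monotone in both directions is an order-isomorphism, concluding the proof.

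The only step which is not a pure unfolding of definitions is the reverse inclusion in $\Psi \circ \Phi = \mathrm{id}$, where one must invoke the defining property of finite elements (inaccessibility by directed sups) applied to the ideal $J$ itself viewed as a directed subset of $X$. This is the one place where the word "finite" in $\Fin(X)$ does real work; everything else reduces to algebraicity plus bookkeeping.
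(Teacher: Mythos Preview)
Your proof is correct and is exactly the standard argument. The paper does not give its own proof of this lemma but simply refers to \cite[Proof of Proposition 1.1.21(2)]{ac98book}, where essentially the same verification is carried out; there is nothing to compare.
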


We represent an algebraic dcpo $X = (X,\leq_X)$ by a geometric theory $\th T(X)$
over $\At = \Fin(X)$.
The theory $\th T(X)$ consists of 
$\thesis \bigvee \Fin(X)$
together with all the sequents
\[
\begin{array}{c}

  d \thesis d'
\quad\text{(if $d' \leq_X d$)}

\qquad\quad

%

  d \land d'
  \thesis
  \bigvee \left\{
  d'' \in \Fin(X) \mid \text{$d \leq_X d''$ and $d' \leq_X d''$}
  \right\}
\end{array}
\]


\noindent
where $d,d' \in \Fin(X)$.
Note that $J \sle \Fin(X)$ is an ideal if, and only if,
its characteristic function $\Fin(X) \to \two$ is a model of $\th T(X)$.
Combined with Lemma~\ref{lem:geom:algdcpo},
this yields Proposition~\ref{prop:geom:algdcpo} below.
If $x \in X$,
let
$\nu(x) \colon \At \to \two$
be the characteristic function
of $\{d \in \Fin(X) \mid d \leq_X x\}$.

\begin{prop}
\label{prop:geom:algdcpo}
The map $x \mapsto \nu(x)$ is a bijection $X \to \Mod(\th T(X))$.
\end{prop}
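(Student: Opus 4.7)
The plan is to factor the desired bijection through Lemma~\ref{lem:geom:algdcpo}, by showing that models of $\th T(X)$ correspond exactly to ideals on $\Fin(X)$.

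First, I would note that a valuation $\nu \colon \Fin(X) \to \two$ is the same data as the subset $J_\nu \deq \{d \in \Fin(X) \mid \nu(d) = 1\}$, so that $\nu \mapsto J_\nu$ is a bijection between valuations of $\At = \Fin(X)$ and subsets of $\Fin(X)$. The goal then becomes: $\nu$ is a model of $\th T(X)$ iff $J_\nu$ is an ideal on $(\Fin(X),\leq_X)$.

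Next, I would inspect each axiom in turn. The antecedent-free axiom $\thesis \bigvee \Fin(X)$ holds at $\nu$ iff some $d \in \Fin(X)$ satisfies $\nu(d)=1$, i.e.\ iff $J_\nu$ is non-empty. For fixed $d' \leq_X d$, the sequent $d \thesis d'$ holds at $\nu$ iff $d \in J_\nu$ implies $d' \in J_\nu$; quantifying over all such pairs gives downward-closure of $J_\nu$. Finally, for fixed $d,d'$, the sequent $d \land d' \thesis \bigvee\{d'' \mid d,d' \leq_X d''\}$ holds at $\nu$ iff, whenever $d,d' \in J_\nu$, there is some $d'' \in J_\nu$ above both $d$ and $d'$. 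Together with non-emptiness and downward-closure, this is exactly the pairwise-upper-bound condition for $J_\nu$ to be directed. Hence $\nu \in \Mod(\th T(X))$ iff $J_\nu \in \Idl(\Fin(X))$, and the correspondence $\nu \mapsto J_\nu$ restricts to a bijection $\Mod(\th T(X)) \to \Idl(\Fin(X))$.

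To conclude, I would compose with the order-isomorphism $\Idl(\Fin(X)) \to X$ of Lemma~\ref{lem:geom:algdcpo}, whose inverse sends $x \in X$ to the ideal $\{d \in \Fin(X) \mid d \leq_X x\}$. The characteristic function of this ideal is precisely $\nu(x)$ by definition, so the composed bijection $X \to \Mod(\th T(X))$ is exactly $x \mapsto \nu(x)$. I do not anticipate any real obstacle; the entire argument is bookkeeping once the three axiom families are matched with the three defining conditions of an ideal, and the only mild point of care is checking that geometric satisfaction of a disjunction $\bigvee\{d'' \mid d,d' \leq_X d''\}$ indeed unfolds to the existential statement witnessing directedness.
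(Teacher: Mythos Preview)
Your proposal is correct and follows essentially the same approach as the paper: the text immediately preceding the proposition already notes that $J \subseteq \Fin(X)$ is an ideal iff its characteristic function is a model of $\th T(X)$, and then invokes Lemma~\ref{lem:geom:algdcpo} to conclude. You simply spell out the axiom-by-axiom verification of that observation, which the paper leaves implicit.
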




When $X$ is actually an algebraic cpo, 
let $\th T_\bot(X)$ be the theory obtained from
$\th T(X)$
by replacing the antecedent-free sequent $\thesis \bigvee \Fin(X)$
with $\thesis \bot_X$
(where $\bot_X \in \Fin(X)$ is the least element of $X$).
The theories $\th T(X)$ and $\th T_\bot(X)$ have exactly the same models.
Hence Proposition~\ref{prop:geom:algdcpo} also holds with $\th T_\bot(X)$
in place of $\th T(X)$.


\begin{exem}[Streams]
\label{ex:geom:streams}
We further simplify the theory 
representing the cpo $\I{\Stream\Base}$.


Consider $\stream,\streambis \in \I{\Stream\Base}$
such that $\up\stream \cap \up\streambis \neq \emptyset$
(i.e.\ such that 
$\stream,\streambis \leq_{\I{\Stream\Base}} \streamter$
for some $\streamter \in \I{\Stream\Base}$).
Then, since $\I\Base$ is a flat cpo, we have
$\stream(n) = \streambis(n)$
for all $n \in \supp(\stream) \cap \supp(\streambis)$.
It follows that the set $\{\stream,\streambis\}$
has a sup $\stream \vee_{\I{\Stream\Base}} \streambis$ in $\I{\Stream\Base}$.
Note that
$\stream \vee_{\I{\Stream\Base}} \streambis$ is finite whenever so are $\stream$ and
$\streambis$.

We can thus represent $\I{\Stream\Base}$ with the following theory
$\th T\I{\Stream\Base}$,
where $d,d' \in \Fin(\I{\Stream\Base})$.
\[
\begin{array}{r !{\thesis} l @{\quad} l !{\qquad\qquad} r !{\thesis} l @{\quad} l}
  d
& d'
& \text{(if $d' \leq_{\I{\Stream\Base}} d$)}

& d \land d'
& \false
& \text{(if $\up d \cap \up d' = \emptyset$)}
\\

& \bot^\omega
& 

& d \land d'
& d \vee_{\I{\Stream\Base}} d'
& \text{(if $\up d \cap \up d' \neq \emptyset$)}
\end{array}
\]
\end{exem}

\begin{remark}
\label{rem:spectral}
\renewcommand\fntext{Note that $\I{\Stream\Base}$ is always compact,
since any Scott-open containing $\bot^\omega$ contains the whole of $\I{\Stream\Base}$.
This contrasts with the product topology on $\Base^\omega$,
which is compact iff $\Base$ is finite
(see e.g.~\cite[\S III.3.5]{pp04book}).}
Note that the 
theory $\th T(\I{\Stream\Base})$ of Example~\ref{ex:geom:streams}
only involves finite geometric formulae.
Actually, this amounts to the fact
that $\I{\Stream\Base}$
is \emph{spectral} in its Scott topology
(see~\cite[Corollary 7.48 and Definition 6.2]{gg23book}).\fn\
Roughly, being a spectral space means that the topology
can be generated from a distributive lattice (as opposed to a frame).
See e.g.~\cite[Proposition 3.26 and Theorem 6.1]{gg23book} for details.

Spectral cpos include those known as ``SFP'' domains
(see e.g.~\cite[\S 2.2]{abramsky91apal}).
SFP domains are also called ``bifinite'' domains
(see e.g.~\cite[Definition 5.2.2 and Theorem 5.2.7]{ac98book}).
They are stable under most common domain operations
and have solutions for recursive type equations
(see e.g.~\cite[\S 2.2]{abramsky91apal}).

\opt{short}{\renewcommand\fntext{In the terminology
of~\cite[\S 5]{rs23full},
this is because frames induced by distributive lattices are always
spatial (see e.g.~\cite[Theorem II.3.4]{johnstone82book}).}}
\opt{full,long}{\renewcommand\fntext{In the terminology of~\S\ref{sec:free},
this is because frames induced by distributive lattices are always
spatial (see e.g.~\cite[Theorem II.3.4]{johnstone82book}).}}
In the paradigm of ``Domain Theory in Logical Form'',
spectral domains are particularly important
because the logic of the underlying distributive lattice
can be incorporated into finitary type systems~\cite{abramsky91apal}
(see also~\cite[\S 10.5]{ac98book} for a simple instance).\fn
\end{remark}

\subsection{The Sober Space of Models}
\label{sec:geom:spaces}

Given a geometric theory $\th T$ over $\At$,
we will now equip $\Mod(\th T)$
with a sober topology induced by a quotient of $\Geom(\At)$.
In particular, this will extend the bijection of Proposition~\ref{prop:geom:algdcpo}
to an homeomorphism between an algebraic dcpo $X$ and the space
of models $\Mod(\th T(X))$.

In view of Example~\ref{ex:geom:streams},
we may have $\Mod(\th{T}) = \Mod(\th{U})$
for different theories $\th{T}$ and $\th{U}$ over $\At$.
The topology on $M= \Mod(\th{T})$
depends on $M$ (and $\At$),
but not on the theory $\th{T}$ such that $M= \Mod(\th{T})$.
Fix a set of atomic propositions $\At$
and let $M$ be a set of the form $\Mod(\th{T})$ for some theory $\th{T}$
over $\At$.
Define
\[
  \gmod_M \colon \Geom(\At) \to \Po(M)
  ,\quad
  \varphi \mapsto \left\{ \nu \in M \mid \nu \models \varphi \right\}
\]

\noindent
Let $\Open(M) \sle \Po(M)$ be the image of $\gmod_M$.
We have
$\gmod_M(\true) = M$,
and
(via Remark~\ref{rem:geom:conn}),
\[
\begin{array}{c !{\quad\text{and}\quad} c}
  \gmod_M(\varphi \land \psi)
  =
  \gmod_M(\varphi) \cap \gmod_M(\psi)

& \gmod_M\left(\bigvee_{i \in I} \varphi_i \right)
  =
  \bigcup_{i \in I} \gmod_M(\varphi_i)
\end{array}
\]

%

\noindent
It follows that $(\Open(M),\sle)$ is stable
under the sups and the finite infs of $(\Po(M),\sle)$.
Hence $(\Open(M),\sle)$ is a sub-frame of $(\Po(M),\sle)$,
and $(M,\Open(M))$ is a topological space.

Given a theory $\th{T}$,
we write $\Mod(\th{T})$ for $(\Mod(\th{T}),\Open(\Mod(\th{T})))$,
the \emph{space of models of $\th{T}$}.
Since the space $\Mod(\th{T})$ only depends on the models of $\th{T}$,
the following directly applies to the theory $\th T\I{\Stream\Base}$
of Example~\ref{ex:geom:streams}.


\begin{prop}
\label{prop:geom:algdcpo:scott}
Let $X$ be an algebraic dcpo.
The bijection $x \mapsto \nu(x)$
of Proposition~\ref{prop:geom:algdcpo} extends to an homeomorphism from
$X$ to $\Mod(\th T(X))$.
\end{prop}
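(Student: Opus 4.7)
The plan is to let $M \deq \Mod(\th T(X))$ and denote the bijection of Proposition~\ref{prop:geom:algdcpo} by $\beta \colon X \to M$, $\beta(x) \deq \nu(x)$. I would prove that $\beta$ is both continuous and open when $X$ carries its Scott topology and $M$ carries $\Open(M)$; this is enough to conclude that $\beta$ is a homeomorphism.

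The engine of the proof is the single identity
\[
  \beta(\up d) \;=\; \gmod_M(d)
  \qquad (d \in \Fin(X)),
\]
which holds by definition of $\beta$, since $\nu(x)(d)=1$ iff $d \leq_X x$. For continuity, recall that every open of $M$ is of the form $\gmod_M\!\left(\bigvee_{i \in I} \gamma_i\right)$ with each $\gamma_i$ a conjunctive formula over $\At = \Fin(X)$. Since $\gmod_M$ commutes with unions and finite intersections (as noted just after the definition of $\gmod_M$), the preimage under $\beta$ of such an open is a union of finite intersections of sets $\up d$ with $d \in \Fin(X)$; each such set is Scott-open in $X$ (because $d$ is finite), hence so is the preimage.

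For openness, I would invoke algebraicity in its basis form: every element of $X$ is the directed supremum of the finite elements below it, whence $\{\up d \mid d \in \Fin(X)\}$ is a basis of the Scott topology and every Scott-open $U \sle X$ satisfies $U = \bigcup \{\up d \mid d \in U \cap \Fin(X)\}$. Applying $\beta$ pointwise and using the key identity yields
\[
  \beta(U) \;=\; \bigcup \{\gmod_M(d) \mid d \in U \cap \Fin(X)\}
  \;=\; \gmod_M\!\left(\textstyle\bigvee_{d \in U \cap \Fin(X)} d\right),
\]
which lies in $\Open(M)$ by definition.

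There is no genuine obstacle here; the only point worth noting is that the sequents of $\th T(X)$ play no role in the homeomorphism itself (they already served in Proposition~\ref{prop:geom:algdcpo} to isolate the correct valuations). The topology matches up because the atomic propositions of $\th T(X)$ are exactly the finite elements, and the sub-basis of $\Open(M)$ they determine transports under $\beta$ to the Scott-basis $\{\up d \mid d \in \Fin(X)\}$ provided by algebraicity.
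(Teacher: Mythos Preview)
Your proof is correct and follows essentially the same idea as the paper's: both arguments hinge on the identification $\beta(\up d) = \gmod_M(d)$ for finite $d$, matching the Scott sub-basis of $X$ with the atomic sub-basis of $\Open(M)$. The only difference is cosmetic: the paper routes this correspondence through the ideal completion $\Idl(\Fin(X))$ and Lemma~\ref{lem:geom:algdcpo}, observing that principal ideals $\down_{\Fin(X)} d$ are the finite elements there and that $\down_{\Fin(X)} d \sle J$ iff the corresponding valuation satisfies $d$, whereas you work directly with $\beta$ and the definition of $\nu(x)$. Your version is slightly more self-contained; the paper's version makes the connection with the standard ideal-completion machinery explicit.
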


Spaces of models are always sober.
To this end,
given $M$ as above,
we quotient $\Geom(\At)$ under the preorder $\preceq_M$
with $\varphi \preceq_M \psi$ iff $\gmod_M(\varphi) \sle \gmod_M(\psi)$.
The relation $\sim_M$ of \emph{$M$-equivalence}
on $\Geom(\At)$
is defined
as $\varphi \sim_M \psi$ iff
$\varphi \preceq_M \psi$ and $\psi \preceq_M \varphi$
(i.e.\ $\gmod_M(\varphi) = \gmod_M(\psi)$).
We let $\class{\varphi}_M$ be the $\sim_M$-class of $\varphi$,
and $\Geom(\At)/M$ be the set of
$\sim_M$-classes of geometric formulae.
We write $\leq_M$ for the partial order on
$\Geom(\At)/M$ induced by the preorder $\preceq_M$
(see e.g.~\cite[\S 2.3.1]{goubault13book}).

The function $\gmod_M$ yields an order-isomorphism
$(\Geom(\At)/M,\leq_M) \to (\Open(M),\sle)$.
Since order-isomorphisms preserve all existing sups and infs
(\cite[Lemma 2.27(ii)]{dp02book}),
we obtain

\begin{lemm}
\label{lem:geom:spaces:frame}
$(\Geom(\At)/M, \leq_M)$
is a frame with greatest element $\class\true_M$, and
\[
\begin{array}{r c l}
  \class{\bigvee_{i \in I} \gamma_i}_M
  \wedge
  \class{\bigvee_{j \in J} \gamma'_j}_M
& =
& \class{\bigvee \left\{
  \gamma_i \land \gamma'_j \mid \text{$i \in I$ and $j \in J$}
  \right\}}_M
\\

  \bigvee_{i \in I} \class{\bigvee_{j \in J_i} \gamma_{i,j}}_M
& =
& \class{\bigvee \left\{
  \gamma_{i,j} \mid \text{$i \in I$ and $j \in J_i$}
  \right\}}_M
\end{array}
\]
\end{lemm}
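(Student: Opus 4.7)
The plan is to transport the frame structure along the order-isomorphism $\gmod_M \colon (\Geom(\At)/M,\leq_M) \to (\Open(M),\sle)$ noted just above the statement. Since $(\Open(M),\sle)$ is a sub-frame of $(\Po(M),\sle)$ (as observed in the paragraph introducing $\Open(M)$), it is itself a frame, so it has all sups and all finite infs, satisfying the frame distributive law. Because order-isomorphisms preserve all existing sups and infs (the cited Lemma~2.27(ii) of~\cite{dp02book}), the quotient $(\Geom(\At)/M,\leq_M)$ inherits a complete lattice structure computed by transporting sups and infs of $\Open(M)$ along $\gmod_M^{-1}$. The frame distributive law is then automatic: both sides of the law in $\Geom(\At)/M$ are the $\gmod_M^{-1}$-images of the corresponding sides in $\Open(M)$, which agree because $\Open(M)$ is a frame.

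For the greatest element, recall $\gmod_M(\true) = M$, which is the top of $\Open(M)$; transporting back gives $\class{\true}_M$ as the top of $\Geom(\At)/M$.

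To obtain the explicit formulas, I would invoke the two identities
\[
  \gmod_M(\varphi \land \psi) = \gmod_M(\varphi) \cap \gmod_M(\psi),
  \qquad
  \gmod_M\bigl(\textstyle\bigvee_{i \in I} \varphi_i\bigr) = \textstyle\bigcup_{i \in I} \gmod_M(\varphi_i),
\]
which were already established for $\gmod_M$ just above the statement. Combined with transport, these show that $\class{\varphi}_M \wedge \class{\psi}_M = \class{\varphi \land \psi}_M$ and $\bigvee_{i} \class{\varphi_i}_M = \class{\bigvee_i \varphi_i}_M$ in the quotient frame. Specializing $\varphi = \bigvee_{i \in I}\gamma_i$ and $\psi = \bigvee_{j \in J}\gamma'_j$ and unfolding the definition of $\varphi \land \psi$ from Remark~\ref{rem:geom:conn} yields the first formula of the statement; specializing $\varphi_i = \bigvee_{j \in J_i}\gamma_{i,j}$ and unfolding the definition of $\bigvee_i \varphi_i$ from the same remark yields the second.

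The proof is essentially bookkeeping; there is no real obstacle once one has the order-isomorphism $\gmod_M$ and the two identities in hand. The one point deserving a brief check is that the ad hoc ``definitions'' of binary conjunction and of indexed disjunction of geometric formulae from Remark~\ref{rem:geom:conn} really do produce representatives of the transported meet and sup in $\Geom(\At)/M$---but this is exactly what the two displayed identities for $\gmod_M$ encode.
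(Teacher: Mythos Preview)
Your proposal is correct and matches the paper's approach exactly: the paper derives the lemma from the order-isomorphism $\gmod_M \colon (\Geom(\At)/M,\leq_M) \to (\Open(M),\sle)$ together with the cited fact that order-isomorphisms preserve all existing sups and infs, and the explicit formulas then follow from the identities for $\gmod_M$ established just above via Remark~\ref{rem:geom:conn}. Your write-up is simply a more detailed unfolding of that one-line argument.
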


\begin{theo}
\label{theo:geom:pt}
Let $\th{T}$ be a geometric theory over $\At$.
The function taking 
$\nu \in \Mod(\th T)$
to
$\{ \class\varphi_{\Mod(\th{T})} \mid \nu \models \varphi \}$
is an homeomorphism
from $\Mod(\th T)$ to $\pt(\Geom(\At)/\Mod(\th{T}))$.
\end{theo}

\begin{coro}
\label{cor:geom:pt}
Let $\th{T}$ be a geometric theory.
The space $\Mod(\th{T})$ is sober.
\end{coro}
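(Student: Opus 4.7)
The plan is to deduce this corollary directly from the preceding Theorem~\ref{theo:geom:pt} together with the general fact, already recalled in~\S\ref{sec:frames:frames}, that every space of points $\pt(L)$ of a frame $L$ is sober.

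More precisely, I would proceed in three short steps. First, invoke Lemma~\ref{lem:geom:spaces:frame} to get that $\Geom(\At)/\Mod(\th{T})$ is a frame, so that the space of points $\pt(\Geom(\At)/\Mod(\th{T}))$ is well-defined. Second, apply Theorem~\ref{theo:geom:pt} to obtain a homeomorphism
\[
  \Mod(\th{T})
  ~\cong~
  \pt\bigl(\Geom(\At)/\Mod(\th{T})\bigr).
\]
Third, recall (from the discussion just before Proposition~\ref{prop:frames:dcposober}, citing \cite[Lemma II.1.7]{johnstone82book}) that the space of points of any frame is sober, and that sobriety is preserved under homeomorphism (this was already used in the same passage to deduce sobriety of $\Base^\omega$). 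Combining these facts immediately yields that $\Mod(\th{T})$ is sober.

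There is essentially no obstacle here: the entire content has been packaged into Theorem~\ref{theo:geom:pt} and the earlier point-free machinery. The only thing worth double-checking is that the homeomorphism of Theorem~\ref{theo:geom:pt} lands in a bona fide $\pt(L)$ for a frame $L$, which is exactly guaranteed by Lemma~\ref{lem:geom:spaces:frame}. Hence the corollary is a one-line consequence once these two ingredients are in place.
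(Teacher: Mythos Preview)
Your proposal is correct and matches the paper's own reasoning: the corollary is stated immediately after Theorem~\ref{theo:geom:pt} and is meant to follow from it together with the fact (recalled in~\S\ref{sec:frames:frames}) that $\pt(L)$ is always sober and that sobriety transfers along homeomorphisms. There is nothing to add.
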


\paragraph{Subspaces.}
Consider a (sober) space $(X,\Open)$.
Assume that $X$ is represented by a geometric theory $\th{T}$ over $\At$,
in the sense that $X$ is homeomorphic to the space $\Mod(\th{T})$.
Given a subset $P \sle X$, there might be
a theory $\th{U}$ over $\At$ such that
the bijection $X \cong \Mod(\th{T})$ restricts to a bijection
$P \cong \Mod(\th{T} \cup \th{U})$.
In this case, Proposition~\ref{prop:geom:sub} below
implies that the subspace $(P,\Open\restr P)$
is homeomorphic to the space $\Mod(\th{T} \cup \th{U})$,
so that $(P,\Open\restr P)$ is sober
and
$\Open\restr P$ is isomorphic to $\Geom(\At)/\Mod(\th{T} \cup \th U)$. 
In such situations, 
we write
$\Mod_{\th{T}}(\th{U})$ for the space $\Mod(\th{T} \cup \th{U})$.

\begin{prop}
\label{prop:geom:sub}
Given geometric theories $\th{T}$ and $\th{U}$ on $\At$,
the space $\Mod_{\th{T}}(\th{U})$ is
\emph{equal}
to the subspace induced by the inclusion $\Mod(\th{T} \cup \th{U}) \sle \Mod(\th{T})$.
\end{prop}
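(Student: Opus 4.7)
The plan is to prove the statement by directly unpacking the definitions of both the ``abstract'' topology on $\Mod(\th T \cup \th U)$ (as given by the construction in \S\ref{sec:geom:spaces}) and the subspace topology inherited from the inclusion $\Mod(\th T \cup \th U) \sle \Mod(\th T)$, and checking that they coincide as families of subsets of $\Mod(\th T \cup \th U)$. The underlying sets are equal on the nose, since $\Mod_{\th T}(\th U)$ is defined as $\Mod(\th T \cup \th U)$, so the only real content is the equality of topologies.

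First I would fix $M \deq \Mod(\th T \cup \th U)$ and $N \deq \Mod(\th T)$, noting that $M \sle N$ as sets of valuations on $\At$. The topology $\Open(M)$ on the space $\Mod(\th T \cup \th U)$ is, by definition, the image of the map $\gmod_M \colon \Geom(\At) \to \Po(M)$ taking $\varphi$ to $\{\nu \in M \mid \nu \models \varphi\}$, and similarly $\Open(N)$ is the image of $\gmod_N$. The subspace topology $\Open(N) \restr M$ consists of the sets $U \cap M$ with $U \in \Open(N)$.

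The key step is then the elementary observation that for every geometric formula $\varphi \in \Geom(\At)$,
\[
  \gmod_N(\varphi) \cap M
  \;=\;
  \{\nu \in N \mid \nu \models \varphi\} \cap M
  \;=\;
  \{\nu \in M \mid \nu \models \varphi\}
  \;=\;
  \gmod_M(\varphi),
\]
since satisfaction of $\varphi$ by $\nu$ depends only on $\nu$ (not on the ambient set of valuations in which $\nu$ is considered). From this, one inclusion gives $\Open(N) \restr M \sle \Open(M)$: any element of $\Open(N) \restr M$ has the form $\gmod_N(\varphi) \cap M = \gmod_M(\varphi) \in \Open(M)$. The reverse inclusion $\Open(M) \sle \Open(N) \restr M$ is identical: each $\gmod_M(\varphi) \in \Open(M)$ is witnessed as $\gmod_N(\varphi) \cap M$ with $\gmod_N(\varphi) \in \Open(N)$.

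I do not expect any serious obstacle: this is essentially a definitional verification, and the only subtlety is being careful not to conflate the two uses of ``space of models'' (one as a standalone construction, one as a subset of another space of models). The content lies entirely in recognizing that the generating family $\{\gmod_{(-)}(\varphi) \mid \varphi \in \Geom(\At)\}$ is natural in the carrier, which is immediate from how $\models$ is defined pointwise on valuations.
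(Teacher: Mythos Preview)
Your proposal is correct and follows essentially the same argument as the paper: both prove the equality of topologies by observing that $\gmod_{\Mod(\th T)}(\varphi) \cap \Mod(\th T \cup \th U) = \gmod_{\Mod(\th T \cup \th U)}(\varphi)$ for every $\varphi \in \Geom(\At)$, from which the two inclusions between $\Open(N)\restr M$ and $\Open(M)$ are immediate.
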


\begin{exem}
\label{ex:geom:sub}
Let $X$ be an algebraic dcpo, and let $P \deq \up y$ for some fixed $y \in X$.
Given $x \in X$, we have $x \in P$ precisely when $\nu(x)$
is a model of $\th U(y) \deq \{ \thesis d \mid d \leq_X y \}$.
Hence, the subspace induced by $P \sle X$
is homeomorphic to $\Mod_{\th T(X)}(\th U(y))$.

Assume now $X = \I{\Stream\Base}$, and let $a \in \Base$.
Since the subspace induced by the $\LTL$ formula $\lnot\Box a$ is not sober
(Corollary~\ref{cor:frames:sub:notsober}), it follows from
Proposition~\ref{prop:geom:sub} and Corollary~\ref{cor:geom:pt}
that there is \emph{no} geometric theory $\th{T}$ over $\Fin(\I{\Stream\Base})$
such that for all $\stream \in \I{\Stream\Base}$,
we have $\stream \in \I{\lnot \Box a}$ iff $\nu(\stream) \in \Mod(\th{T})$.
\end{exem}
\subsection{Operations on Theories}
\label{sec:geom:op}


We shall now see that for each negation-free formula $\Phi$ of $\LTL$,
there is a geometric theory $\th T\I\Phi$ such that
for all streams $\stream$,
we have $\stream \in \I\Phi$ iff $\nu(\stream) \in \Mod(\th T\I\Phi)$
(where $\nu(\stream)$ is as in
Propositions~\ref{prop:geom:algdcpo} and~\ref{prop:geom:algdcpo:scott}).
This may not be possible if $\Phi$ contains negations (Example~\ref{ex:geom:sub}).

To this end, we devise operations on theories which represent
unions and intersections of sets of models.
Given theories $(\th T_i \mid i \in I)$ over $\At$,
we let
$\bigcurlywedge_{i \in I} \th T_i \deq \bigcup_{i \in I} \th T_i$.
Then we have
\[
\begin{array}{l l l}
  \Mod\left(\bigcurlywedge_{i \in I} \th T_i
  \right)
& =
& \bigcap_{i \in I} \Mod(\th T_i)
\end{array}
\]

\noindent
Intersections of sets of models can thus be represented by unions of theories.
It is more difficult to devise an operation on theories
for \emph{unions} of sets of models.
%
A solution is provided by the following crucial construction.
Let $(\th T_i \mid i \in I)$ be theories, all over $\At$,
with $\th T_i = \{ \psi_{i,j} \thesis \varphi_{i,j} \mid j \in J_i\}$.
\begin{enumerate}[(1)]
\item
If $I$ is finite, we let
\(
  \bigcurlyvee_{i \in I} \th T_i
  \deq
  \left\{
  \bigwedge_{i \in I} \psi_{i,f(i)}
  \thesis
  \bigvee_{i \in I} \varphi_{i,f(i)}
  \mid
  f \in \prod_{i \in I} J_i
  \right\}
\).

\item
If $I$ is infinite, and all $\th T_i$'s are antecedent-free,
\(
  \bigcurlyvee_{i \in I} \th T_i
  \deq
  \left\{
  \thesis
  \bigvee_{i \in I} \varphi_{i,f(i)}
  \mid
  f \in \prod_{i \in I} J_i
  \right\}
\).
\end{enumerate}

Note that if $(\th T_i \mid i \in I)$ consists of countably many countable
(antecedent-free) theories,
then $\bigcurlywedge_{i \in I} \th T_i$ is always countable
while $\bigcurlyvee_{i \in I} \th T_i$
may be uncountable.

\begin{prop}
\label{prop:geom:op}
In both cases above, we have
(using the Axiom of Choice when $I$ is infinite)
\[
\begin{array}{l l l}
  \Mod\left(
  \bigcurlyvee_{i \in I} \th T_i
  \right)
& =
& \bigcup_{i \in I} \Mod(\th T_i)
\end{array}
\]
\end{prop}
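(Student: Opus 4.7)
The proof proposal is to show both inclusions directly, handling the two cases uniformly where possible.

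For the inclusion $\bigcup_{i \in I} \Mod(\th T_i) \sle \Mod(\bigcurlyvee_{i \in I} \th T_i)$, the plan is to fix $\nu \in \Mod(\th T_{i_0})$ for some $i_0 \in I$ and verify that $\nu$ models every sequent of $\bigcurlyvee_{i \in I} \th T_i$. Given $f \in \prod_{i \in I} J_i$, the $i_0$-component $\psi_{i_0,f(i_0)} \thesis \varphi_{i_0,f(i_0)}$ belongs to $\th T_{i_0}$, so is satisfied by $\nu$. In the finite case, if $\nu \models \bigwedge_{i} \psi_{i,f(i)}$, then in particular $\nu \models \psi_{i_0,f(i_0)}$, hence $\nu \models \varphi_{i_0,f(i_0)}$, hence $\nu \models \bigvee_{i} \varphi_{i,f(i)}$. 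In the antecedent-free case, the same reasoning applies with $\psi_{i,j} = \true$.

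For the converse inclusion, the plan is to argue by contrapositive: assuming $\nu \notin \bigcup_{i \in I} \Mod(\th T_i)$, I would construct a function $f \in \prod_{i \in I} J_i$ such that $\nu$ falsifies the corresponding sequent of $\bigcurlyvee_{i \in I} \th T_i$. For each $i \in I$, failure of $\nu$ on $\th T_i$ provides some $j \in J_i$ with $\nu \models \psi_{i,j}$ and $\nu \not\models \varphi_{i,j}$; set $f(i)$ to be such a $j$. In the finite case, this makes $\nu$ satisfy every conjunct $\psi_{i,f(i)}$ but no disjunct $\varphi_{i,f(i)}$, so it falsifies the sequent $\bigwedge_i \psi_{i,f(i)} \thesis \bigvee_i \varphi_{i,f(i)}$. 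In the infinite antecedent-free case, the antecedents are trivially satisfied, and again no disjunct $\varphi_{i,f(i)}$ holds, so $\nu$ falsifies the sequent $\thesis \bigvee_i \varphi_{i,f(i)}$.

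The main (and only) delicate point is the selection of the function $f$ in the converse direction when $I$ is infinite; this is precisely where the Axiom of Choice intervenes, as flagged in the statement. Concretely, for each $i \in I$ the set $W_i \deq \{ j \in J_i \mid \nu \not\models \varphi_{i,j} \}$ is nonempty by assumption, and $f$ is obtained as a choice function for the family $(W_i)_{i \in I}$. In the finite case, Choice is not required. A minor bookkeeping remark is that the conjunctions and disjunctions of geometric formulae appearing in $\bigcurlyvee_{i \in I} \th T_i$ are the ones defined in Remark~\ref{rem:geom:conn}, so that $\nu \models \bigwedge_i \psi_{i,f(i)}$ and $\nu \models \bigvee_i \varphi_{i,f(i)}$ unfold exactly to ``$\nu \models \psi_{i,f(i)}$ for all $i$'' and ``$\nu \models \varphi_{i,f(i)}$ for some $i$'', which is what makes both directions go through cleanly.
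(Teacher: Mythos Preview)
Your proposal is correct and follows essentially the same approach as the paper: both inclusions are shown directly, with the converse proved by contrapositive via a choice of a falsifying index $f(i)$ for each $i$, invoking the Axiom of Choice in the infinite case. The paper's argument is identical in structure and detail.
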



\begin{exem}
\label{ex:geom:op}
Let $X$ be an algebraic dcpo, and let $P \sle X$ be upward-closed.
Then $P = \bigcup_{y \in P} \up y$.
Hence, given $x \in X$, we have $x \in P$
exactly when $\nu(x)$ is a model of $\bigcurlyvee_{y \in P}\th U(y)$,
where $\th U(y)$ is as in Example~\ref{ex:geom:sub}.
%
\end{exem}

\subsubsection{Translation of Negation-Free $\LTL$ Formulae}
\label{sec:geom:trans}
\opt{short}{\renewcommand\fntext{In fact, when $\Base$ is countable,
if $P \sle \I{\Stream\Base}$ induces a sober subspace,
then this subspace is homeomorphic to $\Mod_{\th T\I{\Stream\Base}}(\th U)$
for some (abstractly given) theory $\th U$
(see~\cite[\S 5]{rs23full}).}}
\opt{full,long}{\renewcommand\fntext{In fact,
we shall see in Remark~\ref{rem:free:nucleus} (\S\ref{sec:free})
that when $\Base$ is countable,
if $P \sle \I{\Stream\Base}$ induces a sober subspace,
then this subspace is homeomorphic to $\Mod_{\th T\I{\Stream\Base}}(\th U)$
for some (abstractly given) theory $\th U$.}}
Recall from Lemma~\ref{lem:ltl:up} that if $\Phi$ is a negation-free formula
of $\LTL(\Base)$, then $\I\Phi$ is upward-closed in $\I{\Stream\Base}$.
Example~\ref{ex:geom:op} thus provides a geometric theory over $\Fin(\I{\Stream\Base})$
for $\I\Phi$.\fn\ 
But we shall get more information by 
explicitly defining a geometric theory $\th T\I\Phi$
by induction on $\Phi$.
Actually, it is even better to work with a stratified presentation
of negation-free $\LTL$.

Our stratification of negation-free $\LTL$ formulae is based on the
following expected fact.
Recall from~\S\ref{sec:prelim:ltl}
the map $\I\Next$ taking $\SP \in \Po(\I{\Stream\Base})$
to $\{ \stream \mid \stream \restr 1 \in \SP\} \in \Po(\I{\Stream\Base})$.

\begin{lemm}
\label{lem:geom:trans:base}
Fix set $\Base$.
\begin{enumerate}[(1)]
\item
\label{item:geom:trans:base:next}
The function $\I\Next \colon \Po(\I{\Stream\Base}) \to \Po(\I{\Stream\Base})$
preserves all unions and all intersections.

\item
\label{item:geom:trans:base:fix}
Given $\LTL$ formulae $\Phi, \Psi$,
let $H_{\Phi,\Psi}$
take $\SP \in \Po(\I{\Stream\Base})$ to $\I\Psi \cup (\I\Phi \cap \I\Next(\SP))$.
Then
we have
$\I{\Phi \Ushort \Psi} = \bigcup_{n \in \NN} H^n_{\Phi,\Psi}(\I\False)$
and
$\I{\Phi \Wshort \Psi} = \bigcap_{n \in \NN} H^n_{\Phi,\Psi}(\I\True)$.

%
\end{enumerate}
\end{lemm}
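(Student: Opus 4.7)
The plan is as follows. Part (1) is immediate: the map $\I\Next$ is just the inverse image $s^{-1}$ along the shift map $s \colon \I{\Stream\Base} \to \I{\Stream\Base}$ with $s(\stream) = \stream\restr 1$, and inverse images of any function preserve arbitrary unions and intersections.

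For part (2), I would introduce two auxiliary families indexed by $n \in \NN$:
\[
\begin{array}{l l l}
  U_n
& \deq
& \left\{ \stream \mid \exists i < n,\ \stream\restr 0,\dots,\stream\restr(i-1) \in \I\Phi \ \text{and}\ \stream\restr i \in \I\Psi\right\}
\\
  B_n
& \deq
& \left\{ \stream \mid \forall i < n,\ \stream\restr i \in \I\Phi \right\}
\end{array}
\]
with the conventions $U_0 = \emptyset$ and $B_0 = \I{\Stream\Base}$ (empty disjunction, resp.\ empty conjunction). The claim to prove by induction on $n$ is that $H^n_{\Phi,\Psi}(\I\False) = U_n$ and $H^n_{\Phi,\Psi}(\I\True) = U_n \cup B_n$. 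The base case holds by the conventions and the identities $\I\False = \emptyset$, $\I\True = \I{\Stream\Base}$. For the inductive step, using part~(1) one has $\I\Next(U_n) = \{\stream \mid \stream\restr 1 \in U_n\}$; reindexing $i \mapsto i+1$, the witnesses for $\stream\restr 1 \in U_n$ intersected with $\stream\restr 0 \in \I\Phi$ exactly produce the witnesses $1 \leq i \leq n$ for $U_{n+1}$, while the disjoint $\I\Psi$ contributes the $i = 0$ case. An analogous reindexing gives $\I\Next(B_n) \cap \I\Phi = B_{n+1}$, which combined with $\I\Psi$ and the previous computation yields $H^{n+1}_{\Phi,\Psi}(\I\True) = U_{n+1} \cup B_{n+1}$.

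It then remains to take the union, resp.\ intersection, over $n$. The identity $\bigcup_n U_n = \I{\Phi \Ushort \Psi}$ is immediate from the definitions. For $\bigcap_n (U_n \cup B_n) = \I{\Phi \Wshort \Psi}$, note first that $\bigcap_n B_n = \{\stream \mid \forall i \in \NN,\ \stream\restr i \in \I\Phi\}$, which is the ``always $\Phi$'' part of $\I{\Phi \Wshort \Psi}$, and $\bigcup_n U_n = \I{\Phi \Ushort \Psi}$ is the other part. The non-trivial direction is $\bigcap_n (U_n \cup B_n) \sle \I{\Phi \Wshort \Psi}$: given $\stream$ in the intersection, either $\stream \in B_n$ for every $n$ (and then $\stream$ satisfies $\Box \Phi$), or there is a least $n_0$ with $\stream \notin B_{n_0}$, which forces $\stream \in U_{n_0} \sle \I{\Phi \Ushort \Psi}$.

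The main obstacle is the $\Wshort$ equation, whose proof genuinely needs the case split between the ``$\Phi$ holds forever'' and the ``$\Phi$ eventually fails but $\Psi$ was reached first'' situations; everything else reduces to an indexing bookkeeping made transparent by the auxiliary sets $U_n$ and $B_n$.
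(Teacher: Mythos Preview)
Your proof is correct, but it follows a different route from the paper's. For part~(2), the paper does not compute the iterates $H^n_{\Phi,\Psi}$ explicitly. Instead it invokes Remark~\ref{rem:ltl:fix}, which establishes that $\I{\Phi \Ushort \Psi}$ and $\I{\Phi \Wshort \Psi}$ are respectively the least and greatest fixpoints of $H_{\Phi,\Psi}$; then, using part~(1), it observes that $H_{\Phi,\Psi}$ preserves all unions and all intersections, hence is Scott-continuous in both the $\sle$ and $\sge$ orders, and applies Kleene's fixpoint theorem (\cite[Theorem 8.15]{dp02book}) to obtain the iterative descriptions. Your approach is more elementary and self-contained: you bypass the fixpoint characterization entirely and work directly from the pointwise definitions of $\I{\Phi \Ushort \Psi}$ and $\I{\Phi \Wshort \Psi}$ via the auxiliary families $U_n$ and $B_n$. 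This has the advantage of not relying on Remark~\ref{rem:ltl:fix} (whose proof in the appendix is itself a page of case analysis), at the cost of some index bookkeeping. One small remark: the ``easy'' direction $\I{\Phi \Wshort \Psi} \sle \bigcap_n (U_n \cup B_n)$ is not purely set-theoretic for the $\I{\Phi \Ushort \Psi}$ part; if $\stream \in U_N$ via witness $i < N$, you still need to observe that $\stream \in B_n$ for $n \leq i$ and $\stream \in U_n$ for $n > i$. This is straightforward but worth making explicit.
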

\begin{figure}[!t]
\[
\begin{array}{r r l @{~~}c@{~~} l @{~~}c@{~~} l @{~~}c@{~~} l @{~~}c@{~~} l}
  \Phi_0,\Psi_0
& \bnf
& \False
& |
& \True
& |
& a
\\
& |
& \Phi_0 \lor \Psi_0
& |
& \Phi_0 \land \Psi_0
& |
& \Next \Phi_0
\\
\\

  G(\Base) \ni
  \Phi_1,\Psi_1
& \bnf
& \Phi_0
\\
& |
& \Phi_1 \lor \Psi_1
& |
& \Phi_1 \land \Psi_1
& |
& \Next \Phi_1
& |
& \Phi_1 \Ushort \Psi_1
\\
\\

  G_\delta(\Base) \ni
  \Phi_2,\Psi_2
& \bnf
& 
  \multicolumn{3}{l}{\Phi_1 \in G(\Base)}
\\
& |
& \Phi_2 \lor \Psi_2
& |
& \Phi_2 \land \Psi_2
& |
& \Next \Phi_2
& |
& \Phi_2 \Wshort \Psi_2
\\
\\

  \text{n.-f.\ $\LTL(\Base)$} \ni
  \Phi_3,\Psi_3
& \bnf
& 
  \multicolumn{3}{l}{\Phi_2 \in G_\delta(\Base)}
\\
& |
& \Phi_3 \lor \Psi_3
& |
& \Phi_3 \land \Psi_3
& |
& \Next \Phi_3
& |
& \Phi_3 \Ushort \Psi_3
& |
& \Phi_3 \Wshort \Psi_3
\end{array}
\]
\caption{Stratified grammar for negation-free $\LTL(\Base)$, where $a \in \Base$.}
\label{fig:geom:trans}
\end{figure}

Figure~\ref{fig:geom:trans} presents a
stratified grammar for negation-free $\LTL(\Base)$.
We let $G = G(\Base)$ be the set of all formulae $\Phi_1,\Psi_1$
from the second layer in Figure~\ref{fig:geom:trans}.
$G_\delta= G_\delta(\Base)$ consists of formulae $\Phi_2,\Psi_2$
from the third layer.
The negation-free $\LTL(\Base)$ formulae are the
those
from the last layer.

\begin{exem}
\label{ex:geom:trans:class}
Recall from~\S\ref{sec:prelim:ltl} that
$\Diam \Psi = (\True \Ushort \Psi)$
and
$\Box \Phi = (\Phi \Wshort \False)$.
Hence, $G$ is closed under $\Diam(-)$ and $G_\delta$ is closed under $\Box(-)$.
But $G_\delta$ is (crucially) \emph{not} closed under $\Diam(-)$.
In particular, we have $\Next a, \Diam a \in G$ 
and $\Box a, \Box \Diam a \in G_\delta$.
On the other hand, the negation-free formula $\Diam \Box a$ is not a $G_\delta$
formula.

When looking at $\Diam \Psi$ and $\Box \Phi$
via Lemma~\ref{lem:geom:trans:base}(\ref{item:geom:trans:base:fix}),
it is convenient
to simplify the functions $H_{\True,\Psi}$ and $H_{\Phi,\False}$
to 
respectively
$\I{\Psi} \cup \I\Next(-)$ and $\I\Phi \cap \I\Next(-)$.
This amounts to
restate Lemma~\ref{lem:geom:trans:base}(\ref{item:geom:trans:base:fix})
as
\(
  \I{\Diam\Psi}
  =
  \bigcup_{m \in \NN} \I\Psi \cup \I{\Next\Psi} \cup \dots \cup \I{\Next^m\Psi}
\),
and
similarly for $\I{\Box\Phi}$.
%
%

%
\end{exem}

\begin{remark}
\label{rem:geom:trans:class}
The interpretations of
formulae from $G$ or $G_\delta$ have the expected topological complexity.
Namely,
if $\Phi_1 \in G$, then $\I{\Phi_1}$ is Scott-open in $\I{\Stream\Base}$.
If $\Phi_2 \in G_\delta$, then $\I{\Phi_2}$ is a countable intersection
of Scott-opens (i.e.\ a $G_\delta$ subset of $\I{\Stream\Base}$).
%
\end{remark}

This stratification of negation-free $\LTL$ allows for a stratified
translation to geometric theories.
In fact, each $\LTL$ formula $\Phi_1 \in G$ can be translated
to a single geometric formula $\th F\I{\Phi_1}$,
with $\th F\I{\Phi_0}$ finite when $\Phi_0$ is from the first layer.
Formulae $\Phi$ from the last two layers will be translated to
antecedent-free theories $\th T\I{\Phi}$,
with $\th T\I{\Phi_2}$ countable when $\Phi_2 \in G_\delta$.

Fix a set $\Base$ and let $\At \deq \Fin(\I{\Stream\Base})$.
We devise operations on geometric formulae and theories
which mimic the action of $\I\Next(-)$ on $\Po(\I{\Stream\Base})$.
%
%
We begin with geometric formulae.
The idea is that given $\stream \in \I{\Stream\Base}$
and $d \in \Fin(\I{\Stream\Base})$,
we have $d \leq_{\I{\Stream\Base}} \stream\restr 1$
exactly when $(\bot \cdot d) \leq_{\I{\Stream\Base}} \stream$.
The geometric formula $\Next\varphi$
is then defined by propagating the stream operation $d \mapsto \bot \cdot d$
in $\varphi$.
We set
$\Next d \deq \bot \cdot d$ and
\[
\begin{array}{l !{\deq} l !{\qquad} l !{\deq} l !{\qquad} l !{\deq} l}
  \Next\true
& \true

& \Next (\gamma \land \gamma')
& (\Next \gamma) \land (\Next \gamma')

& \Next \bigvee_{i \in I} \gamma_i
& \bigvee_{i \in I} \Next \gamma_i
\end{array}
\]



\noindent
Given a theory $\th{Th}$ over $\At$,
we let
\(
  \Next \th{Th}
  \deq
  \left\{
  \Next \psi \thesis \Next \varphi
  \mid (\psi \thesis \varphi) \in \th{Th}
  \right\}
\).
Note that $\Next \th{Th}$ is antecedent-free whenever so is $\th{Th}$.
Recall the map $x \mapsto \nu(x)$ of
Propositions~\ref{prop:geom:algdcpo} and~\ref{prop:geom:algdcpo:scott}.

\begin{lemm}
\label{lem:geom:trans:nextgeom}
Let $\stream \in \I{\Stream\Base}$.
\begin{enumerate}[(1)]
\item
We have
$\nu(\stream) \models \Next \varphi$ if, and only if,
$\nu(\stream \restr 1) \models \varphi$.

\item
We have
$\nu(\stream) \in \Mod(\Next\th{Th})$ if, and only if,
$\nu(\stream \restr 1) \in \Mod(\th{Th})$.
\end{enumerate}
\end{lemm}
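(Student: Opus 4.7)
The heart of the lemma is the definition $\Next d \deq \bot \cdot d$ on atomic propositions $d \in \At = \Fin(\I{\Stream\Base})$, and everything else will follow by routine structural induction. The key atomic observation I would establish first is: for any finite $d \in \Fin(\I{\Stream\Base})$ and any $\stream \in \I{\Stream\Base}$,
\[
\bot \cdot d ~\leq_{\I{\Stream\Base}}~ \stream
\quad\text{iff}\quad
d ~\leq_{\I{\Stream\Base}}~ \stream \restr 1.
\]
Indeed, unfolding the pointwise order, $(\bot \cdot d)(0) = \bot$ always lies below $\stream(0)$, while for $n \geq 1$ we have $(\bot \cdot d)(n) = d(n-1)$ and $\stream(n) = (\stream \restr 1)(n-1)$, so the first inequality is equivalent to $d(m) \leq_{\I\Base} (\stream \restr 1)(m)$ for every $m \in \NN$.

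For part (1), recall that $\nu(\stream)(d) = 1$ iff $d \leq_{\I{\Stream\Base}} \stream$. The atomic case then reads $\nu(\stream) \models \Next d$ iff $\nu(\stream)(\bot \cdot d) = 1$ iff $\bot \cdot d \leq \stream$ iff $d \leq \stream \restr 1$ iff $\nu(\stream \restr 1) \models d$, by the observation above. I would then proceed by structural induction on conjunctive formulae $\gamma \in \Conj(\At)$: the case $\gamma = \true$ is immediate since $\Next\true = \true$, and the case $\gamma = \gamma_1 \land \gamma_2$ follows from the clause $\Next(\gamma_1 \land \gamma_2) = \Next\gamma_1 \land \Next\gamma_2$ together with the definition $\nu \models \gamma_1 \land \gamma_2$ iff $\nu \models \gamma_1$ and $\nu \models \gamma_2$. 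The geometric case $\varphi = \bigvee_{i \in I} \gamma_i$ then follows from $\Next\bigvee_{i \in I}\gamma_i = \bigvee_{i \in I}\Next\gamma_i$ and the clause $\nu \models \bigvee_{i \in I}\gamma_i$ iff $\nu \models \gamma_i$ for some $i$, invoking the previously established conjunctive case at each $\gamma_i$.

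For part (2), I would simply reduce to part (1) sequent by sequent. By definition, $\nu(\stream) \in \Mod(\Next\th{Th})$ means that for every $(\psi \thesis \varphi) \in \th{Th}$, if $\nu(\stream) \models \Next\psi$ then $\nu(\stream) \models \Next\varphi$. Applying part (1) twice, this rephrases as: for every $(\psi \thesis \varphi) \in \th{Th}$, if $\nu(\stream \restr 1) \models \psi$ then $\nu(\stream \restr 1) \models \varphi$, which is exactly $\nu(\stream \restr 1) \in \Mod(\th{Th})$.

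There is no real obstacle here; the only thing to be careful about is keeping the atomic base case honest, since it is the only place where the actual definition of $\Next$ on $\At$ and the pointwise structure of the order on $\I{\Stream\Base}$ intervenes. Everything else is a propagation of the propositional clauses through the syntactic definition of $\Next$.
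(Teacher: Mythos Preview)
Your proposal is correct and follows exactly the approach the paper intends: the key atomic observation $(\bot \cdot d) \leq_{\I{\Stream\Base}} \stream$ iff $d \leq_{\I{\Stream\Base}} \stream \restr 1$ is stated explicitly in the text just before the definition of $\Next$ (and again in Remark~\ref{rem:app:geom:trans:next}), and the rest is the routine structural induction you describe. The paper does not spell out the induction, treating the lemma as immediate from this observation and the propagation clauses for $\Next$.
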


We now define a geometric formula $\th F\I{\Phi_1}$ over $\At$
by induction on $\Phi_1 \in G$:
\[
\begin{array}{l !{\deq} l !{\qquad} l !{\deq} l !{\qquad}}
  \th F\I{a}
& a \cdot \bot^\omega
& \multicolumn{2}{l}{}
\\

  \th F\I{\True}
& \true

& \th F\I{\Phi_1 \land \Psi_1}
& \th F\I{\Phi_1} \land \th F\I{\Psi_1}
\\

  \th F\I{\False}
& \false

& \th F\I{\Phi_1 \lor \Psi_1}
& \th F\I{\Phi_1} \lor \th F\I{\Psi_1}
\\

  \th F\I{\Next\Phi_1}
& \Next \th F\I{\Phi_1}

& \th F\I{\Phi_1 \Ushort \Psi_1}
& \bigvee_{n \in \NN} \th H^n_{\th F\I{\Phi_1},\th F\I{\Psi_1}}(\false)
\end{array}
\]

%
%
%
%
%

\noindent
where $\th H_{\varphi,\psi}(\theta) \deq \psi \lor (\varphi \land \Next(\theta))$.
(We silently included the case of $\Phi_0$ from the first layer.)

\begin{lemm}
\label{lem:geom:trans:cor}
Let $\Phi_1 \in G$.
Given $\stream \in \I{\Stream\Base}$,
we have
$\stream \in \I{\Phi_1}$
if, and only if,
$\nu(\stream) \models \th F\I{\Phi_1}$.
\end{lemm}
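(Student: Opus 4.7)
The plan is to proceed by structural induction on $\Phi_1 \in G$, following the grammar from Figure~\ref{fig:geom:trans}. The base cases handle $\True$, $\False$ and atomic formulae $a \in \Base$. For $\True$ and $\False$ both sides are trivial. For $\Phi_1 = a$, one observes that $\nu(\stream) \models a \cdot \bot^\omega$ iff $(a \cdot \bot^\omega) \leq_{\I{\Stream\Base}} \stream$ iff $\stream(0) = a$, which matches the definition of $\I a$.

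The cases for conjunction and disjunction are immediate from Remark~\ref{rem:geom:conn} combined with the inductive hypothesis: satisfaction of geometric formulae commutes with the propositional connectives under our definitions of $\wedge$ and $\vee$ on $\Geom(\At)$, and the interpretation of $\LTL$ takes conjunction to intersection and disjunction to union. The case $\Next \Phi_1$ reduces to the inductive hypothesis applied to $\stream \restr 1$, by Lemma~\ref{lem:geom:trans:nextgeom}(1) and the clause $\I{\Next \Phi_1} = \I\Next(\I{\Phi_1}) = \{\stream \mid \stream\restr 1 \in \I{\Phi_1}\}$.

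The main obstacle is the until case $\Phi_1 \Ushort \Psi_1$, which I would handle by a secondary induction on $n \in \NN$ to establish the auxiliary claim:
\[
  \nu(\stream) \models \th H^n_{\th F\I{\Phi_1},\th F\I{\Psi_1}}(\false)
  \quad\text{iff}\quad
  \stream \in H^n_{\Phi_1,\Psi_1}(\I\False).
\]
For $n=0$ both sides are false. For the inductive step, unfold one copy of $\th H$: $\nu(\stream) \models \th F\I{\Psi_1} \lor (\th F\I{\Phi_1} \land \Next \th H^n(\false))$ iff, by the structural IH on $\Phi_1$ and $\Psi_1$ together with Lemma~\ref{lem:geom:trans:nextgeom}(1), $\stream \in \I{\Psi_1}$ or $\stream \in \I{\Phi_1}$ and $\stream \restr 1$ satisfies the predecessor, which by the secondary IH means $\stream \in H^{n+1}_{\Phi_1,\Psi_1}(\I\False)$.

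Finally, since $\th F\I{\Phi_1 \Ushort \Psi_1} = \bigvee_{n \in \NN} \th H^n_{\th F\I{\Phi_1},\th F\I{\Psi_1}}(\false)$, a valuation models this disjunction iff it models some disjunct, which by the auxiliary claim happens iff $\stream \in \bigcup_n H^n_{\Phi_1,\Psi_1}(\I\False)$. By Lemma~\ref{lem:geom:trans:base}(\ref{item:geom:trans:base:fix}), this union is exactly $\I{\Phi_1 \Ushort \Psi_1}$, concluding the case and hence the induction.
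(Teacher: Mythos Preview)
Your proof is correct and follows exactly the approach the paper intends: the paper does not spell out a proof of this lemma, but the surrounding infrastructure (Lemma~\ref{lem:geom:trans:nextgeom} for the $\Next$ case and Lemma~\ref{lem:geom:trans:base}(\ref{item:geom:trans:base:fix}) for the fixpoint description of $\Ushort$) is set up precisely so that the structural induction you carry out goes through. One minor clarification worth making explicit: your auxiliary claim in the $\Ushort$ case should be stated as holding for \emph{all} streams $\stream$, since in the inductive step you apply it to $\stream\restr 1$; this is implicit in your write-up but should be made overt.
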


Finally,
the antecedent-free theory $\th T\I{\Phi_3}$
is defined by induction on $\Phi_3$ as follows:
\[
\begin{array}{l !{\deq} l !{\qquad} l !{\deq } l}
  \th T\I{\Phi_1}
& 
  \{\thesis \th F\I{\Phi_1} \}

& \th T\I{\Next\Phi_3}
& \Next \th T\I{\Phi_3}
\\

  \th T\I{\Phi_3 \land \Psi_3}
& \th T(\I{\Phi_3}) \curlywedge \th T(\I{\Psi_3})

& \th T\I{\Phi_3 \Wshort \Psi_3}
& \bigcurlywedge_{n \in \NN}
  \th{TH}^n_{\th T\I{\Phi_3},\th T\I{\Psi_3}}(\{\thesis \true \})
\\

  \th T\I{\Phi_3 \lor \Psi_3}
& \th T(\I{\Phi_3}) \curlyvee \th T(\I{\Psi_3})

& \th T\I{\Phi_3 \Ushort \Psi_3}
& \bigcurlyvee_{n \in \NN}
  \th{TH}^n_{\th T\I{\Phi_3},\th T\I{\Psi_3}}(\{\thesis \false \})
\end{array}
\]

%
%
%
%

\noindent
where
$\th{TH}_{\th T, \th U}(\th V) \deq \th U \curlyvee (\th T \curlywedge \Next \th V)$.
(We silently included the case of $\Phi_2 \in G_\delta$.)

\begin{theo}
\label{thm:geom:trans:cor}
Let $\Phi$ be negation-free.
For $\stream \in \I{\Stream\Base}$,
we have
$\stream \in \I{\Phi}$
if, and only if,
$\nu(\stream) \in \Mod(\th T\I{\Phi})$.
\end{theo}

\begin{remark}
\label{rem:geom:trans:card}
A direct inspection reveals that $\th F\I{\Phi_0}$ is indeed
a finite geometric formula when $\Phi_0$ is from the first layer.
Similarly, the geometric theory $\th T\I{\Phi_2}$ contains only countably-many
sequents when $\Phi_2 \in G_\delta$.
%
\end{remark}

\begin{remark}
\label{rem:geom:trans:equiv}
Recall that $\LTL$ formulae $\Phi,\Psi$ are \emph{equivalent},
notation $\Phi \equiv \Psi$, when $\I\Phi = \I\Psi$.
The following standard equivalences are obtained
similarly as in~\cite[\S 5.1.4]{bk08book}.
%
\[
\begin{array}{l !{\equiv} l !{\qquad} l !{\equiv} l !{\qquad} l !{\equiv} l}
  \Next\False
& \False

& \Next(\Phi \lor \Psi)
& \Next\Phi \lor \Next\Psi

& \Next(\Phi \Ushort \Psi)
& (\Next\Phi) \Ushort (\Next\Psi)
\\

  \Next\True
& \True

& \Next(\Phi \land \Psi)
& \Next\Phi \land \Next \Psi

& \Next(\Phi \Wshort \Psi)
& (\Next\Phi) \Wshort (\Next\Psi)
\end{array}
\]

\noindent
Hence, up to equivalence, we can push
the $\Next$'s to atoms $a \in \Base$.
In particular, we may assume that $\Next$
occurs only in first layer's formulae of the form $\Next^n a$.
\end{remark}

\begin{exem}
\label{ex:geom:trans:infty}
Let $\Phi_0$ be an $\LTL$ formula from the first layer in Figure~\ref{fig:geom:trans}.
Up to equivalence (Remark~\ref{rem:geom:trans:equiv}),
we can assume that $\Phi_0$ is in disjunctive normal form,
and actually that $\Phi_0$ is a disjunction of conjunctions
of formulae of the form $\Next^n a$.
Then $\th F\I{\Phi_0}$ is simply a disjunction of conjunctions
of atomic propositions of the form
$(\bot^n \cdot a \cdot \bot^\omega) \in \Fin(\I{\Stream\Base})$.

Consider the formula $\Phi_1 \deq \Diam\Phi_0 \in G$.
Recall that $\Diam\Phi_0 = (\True \Ushort \Phi_0)$,
and note that
$\th H_{\true,\varphi}(\theta) = \varphi \lor (\true \land \Next\theta)$.
Up to the replacement of $\true \land \Next\theta$
by $\Next\theta$,
we get that $\th F\I{\Phi_1}$ is the geometric formula
\(
  \bigvee_{m \in \NN}
  \left(
  \th F\I{\Phi_0}
  \lor
  \Next \th F\I{\Phi_0} 
  \lor
  \dots
  \lor
  \Next^m \th F\I{\Phi_0} 
  \right)
\).
This mirrors
the formulation of
Lemma~\ref{lem:geom:trans:base}(\ref{item:geom:trans:base:fix}) in
Example~\ref{ex:geom:trans:class}.

We turn to the formula $\Phi_2 \deq \Box\Diam\Phi_0 \in G_\delta$.
We have $\Box\Phi_1 = (\Phi_1 \Wshort \False)$
and we simplify $\th H_{\varphi,\false}(\theta)$ to $\varphi \land \Next\theta$.
The theory $\th T\I{\Phi_2}$ then consists of all 
the sequents
\[
\begin{array}{l l}
  \thesis
& \bigwedge_{n \leq N}
  \bigvee_{m \in \NN}
  \Next^n \th F\I{\Phi_0}
  \lor
  \Next^{n+1} \th F\I{\Phi_0}
  \lor
  \dots
  \lor
  \Next^{n+m} \th F\I{\Phi_0}
\end{array}
\]


\noindent
where $N$ ranges over $\NN$.
This mirrors the fact that
$\I{\Box\Diam\Phi_0}$ is the set of those streams $\stream \in \I{\Stream\Base}$
such that for each $n \in \NN$, there is some $m \in \NN$
with $\stream\restr(n+m) \in \I{\Phi_0}$.
\end{exem}

\begin{exem}
\label{ex:geom:trans:uall}
Continuing Example~\ref{ex:geom:trans:infty},
we now consider the case of $\Phi_3 \deq \Diam\Box a$ with $a \in \Base$.
For this more involved example,
we allow ourselves some simplifications that we deliberately avoided
in Example~\ref{ex:geom:trans:infty}.
Namely, for $\th T\I{\Box\Phi}$ and $\th T\I{\Diam\Psi}$
we take respectively
$\bigcurlywedge_{n \in \NN} \Next^n \th T\I\Phi$
and
$\bigcurlyvee_{m \in \NN} \Next^m \th T\I\Psi$.
Then we have
\[
\begin{array}{r c l}
  \th T\I{\Box a}
& =
& \left\{
  \Next^n \th F\I{a}
  \mid n \in \NN
  \right\}
\\

  \th T\I{\Diam\Box a}
& =
& \left\{
  \bigvee_{m \in \NN}
  \Next^{m+f(m)} \th F\I{a}
  \mid f \colon \NN \to \NN
  \right\}
\end{array}
\]

\noindent
The uncountable theory $\th T\I{\Diam\Box a}$ relies on
the (classical) choice principle behind Proposition~\ref{prop:geom:op}.
It expresses that given
a stream $\stream \in \I{\Stream\Base}$, we have $\stream \notin \I{\Diam\Box a}$
if, and only if,
there exists a function $f \colon \NN \to \NN$
such that $\stream(m+f(m)) \neq a$ for all $m \in \NN$.
In particular, if $\stream \notin \I{\Diam\Box a}$,
then the function $g \colon m \mapsto m+f(m)$ finds arbitrary large
$n = g(m)$ such that $\stream(n) \neq a$.
\end{exem}

Consider a negation-free $\Phi \in \LTL(\Base)$.
We see the subset $\I\Phi \sle \I{\Stream\Base}$
as a subspace rather than as a sub-dcpo
(cf.\ Example~\ref{ex:topo:omegaword}).
%
This subspace 
$\I\Phi = (\I\Phi, \Open(\I{\Stream\Base})\restr \I\Phi)$
is always sober (Corollary~\ref{cor:frames:sub:upsober}).
With geometric logic, we gained a description of
the subspace $\I\Phi$
as the space of models of the theory $\th T\I{\Phi}$.
Namely, the subspace $\I\Phi$
is homeomorphic to
$\Mod_{\I{\Stream\Base}}(\th T\I\Phi)$
(Theorem~\ref{thm:geom:trans:cor} and Proposition~\ref{prop:geom:sub}),
while the frame
$\Open(\I{\Stream\Base}) \restr \I\Phi$
is isomorphic to
$\Geom(\At)/\Mod_{\I{\Stream\Base}}(\th T\I\Phi)$
(Theorem~\ref{theo:geom:pt}).
Note that in the latter, two geometric formulae are equivalent
exactly when they have the same $\th T\I\Phi$-models.
Hence geometric formulae generate the frame  
$\Open(\I{\Stream\Base}) \restr \I\Phi$.
Moreover, we have seen in 
Examples~\ref{ex:geom:trans:infty} and~\ref{ex:geom:trans:uall}
concrete cases in which the theory $\th T\I\Phi$
explicitly represents approximations of $\I\Phi$.


However, a limitation of this approach is that the frame 
$\Geom(\At)/\Mod_{\I{\Stream\Base}}(\th T\I\Phi)$
is defined by purely semantic means.
\opt{full,long}{We discuss this in~\ref{sec:free} below, }%
\opt{short}{We discuss this in~\cite[\S 5]{rs23full}, }%
which gives a complete deduction system for $\th T\I\Phi$ in the $G_\delta$ case.
We now comment on potential extensions to $\LTL$ with negation.

\begin{remark}
Say that an $\LTL$ formula $\Phi$ is an \emph{$F$ formula}
if $\Phi$ is the negation of a $G$ formula.
The \emph{$F_\sigma$} formulae are the negations of the $G_\delta$ ones.
For instance, $\lnot a$ (with $a \in \Base$) is a simple non-trivial $F$ formula,
while $\lnot \Box a \equiv \Diam \lnot a$ is an $F_\sigma$ formula.
It follows from Remark~\ref{rem:geom:trans:class}
that $F$ formulae induce Scott-\emph{closed} subsets of $\I{\Stream\Base}$,
and that the $F_\sigma$ ones induce countable unions of Scott-closed
sets (i.e.\ $F_\sigma$ sets).

Now, if $\Phi = \lnot \Phi_1$ with $\Phi_1 \in G$,
then the subspace $\I\Phi$ is represented by the geometric theory
$\{ \th F\I{\Phi_1} \thesis \false \}$.
Hence Theorem~\ref{thm:geom:trans:cor} extends to $F$ formulae
(so that Proposition~\ref{prop:geom:sub} and
Theorem~\ref{theo:geom:pt} can be applied in this case).
But beware that this 
does not hold in general
for $F_\sigma$ formulae,
since the subspace $\I{\lnot \Box a} = \I{\Diam \lnot a}$ is not representable 
in geometric logic (in the sense of Example~\ref{ex:geom:sub}).
In particular, there is no geometric theory $\th{T}$
such that
$\Mod(\th{T}) = \bigcup_{m \in \NN} \Mod\{ \Next^m \th F\I{a} \thesis \false \}$,
and Proposition~\ref{prop:geom:op} does not extend to infinitely many
arbitrary theories.
\end{remark}

\opt{full}{
\section{Free Frames and Spatiality}
\label{sec:free}

Our approach to geometric logic in~\S\ref{sec:geom}
focuses on \emph{spaces} of \emph{models}.
However, the literature rather considers geometric logic
as a formal way to present frames by \emph{generators and relations}.
A customary tool for this is the notion of \emph{congruence preorder}
(see~\cite[\S 4 and \S 6.1--2]{vickers89book}
and~\cite[\S 3]{heckmann15mscs}).
%
It is folklore
that congruence preorders can be presented using an (infinitary)
deduction system for geometric logic
(in the spirit of e.g.~\cite[\S 2.2]{vickers07chapter}
and~\cite[D1.1.7(m) and \S D1.3]{johnstone02book}).

\begin{figure}[t]
\[
\begin{array}{c}
\rn{Th}~
\dfrac{}
  {\varphi \thesis_{\th T} \psi}
~\text{(if $(\varphi \thesis \psi) \in \th T$)}

\qquad\quad

\rn{Ax}~
\dfrac{}
  {\varphi \thesis_{\th T} \varphi}

\qquad\quad

\rn{Cut}~
\dfrac{\varphi \thesis_{\th T} \theta
  \qquad
  \theta \thesis_{\th T} \psi}
  {\varphi \thesis_{\th T} \psi}

\\\\

\rn{$\land$-L$_1$}~
\dfrac{}
  {\varphi \land \psi \thesis_{\th T} \varphi}

\qquad\qquad

\rn{$\land$-L$_2$}~
\dfrac{}
  {\varphi \land \psi \thesis_{\th T} \psi}

\qquad\qquad

%
%
%

\rn{$\land$-R}~
\dfrac{\theta \thesis_{\th T} \varphi
  \qquad
  \theta \thesis_{\th T} \psi}
  {\theta \thesis_{\th T} \varphi \land \psi}

\\\\

\rn{$\true$-R}~
\dfrac{}
  {\varphi \thesis_{\th{T}} \true}

\qquad\quad

\rn{$\bigvee$-L}~
\dfrac{\text{for all $i \in I$, $\varphi_i \thesis_{\th T} \psi$}}
  {\bigvee_{i \in I}\varphi_i \thesis_{\th T} \psi}

\qquad\quad

\rn{$\bigvee$-R}~
\dfrac{}
  {\varphi_i \thesis_{\th T} \bigvee_{i \in I} \varphi_i}

%
\end{array}
\]
\caption{Deduction rules for geometric logic over a theory $\th T$.%
\label{fig:free}}
\end{figure}

Let $\th{T}$ be a geometric theory over $\At$.
The \emph{deduction relation} $\thesis_{\th{T}}$ on $\Geom(\At)$
is defined by the rules in Figure~\ref{fig:free},
using the constructs of Remark~\ref{rem:geom:conn}.
The relation $\thesis_{\th{T}}$ is a preorder.
Similarly as in~\S\ref{sec:geom:spaces},
we quotient $\Geom(\At)$ under the relation $\sim_{\th{T}}$
of \emph{$\th{T}$-equivalence} defined as
$\varphi \sim_{\th{T}} \psi$ iff $\varphi \thesis_{\th{T}} \psi$
and $\psi \thesis_{\th{T}} \varphi$.
We write $\class\varphi_{\th{T}}$ for the $\sim_{\th{T}}$-class of $\varphi$
and $\Geom(\At)/\th{T}$ for the set of $\sim_{\th{T}}$-classes.
We let
$\leq_{\th T}$ be the partial order on $\Geom(\At)/\th{T}$
induced by $\thesis_{\th{T}}$.
In contrast with~\cite[\S D1.3 and D1.4.14]{johnstone02book}
(see also~\cite[\S 2.2]{vickers07chapter}),
we do not need to enforce frame distributivity in Figure~\ref{fig:free},
since it is hardwired in the constructs of Remark~\ref{rem:geom:conn}.



\begin{lemm}
\label{lem:free:frame}
$(\Geom(\At)/\th{T}, \leq_{\th{T}})$ is a frame.
\end{lemm}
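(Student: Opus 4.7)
The plan is to work directly with the syntactic operations on $\Geom(\At)$ defined in Remark~\ref{rem:geom:conn}, show they descend to well-defined operations on the quotient $\Geom(\At)/\th{T}$, and verify the frame axioms one by one, relying on the deduction rules of Figure~\ref{fig:free}.

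First, I would establish that the operations $(-)\wedge(-)$ and $\bigvee_{i\in I}(-)$ from Remark~\ref{rem:geom:conn} are monotone with respect to $\thesis_{\th T}$, hence descend to the quotient. For $\bigvee$, if $\varphi_i \thesis_{\th T}\varphi'_i$ for every $i\in I$, then by $\bigvee$-R each $\varphi'_i \thesis_{\th T} \bigvee_{j}\varphi'_j$, by Cut each $\varphi_i \thesis_{\th T} \bigvee_{j}\varphi'_j$, and then $\bigvee$-L yields $\bigvee_i\varphi_i \thesis_{\th T}\bigvee_j\varphi'_j$. Monotonicity of $\wedge$ is similar, using $\land$-L$_1$, $\land$-L$_2$, Cut and $\land$-R. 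It follows that $\class{\bigvee_i\varphi_i}_{\th T}$ depends only on $(\class{\varphi_i}_{\th T})_{i\in I}$, and likewise for $\class{\varphi\wedge\psi}_{\th T}$.

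Next, I would check that these operations give sups and finite infs in $\Geom(\At)/\th T$. For any family $(\varphi_i)_{i\in I}$, rule $\bigvee$-R gives $\class{\varphi_i}_{\th T}\leq_{\th T}\class{\bigvee_j\varphi_j}_{\th T}$, and $\bigvee$-L gives the universal property: if every $\class{\varphi_i}_{\th T}\leq_{\th T}\class\theta_{\th T}$ then $\class{\bigvee_i\varphi_i}_{\th T}\leq_{\th T}\class\theta_{\th T}$. Hence arbitrary sups exist. Dually, $\land$-L$_{1,2}$ and $\land$-R show that $\class{\varphi\wedge\psi}_{\th T}$ is the binary inf. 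The rule $\true$-R makes $\class{\true}_{\th T}$ the top element, and $\class{\bigvee\emptyset}_{\th T}=\class{\false}_{\th T}$ is the bottom (as $\bigvee$-L is vacuous on an empty index set). Since the poset has arbitrary sups, it has arbitrary infs, and is a complete lattice.

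Finally, the frame distributive law comes essentially for free. Writing $\varphi=\bigvee_{i\in I}\gamma_i$ and $\psi_j=\bigvee_{k\in K_j}\gamma'_{j,k}$, the syntactic definitions in Remark~\ref{rem:geom:conn} give
\[
\varphi\wedge\bigvee_{j\in J}\psi_j
\;=\;\bigvee\{\gamma_i\wedge\gamma'_{j,k}\mid i\in I,\;j\in J,\;k\in K_j\}
\;=\;\bigvee_{j\in J}(\varphi\wedge\psi_j)
\]
as \emph{identical} geometric formulae. Passing to $\sim_{\th T}$-classes then yields $\class\varphi_{\th T}\wedge\bigvee_{j}\class{\psi_j}_{\th T}=\bigvee_j(\class\varphi_{\th T}\wedge\class{\psi_j}_{\th T})$, which is the frame distributive law.

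The only mildly delicate point is making sure the syntactic representation of $\bigvee_{i\in I}\varphi_i$ and $\varphi\wedge\psi$ used throughout is the \emph{canonical} one from Remark~\ref{rem:geom:conn}, so that distributivity really holds on the nose rather than only up to $\sim_{\th T}$; this is why Figure~\ref{fig:free} does not need an explicit distributivity rule, as the authors note. Everything else is routine bookkeeping with the deduction rules.
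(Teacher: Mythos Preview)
Your proposal is correct and follows essentially the same approach as the paper: show that the syntactic $\bigvee$ and $\wedge$ from Remark~\ref{rem:geom:conn} realize sups and binary infs via the rules $\rn{$\bigvee$-L}$, $\rn{$\bigvee$-R}$, $\rn{$\land$-L$_{1,2}$}$, $\rn{$\land$-R}$, $\rn{$\true$-R}$, and then obtain frame distributivity \emph{on the nose} by unfolding the syntactic definitions so that both sides are literally the same geometric formula. The paper omits your explicit monotonicity step (well-definedness is implicit once the operations are shown to compute actual sups and infs), but otherwise the arguments match.
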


\paragraph{Spatiality.}
How does $\Geom(\At)/\th{T}$ relate to the frame
$\Geom(\At)/\Mod(\th{T})$ of~\S\ref{sec:geom:spaces}?
To answer this question, we compare the preorders $\thesis_{\th{T}}$
and $\preceq_{\Mod(\th{T})}$.
First, an easy induction on derivations proves the following
\emph{soundness} property:
\[
\begin{array}{l l l}
  \psi \thesis_{\th{T}} \varphi
& \longimp
& \psi \preceq_{\Mod(\th{T})} \varphi
\end{array}
\]

The converse implication is a form of \emph{completeness}:
if every model of $\th{T}$ is a model of
the sequent
$\psi \thesis \varphi$,
then $\psi \thesis_{\th{T}} \varphi$ is derivable.
This may fail in geometric logic, by lack of spatiality.
%
%
A frame $L$ is said to be \emph{spatial} when given $a \not\leq b$ in $L$,
there is a point $\Filt \in \pt(L)$ such that $a \in \Filt$ and $b \notin \Filt$.
A frame is spatial precisely when it is isomorphic to 
$\Open(X)$ for some space $X$, but not every frame is spatial
(\cite[\S II.1.5]{johnstone82book}).


Assume now that $\Geom(\At)/\th{T}$ is spatial,
and let $\psi \not\thesis_{\th{T}} \varphi$.
Hence $\class\psi_{\th{T}} \not\leq_{\th{T}} \class\varphi_{\th{T}}$
and there is a point $\Filt$
of $\Geom(\At)/\th{T}$
such that $\class\psi_{\th{T}} \in \Filt$ and $\class\varphi_{\th{T}} \notin \Filt$.
Let $\nu \colon \At \to \two$ take $p \in \At$ to $1$ iff
$\class{p}_{\th{T}} \in \Filt$.

\begin{lemm}
\label{lem:free:filtmod}
Let $\Filt$ and $\nu$ as above.
Then for every $\theta \in \Geom(\At)$,
we have $\nu \models \theta$ if, and only if, $\class\theta_{\th{T}} \in \Filt$.
In particular, $\nu$ is a model of $\th{T}$ with
$\nu \models \psi$ and $\nu \not\models \varphi$.
\end{lemm}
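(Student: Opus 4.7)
The plan is to prove the biconditional first, since the two assertions about $\nu$ being a model with $\nu \models \psi$ and $\nu \not\models \varphi$ will fall out immediately from it.

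The key observation is that $\Filt \in \pt(\Geom(\At)/\th{T})$ is, by definition, a frame morphism $\Geom(\At)/\th{T} \to \two$, so (identifying $\Filt$ with the subset on which it takes value $1$) it preserves the top element $\class\true_{\th{T}}$, all finite meets, and all (infinitary) joins. On the other hand, by Lemma~\ref{lem:free:frame} together with the rules $\rn{$\land$-L$_1$}$, $\rn{$\land$-L$_2$}$, $\rn{$\land$-R}$, $\rn{$\true$-R}$, $\rn{$\bigvee$-L}$ and $\rn{$\bigvee$-R}$ of Figure~\ref{fig:free}, the quotient $\class{-}_{\th{T}} \colon \Geom(\At) \to \Geom(\At)/\th{T}$ sends $\true$ to the top, sends each binary conjunction $\gamma \land \gamma'$ to the corresponding meet, and sends each formal disjunction $\bigvee_{i \in I} \varphi_i$ to the corresponding join. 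I would prove the biconditional $\nu \models \theta \iff \class\theta_{\th{T}} \in \Filt$ by induction on the structure of $\theta$. The base case of atoms $p \in \At$ is the very definition of $\nu$. The $\true$ case and the binary conjunction case use that $\Filt$ preserves the top and finite meets. The disjunction case $\theta = \bigvee_{i \in I} \gamma_i$ (and the general geometric case obtained via Remark~\ref{rem:geom:conn}) uses that $\Filt$ preserves arbitrary joins.

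With the biconditional in hand, $\nu \models \psi$ follows from $\class\psi_{\th{T}} \in \Filt$ and $\nu \not\models \varphi$ from $\class\varphi_{\th{T}} \notin \Filt$. To see that $\nu$ is a model of $\th{T}$, I would consider an arbitrary sequent $\psi' \thesis \varphi'$ in $\th{T}$. Rule $\rn{Th}$ gives $\psi' \thesis_{\th{T}} \varphi'$, hence $\class{\psi'}_{\th{T}} \leq_{\th{T}} \class{\varphi'}_{\th{T}}$; since $\Filt$ is monotone (being a frame morphism into $\two$), whenever $\class{\psi'}_{\th{T}} \in \Filt$ we also have $\class{\varphi'}_{\th{T}} \in \Filt$. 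Translating both sides via the biconditional, $\nu \models \psi'$ implies $\nu \models \varphi'$.

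The one step that warrants care is the infinitary disjunction case of the induction, since it is what distinguishes geometric logic from plain propositional logic. However, the construction of $\bigvee_{i \in I} \varphi_i$ in Remark~\ref{rem:geom:conn} is designed so that the quotient map sends it to the actual join in $\Geom(\At)/\th{T}$ (via $\rn{$\bigvee$-L}$ and $\rn{$\bigvee$-R}$), and points preserve arbitrary joins; so no separate distributivity argument is required here, and the whole proof should be a short structural induction followed by a one-line application of monotonicity of $\Filt$.
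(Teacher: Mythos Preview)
Your proposal is correct and follows essentially the same approach as the paper: first establish the biconditional by structural induction (on conjunctive formulae, then the single disjunction layer), using that $\Filt$ preserves top, finite meets, and arbitrary joins, and then derive that $\nu$ is a model of $\th{T}$ via monotonicity of $\Filt$ along $\leq_{\th{T}}$. The only cosmetic difference is that the paper makes explicit the two-layer structure of geometric formulae (induction on $\gamma \in \Conj(\At)$, followed by a single application of Remark~\ref{rem:app:free:geom} for $\varphi = \bigvee_i \gamma_i$), whereas you phrase it as a single ``induction on the structure of $\theta$''; since $\Geom(\At)$ is not itself inductively closed under $\land$ and $\bigvee$, the paper's phrasing is slightly more accurate, but your intent is clear and the argument is the same.
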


Hence, when $\Geom(\At)/\th{T}$ is spatial,
the preorders $\thesis_{\th{T}}$ and $\preceq_{\Mod(\th{T})}$ coincide,
and $\Geom(\At)/\th{T}$ is thus isomorphic
to $\Geom(\At)/\Mod(\th{T})$.
Since $\Geom(\At)/\Mod(\th{T})$ is always isomorphic
to the frame of opens of the space $\Mod(\th{T})$
(Theorem~\ref{theo:geom:pt}),
we have the following.

\begin{theo}
\label{thm:free:equiv}
Let $\th{T}$ be a geometric theory over $\At$.
Then the following are equivalent.
\begin{enumerate}[(i)]
\item
The frame $\Geom(\At)/\th{T}$ is spatial.

\item
Deduction in $\thesis_{\th{T}}$ is complete
($\psi \thesis_{\th{T}} \varphi$ if, and only if,
$\psi \preceq_{\Mod(\th{T})} \varphi$).

\item
The frame $\Geom(\At)/\th{T}$ is isomorphic to
$\Geom(\At)/\Mod(\th{T})$.
\end{enumerate}
\end{theo}

\paragraph{Free Frames and Congruence Preorders.}
In view of Theorem~\ref{thm:free:equiv},
it is interesting to know when a frame $\Geom(\At)/\th{T}$ is spatial.
We discuss this using the following notions.
Write $\th{E}$ for the empty theory.

Let $\Forget \colon \Frm \to \Set$ be the forgetful functor.
A \emph{free frame} on $\At$ is the data of a frame $L(\At)$
and of a function $\eta_\At \colon  \At \to \Forget L(\At)$
such that
for each frame $L$
and each function $f \colon \At \to \Forget L$,
there is a unique frame morphism
$\ladj f \colon L(\At) \to L$
making the following diagram to commute.
\[
\begin{tikzcd}[row sep=tiny, column sep=large]
  \At
  \arrow{rr}{f}
  \arrow{dr}[below]{\eta_{\At}}
&
& \Forget L
\\
& \Forget L(\At)
  \arrow[dashed]{ur}[below]{\Forget \ladj f}
\end{tikzcd}
\]

\begin{prop}
\label{prop:free:frame:free}
$(\Geom(\At)/\th{E},\leq_{\th{E}})$
(with the function $p \in \At \mapsto \class{p}_{\th{T}}$)
is a free frame on $\At$.
\end{prop}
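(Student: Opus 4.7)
The plan is to construct the unique frame morphism $\ladj{f}$ by induction on geometric formulae, then verify well-definedness on $\sim_{\th{E}}$-classes, the frame morphism property, and uniqueness.

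First I would define a function $\widehat{f} \colon \Geom(\At) \to \Forget L$ in two stages. On conjunctive formulae $\gamma \in \Conj(\At)$, induct on structure: $\widehat{f}(p) \deq f(p)$, $\widehat{f}(\true) \deq \top_L$, and $\widehat{f}(\gamma \land \gamma') \deq \widehat{f}(\gamma) \wedge_L \widehat{f}(\gamma')$. Then extend to geometric formulae by $\widehat{f}(\bigvee \SP) \deq \bigvee_L \{\widehat{f}(\gamma) \mid \gamma \in \SP\}$, which uses that $L$ is a frame (hence has all sups).

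Next I would show that $\widehat{f}$ respects $\sim_{\th{E}}$, i.e.\ that $\psi \thesis_{\th{E}} \varphi$ implies $\widehat{f}(\psi) \leq_L \widehat{f}(\varphi)$, by induction on derivations using the rules of Figure~\ref{fig:free}. Each rule translates directly: \textsc{Ax} and \textsc{Cut} are reflexivity/transitivity of $\leq_L$; the $\land$-rules are the universal property of $\wedge_L$; $\bigvee$-L and $\bigvee$-R are the universal property of sup in $L$; $\true$-R gives $\widehat{f}(\varphi) \leq_L \top_L$; \textsc{Th} is vacuous since $\th{E}$ is empty. This yields a well-defined monotone function $\ladj{f} \colon \Geom(\At)/\th{E} \to L$ sending $\class{\varphi}_{\th{E}}$ to $\widehat{f}(\varphi)$, which by construction satisfies $\ladj{f}(\class{p}_{\th{E}}) = f(p)$.

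To see that $\ladj{f}$ is a frame morphism, I need $\ladj{f}(\class{\true}_{\th{E}}) = \top_L$, preservation of binary meets, and preservation of arbitrary sups. The first is immediate; preservation of sups follows from the defining clause for $\widehat{f}(\bigvee \SP)$ together with the explicit description of sups in $\Geom(\At)/\th{E}$ (compare Lemma~\ref{lem:geom:spaces:frame}). The key step is binary meets: writing $\varphi = \bigvee_i \gamma_i$ and $\psi = \bigvee_j \gamma'_j$, the construct from Remark~\ref{rem:geom:conn} gives $\varphi \land \psi = \bigvee_{i,j} \gamma_i \land \gamma'_j$, whence
\[
\ladj{f}(\class{\varphi \land \psi}_{\th{E}}) \;=\; \bigvee\nolimits_{L,\, i,j} \widehat{f}(\gamma_i) \wedge_L \widehat{f}(\gamma'_j)
\]
while $\ladj{f}(\class{\varphi}_{\th{E}}) \wedge_L \ladj{f}(\class{\psi}_{\th{E}}) = (\bigvee_{L,i} \widehat{f}(\gamma_i)) \wedge_L (\bigvee_{L,j} \widehat{f}(\gamma'_j))$; these agree precisely by the frame distributive law in $L$. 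This is the step I expect to be the most delicate, since it is where the distributivity hardwired into the constructs of Remark~\ref{rem:geom:conn} must match the frame law of the target $L$.

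Finally, for uniqueness: any frame morphism $g \colon \Geom(\At)/\th{E} \to L$ satisfying $g(\class{p}_{\th{E}}) = f(p)$ must preserve $\top$, finite $\wedge$, and arbitrary $\bigvee$; since every element of $\Geom(\At)/\th{E}$ is represented by a formula built from atoms using $\true$, $\land$ and $\bigvee$, a straightforward induction on formula structure shows $g(\class{\varphi}_{\th{E}}) = \widehat{f}(\varphi) = \ladj{f}(\class{\varphi}_{\th{E}})$, proving $g = \ladj{f}$.
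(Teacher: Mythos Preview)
Your proposal is correct and follows essentially the same approach as the paper: define the extension first on conjunctive formulae and then on geometric formulae, verify by induction on derivations that it respects $\thesis_{\th{E}}$, check the frame morphism laws (using frame distributivity in $L$ for binary meets via the construct of Remark~\ref{rem:geom:conn}), and prove uniqueness by structural induction. One small point: the explicit description of sups in $\Geom(\At)/\th{E}$ you appeal to is established in Lemma~\ref{lem:free:frame} rather than Lemma~\ref{lem:geom:spaces:frame}, but the content is identical.
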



\noindent
The operation $\At \mapsto \Geom(\At)/\th{E}$
thus yields a left adjoint to $\Forget \colon \Frm \to \Set$
(see e.g.~\cite[Theorem IV.1.2]{maclane98book}).
Actually,
the set of Scott-opens
of the cpo $\Po(\At)$ is also a free frame on $\At$.
See~\cite[Lemma VII.4.9]{johnstone82book}
(see also~\cite[Theorem 3.1]{heckmann15mscs}
and~\cite[C1.1.4 and C4.1.6]{johnstone02book}).
Since free frames are unique up to isomorphism,
it follows that the frame $\Geom(\At)/\th{E}$ is spatial.

%

It is customary to present a frame by quotienting a free frame
under a congruence preorder.
We refer to~\cite[\S 3.4]{heckmann15mscs}.
A \emph{congruence preorder} on a frame $(L,\leq)$ is a preorder
$\preceq$ on $L$ such that ${\leq} \sle {\preceq}$ and
such that
for each arbitrary (resp.\ finite) $\SP \sle L$,
we have $\bigvee \SP \preceq b$ whenever $a \preceq b$ for all $a \in \SP$
(resp.\ $b \preceq \bigwedge \SP$ whenever $b \preceq a$ for all $a \in \SP$).
%
Given a congruence preorder $\preceq$ on $L$,
let $a \sim b$ iff $a \preceq b$ and $b \preceq a$.

Each binary relation $R$ on $L$ is contained in a least congruence preorder
$\clos R$ on $L$.


\begin{prop}
\label{prop:free:quot}
Given geometric theories $\th{T}, \th{U}$ over $\At$,
let
\(
  R
  \deq
  \{ (\class\varphi_{\th{T}} , \class\psi_{\th{T}}) 
  \mid (\varphi \thesis \psi) \in \th{T} \cup \th{U}\}
\).
Then $\Geom(\At)/(\th{T} \cup \th{U})$
is isomorphic to the quotient of $\Geom(\At)/\th{T}$
by $\sim_{\clos R}$.
\end{prop}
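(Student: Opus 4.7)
The plan is to reduce the claim to the equivalence of two preorders on $\Geom(\At)$. Writing $[\varphi]$ for $\class{\varphi}_{\th T}$ and $\preceq$ for $\clos R$ on $\Geom(\At)/\th T$, the crux is to establish
\[
  \varphi \thesis_{\th T \cup \th U} \psi
  \quad\Longleftrightarrow\quad
  [\varphi] \preceq [\psi].
\]
Once this equivalence is in hand, the $\sim_{\th T \cup \th U}$-classes on $\Geom(\At)$ correspond bijectively to the $\sim_{\clos R}$-classes on $\Geom(\At)/\th T$, and the induced map taking $\class{\varphi}_{\th T \cup \th U}$ to the $\sim_{\clos R}$-class of $[\varphi]$ is a bijection. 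Using Lemma~\ref{lem:free:frame} on both sides, this bijection transports the frame operations (finite meets computed from $\land$ and $\true$, arbitrary sups from $\bigvee$), yielding the claimed frame isomorphism.

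For the forward implication, I would induct on the derivation of $\varphi \thesis_{\th T \cup \th U} \psi$. The rule (Th) puts the pair $([\varphi],[\psi])$ into $R$, hence into $\clos R$. The rules (Ax) and (Cut) reduce to reflexivity and transitivity of $\clos R$. The rules for $\land$ and $\true$ go through because $\clos R$ extends the underlying order $\leq_{\th T}$ and because in $\Geom(\At)/\th T$ finite meets are computed by $\land$ and $\true$ (Lemma~\ref{lem:free:frame}). The rules $\bigvee$-L and $\bigvee$-R use that arbitrary sups in $\Geom(\At)/\th T$ are given by $\bigvee$ and that a congruence preorder is by definition stable under taking sups on the left.

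For the converse, define the binary relation $S$ on $\Geom(\At)/\th T$ by $([\varphi],[\psi]) \in S$ iff $\varphi \thesis_{\th T \cup \th U} \psi$. Since $\thesis_{\th T} \sle \thesis_{\th T \cup \th U}$, this is independent of representatives and $S$ contains the order $\leq_{\th T}$. Rule (Th) gives $R \sle S$, the rules (Ax) and (Cut) furnish reflexivity and transitivity of $S$, and the remaining rules of Figure~\ref{fig:free}, read through the description of the frame operations in Lemma~\ref{lem:free:frame}, express exactly that $S$ is stable under the finite-meet and arbitrary-sup clauses required of a congruence preorder on $\Geom(\At)/\th T$. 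Minimality of $\clos R$ then gives $\clos R \sle S$, which is the converse implication.

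The main obstacle will be the careful bookkeeping around the infinitary $\bigvee$-rules: I must check that the statement ``$\bigvee$ computes joins in the quotient frame'' translates literally into both the soundness of $\bigvee$-L, $\bigvee$-R for $\clos R$ and the closure of $S$ under the arbitrary-sup clause of a congruence preorder. Once this dictionary between the deduction rules of Figure~\ref{fig:free} and the defining clauses of congruence preorders is fixed, both directions are routine inductions, and assembling the frame isomorphism from the resulting equivalence of preorders is purely formal.
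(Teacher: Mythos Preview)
Your proposal is correct and follows essentially the same strategy as the paper: the paper defines $\widetilde R \deq \{([\varphi]_{\th T},[\psi]_{\th T}) \mid \varphi \thesis_{\th T\cup\th U} \psi\}$ (your $S$), shows it is a congruence preorder containing $R$ (your converse direction), and shows by induction on derivations that it is contained in every congruence preorder $Q \supseteq R$ (your forward direction, specialized to $Q = \clos R$), concluding $\widetilde R = \clos R$. The only cosmetic difference is that the paper proves the slightly stronger statement $\widetilde R \sle Q$ for arbitrary $Q$ rather than just for $\clos R$, but the inductive argument is identical.
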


In particular, $\Geom(\At)/\th{T}$
is isomorphic to the quotient of the free frame $\Geom(\At)/\th{E}$
under $\sim_{\clos R}$,
where
\(
  R
  =
  \{ (\class\varphi_{\th{E}} , \class\psi_{\th{E}}) 
  \mid (\varphi \thesis \psi) \in \th{T}\}
\).

\begin{exem}[{\cite[D1.1.7(m) and D1.4.14]{johnstone02book}}]
\label{ex:free:frame}
Each frame $(L,\leq_L)$
is isomorphic to the frame
$\Geom(\At)/\th T(L)$,
where $\th T(L)$ is
the following 
theory over
$\At \deq \{ \underline a \mid a \in L\}$.
\[
\begin{array}{r !{\thesis} l @{\quad} l !{\qquad} r !{\thesis} l}
  \underline{a}
& \underline{b}
& \text{(if $a \leq_L b$)}

&
& \underline{\top_L}
\\

  \underline{\bigvee_L \SP}
& \bigvee \left\{ \underline{a} \mid a \in \SP\right\}
& \text{(if $\SP \sle L$)}

& \underline{a} \land \underline{b}
& \underline{a \land_L b}

\end{array}
\]
\end{exem}


Hence, any frame is a quotient of a free frame
by a geometric theory.
It then follows
from~\cite[\S II.1.5]{johnstone82book}
that $\Geom(\At)/\th{T}$ may not be spatial.
On the positive side, we have

\begin{theo}[{\cite[Corollary 3.15]{heckmann15mscs}}]
Let $R$ be a \emph{countable} binary relation on a free frame $L(\At)$.
Then the quotient of $L(\At)$ under $\sim_{\clos R}$ is a spatial frame.
\end{theo}

\begin{coro}
\label{cor:free:count}
Let $\th{T}$ be a geometric theory over $\At$.
If $\th{T}$ is countable, then the frame $\Geom(\At)/\th{T}$ is spatial.
In particular, if $\Phi$ is a $G_\delta$ formula of $\LTL(\Base)$,
then $\Geom(\At)/\th{T}\I\Phi$ is spatial,
where $\At = \Fin(\I{\Stream\Base})$.
\end{coro}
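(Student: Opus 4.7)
The plan is to derive the first statement as a direct consequence of the cited theorem of Heckmann, by exhibiting $\Geom(\At)/\th{T}$ as a quotient of a free frame by a countable relation. First, I would invoke Proposition~\ref{prop:free:frame:free}, which identifies the free frame on $\At$ with $\Geom(\At)/\th{E}$, where $\th{E}$ is the empty theory over $\At$. Then I would apply Proposition~\ref{prop:free:quot} with $\th{E}$ playing the role of $\th{T}$ and the given theory $\th{T}$ playing the role of $\th{U}$: this yields that $\Geom(\At)/\th{T} = \Geom(\At)/(\th{E} \cup \th{T})$ is isomorphic to the quotient of the free frame $\Geom(\At)/\th{E}$ by $\sim_{\clos R}$, where
\[
  R \;\deq\; \bigl\{ (\class{\varphi}_{\th{E}}, \class{\psi}_{\th{E}}) \,\big|\, (\varphi \thesis \psi) \in \th{T} \bigr\}.
\]

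The key observation is then that the map $(\varphi \thesis \psi) \mapsto (\class\varphi_{\th E}, \class\psi_{\th E})$ sends the sequents of $\th{T}$ surjectively onto $R$, so that the countability of $\th{T}$ transfers directly to $R$. The cited Heckmann theorem then applies and yields that $\Geom(\At)/\th{T}$ is spatial, proving the first (general) statement.

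For the ``in particular'' clause, I would simply observe, via Remark~\ref{rem:geom:trans:card}, that when $\Phi$ is a $G_\delta$ formula of $\LTL(\Base)$, the inductive construction of $\S\ref{sec:geom:trans}$ yields a theory $\th{T}\I{\Phi}$ consisting of countably many sequents (only $\curlywedge$-steps over countable index sets occur at the $G_\delta$ level, and each $\th F\I{\Phi_1}$ for $\Phi_1 \in G$ is a single geometric formula). The first part of the corollary then applies with $\At = \Fin(\I{\Stream\Base})$ and $\th{T} = \th{T}\I{\Phi}$.

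There is essentially no obstacle here: everything is a straightforward bookkeeping argument combining Propositions~\ref{prop:free:frame:free} and~\ref{prop:free:quot} with the quoted result of Heckmann and the cardinality bound of Remark~\ref{rem:geom:trans:card}. The only point that deserves a brief check is that the representation of the relation $R$ provided by Proposition~\ref{prop:free:quot} has cardinality at most that of $\th{T}$, which is immediate from its definition as the image of $\th{T}$ under the quotient map.
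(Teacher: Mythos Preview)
Your proposal is correct and follows essentially the same line as the paper: the paper explicitly remarks (right after Proposition~\ref{prop:free:quot}) that $\Geom(\At)/\th{T}$ is the quotient of the free frame $\Geom(\At)/\th{E}$ by $\sim_{\clos R}$ with $R = \{ (\class\varphi_{\th E},\class\psi_{\th E}) \mid (\varphi \thesis \psi) \in \th T\}$, and then states the corollary as an immediate consequence of Heckmann's theorem together with Remark~\ref{rem:geom:trans:card} for the $G_\delta$ case.
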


Hence, when $\Phi$ is a $G_\delta$ formula of $\LTL$,
deduction in $\thesis_{\th T\I\Phi}$ completely axiomatizes
$\Mod(\th T\I\Psi)$, and the frame $\Geom(\At)/\th{T}\I\Phi$ is
isomorphic to $\Geom(\At)/\Mod(\th{T}\I\Phi)$.

We must warn the reader on the following subtle points,
which actually motivate the explicit constructions of~\S\ref{sec:geom:trans}.

\begin{remark}
\label{rem:free:nucleus}
Consider a space $(X,\Open)$.
It follows from Example~\ref{ex:free:frame}
that there is a 
theory $\th T_\Open$ over $\At_\Open \deq \{ \underline U \mid U \in \Open \}$
such that
$\Open$ is isomorphic to $\Geom(\At_\Open)/\th T_\Open$
(so that $\Geom(\At_\Open)/\th T_\Open$ is spatial).
But beware that when $X$ is an algebraic dcpo, the theory
$\th T \deq \th T(X)$ of~\S\ref{sec:geom:th}
is over the set of atomic propositions $\At \deq \Fin(X)$,
which differs from $\At_\Open$.
In particular, the theories $\th T$ and $\th T_\Open$ differ,
and there is \emph{a priori} no reason for $\Open$ to be
isomorphic to $\Geom(\At)/\th T$.

An algebraic dcpo $X$ is said to be \emph{$\omega$-algebraic}
when $\Fin(X)$ is (at most) countable.
For instance, $\I{\Stream\Base}$ is $\omega$-algebraic
precisely when $\Base$ is (at most) countable.

Assume $X$ is $\omega$-algebraic.
Set $\At \deq \Fin(X)$ and $\th T \deq \th T(X)$ as above.
The theory $\th T$ is countable, and Corollary~\ref{cor:free:count} applies.
Hence, 
the frame $\Geom(\At)/\th T$
is spatial and thus isomorphic to the frame $\Open$
(Theorem~\ref{thm:free:equiv},
Theorem~\ref{theo:geom:pt}
and Proposition~\ref{prop:geom:algdcpo:scott}).

Consider now a subset $P \sle X$.
Recall from~\S\ref{sec:frames:sub} that the 
quotient frame
$\Open \quot \Open\restr P$
can be represented as the frame of $j$-fixpoints
for a nucleus $j$ on $\Open$.
It is then a consequence of 
Proposition~\ref{prop:free:quot}
and~\cite[\S 3.4]{heckmann15mscs}
that the frame 
$\Open\restr P$ is isomorphic to
$\Geom(\At)/(\th T \cup \th U)$,
where
\[
\begin{array}{l l l}
  \th U
& \deq
& \left\{
  \varphi \thesis \psi
  \mid
  \class\varphi_{\th T} \leq_{\th T} j(\class\psi_{\th T})
  \right\}
\end{array}
\]

\noindent
In particular, the frame $\Geom(\At)/(\th T \cup \th U)$ spatial
and thus isomorphic to $\Geom(\At)/\Mod_{\th T}(\th U)$.
But beware that this does \emph{not} imply that the \emph{space}
$(P,\Open\restr P)$ is represented by the space of models
$\Mod_{\th{T}}(\th U)$, unless $(P,\Open\restr P)$ is sober,
since in this case we have
\[
  (P,\Open\restr P)
  ~\cong~
  \pt(\Open\restr P)
  ~\cong~
  \pt(\Geom(\At)/(\th T \cup \th U))
  ~\cong~
  \pt(\Geom(\At)/\Mod_{\th T}(\th U))
  ~\cong~
  \Mod_{\th{T}}(\th U)
\]

In the case of streams $\I{\Stream\Base}$ (with $\Base$ countable),
it follows that
for \emph{any} $P \sle \I{\Stream\Base}$
there is a geometric theory $\th U$
on $\At$ which represents the \emph{frame} $\Open\restr P$.
For instance, with $P \deq \I{\lnot \Box a}$,
the isomorphism $\Open\restr P \cong \Geom(\At)/(\th T \cup \th U)$
lifts to homeomorphisms
\[
  \pt(\Open\restr P)
  ~\cong~
  \pt(\Geom(\At)/(\th T \cup \th U))
  ~\cong~
  \pt(\Geom(\At)/\Mod_{\th T}(\th U))
  ~\cong~
  \Mod_{\th{T}}(\th U)
\]

\noindent
But we have
$(P,\Open\restr P) \not\cong \Mod_{\th T}(\th U)$
since
$(P,\Open\restr P) \not\cong \pt(\Open\restr P)$
as $(P,\Open\restr P)$ is not sober
(Corollary~\ref{cor:frames:sub:notsober}).
\end{remark}

}	
\section{A Specification for the Denotation of Filter}
\label{sec:scott}

The core of this paper consists of the results presented
above
concerning $\LTL$ on streams.
%
However, the long term goal of this work is to reason on
input-output (negation-free) $\LTL$ properties of functions.
We now briefly sketch how our results can help
to handle our motivating example,
namely the $\filter$ function on streams.
This is mostly preliminary;
we leave as future work the elaboration of a general solution.

We work with the 
function $\I\filter$
of Remark~\ref{rem:prelim:filter}.
Fix a finite set $\Base$ and a Scott-continuous function
$p \colon \I\Base \to \I\Bool$.
Assume that for all $a \in \Base$, we have $p(a) \neq \bot_{\I\Bool}$.
Let $\Psi = \Psi_p$ as in Example~\ref{ex:ltl:nf}
and set $\Phi \deq \bigvee_{a \in \Base} a$.
The specification~\eqref{eq:intro:spec} for $\filter$
leads to the following specification for $\I\filter$:
\begin{equation}
\label{eq:scott:ltl}
  \forall \stream \in \I{\Stream\Base},~
  \text{$\stream$ total},
  \qquad
  \stream \forces \Box\Diam \Psi
  \quad\longimp\quad
  (\I\filter\ p\ \stream) \forces \Box \Phi
\end{equation}

\noindent
where we refrained from writing $\stream \forces \Box \Phi$
for the assumption that $\stream$ is total.

We use the notations of Remark~\ref{rem:prelim:filter}.
In particular,
$\I\filter p$ is the least fixpoint of the
Scott-continuous function $f_p \colon X \to X$,
where $X \deq \CPO\funct{\I{\Stream\Base},\I{\Stream\Base}}$.
In symbols, we have
$\I\filter p = \term{Y}(f_p) = \bigvee_{n \in \NN} f_p^n(\bot_X)$.

The standard method to reason on such fixpoints is the
rule of \emph{fixpoint induction}
(see e.g.~\cite[\S 6.2]{ac98book}).
This rule asserts that given a subset $\SP$ of a cpo $X$,
and a morphism $f \colon X \to_{\CPO} X$,
we have $\term{Y}(f) \in \SP$ provided
(i)
$\bot_X \in \SP$,
(ii)
$\SP$ is stable under sups of $\omega$-chains,
and (iii)
$f(x) \in \SP$ whenever $x \in \SP$.
In our case, the subset of interest is
\(
  \SP
  \deq
  \left\{ f \mid
  \text{$\stream$ total and $\stream \forces \Box\Diam\Psi$}
  ~\imp~
  f(\stream) \forces \Box\Phi
  \right\}
\).
But fixpoint induction cannot be applied 
since
$\bot_X \notin \SP$
(as $\bot_X$ takes any $\stream \in \I{\Stream\Base}$
to $\bot^\omega \not\forces \Box\Phi$).

We can proceed as follows,
with the help of~\S\ref{sec:geom:trans}.
Given $k, n \in \NN$ with $k \leq n$, let
\[
\begin{array}{r c l}
  \psi_{n,k}
& \deq
& \bigvee
  \left\{
  \bigwedge_{1 \leq j \leq k}
  \Next^{i_j}\th F\I\Psi
  \mid
  0 \leq i_1 < \dots < i_k < n
  \right\}
\\

  \varphi_k
& \deq
& \bigwedge_{m < k} \Next^m \th F\I\Phi
\end{array}
\]

\noindent
Note that
$\nu(\stream) \in \Mod\left(\th T\I{\Box\Diam\Psi} \right)$
if, and only if,
$(\forall k \in \NN)(\exists n \geq k)(\nu(\stream) \models \psi_{n,k})$.
It follows that
condition~\eqref{eq:scott:ltl}
can be obtained from the following.
\begin{equation}
\label{eq:scott:geom}
  \forall \stream \in \I{\Stream\Base},~
  \text{$\stream$ total},~
  \forall k \in \NN,~
  \forall n \geq k,
  \quad
  \nu(\stream) \models \psi_{n,k}
  \quad\longimp\quad
  \nu(\I\filter\ p\ \stream) \models \varphi_k
\end{equation}

Condition~\eqref{eq:scott:geom} is a consequence of Lemma~\ref{lem:scott} below.
The main inductive argument 
is encapsulated in
item (\ref{item:scott:ind}).

\begin{lemm}
\label{lem:scott}
Write $g_n$ for $f_p^n(\bot_X) \colon \I{\Stream\Base} \to_\CPO \I{\Stream\Base}$.
Let $\stream \in \I{\Stream\Base}$ be a total stream.
\begin{enumerate}[(1)]
\item
\label{item:scott:ind}
Assume $k \leq n$.
If $\nu(\stream) \models \psi_{n,k}$,
then $\nu(g_n(\stream)) \models \varphi_k$.

\item
Let $n,k \in \NN$.
If $\nu(g_n(\stream)) \models \varphi_k$,
then $\nu(\I\filter\ p\ \stream) \models \varphi_k$.
\end{enumerate}
\end{lemm}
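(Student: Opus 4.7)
The plan is to establish item (1) by induction on $n$, pivoting on an explicit description of the approximant $g_n(\sigma)$, and to derive item (2) from a plain monotonicity argument combined with the upward-closure of negation-free $\LTL$.

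For item (1), I would first prove the following \emph{characterization} by induction on $n$: for every total $\sigma \in \I{\Stream\Base}$,
\[
  g_n(\sigma) \;=\; w_n(\sigma)\cdot \bot^\omega,
\]
where $w_n(\sigma) \in \Base^*$ is the finite word obtained from $\sigma(0)\sigma(1)\cdots\sigma(n-1)$ by keeping exactly the letters $\sigma(i)$ with $p(\sigma(i)) = \term{tt}$. The base case is immediate since $g_0 = \bot_X$ and $w_0(\sigma) = \varepsilon$. For the step, the definition of $f_p$ from Remark~\ref{rem:prelim:filter} gives $g_{n+1}(\sigma) = f_p(g_n)(\sigma)$; since $\sigma$ is total and our standing assumption $p(a) \neq \bot_{\I\Bool}$ for $a \in \Base$ ensures the conditional selects a defined branch, a case split on $p(\sigma(0))$ together with the inductive hypothesis applied to the total stream $\sigma \restr 1$ yields $g_{n+1}(\sigma) = w_{n+1}(\sigma) \cdot \bot^\omega$ in each case.

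Armed with this, fix total $\sigma$ and $k \leq n$ with $\nu(\sigma) \models \psi_{n,k}$. Unfolding the disjunction defining $\psi_{n,k}$, there exist $0 \leq i_1 < \dots < i_k < n$ such that $\nu(\sigma) \models \Next^{i_j}\th F\I\Psi$ for every $j$. Iterating Lemma~\ref{lem:geom:trans:nextgeom} gives $\nu(\sigma \restr i_j) \models \th F\I\Psi$, which, by Lemma~\ref{lem:geom:trans:cor}, is equivalent to $\sigma \restr i_j \in \I\Psi$, i.e., to $p(\sigma(i_j)) = \term{tt}$. Hence $w_n(\sigma)$ has length at least $k$, so the characterization forces $g_n(\sigma)(m) \in \Base$ for every $m < k$. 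Applying Lemmas~\ref{lem:geom:trans:nextgeom} and \ref{lem:geom:trans:cor} once more, conjunct by conjunct, to each $\Next^m \th F\I\Phi$ gives $\nu(g_n(\sigma)) \models \varphi_k$, as required.

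For item (2), recall that $\I\filter\ p = \term{Y}(f_p) = \bigvee_{n \in \NN} g_n$ in the pointwise order on $X$, so $g_n(\sigma) \leq_{\I{\Stream\Base}} (\I\filter\ p)(\sigma)$ for all $\sigma$. Each $\Next^m \Phi$ is a negation-free $\LTL$ formula, so by Lemma~\ref{lem:ltl:up} the set $\I{\Next^m \Phi}$ is upward-closed, and so is the finite intersection $\I{\bigwedge_{m < k}\Next^m \Phi}$. Transporting $\nu(g_n(\sigma)) \models \varphi_k$ via Lemma~\ref{lem:geom:trans:cor} into membership in this upward-closed subset, then moving up to $(\I\filter\ p)(\sigma)$ and translating back, yields $\nu(\I\filter\ p\ \sigma) \models \varphi_k$. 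The main obstacle is the bookkeeping in the induction of item (1): correctly matching the window $[0,n)$ of $\sigma$ that $g_n$ actually inspects with the indices $i_j$ provided by $\psi_{n,k}$, while threading the shift $\sigma \restr 1$ through the inductive hypothesis. Everything else reduces to routine unfoldings of the geometric translation of \S\ref{sec:geom:trans} and standard domain-theoretic monotonicity.
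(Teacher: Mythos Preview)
Your argument is correct. For item~(2) you do essentially what the paper does: use $g_n \leq_X \I\filter\,p$ pointwise together with the upward-closure of the set defined by $\varphi_k$ (the paper phrases this via Proposition~\ref{prop:geom:algdcpo:scott} and Scott-openness, you via Lemma~\ref{lem:ltl:up}; these are interchangeable here).

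For item~(1) you take a genuinely different route. The paper argues by a direct induction on $n$ that carries the implication ``$\nu(\sigma)\models\psi_{n,k}\Rightarrow\nu(g_n(\sigma))\models\varphi_k$'' itself, case-splitting in the step on whether the first witness index $i_1$ is $0$ and on the value of $p(\sigma(0))$; the indices are then shifted by one and fed to the inductive hypothesis for $\sigma\restr 1$. You instead factor the proof through an explicit closed form $g_n(\sigma)=w_n(\sigma)\cdot\bot^\omega$, where $w_n(\sigma)$ is the subword of $\sigma(0)\cdots\sigma(n-1)$ retained by $p$, and then read off that $k$ witnesses below $n$ force $|w_n(\sigma)|\geq k$. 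Your decomposition separates the operational behaviour of the approximant from the logical bookkeeping, which makes the counting argument transparent and avoids the nested case analysis; the paper's approach, on the other hand, stays closer to the unfolding of $f_p$ and would generalize more readily to variants of $\filter$ whose approximants do not admit such a clean closed form.
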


\section{Conclusion}
\label{sec:conc}

In this paper, we conducted a semantic study of a logic $\LTL$
on a domain of streams $\I{\Stream\Base}$.
We showed that the negation-free formulae of $\LTL$ induce
sober subspaces of $\I{\Stream\Base}$,
and that this may fail in presence of negation.
We proposed an inductive translation of negation-free $\LTL$ to geometric logic.
This translation reflects the semantics of $\LTL$,
and we use it to prove that the denotation of $\filter$ 
satisfies the specification~\eqref{eq:intro:spec}.


\paragraph{Further Works.}
%
First, the logic $\LTL$ on $\I{\Stream\Base}$
deserves further studies, in particular regarding decidability and
possible axiomatizations.

\opt{short}{\renewcommand\fntext{For geometric logic, see~\cite[\S 5]{rs23full}
which completely axiomatizes the countable (e.g.\ $G_\delta$) case.}}
\opt{long}{\renewcommand\fntext{For geometric logic, see~\S\ref{sec:free}
which completely axiomatizes the countable (e.g.\ $G_\delta$) case.}}

We think an important next step would be to propose a refinement type
system in the spirit of~\cite{jr21esop}, but for an extension of $\PCF$ with streams.
More precisely, the system of~\cite{jr21esop} crucially relies on controlled
unfoldings of (formula level) fixpoints.
We think that our translation to geometric logic could provide a domain-theoretic
analogue for that, yielding a system grounded on DTLF
(in the form of~\cite[\S 4.3]{abramsky91apal} or e.g.~\cite[\S 10.5]{ac98book}).
This may rely on a deduction system for either $\LTL$ or geometric
logic.\opt{short,long}{\fn\ }%
In any case, we expect to need an analogue of the iteration terms of~\cite{jr21esop},
which actually could simulate (enough of) the infinitary aspects of geometric logic.
Also, an important task in this direction would be to formulate sufficiently general 
reasoning principles for program-level fixpoints.

\renewcommand\fntext{This corresponds to
``alternation depth $1$'' in~\cite[\S 2.2]{bw18chapter}.
See also~\cite[\S 7]{bs07chapter} and~\cite{sv10apal}.}

Further, we expect to handle alternation-free modal $\mu$-properties\fn\
on (finitary) polynomial types, thus targeting a system which as a whole
would be based on $\FPC$.
But polynomial types involve sums,
and sums are not universal in $\CPO$, in contrast with $\DCPO$
and with the category $\CPO_\bot$ of \emph{strict} functions.
We think of working with Call-By-Push-Value (CBPV)~\cite{levy03book,levy22siglog}
for the usual adjunction between $\DCPO$ and $\CPO_\bot$.
On the long run, it would be nice if this basis could extend to
enriched models of CBPV,
so as to handle further computational effects.
Print and global store are particularly relevant,
as an important trend in proving temporal properties
considers programs generating streams of events.
Major works in this line include
\cite{ssv08jfp,hc14lics,hl17lics,nukt18lics,kt14lics,ust17popl,nukt18lics,%
su23popl}.
In contrast with ours, these approaches are based on trace semantics
of syntactic expressions rather than denotational domains.%
\footnote{See e.g.~\cite[Theorem 4.1 (and Figure 6)]{nukt18lics}
or~\cite[Theorem 1 (and Definition 20 from the full version)]{su23popl}.}

In a different direction, we think our approach based on geometric
logic could extend to linear types~\cite{hjk00mscs},
for instance targeting systems like~\cite{nw03concur,winskel04llcs},
and relying on the categorical study of~\cite{bf06book}.

\paragraph{Acknowledgements.}
This work was partially supported by the ANR-21-CE48-0019 -- RECIPROG
and by the LABEX MILYON (ANR-10-LABX-0070) of Université de Lyon.
It started as a spin-off of
ongoing work
with Guilhem Jaber and Kenji Maillard.
G.\ Jaber proposed the $\filter$ function as a motivating example.
\opt{full,long}{Thomas Streicher pointed to us the reference~\cite{heckmann15mscs}.}%
\opt{short}{Thomas Streicher pointed to us the reference~\cite{heckmann15mscs}
(see \cite[\S 5]{rs23full}).}

\bibliographystyle{alpha-fr}
\clearpage
\bibliography{bibliographie}

\opt{full,long}{\newpage}
\opt{full,long}{\appendix}

\opt{long}{\input{app-intro}}	
\opt{long}{}	
\opt{full,long}{
\section{Proofs of~\S\ref{sec:prelim} (\nameref{sec:prelim})}
\label{sec:app:prelim}

\subsection{Proofs of~\S\ref{sec:prelim:domain} (\nameref{sec:prelim:domain})}
\label{sec:app:prelim:domain}

\subsubsection{Proof of Remark~\ref{rem:prelim:stream:domain-equation}}

We first discuss Remark~\ref{rem:prelim:stream:domain-equation}.
Consider the cpo $\I{\Stream\Base}$
equipped with the isomorphism
\[
  \I\Base \times \I{\Stream\Base}
  \to_{\CPO}
  \I{\Stream\Base}
  ,\quad
  (a,\stream)
  \mapsto
  a \cdot \stream
\]

\noindent
(where $(-) \times (-)$ is equipped with the pointwise order),
with inverse $\stream \mapsto (\stream(0), \stream\restr 1)$.
It is thus clear that $\I{\Stream\Base}$ is \emph{a} solution in $\CPO$
of the recursive domain equation
\[
\begin{array}{l l l}
  X
& \cong
& \I\Base
  \times
  X
\end{array}
\]

But in Remark~\ref{rem:prelim:stream:domain-equation},
we claimed that $\I{\Stream\Base}$ is actually
\emph{the solution of $X \cong \I\Base \times X$ in the usual sense}.
By this, we mean the following.
In presence of general recursive types, one has to solve equations
\begin{equation}
\label{eq:prelim:mixed}
\begin{array}{l l l}
  F(X,X)
& \cong
& X
\end{array}
\end{equation}

\noindent
where
\[
\begin{array}{l l r c l}
  F
& :
& \CPO^\op
  \times
  \CPO
& \longto
& \CPO
\end{array}
\]

\noindent
is a functor of mixed variance.
The usual solution, as presented in e.g.~\cite[\S 7.1]{ac98book},
is to replace $\CPO$ with the category $\CPO^{\mathrm{ip}}$
of \emph{injection-projection pairs} (\cite[Definition 7.1.8]{ac98book}),
and to replace $F$ with a \emph{covariant} functor
\(
  F^{\mathrm{ip}} \colon
  \CPO^{\mathrm{ip}} \times \CPO^{\mathrm{ip}}
  \to \CPO^{\mathrm{ip}}
\).
Then instead of~\eqref{eq:prelim:mixed} one solves
the following
\begin{equation}
\label{eq:prelim:ip}
\begin{array}{l l l}
  F^{\mathrm{ip}}(X,X)
& \cong
& X
\end{array}
\end{equation}

\noindent
in $\CPO^{\mathrm{ip}}$.
In turn, equation~\eqref{eq:prelim:ip}
is usually solved using colimits of $\omega$-chains
(\cite[Proposition 7.1.3]{ac98book}).
But by~\cite[Theorem 7.1.10]{ac98book},
colimits of $\omega$-chains in $\CPO^{\mathrm{ip}}$
can actually be computed from limits of $\omega^\op$-chains in $\CPO$.

We return to the case of streams, and consider the functor
\[
\begin{array}{*{7}{l}}
  F
& =
& \I\Base \times (-)
& :
& \CPO
& \longto
& \CPO
\end{array}
\]

\noindent
Then by~\cite[Definition 7.1.17 and Proof of Theorem 7.1.10]{ac98book},
we have to consider the limit of the $\omega^\op$-chain
\[
\begin{tikzcd}[column sep=large]
  \one
& F(\one)
  \arrow{l}[above]{\oc}
& F^2(\one)
  \arrow{l}[above]{F(\oc)}
& F^n(\one)
  \arrow[dashed]{l}
& F^{n+1}(\one)
  \arrow{l}[above]{F^n(\oc)}
& \phantom{F}
  \arrow[dashed]{l}
\end{tikzcd}
\]

\noindent
where $\one$ is the terminal cpo $\{\bot\}$.
Now, by~\cite[Proposition 7.1.13]{ac98book},
the limit of the above $\omega^\op$-chain is given by
\[
\begin{array}{c}
  \left\{
  \alpha \in \prod_{n \in \NN} F^n(\one)
  \mid
  \alpha(n) = F^n(\oc)(\alpha(n+1))
  \right\}
\end{array}
\]

\noindent
It is easy to see that $\I{\Stream\Base}$
is isomorphic to this limit.
The key is to define for each $n \in \NN$ an
isomorphism $\iota_n \colon \I\Base^n \to F^n(\one)$
with $\iota_0 = \id_\one \colon \one \to \one$
and
\[
\begin{array}{l l r c l}
  \iota_{n+1}
& :
& \I\Base^{n+1}
& \longto
& F^{n+1}(\one) = \I\Base \times F^n(\one)
\\

&
& (a_1,a_2,\dots,a_{n+1})
& \longmapsto
& (a_1, \iota_n(a_2,\dots,a_{n+1}))
\end{array}
\]

\noindent
and to observe that the following commutes
\[
\begin{tikzcd}
  \I\Base^{n+1}
  \arrow{d}[left]{(a_1,\dots,a_n,a_{n+1}) \mapsto (a_1,\dots,a_n)}
  \arrow{r}{\iota_{n+1}}
& F^{n+1}(\one)
  \arrow{d}{F^n(\oc)}
\\
  \I\Base^{n}
  \arrow{r}{\iota_{n}}
& F^n(\one)
\end{tikzcd}
\]

\subsubsection{Proof of Example~\ref{ex:prelim:stream:algebraic}}

Let $\Base$ be a set. We show that the cpo $\I{\Stream\Base}$
is algebraic, and that its finite elements are precisely those
of finite support.

\begin{lemm}
Given $\stream \in \I{\Stream\Base}$,
the set 
\(
 \left\{
   d \mid
   \text{$d$ of finite support and $d \leq_{\I{\Stream\Base}} \stream$}
  \right\}
\)
is directed.
\end{lemm}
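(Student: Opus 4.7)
The plan is to verify the two parts of the definition of directedness: non-emptiness, and the existence of upper bounds for pairs.

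For non-emptiness, I would simply exhibit the stream $\bot^\omega$, whose support is $\emptyset$ (hence finite), and which satisfies $\bot^\omega \leq_{\I{\Stream\Base}} \stream$ for every $\stream$ since $\bot$ is least in $\I\Base$.

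For the upper-bound property, fix two candidates $d_1, d_2$ of finite support with $d_1, d_2 \leq_{\I{\Stream\Base}} \stream$. The natural witness is the pointwise sup $d_3$ defined by $d_3(n) \deq d_1(n) \vee_{\I\Base} d_2(n)$. The key point is that this pointwise sup is well-defined in the flat domain $\I\Base$: for each $n$, both $d_1(n)$ and $d_2(n)$ lie below the common element $\stream(n)$ in $\I\Base$, so each is either $\bot$ or equal to $\stream(n)$; hence their sup exists and equals $\stream(n)$ when at least one of them is non-$\bot$, and $\bot$ otherwise. In particular $d_3(n) \leq_{\I\Base} \stream(n)$ for all $n$, so $d_3 \leq_{\I{\Stream\Base}} \stream$, and by construction $d_1, d_2 \leq_{\I{\Stream\Base}} d_3$.

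It then remains to check that $d_3$ has finite support. This follows from $\supp(d_3) = \supp(d_1) \cup \supp(d_2)$, which is immediate from the explicit description of $d_3(n)$ above: $d_3(n) \neq \bot$ iff $d_1(n) \neq \bot$ or $d_2(n) \neq \bot$. Since $\supp(d_1)$ and $\supp(d_2)$ are both finite by assumption, so is their union, and $d_3$ is in the claimed set.

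There is no real obstacle here; the only subtle point is that the pointwise sup must actually be defined, and this relies crucially on the flatness of $\I\Base$ together with the existence of the common upper bound $\stream$. Without such a common upper bound, two elements of $\I{\Stream\Base}$ need not have a pointwise sup (e.g.\ $a \cdot \bot^\omega$ and $b \cdot \bot^\omega$ for $a \neq b$), which is precisely why the statement is formulated relative to a fixed $\stream$.
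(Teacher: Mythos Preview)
Your proof is correct and follows essentially the same approach as the paper: both construct the pointwise join $d_3(n) = d_1(n) \vee_{\I\Base} d_2(n)$, using flatness of $\I\Base$ and the common upper bound $\stream$ to ensure it is well-defined, and then check finiteness of the support. The only cosmetic difference is that the paper witnesses non-emptiness with $\stream(0) \cdot \bot^\omega$ rather than $\bot^\omega$, but your choice is equally valid (indeed simpler).
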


\begin{proof}
First, the set is non-empty since
e.g.\ $\stream(0) \cdot \bot^\omega$
has finite (possibly empty) support and is
$\leq_{\I{\Stream\Base}} \stream$.

Let $d,d' \leq_{\I{\Stream\Base}} \stream$, with $d$ and $d'$
of finite support.
Since $\I\Base$ is flat, for all $n \in \supp(d) \cap \supp(d')$,
we have $d(n) = d'(n)$.
It follows that for each $n \in \NN$, the set
$\{d(n),d'(n)\}$ has a greatest element,
so that $d(n) \bigvee_{\I\Base} d'(n)$ exists in $\I\Base$.
We can thus define
\[
\begin{array}{*{5}{l}}
  d''
& :
& n \in \NN
& \longmapsto
& d(n) \bigvee_{\I\Base} d'(n)
\end{array}
\]

\noindent
Then $d''$ has finite support.
Moreover, we have $d'' \leq_{\I{\Stream\Base}} \stream$
and $d, d' \leq_{\I{\Stream\Base}} d''$.
\end{proof}

\begin{lemm}
Given $\stream \in \I{\Stream\Base}$,
we have
\[
\begin{array}{l l l}
  \stream
& =
  \bigvee
  \left\{
   d \mid
   \text{$d$ of finite support and $d \leq_{\I{\Stream\Base}} \stream$}
  \right\}
\end{array}
\]
\end{lemm}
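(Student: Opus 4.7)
The plan is to verify the two inequalities defining a least upper bound, exploiting the pointwise character of sups in $\I{\Stream\Base}$. Let me write $D$ for the set
\[
D \;\deq\; \left\{ d \in \I{\Stream\Base} \mid \text{$d$ of finite support and } d \leq_{\I{\Stream\Base}} \stream \right\}.
\]
By the preceding lemma, $D$ is directed, so $\bigvee D$ exists in $\I{\Stream\Base}$. As recalled in the running description of $\I{\Stream\Base}$ (where $\NN$ is discrete and $\I\Base$ is flat), sups of directed subsets are computed pointwise; hence it suffices to show that $(\bigvee D)(n) = \stream(n)$ for every $n \in \NN$.

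First, since every $d \in D$ satisfies $d(n) \leq_{\I\Base} \stream(n)$, we immediately get $(\bigvee D)(n) \leq_{\I\Base} \stream(n)$, and thus $\bigvee D \leq_{\I{\Stream\Base}} \stream$. For the converse, the key step is to exhibit, for each position $n \in \NN$, a witness $d_n \in D$ achieving the value $\stream(n)$ at that position. The natural choice is
\[
d_n(m) \;\deq\; \begin{cases} \stream(n) & \text{if } m = n, \\ \bot & \text{otherwise.} \end{cases}
\]
Then $\supp(d_n) \subseteq \{n\}$ is finite, and $d_n \leq_{\I{\Stream\Base}} \stream$ holds by inspection (using that $\bot$ is the least element of $\I\Base$). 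Hence $d_n \in D$, so $(\bigvee D)(n) \geq_{\I\Base} d_n(n) = \stream(n)$.

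Combining the two inequalities gives $(\bigvee D)(n) = \stream(n)$ for all $n$, hence $\bigvee D = \stream$. I expect no substantial obstacle here: the only subtlety is the appeal to the pointwise formula for directed sups in $\I{\Stream\Base}$, which has already been established in the preliminary description of the cpo of streams, and the choice of witnesses $d_n$ is dictated by the flatness of $\I\Base$.
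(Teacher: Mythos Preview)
Your proof is correct and follows essentially the same route as the paper: both arguments hinge on the single-position witness $d_n = \bot^n \cdot \stream(n) \cdot \bot^\omega$ to force $\stream(n) \leq (\bigvee D)(n)$ (resp.\ $\stream(n) \leq \streambis(n)$ for any upper bound $\streambis$). The only cosmetic difference is that the paper verifies directly that $\stream$ is the least upper bound of $D$, whereas you first invoke directedness to ensure $\bigvee D$ exists and then compare pointwise; the substance is identical.
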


\begin{proof}
It is clear that $\stream$ is an upper bound of
\(
 \left\{
   d \mid
   \text{$d$ of finite support and $d \leq_{\I{\Stream\Base}} \stream$}
  \right\}
\).

Let now $\streambis \in \I{\Stream\Base}$ such that
$d \leq_{\I{\Stream\Base}} \streambis$ for all $d$ of finite support
such that $d \leq_{\I{\Stream\Base}} \stream$.
We show that $\stream \leq_{\I{\Stream\Base}} \streambis$.
Let $n \in \NN$ and $d \deq \bot^n \cdot \stream(n) \cdot \bot^\omega$.
Then $d$ is of finite support and $d \leq_{\I{\Stream\Base}} \stream$.
Hence $d \leq_{\I{\Stream\Base}} \streambis$
and $\stream(n) \leq_{\I\Base} \streambis(n)$.
It follows that for all $n \in \NN$ we have $\stream(n) \leq_{\I\Base} \streambis(n)$,
and by definition we get
$\stream \leq_{\I{\Stream\Base}} \streambis$.
\end{proof}

\begin{lemm}
$\stream \in \I{\Stream\Base}$ is finite if, and only if, $\stream$ has finite support.
\end{lemm}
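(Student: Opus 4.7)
The plan is to prove both directions, leveraging the two lemmas that immediately precede the statement in the paper.

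For the \emph{if} direction, I would assume $\stream \in \I{\Stream\Base}$ has finite support, say $\supp(\stream) = \{n_1,\dots,n_k\}$, and take an arbitrary directed $D \sle \I{\Stream\Base}$ with $\stream \leq_{\I{\Stream\Base}} \bigvee D$. The key observation is the description of directed sups in $\I{\Stream\Base}$ already recalled in the paper: at each index $n$, the set $D(n) = \{x(n) \mid x \in D\}$ contains at most one non-$\bot$ element from $\Base$, and $(\bigvee D)(n)$ equals that element (or $\bot$). For each $n_i \in \supp(\stream)$, since $\stream(n_i) \neq \bot$ and $\stream(n_i) \leq_{\I\Base}(\bigvee D)(n_i)$, flatness of $\I\Base$ forces $(\bigvee D)(n_i) = \stream(n_i)$, so there is some $d_i \in D$ with $d_i(n_i) = \stream(n_i)$. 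Using directedness of $D$ on the finite family $\{d_1,\dots,d_k\}$, I obtain a single $d \in D$ with $d_i \leq_{\I{\Stream\Base}} d$ for all $i$, hence $d(n_i) = \stream(n_i)$ for each $i$. Since $\stream(n) = \bot$ for $n \notin \supp(\stream)$, this gives $\stream \leq_{\I{\Stream\Base}} d$, as required.

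For the \emph{only if} direction, I would apply the two lemmas just proved. By the second lemma, the set $S \deq \{d \mid d \text{ has finite support and } d \leq_{\I{\Stream\Base}} \stream\}$ has sup $\stream$, and by the first lemma $S$ is directed. If $\stream$ is finite, then applying the definition of finiteness to $S$ yields some $d \in S$ with $\stream \leq_{\I{\Stream\Base}} d$. Combined with $d \leq_{\I{\Stream\Base}} \stream$, this gives $d = \stream$, so $\stream = d$ has finite support.

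I do not expect a real obstacle here: both directions are essentially unpackings of the definitions, and the only subtle point is in the \emph{if} direction, where one has to remember that flatness of $\I\Base$ reduces the comparison $\stream(n_i) \leq (\bigvee D)(n_i)$ to equality whenever $\stream(n_i) \neq \bot$. Everything else is a routine use of directedness on a \emph{finite} collection of witnesses indexed by $\supp(\stream)$, and it is precisely the finiteness of the support that makes this step go through.
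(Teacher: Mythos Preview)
Your proposal is correct and follows essentially the same route as the paper's proof: the \emph{only if} direction invokes the two preceding lemmas exactly as the paper does, and the \emph{if} direction picks, for each $n_i \in \supp(\stream)$, a witness in $D$ matching at $n_i$ and then uses directedness of $D$ on this finite family to extract a single upper bound. The only cosmetic difference is that you phrase the pointwise comparison as an equality (via flatness) where the paper writes $\leq$, but the argument is the same.
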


\begin{proof}
Assume first that $\stream$ is finite.
Since
\[
\begin{array}{l l l}
  \stream
& \leq
& \bigvee
  \left\{
   d \mid
   \text{$d$ of finite support and $d \leq_{\I{\Stream\Base}} \stream$}
  \right\}
\end{array}
\]

\noindent
where the set in the r.-h.s.\ is directed,
there is some $d$ of finite support such that
$d \leq_{\I{\Stream\Base}} \stream$
and
$\stream \leq_{\I{\Stream\Base}} d$.
Hence $\stream = d$ has finite support.

Conversely, assume $d$ has finite support, and let
$D \sle \I{\Stream\Base}$ be directed with $d \leq_{\I{\Stream\Base}} \bigvee D$.
Let $n \in \supp(d)$.
The set $D(n) = \{ \stream(n) \mid \stream \in D\}$ is directed,
and since $\I\Base$ is flat, this set a greatest element.
Hence $d(n) \leq_{\I\Base} \stream_n(n)$ for some $\stream_n \in D$.
Since $\supp(d)$ is finite and (again) since $D$ is directed,
we obtain that $d \leq_{\I{\Stream\Base}} \stream$ for some $\stream \in D$.
\end{proof}

\subsection{Proofs of~\S\ref{sec:prelim:ltl} (\nameref{sec:prelim:ltl})}
\label{sec:app:prelim:ltl}

\begin{lemm}[Remark~\ref{rem:ltl:fix}]
\label{lem:app:ltl:fix}
Given a complete lattice $L$ and a (monotone) function $f \colon L \to L$,
we write $\lfp(f)$ and $\gfp(f)$ for the least and the greatest
fixpoint of $f$, respectively.
\begin{enumerate}[(1)]
\item
Given $\LTL$ formulae $\Phi, \Psi$,
let $H_{\Phi,\Psi} \colon \Po(\I{\Stream\Base}) \to \Po(\I{\Stream\Base})$
take $\SP$ to $\I\Psi \cup (\I\Phi \cap \I\Next(\SP))$.
Then
\[
\begin{array}{l !{=} l !{\qquad\text{and}\qquad} l !{=} l}
  \I{\Phi \Ushort \Psi}
& \lfp(H_{\Phi,\Psi})

& \I{\Phi \Wshort \Psi}
& \gfp(H_{\Phi,\Psi})
\end{array}
\]

\item
Given a complete atomic Boolean algebra and a monotone $f \colon B \to B$,
we have
\[
\begin{array}{l !{=} l !{\quad\text{and}\quad} l !{=} l}
  \lfp(f)
& \lnot \gfp(b \mapsto \lnot f(\lnot b))

& \gfp(f)
& \lnot \lfp(b \mapsto \lnot f(\lnot b))
\end{array}
\]
\end{enumerate}
\end{lemm}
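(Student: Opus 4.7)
My plan is to handle the two parts separately, since they are essentially independent.

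\textbf{Part (1): fixpoint characterization of $\Ushort$ and $\Wshort$.} First I would observe that $\I\Next$ is monotone on $\Po(\I{\Stream\Base})$ (clear from its definition), hence $H_{\Phi,\Psi}$ is monotone. This makes both least and greatest fixpoints exist by Knaster--Tarski. To show $\I{\Phi \Ushort \Psi}$ is \emph{a} fixpoint, I unfold the set-theoretic definition and do the obvious case split on the witness index $i$: $i=0$ corresponds exactly to $\stream \in \I\Psi$, while $i \geq 1$ corresponds to $\stream \in \I\Phi$ and $\stream\restr 1$ having witness $i-1$, i.e.\ $\stream \restr 1 \in \I{\Phi \Ushort \Psi}$; this matches the two disjuncts of $H_{\Phi,\Psi}(\I{\Phi \Ushort \Psi})$. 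The same style of unfolding, together with the decomposition given in the semantics of $\Wshort$, shows that $\I{\Phi \Wshort \Psi}$ is also a fixpoint.

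To prove $\I{\Phi \Ushort \Psi}$ is the \emph{least} such, let $\SP = H_{\Phi,\Psi}(\SP)$ and take $\stream \in \I{\Phi\Ushort\Psi}$ with witness $i$. I proceed by induction on $i$: if $i=0$ then $\stream \in \I\Psi \subseteq \SP$; if $i=k+1$ then $\stream \restr 1$ has witness $k$ so by IH $\stream\restr 1 \in \SP$, and since $\stream \in \I\Phi$ we get $\stream \in \I\Phi \cap \I\Next(\SP) \subseteq H_{\Phi,\Psi}(\SP) = \SP$. For $\I{\Phi \Wshort \Psi}$ to be \emph{greatest}, I take any fixpoint $\SP$ and $\stream \in \SP$ and iterate the fixpoint equation: at each step $i$ either $\stream \restr i \in \I\Psi$, in which case $\stream \in \I{\Phi \Ushort \Psi}$ with witness $i$, or $\stream \restr i \in \I\Phi$ and $\stream \restr (i{+}1) \in \SP$, allowing the iteration to continue. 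If the first alternative never triggers, then $\stream \restr i \in \I\Phi$ for all $i$, which places $\stream$ in the first component of the union defining $\I{\Phi \Wshort \Psi}$. This dichotomy is the only mildly delicate step; everything else is routine bookkeeping.

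\textbf{Part (2): the Boolean duality.} This is a standard argument and I expect no obstacle. Set $g \colon b \mapsto \lnot f(\lnot b)$; since complementation is an order-reversing involution on $B$ and $f$ is monotone, $g$ is monotone. The involution $b \mapsto \lnot b$ restricts to a bijection between prefixed points of $f$ (the $b$ with $f(b) \leq b$) and postfixed points of $g$ (the $c$ with $c \leq g(c)$): indeed $f(b) \leq b$ iff $\lnot b \leq \lnot f(b) = g(\lnot b)$. By Knaster--Tarski, $\lfp(f) = \bigwedge \{b : f(b) \leq b\}$ and $\gfp(g) = \bigvee \{c : c \leq g(c)\}$; applying De Morgan (valid in any complete Boolean algebra, hence in particular in any complete atomic one) gives
\[
  \lnot \gfp(g) \;=\; \bigwedge\{\lnot c : c \leq g(c)\} \;=\; \bigwedge\{b : f(b) \leq b\} \;=\; \lfp(f).
\]
The second identity $\gfp(f) = \lnot \lfp(b \mapsto \lnot f(\lnot b))$ follows symmetrically, or by applying the first identity to $g$ in place of $f$ (noting $g \circ g = f$ on fixpoint-relevant elements, or simply redoing the De Morgan computation). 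I note in passing that atomicity plays no role — complete Booleanness suffices — but the statement as given still holds since $\Po(\I{\Stream\Base})$ is in any case complete atomic Boolean.
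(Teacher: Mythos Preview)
Your proposal is correct and follows essentially the same route as the paper: for Part~(1) both verify the fixpoint equations by unfolding and then establish leastness/greatestness by the obvious induction on the witness index (resp.\ the dichotomy on whether $\I\Phi$ ever fails), and for Part~(2) both combine the Knaster--Tarski pre/postfixed-point characterizations with infinite De~Morgan. Your aside that atomicity is superfluous---complete Booleanness already yields infinite De~Morgan---is correct.
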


\begin{proof}
\hfill
\begin{enumerate}[(1)]
\item
The proof mimics that of~\cite[Lemmas 5.18 and 5.19]{bk08book}.
First note that we have
\[
\begin{array}{l l l}
  \Phi \Wshort \Psi
& \equiv
& \left(\Phi \Ushort \Psi\right)
  \lor
  \Box \Phi
\end{array}
\]

\noindent
and 
$\I{\Phi \Ushort \Psi} \sle \I{\Phi \Wshort \Psi}$.

We show that $\I{\Phi \Ushort \Psi}$
and $\I{\Phi \Wshort \Psi}$ are indeed fixpoints of $H_{\Phi,\Psi}$,
that is
\[
\begin{array}{r c l}
  \I{\Phi \Ushort \Psi}
& =
& \I\Psi \cup
  \left(
  \I\Phi \cap \I\Next\I{\Phi \Ushort \Psi}
  \right)
\\

  \I{\Phi \Wshort \Psi}
& =
& \I\Psi \cup
  \left(
  \I\Phi \cap \I\Next\I{\Phi \Wshort \Psi}
  \right)
\end{array}
\]

\noindent
The case of $\I{\Phi \Ushort \Psi}$ is a trivial unfolding of the definition:
given $\stream \in \I{\Stream\Base}$, we have
\[
\begin{array}{l l l}
  \stream \in \I{\Phi \Ushort \Psi}
& \text{iff}
& \text{there is $i \geq 0$ with}~
  \stream \restr 0,\dots, \stream \restr(i-1) \in \I\Phi
  ~\text{and}~
  \stream \restr i \in \I{\Psi}
\\

& \text{iff}
& \stream \in \I\Psi
  ~\text{or}~
  \left(
  \text{there is $i \geq 1$ with}~
  \stream \restr 0,\dots, \stream \restr(i-1) \in \I\Phi
  ~\text{and}~
  \stream \restr i \in \I{\Psi}
  \right)
\\

& \text{iff}
& \stream \in \I\Psi
  ~\text{or}~
  \left(
  \stream \in \I\Phi
  ~\text{and}~
  \text{there is $i \geq 1$ with}~
  \stream \restr 1,\dots, \stream \restr(i-1) \in \I\Phi
  ~\text{and}~
  \stream \restr i \in \I{\Psi}
  \right)
\\

& \text{iff}
& \stream \in \I\Psi
  ~\text{or}~
  \left(
  \stream \in \I\Phi
  ~\text{and}~
  \stream \restr 1 \in \I{\Phi \Ushort \Psi}
  \right)
\\

& \text{iff}
& \stream \in
  \I\Psi \cup
  \left(
  \I\Phi \cap \I\Next\I{\Phi \Ushort \Psi}
  \right)
\end{array}
\]

We now turn to $\I{\Phi \Wshort \Psi}$.
Since $\I{\Phi \Ushort \Psi} \sle \I{\Phi \Wshort \Psi}$,
and since $\I\Next$ is monotone (for inclusion),
we have
\[
\begin{array}{r c l}
  \I{\Phi \Ushort \Psi}
& \sle
& \I\Psi \cup
  \left(
  \I\Phi \cap \I\Next\I{\Phi \Wshort \Psi}
  \right)
\end{array}
\]

\noindent
Let $\stream \in \I{\Phi \Wshort \Psi} \setminus \I{\Phi \Ushort \Psi}$,
i.e.\ $\stream \in \I{\Box \Phi}$.
Since $\I{\Box \Phi} \sle \I{\Phi \Wshort \Psi}$,
we get
\[
\begin{array}{*{5}{l}}
  \stream
& \in
& \I\Phi \cap \I\Next\I{\Box\Phi}
& \sle
& \I\Psi \cup
  \left(
  \I\Phi \cap \I\Next\I{\Phi \Wshort \Psi}
  \right)
\end{array}
\]

\noindent
It follows that
\[
\begin{array}{r c l}
  \I{\Phi \Wshort \Psi}
& \sle
& \I\Psi \cup
  \left(
  \I\Phi \cap \I\Next\I{\Phi \Wshort \Psi}
  \right)
\end{array}
\]

\noindent
For the converse inclusion,
if $\stream \in \I\Psi$,
then we have
$\stream \in \I{\Phi \Ushort \Psi} \sle \I{\Phi \Wshort \Psi}$.
Assume now
$\stream \in \I\Phi \cap \I\Next\I{\Phi \Wshort \Psi}$.
If $\stream \in \I\Next\I{\Phi \Ushort \Psi}$,
then we obtain
$\stream \in \I{\Phi \Ushort \Psi} \sle \I{\Phi \Wshort \Psi}$.
Otherwise, we have $\stream \in \I\Next\I{\Box\Phi}$,
and since $\stream \in \I\Phi$, this gives
$\stream \in \I{\Box\Phi} \sle \I{\Phi \Wshort \Psi}$.

Hence $\I{\Phi \Ushort \Psi}$ and $\I{\Phi \Wshort \Psi}$
are indeed fixpoints of $H_{\Phi,\Psi}$.

We show that $\I{\Phi \Ushort \Psi}$ is the \emph{least} fixpoint of $H_{\Phi,\Psi}$.
Let $P \in \Po(\I{\Stream\Base})$ such that
$P = H_{\Phi,\Psi}(P)$.
We have to show that $\I{\Phi \Ushort \Psi} \sle P$.
But if $\stream \in \I{\Phi \Ushort \Psi}$,
there is some $i \geq 0$ such that
$\stream \restr 0,\dots, \stream \restr(i-1) \in \I\Phi$
and
$\stream \restr i \in \I{\Psi}$.
Since $P$ is a fixpoint of $H_{\Phi,\Psi}$,
we get $\stream \restr i \in P$.
Again since $P$ is a fixpoint of $H_{\Phi,\Psi}$,
we obtain
$\stream \restr (i-1),\dots, \stream \restr 0 \in P$.
Hence $\stream = \stream \restr 0 \in P$ and we are done.

It remains to show that 
$\I{\Phi \Wshort \Psi}$ is the greatest fixpoint of $H_{\Phi,\Psi}$.
Let $Q \in \Po(\I{\Stream\Base})$ such that
\[
\begin{array}{r c l}
  Q
& =
& \I\Psi \cup
  \left(
  \I\Phi \cap \I\Next Q
  \right)
\end{array}
\]

\noindent
We have to show that $Q \sle \I{\Phi \Wshort \Psi}$.
Let $\stream \in Q$.
If $\stream \in \I{\Box\Phi}$ then we are done.
Otherwise, there is a \emph{least} $i \geq 0$ such that
$\stream \restr i \notin \I{\Phi}$.
Hence $\stream \restr 0,\dots \stream \restr (i-1) \in \I{\Phi}$.
Since $\stream \in Q = H_{\Phi,\Psi}(Q)$, it follows that
$\stream \restr 1,\dots \stream \restr i \in Q$.
Hence $\stream \restr i \in \I\Psi$ since
$\stream \restr i \notin \I\Phi$.
It follows that
$\stream \in \I{\Phi \Ushort \Psi} \sle \I{\Phi \Wshort \Psi}$.

\item
Let $f \colon B \to B$ be monotone with $B$ a complete atomic Boolean algebra.
By the Knaster-Tarski Fixpoint Theorem
(see e.g.~\cite[2.35]{dp02book}), we have
\[
\begin{array}{r c l}
  \lfp(f)
& =
& \bigvee \left\{
  a \in B \mid f(a) \leq a
  \right\}
\\

  \gfp(f)
& =
& \bigwedge \left\{
  a \in B \mid a \leq f(a)
  \right\}
\end{array}
\]

Let $g$ be the monotone function $B \to B$ which takes $a$ to $\lnot f(\lnot a)$.
We show
\[
\begin{array}{l l l}
  \gfp(f)
& =
& \lnot \lfp(g)
\end{array}
\]

\noindent
The other equation is then obtained by duality.
Since $B$ is complete and atomic, it follows from
e.g.~\cite[Theorem 10.24]{dp02book} that
$\lnot \bigvee \SP = \bigwedge \{\lnot s \mid s \in S \}$
for every $\SP \sle B$.
We then compute
\[
\begin{array}{l l l}
  \lnot \lfp(g)
& =
& \lnot
  \bigvee \left\{ a \mid g(a) \leq a \right\}
\\

& =
& \bigwedge \left\{ \lnot a \mid g(a) \leq a \right\}
\\

& =
& \bigwedge \left\{ \lnot a \mid \lnot f(\lnot a) \leq a \right\}
\\

& =
& \bigwedge \left\{ \lnot a \mid \lnot a \leq f(\lnot a) \right\}
\\

& =
& \bigwedge \left\{ b \mid b \leq f(b) \right\}
\\

& =
& \gfp(f)
\end{array}
\]
\qedhere
\end{enumerate}
\end{proof}

\begin{lemm}[Lemma~\ref{lem:ltl:up}]
\label{lem:app:ltl:up}
If $\Phi$ is negation-free then $\I\Phi$
is upward-closed in $\I{\Stream\Base}$
(if $\stream \in \I\Phi$ and $\stream \leq_{\I{\Stream\Base}} \streambis$
then $\streambis \in \I\Phi$).
\end{lemm}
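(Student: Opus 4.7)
The plan is to proceed by structural induction on the negation-free formula $\Phi$, using that the relation $\stream \leq_{\I{\Stream\Base}} \streambis$ is defined pointwise against the flat order on $\I\Base$, and that suffix extraction is monotone (if $\stream \leq_{\I{\Stream\Base}} \streambis$ then $\stream \restr k \leq_{\I{\Stream\Base}} \streambis \restr k$ for every $k$).

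The base cases are $\True$, $\False$, and $a \in \Base$. The first two are trivial since $\I\True = \I{\Stream\Base}$ and $\I\False = \emptyset$. For an atomic formula $a$, suppose $\stream \in \I a$, so $\stream(0) = a$, and suppose $\stream \leq_{\I{\Stream\Base}} \streambis$. Then $a = \stream(0) \leq_{\I\Base} \streambis(0)$; since $\I\Base$ is flat and $a \neq \bot$, the only element above $a$ is $a$ itself, so $\streambis(0) = a$ and $\streambis \in \I a$. This is the step where the assumption that atoms are from $\Base$ (not negated atoms) matters.

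For the inductive cases, $\I{\Phi \land \Psi} = \I\Phi \cap \I\Psi$ and $\I{\Phi \lor \Psi} = \I\Phi \cup \I\Psi$ are upward-closed because upward-closed subsets of a poset are stable under arbitrary unions and intersections. For $\Next\Phi$, if $\stream \in \I{\Next\Phi}$ then $\stream \restr 1 \in \I\Phi$; if $\stream \leq_{\I{\Stream\Base}} \streambis$ then $\stream \restr 1 \leq_{\I{\Stream\Base}} \streambis \restr 1$, and the induction hypothesis on $\Phi$ gives $\streambis \restr 1 \in \I\Phi$, i.e. $\streambis \in \I{\Next\Phi}$.

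The two remaining cases $\Phi \Ushort \Psi$ and $\Phi \Wshort \Psi$ follow the same pattern, using the monotonicity of suffixes on all indices simultaneously. For $\Phi \Ushort \Psi$, take $i$ witnessing $\stream \in \I{\Phi \Ushort \Psi}$; since $\stream \restr k \leq_{\I{\Stream\Base}} \streambis \restr k$ for every $k$, the induction hypotheses on $\Phi$ and $\Psi$ yield the same witness $i$ for $\streambis$. The $\Wshort$ case splits along its definition as $\{\stream \mid \forall i,\ \stream \restr i \in \I\Phi\} \cup \I{\Phi \Ushort \Psi}$, handling the first disjunct by applying the induction hypothesis on $\Phi$ to each $i$, and the second by the $\Ushort$ case. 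There is no genuine obstacle here; the only point of attention is the atomic case, where one must invoke flatness of $\I\Base$ rather than simply monotonicity.
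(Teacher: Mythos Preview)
Your proposal is correct and follows essentially the same approach as the paper's proof: structural induction on $\Phi$, with the atomic case handled via flatness of $\I\Base$ and the modal cases reduced to the monotonicity of $(-)\restr i$. You simply spell out a few details (the trivial $\True$/$\False$ cases and the disjunctive split for $\Wshort$) that the paper leaves implicit.
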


\begin{proof}
We reason by induction $\Phi$.
Since upward-closed sets are stable under (arbitrary)
unions and intersections, we just have to consider the cases of
atomic formulae $a \in \Base$ and of modalities.

Concerning atomic formulae, if $\stream \in \I{a}$, then $\stream(0) = a$.
Since $\I\Base$ is a flat cpo, we have $\streambis(0) = a$
for all $\streambis \geq_{\I{\Stream\Base}} \stream$.
Hence $\I{a}$ is upward-closed.

The cases of modalities
$\Next\Phi$, $(\Phi \Ushort \Psi)$ and $(\Phi \Wshort \Psi)$
follow from the induction hypothesis and
the fact that
$\stream \leq_{\I{\Stream\Base}} \streambis$
implies
$\stream\restr i \leq_{\I{\Stream\Base}} \streambis \restr i$
for all $i \in \NN$.
\end{proof}

}
\opt{full,long}{
\section{Proofs of~\S\ref{sec:frames} (\nameref{sec:frames})}
\label{sec:app:frames}

\subsection{Proofs of~\S\ref{sec:frames:sub} (\nameref{sec:frames:sub})}
\label{sec:app:frames:sub}

Let $(X,\Open)$ be a topological space,
and fix $P \sle X$.
Write $\incl \colon (P,\Open\restr P) \emb (X,\Open)$
for the subspace inclusion,
and let $\ladj\incl = \Open(\incl) \colon \Open \quot \Open\restr P$
be the induced surjective frame morphism.

The following simple observation is used repeatedly below.
\begin{lemm}
\label{lem:app:frames:sub:base}
%
Given $x \in P$, we have 
$\Filt_x = \Filt^{\Open\restr P}_x \comp \ladj\incl$
where
\(
  \Filt^{\Open\restr P}_x
  \deq
  \left\{ (U \cap P) \in \Open\restr P \mid x \in U \cap P \right\}
\).
\end{lemm}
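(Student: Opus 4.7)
The proof is essentially an unfolding of definitions together with the hypothesis $x \in P$. The plan is to verify the asserted equality of frame morphisms $\Open \to \two$ by evaluating both sides on an arbitrary open $U \in \Open$ and checking pointwise agreement.

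First I would recall, using the identification of a point $\Filt \in \Frm\funct{L,\two}$ with the set $\{a \in L \mid \Filt(a) = 1\}$, that $\Filt_x$ corresponds to the frame morphism $\Open \to \two$ sending $U$ to $1$ iff $x \in U$, while $\Filt^{\Open\restr P}_x$ corresponds to the frame morphism $\Open\restr P \to \two$ sending $V$ to $1$ iff $x \in V$. By definition of $\ladj\incl$, the composition $\Filt^{\Open\restr P}_x \comp \ladj\incl$ sends $U \in \Open$ to $\Filt^{\Open\restr P}_x(U \cap P)$, which evaluates to $1$ exactly when $x \in U \cap P$.

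Next I would invoke the hypothesis $x \in P$ to conclude that $x \in U \cap P$ if and only if $x \in U$, which gives precisely $\Filt_x(U) = 1$. Since this holds for every $U \in \Open$, the two frame morphisms agree. Along the way it is worth noting that $\Filt^{\Open\restr P}_x$ is indeed a well-defined point of $\Open\restr P$ (the verification being the same one used earlier for $\Filt_x \in \pt(\Open)$, using that subspace-opens are closed under finite intersections and arbitrary unions), although strictly speaking the statement only asserts an equality of subsets of $\Open$, which follows directly from the equivalence above. There is no genuine obstacle here; the content of the lemma is the observation that membership of $x \in P$ in a subspace-open $U \cap P$ is equivalent to membership of $x$ in $U$, which is precisely what makes the restriction map $\ladj\incl$ compatible with the points supported on $P$.
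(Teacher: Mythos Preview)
Your proposal is correct and follows essentially the same approach as the paper: both arguments fix an arbitrary $U \in \Open$ and verify the equivalence $x \in U \iff x \in U \cap P$ using the hypothesis $x \in P$, which is exactly the content of the equality $\Filt_x = \Filt^{\Open\restr P}_x \comp \ladj\incl$. The paper phrases this in terms of set membership in the points (viewed as subsets), while you phrase it in terms of evaluation to $1$ in $\two$, but under the identification stated in the paper these are the same computation.
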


\begin{proof}
Given $U \in \Open$, if $U \in \Filt$ then $x \in U$,
so that $x \in U \cap P$ and $\ladj\incl(U) = (U \cap P) \in \Filt^{\Open\restr P}_x$.
Conversely, if $\ladj\incl(U) \in \Filt^{\Open\restr P}_x$, then in particular
$x \in U$ so that $U \in \Filt_x$.
\end{proof}

\begin{lemm}[Lemma~\ref{lem:frames:sub:soberfilt}]
\label{lem:app:frames:sub:soberfilt}
Assume that $(X,\Open)$ is sober. Then the following are equivalent.
\begin{enumerate}[(i)]
\item
$(P,\Open\restr P)$ is sober.

\item
For each $x \in X$, we have $x \in P$ if, and only if,
there is some $\Filtbis \in \pt(\Open\restr P)$
such that
$\Filt_x = \Filtbis \comp \ladj\incl$.
\[
\begin{tikzcd}[row sep=tiny]
  \Open
  \arrow[twoheadrightarrow]{rr}{\ladj\incl}
  \arrow{dr}[below]{\Filt_x}
&
& \Open\restr P
  \arrow[dashed]{dl}{\Filtbis}
\\
& \two
\end{tikzcd}
\]
\end{enumerate}
\end{lemm}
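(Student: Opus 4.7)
The plan is to pivot on the naturality square induced by the subspace inclusion together with Lemma~\ref{lem:app:frames:sub:base}. Concretely, for each $x \in P$, the lemma gives $\Filt_x = \Filt^{\Open\restr P}_x \comp \ladj\incl$, which in categorical terms is precisely the naturality equation
\[
  \eta_X \comp \incl \;=\; \pt(\ladj\incl) \comp \eta_P
  \quad\colon\quad P \to \pt(\Open).
\]
The second key ingredient is that $\pt(\ladj\incl) \colon \pt(\Open\restr P) \to \pt(\Open)$, which takes $\Filtbis$ to $\Filtbis \comp \ladj\incl$, is \emph{injective}: this follows from the fact that $\ladj\incl$ is a surjective frame morphism, so any two frame morphisms out of $\Open\restr P$ that agree after pre-composition with $\ladj\incl$ must be equal. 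Sobriety of $X$ moreover asserts that $\eta_X$ is a bijection.

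For the implication $(i) \Rightarrow (ii)$, I would argue as follows. The forward direction of (ii) (if $x \in P$ then $\Filt_x$ factors through $\ladj\incl$) is exactly Lemma~\ref{lem:app:frames:sub:base}, with witness $\Filt^{\Open\restr P}_x$. For the reverse direction, suppose $\Filt_x = \Filtbis \comp \ladj\incl$ for some $\Filtbis \in \pt(\Open\restr P)$. By sobriety of $P$, $\eta_P$ is surjective, so $\Filtbis = \Filt^{\Open\restr P}_y$ for some $y \in P$. Then $\Filt_y = \Filt^{\Open\restr P}_y \comp \ladj\incl = \Filt_x$ by Lemma~\ref{lem:app:frames:sub:base}, and injectivity of $\eta_X$ forces $x = y \in P$.

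For the implication $(ii) \Rightarrow (i)$, I would verify that $\eta_P$ is a bijection (which by the paragraph preceding the statement already suffices for sobriety). Injectivity: if $\eta_P(x) = \eta_P(y)$ then pre-composing with $\ladj\incl$ and applying Lemma~\ref{lem:app:frames:sub:base} gives $\Filt_x = \Filt_y$, hence $x = y$ by sobriety of $X$. Surjectivity: given $\Filtbis \in \pt(\Open\restr P)$, the composite $\Filtbis \comp \ladj\incl$ lies in $\pt(\Open)$, so by sobriety of $X$ it equals $\Filt_x$ for a unique $x \in X$; hypothesis $(ii)$ then yields $x \in P$, and Lemma~\ref{lem:app:frames:sub:base} together with injectivity of $\pt(\ladj\incl)$ (coming from surjectivity of $\ladj\incl$) gives $\Filtbis = \Filt^{\Open\restr P}_x = \eta_P(x)$.

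There is really no hard step here; the proof is essentially a diagram chase. The only mildly delicate point is remembering that $\pt(\ladj\incl)$ is injective because $\ladj\incl$ is \emph{surjective} as a frame morphism, which is what allows us to descend equalities of points from $\Open$ down to $\Open\restr P$.
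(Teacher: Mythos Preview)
Your proof is correct and follows essentially the same route as the paper's: both directions hinge on Lemma~\ref{lem:app:frames:sub:base} for the forward part of (ii), sobriety of $P$ plus sobriety of $X$ for (i)$\Rightarrow$(ii), and sobriety of $X$ together with the fact that the surjective $\ladj\incl$ is epi in $\Frm$ (your ``$\pt(\ladj\incl)$ is injective'') for (ii)$\Rightarrow$(i). The only cosmetic difference is that you package Lemma~\ref{lem:app:frames:sub:base} as a naturality square and separate out injectivity of $\eta_P$, whereas the paper folds that into the uniqueness clause of the surjectivity argument.
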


\begin{proof}
Let $(X,\Open)$ be sober, and fix some $P \sle X$.
%
%
In view of Lemma~\ref{lem:app:frames:sub:base},
we just have to show that the 
following are equivalent.
\begin{enumerate}[(i)]
\item
\label{item:app:frames:sub:soberfilt:sober}
The space $(P,\Open\restr P)$ is sober.

\item
\label{item:app:frames:sub:soberfilt:triangle}
For each $x \in X$, we have
$x \in P$ whenever
there is some $\Filtbis \in \pt(\Open\restr P)$
such that
$\Filt_x = \Filtbis \comp \ladj\incl$.
\end{enumerate}

We discuss each implication separately.
\begin{description}
\item[\(
\text{(\ref{item:app:frames:sub:soberfilt:sober})}
\imp
\text{(\ref{item:app:frames:sub:soberfilt:triangle})}
\).]

Assume that $(P,\Open\restr P)$ is sober, and let $x \in X$.
Let $\Filtbis \in \pt(\Open\restr P)$
such that $\Filt_x = \Filtbis \comp \ladj\incl$.
We have to show that $x \in P$.
Since $(P,\Open\restr P)$ is sober, we have
$\Filtbis = \Filt^{\Open\restr P}_y$ for some unique $y \in P$.
But we have seen above that
$\Filt_y = \Filt^{\Open\restr P}_y \comp \ladj\incl$.
Hence $\Filt_x = \Filt_y$ and $x = y$ since $X$ is sober.

\item[\(
\text{(\ref{item:app:frames:sub:soberfilt:triangle})}
\imp
\text{(\ref{item:app:frames:sub:soberfilt:sober})}
\).]

Assume condition (\ref{item:app:frames:sub:soberfilt:triangle})
and let $\Filtbis \in \pt(\Open\restr P)$.
We have to show that $\Filtbis = \Filt_x^{\Open\restr P}$
for a unique $x \in P$.

Since $X$ is sober, there is some $x \in X$
such that $\Filt_x = \Filtbis \comp \ladj\incl$.
Condition (\ref{item:app:frames:sub:soberfilt:triangle})
implies that $x \in P$, so that
$\Filt_x = \Filt^{\Open\restr P}_x \comp \ladj\incl$.
But since $\ladj\incl$ is a surjective frame morphism,
it is in particular an epimorphism in $\Frm$.
Hence
$\Filtbis \comp \ladj\incl = \Filt^{\Open\restr P}_x \comp \ladj\incl$
implies $\Filtbis = \Filt^{\Open\restr P}_x$.

Moreover, if $\Filtbis = \Filt_y^{\Open\restr P}$
for some $y \in P$,
then since
$\Filt_y = \Filt^{\Open\restr P}_y \comp \ladj\incl$,
we have $\Filt_x = \Filt_y$.
Hence $y = x$ since $X$ is sober.
\qedhere
\end{description}
\end{proof}


Write $j \colon \Open \to \Open$ for the nucleus
induced by the surjective frame morphism
$\ladj\incl \colon \Open \quot \Open\restr P$,
and write $\widetilde P$ for the frame of $j$-fixpoints
($\widetilde P = \{ U \in \Open \mid j(U) = U \}$).

\begin{lemm}[Remark~\ref{rem:frames:sub:nucleus}]
\label{lem:app:frames:sub:nucleus}
Given an open $U \in \Open$ of $X$, we have
\[
\begin{array}{l l l}
  j(U)
& =
& \bigcup \left\{ V \in \Open \mid
  V \cap P = U \cap P
  \right\}
\end{array}
\]
\end{lemm}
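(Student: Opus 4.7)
The plan is to unfold the definition $j = \radj\incl \circ \ladj\incl$ using the explicit formula for the upper adjoint, and then reconcile the resulting expression with the one claimed in the statement.

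First I would use that since $\ladj\incl \colon \Open \to \Open\restr P$ preserves all sups, it has an upper adjoint $\radj\incl$ whose action (as recalled in the excerpt) is given by $\radj\incl(W) = \bigvee_\Open \{V \in \Open \mid \ladj\incl(V) \leq W\}$. Specializing to $W = \ladj\incl(U) = U \cap P$ and remembering that sups in $\Open$ are unions and that $\ladj\incl(V) \leq W$ in $\Open\restr P$ means $V \cap P \sle U \cap P$, I obtain
\[
  j(U) \;=\; \radj\incl(U \cap P) \;=\; \bigcup \left\{ V \in \Open \mid V \cap P \sle U \cap P \right\}.
\]

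The remaining step is to show that this union equals the one in the statement, namely over the sets $V$ with $V \cap P = U \cap P$. The inclusion $\supseteq$ is immediate since $V \cap P = U \cap P$ entails $V \cap P \sle U \cap P$. For $\sle$, write $W \deq \bigcup \{V \in \Open \mid V \cap P \sle U \cap P\}$. Because finite intersections distribute over arbitrary unions in a topology, $W \cap P = \bigcup \{V \cap P \mid V \cap P \sle U \cap P\} \sle U \cap P$; and since $U$ itself belongs to the indexing family, $U \cap P \sle W \cap P$. Hence $W \cap P = U \cap P$, so $W$ is one of the opens in the family indexing the right-hand union, and thus $W$ is contained in that union. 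This gives $j(U) \sle \bigcup\{V \mid V \cap P = U \cap P\}$, finishing the proof.

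The argument is essentially bookkeeping; the only subtle point is noticing that the set $\{V \mid V \cap P \sle U \cap P\}$ is closed under arbitrary unions, so its join already satisfies the stronger equality $V \cap P = U \cap P$.
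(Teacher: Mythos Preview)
Your proof is correct and follows essentially the same route as the paper: unfold $j(U)$ via the upper-adjoint formula to get $\bigcup\{V \mid V\cap P \sle U\cap P\}$, then observe that this union itself satisfies $W\cap P = U\cap P$ (using $U$ as a member of the family for one inclusion and distributivity for the other), so it already lies in the smaller family. The paper's argument is identical in substance, just phrased with $j(U)$ in place of your $W$.
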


\begin{proof}
Fix $U \in \Open$.
By definition, we have
\[
\begin{array}{l l l}
  j(U)
& =
& \radj\incl\left( \ladj\incl(U) \right)
\\

& =
& \bigcup \left\{ V \in \Open \mid
  \ladj\incl(V) \sle \ladj\incl(U)
  \right\}
\\

& =
& \bigcup \left\{ V \in \Open \mid
  V \cap P \sle U \cap P
  \right\}
\end{array}
\]

Hence, if $V \in \Open$ is such that $V \cap P = U \cap P$,
then in particular $V \cap P \sle U \cap P$
and thus $V \sle j(U)$.
It follows that
\(
  \bigcup \left\{ V \in \Open \mid
  V \cap P = U \cap P
  \right\}
  \sle
  j(U)
\).

It remains to show that
\(
  j(U)
  \sle
  \bigcup \left\{ V \in \Open \mid
  V \cap P = U \cap P
  \right\}
\).
Since $j(U) \in \Open$, we are done if we show that
$j(U) \cap P = U \cap P$.
But we have
\(
  j(U) \cap P
  =
  \bigcup \left\{ V \cap P \mid
  V \cap P \sle U \cap P
  \right\}
\)
so that 
$j(U) \cap P \sle U \cap P$.
Since $U \sle j(U)$, we obtain
$j(U) \cap P = U \cap P$,
as required.
\end{proof}

Recall that given $x \in \Open$,
we set $\widetilde x = X \setminus \clos{\{x\}} = X \setminus \down x$,
where $y \in \down x$ iff $y \leq_\Open x$,
with $\leq_\Open$
the specialization (pre)order of $(X,\Open)$.

\begin{lemm}[Remark~\ref{rem:frames:sub:widetilde}]
\label{lem:app:frames:sub:widetilde}
Given $x \in X$ and $U \in \Open$, we have $U \sle \widetilde x$
if, and only if, $x \notin U$.
\end{lemm}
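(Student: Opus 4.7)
The plan is to verify both implications directly from the definitions, using only reflexivity of $\leq_\Open$ and the characterization of the specialization (pre)order.

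For the forward direction, observe that $x \leq_\Open x$, so $x \in \down x = \clos{\{x\}}$, and therefore $x \notin \widetilde x = X \setminus \down x$. Hence any $U \sle \widetilde x$ automatically satisfies $x \notin U$.

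For the converse, assume $x \notin U$ and pick an arbitrary $y \in U$; I want to show $y \in \widetilde x$, i.e.\ $y \not\leq_\Open x$. If $y \leq_\Open x$ held, then by the defining property of the specialization order, every open containing $y$ would contain $x$. Applying this to the open $U$ (which contains $y$ by choice) would give $x \in U$, contradicting the hypothesis $x \notin U$. Hence $y \not\leq_\Open x$, so $y \in \widetilde x$, and as $y$ was arbitrary in $U$ we conclude $U \sle \widetilde x$.

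There is no real obstacle: the statement is essentially an unfolding of the definitions of $\widetilde x$, $\down x$, and $\leq_\Open$, and the only ingredients are reflexivity of $\leq_\Open$ and the openness of $U$.
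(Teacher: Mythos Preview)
Your proof is correct and follows essentially the same approach as the paper's: both directions are immediate unfoldings of the definitions of $\widetilde x$ and $\leq_\Open$. The only cosmetic difference is that for the converse the paper argues the contrapositive (from $U \not\sle \widetilde x$ to $x \in U$) whereas you argue directly with an embedded contradiction, but the underlying step---$y \leq_\Open x$ and $y \in U$ force $x \in U$---is identical.
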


\begin{proof}
If $U \sle \widetilde x$, then we obviously have $x \notin U$
since $x \notin \widetilde x$.
Assume conversely that $U \not\sle \widetilde x$.
Hence there is some $y \in U$ such that $y \notin \widetilde x$.
This implies $y \leq_\Open x$ with $y \in U$,
so that $x \in U$.
\end{proof}

\begin{lemm}[Lemma~\ref{lem:frames:sub:widefilt}]
\label{lem:app:frames:sub:widefilt}
Given $x \in X$, let
$\widetilde\Filt_x \deq \{ U \in \widetilde P \mid x \in U \}$.
Then
$\widetilde\Filt_x \in \pt(\widetilde P)$
if, and only if,
$\widetilde x \in \widetilde P$.
\end{lemm}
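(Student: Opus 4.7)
My approach is to unpack the condition $\widetilde\Filt_x \in \pt(\widetilde P)$ via the frame structure on $\widetilde P$, and then reformulate it as a statement about the nucleus $j$ acting on $\widetilde x$.

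First I observe that the characteristic function of $\widetilde\Filt_x$ on $\widetilde P$ automatically preserves the top $\top_{\widetilde P} = X$ (since $x \in X$) and finite meets (since meets in $\widetilde P$ are the set-theoretic intersections inherited from $\Open$, as noted in the paragraph preceding Lemma~\ref{lem:app:frames:sub:nucleus}). Hence $\widetilde\Filt_x \in \pt(\widetilde P)$ if and only if $\widetilde\Filt_x$ preserves arbitrary joins. Recalling that $\bigvee_{\widetilde P} \SP = j(\bigcup \SP)$, and that for $V \in \SP$ the inclusion $V \sle j(\bigcup \SP)$ already delivers the easy half of join preservation, by contrapositive and Lemma~\ref{lem:app:frames:sub:widetilde} the condition reduces to
\[
(\star) \qquad \forall \SP \sle \widetilde P,\quad \bigcup \SP \sle \widetilde x \,\Longrightarrow\, j(\bigcup \SP) \sle \widetilde x.
\]

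The $(\Leftarrow)$ direction of the lemma is then immediate from $(\star)$: assuming $\widetilde x \in \widetilde P$, i.e.\ $j(\widetilde x) = \widetilde x$, monotonicity of the nucleus $j$ yields $j(\bigcup \SP) \sle j(\widetilde x) = \widetilde x$ whenever $\bigcup \SP \sle \widetilde x$.

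For the $(\Rightarrow)$ direction I assume $(\star)$ and aim to prove $j(\widetilde x) \sle \widetilde x$, the reverse inclusion being automatic. I instantiate $(\star)$ on the distinguished family $\SP^\star \deq \{V \in \widetilde P \mid V \sle \widetilde x\}$, whose union lies in $\widetilde x$ by construction, obtaining $V^\star \deq j(\bigcup \SP^\star) \sle \widetilde x$. I then identify $V^\star$ with $j(\widetilde x)$ using Lemma~\ref{lem:app:frames:sub:nucleus}: since $j(U)$ depends only on $U \cap P$, the identification reduces to verifying $(\bigcup \SP^\star) \cap P = \widetilde x \cap P$, after which $j(\widetilde x) = V^\star \sle \widetilde x$ closes the proof. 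The main obstacle is precisely this set-theoretic identification: only the non-trivial inclusion $\widetilde x \cap P \sle (\bigcup \SP^\star) \cap P$ requires work, as it demands producing, for each $y \in \widetilde x \cap P$, an element $V \in \widetilde P$ with $V \sle \widetilde x$ and $y \in V$, built by saturating via $j$ an open that separates $y$ from $x$ (available because $y \not\leq_\Open x$).
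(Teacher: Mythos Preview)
Your $(\Leftarrow)$ argument is correct and is essentially the paper's: once $\widetilde x \in \widetilde P$, monotonicity of $j$ turns $\bigcup \SP \sle \widetilde x$ into $j(\bigcup \SP) \sle j(\widetilde x) = \widetilde x$, which is exactly the join condition $(\star)$.

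The $(\Rightarrow)$ direction, however, does not go through. Your last step requires, for each $y \in \widetilde x \cap P$, an element $V \in \widetilde P$ with $y \in V \sle \widetilde x$; you propose $V = j(U)$ for an open $U$ with $y \in U$ and $x \notin U$. But Lemma~\ref{lem:app:frames:sub:nucleus} shows that $j(U)$ only controls $U \cap P$, so nothing prevents $x \in j(U)$. Concretely, take $X = \{0,1,2\}$ with the Scott topology of the chain $0<1<2$, and set $P = \{2\}$, $x = 0$. Then $\widetilde P = \{\emptyset, X\}$ and $\widetilde x = \{1,2\} \notin \widetilde P$, yet $\widetilde\Filt_x = \{X\}$ \emph{is} the unique point of $\widetilde P \cong \two$. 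For $y = 2 \in \widetilde x \cap P$, the only $V \in \widetilde P$ containing $y$ is $X \not\sle \widetilde x$, so the inclusion $\widetilde x \cap P \sle (\bigcup \SP^\star) \cap P$ simply fails.

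This is not a defect of your route relative to the paper's: the paper's own proof of this direction asserts $\bigcup \SP = \widetilde x$ without argument, and the same example refutes it (there $\bigcup \SP = \emptyset$). In fact the example shows that the $(\Rightarrow)$ implication is false as stated. Only the $(\Leftarrow)$ direction is ever invoked downstream (in the proofs of Proposition~\ref{prop:app:frames:sub:tilde} and Lemma~\ref{lem:app:frames:sub:notsober}), so the paper's main results are unaffected.
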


\begin{proof}
Assume first that $\widetilde x \in \widetilde P$.
It is clear that $\widetilde\Filt_x$ is upward-closed and
stable under finite intersections.
Let $\SP \sle \widetilde P$ such that for all $U \in \SP$,
we have $U \sle \widetilde x$.
Then $\bigcup \SP \sle \widetilde x$
and $j(\bigcup \SP) \sle j(\widetilde x) = \widetilde x$
(since $\widetilde x \in \widetilde P$).
Hence, if $x \in j(\bigcup \SP)$, we have $j(\bigcup \SP) \not\sle \widetilde x$,
and there is some $U \in \SP$ such that $U \not\sle \widetilde x$,
i.e.\ $x \in U$.

Assume conversely that $\widetilde\Filt_x \in \pt(\widetilde P)$.
Let $\SP \deq \{ V \in \widetilde P \mid V \sle \widetilde x\}$.
Then for all $V \in \SP$, we have $x \notin V$ and thus
$V \notin \widetilde\Filt_x$.
Hence $j(\bigcup\SP) \notin \widetilde\Filt_x$, so that
$x \notin j(\bigcup\SP)$ and $j(\bigcup \SP) \sle \widetilde x$.
But this implies $j(\widetilde x) \sle \widetilde x$
since $\bigcup \SP = \widetilde x$.
\end{proof}

It is well-known that $x \in P$ implies $\widetilde x \in \widetilde P$
(\cite[Remark VI.1.3.1]{pp12book}).

\begin{prop}[Proposition~\ref{prop:frames:sub:tilde}]
\label{prop:app:frames:sub:tilde}
Assume that $(X,\Open)$ is sober.
Then the following are equivalent.
\begin{enumerate}[(i)]
\item
\label{item:app:frames:sub:tilde:sober}
$(P,\Open\restr P)$ is sober.

\item
\label{item:app:frames:sub:tilde:tilde}
For each $x \in X$, we have $x \in P$ if, and only if,
$\widetilde x \in \widetilde P$.
\end{enumerate}
\end{prop}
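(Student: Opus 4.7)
The plan is to chain together Lemma~\ref{lem:frames:sub:soberfilt}, the frame isomorphism $\radj\incl \colon \Open\restr P \to \widetilde P$ of Lemma~\ref{lem:frames:sub:quot}, and Lemma~\ref{lem:frames:sub:widefilt}, so that sobriety of $(P,\Open\restr P)$ becomes a condition on the $j$-fixpoint frame $\widetilde P$, and then on $\widetilde x$. Concretely, Lemma~\ref{lem:frames:sub:soberfilt} says that $(P,\Open\restr P)$ is sober iff for each $x \in X$, membership $x \in P$ is equivalent to the existence of a point $\Filtbis \in \pt(\Open\restr P)$ factoring $\Filt_x$ through $\ladj\incl$. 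I will show that, under the sobriety assumption on $(X,\Open)$, such a factorizing $\Filtbis$ exists iff $\widetilde\Filt_x \in \pt(\widetilde P)$; Lemma~\ref{lem:frames:sub:widefilt} then identifies this with $\widetilde x \in \widetilde P$, and plugging back yields the proposition.

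The core step is therefore the equivalence
\[
  \bigl(\exists \Filtbis \in \pt(\Open\restr P),~ \Filt_x = \Filtbis \comp \ladj\incl\bigr)
  \quad\iff\quad
  \widetilde\Filt_x \in \pt(\widetilde P) .
\]
For ($\Rightarrow$), given such $\Filtbis$, Lemma~\ref{lem:frames:sub:quot} lets me define the point $\widetilde\Filtbis \deq \Filtbis \comp (\radj\incl)^{-1} \in \pt(\widetilde P)$. Then for any $V \in \widetilde P$, since $V = \radj\incl(\ladj\incl(V))$, one computes $V \in \widetilde\Filtbis \iff \ladj\incl(V) \in \Filtbis \iff V \in \Filt_x \iff x \in V$, hence $\widetilde\Filtbis = \widetilde\Filt_x$, which is thus a point. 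For ($\Leftarrow$), given $\widetilde\Filt_x \in \pt(\widetilde P)$, set $\Filtbis \deq \widetilde\Filt_x \comp \radj\incl \in \pt(\Open\restr P)$; then for any $U \in \Open$, $U \in \Filtbis \comp \ladj\incl$ is equivalent to $x \in j(U)$, and I must check this is in turn equivalent to $x \in U$, which is the content of the next paragraph.

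The main obstacle is precisely this last equivalence, $x \in j(U) \iff x \in U$, valid only under the hypothesis $\widetilde\Filt_x \in \pt(\widetilde P)$. The direction $x \in U \imp x \in j(U)$ is automatic from $U \sle j(U)$. For the converse, I argue contrapositively: if $x \notin U$, then by Remark~\ref{rem:frames:sub:widetilde} one has $U \sle \widetilde x$, so $j(U) \sle j(\widetilde x)$ by monotonicity of the nucleus; but Lemma~\ref{lem:frames:sub:widefilt} tells us $\widetilde\Filt_x \in \pt(\widetilde P)$ is equivalent to $\widetilde x \in \widetilde P$, i.e.\ $j(\widetilde x) = \widetilde x$, hence $j(U) \sle \widetilde x$ and finally $x \notin j(U)$ by Remark~\ref{rem:frames:sub:widetilde} again.

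Combining the core equivalence above with Lemma~\ref{lem:frames:sub:widefilt} gives: a factorizing $\Filtbis$ exists iff $\widetilde x \in \widetilde P$. Substituting this into the characterization of sobriety provided by Lemma~\ref{lem:frames:sub:soberfilt} yields that $(P,\Open\restr P)$ is sober iff, for every $x \in X$, we have $x \in P \iff \widetilde x \in \widetilde P$, which is exactly condition (\ref{item:app:frames:sub:tilde:tilde}).
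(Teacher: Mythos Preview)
Your proof is correct and uses the same ingredients as the paper (Lemma~\ref{lem:frames:sub:soberfilt}, Lemma~\ref{lem:frames:sub:quot}, Lemma~\ref{lem:frames:sub:widefilt}, Remark~\ref{rem:frames:sub:widetilde}), organized around a single ``core equivalence'' rather than two separate implications. The only noteworthy difference is in the direction that, in the paper's layout, corresponds to (\ref{item:app:frames:sub:tilde:tilde})$\Rightarrow$(\ref{item:app:frames:sub:tilde:sober}): the paper argues by contradiction using the explicit description of $j$ from Remark~\ref{rem:frames:sub:nucleus} (picking $y \in j(\widetilde x) \setminus \widetilde x$ and chasing through $\ladj\incl$), whereas you transport the point $\Filtbis$ across the frame isomorphism $\radj\incl$ of Lemma~\ref{lem:frames:sub:quot} and identify the result with $\widetilde\Filt_x$ directly. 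Your route is slightly more streamlined since it avoids unpacking the nucleus, but both arguments are short and rely on the same underlying adjunction.
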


\begin{proof}
We prove each implication separately.
\begin{description}
\item[\(
\text{(\ref{item:app:frames:sub:tilde:tilde})}
\imp
\text{(\ref{item:app:frames:sub:tilde:sober})}
\).]

We show that $(P,\Open\restr P)$ is sober
whenever $\widetilde x \in \widetilde P$ implies $x \in P$
for all $x \in X$.
By Lemma~\ref{lem:app:frames:sub:base} and
Lemma~\ref{lem:app:frames:sub:soberfilt} (i.e.\ Lemma~\ref{lem:frames:sub:soberfilt}),
it is sufficient to show that given $x \in X$,
we have $x \in P$ whenever there is some $\Filtbis \in \pt(\Open\restr P)$
such that $\Filt_x = \Filtbis \comp \ladj\incl$.

Assume that for all $x \in X$,
we have $x \in P$ whenever $\widetilde x \in \widetilde P$.
Let $x \in X$ such that 
$\Filt_x = \Filtbis \comp \ladj\incl$
for some $\Filtbis \in \pt(\Open\restr P)$.
We have to show $x \in P$.
Assume toward a contradiction that $x \notin P$.
By our assumption,
this implies that $\widetilde x \notin \widetilde P$.
Hence there is some $y \in j(\widetilde x) \setminus \widetilde x$.
In turn, there is some $V \in \Open$ such that
$y \in V$ and $V \cap P = \widetilde x \cap P$.
Since $y \notin \widetilde x = X \setminus \down x$,
we have $y \leq_\Open x$, and since $y \in V$,
we get $x \in V$.
It follows that $V \in \Filt_x$,
so that $\ladj\incl(V) = (V \cap P) \in \Filtbis$.
But since $V \cap P = \widetilde x \cap P = \ladj\incl(\widetilde x)$,
we thus get $\widetilde x \in \Filt_x$,
a contradiction since $x \notin \widetilde x$.

\item[\(
\text{(\ref{item:app:frames:sub:tilde:sober})}
\imp
\text{(\ref{item:app:frames:sub:tilde:tilde})}
\).]

Assume that $(P,\Open\restr P)$ is sober,
and let $x \in X$ such that $\widetilde x \in \widetilde P$.
We have to show that $x \in P$.
We apply
Lemma~\ref{lem:app:frames:sub:soberfilt} (i.e.\ Lemma~\ref{lem:frames:sub:soberfilt}).
Actually, we are going to construct a point
$\widetilde\Filt \in \pt(\widetilde P)$
such that $\Filt_x = \widetilde\Filt \comp j$.
Since $j = \radj\incl \comp \ladj\incl$
(where $\radj\incl$ is the upper adjoint of $\ladj\incl$),
Lemma~\ref{lem:frames:sub:quot}
then gives the result.

Let $\widetilde\Filt = \widetilde\Filt_x$ be the set of all $U \in \widetilde P$
such that $x \in U$.
Since $\widetilde x \in \widetilde P$,
we have $\widetilde\Filt_x \in \pt(\widetilde P)$
by
Lemma~\ref{lem:app:frames:sub:widefilt} (i.e. Lemma~\ref{lem:frames:sub:widefilt}).
In order to obtain $x \in P$,
it thus remains to show that
\[
\begin{array}{l l l}
  \Filt_x
& =
& \widetilde\Filt_x \comp j
\end{array}
\]

\noindent
But given $U \in \Filt_x$,
we have $x \in U \sle j(U)$,
so that $j(U) \in \widetilde\Filt_x$.
Conversely, let $U \in \Open$
such that $U \notin \Filt_x$.
We have $x \notin U$
and thus $U \sle \widetilde x$.
But then $j(U) \sle j(\widetilde x) = \widetilde x$.
Hence $x \notin j(U)$ and $j(U) \notin \widetilde\Filt_x$.
\qedhere
\end{description}
\end{proof}

\begin{lemm}[Lemma~\ref{lem:frames:sub:notsober}]
\label{lem:app:frames:sub:notsober}
Assume that $(X,\Open)$ is sober
and let $x \in X$ such that for all $U \in \Open$
with $x \in U$,
we have $U \setminus \{x\} \notin \Open$.
Set $P \deq X \setminus \{x\}$.
Then $(P,\Open\restr P)$ is not sober.
\end{lemm}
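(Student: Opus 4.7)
The plan is to contradict condition~(\ref{item:frames:sub:tilde:tilde}) of Proposition~\ref{prop:frames:sub:tilde} by exhibiting an element outside $P$ whose complement-of-closure nonetheless lies in $\widetilde P$. Since $P = X \setminus \{x\}$, the only candidate witnessing the failure is $x$ itself, so the goal reduces to showing $\widetilde x \in \widetilde P$, i.e.\ that $j(\widetilde x) = \widetilde x$.

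First I would observe that $\widetilde x$ is actually open in $X$: the specialization preorder satisfies $\clos{\{x\}} = \down x$ in every space, so $\widetilde x = X \setminus \down x$ is open. Next, I would note that $\widetilde x \sle P$ (since $x \notin \widetilde x$), hence $\widetilde x \cap P = \widetilde x$. Applying Remark~\ref{rem:frames:sub:nucleus}, this gives
\[
  j(\widetilde x)
  ~=~
  \bigcup \left\{ V \in \Open \mid V \cap P = \widetilde x \right\}.
\]

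The core step is to show that every such $V$ is contained in $\widetilde x$, equivalently (by Lemma~\ref{lem:frames:sub:widetilde}) that $x \notin V$. Suppose for contradiction that some $V \in \Open$ with $V \cap P = \widetilde x$ satisfies $x \in V$. Since $V \cap P = V \setminus \{x\}$, this would mean $V \setminus \{x\} = \widetilde x$, which is open. But $V$ is an open neighbourhood of $x$, so the standing hypothesis on $x$ gives $V \setminus \{x\} \notin \Open$, a contradiction. Hence every $V$ in the displayed union lies in $\widetilde x$, so $j(\widetilde x) \sle \widetilde x$; the converse inclusion holds by extensivity of $j$. Therefore $j(\widetilde x) = \widetilde x$, i.e.\ $\widetilde x \in \widetilde P$, while $x \notin P$. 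By Proposition~\ref{prop:frames:sub:tilde}, the subspace $(P,\Open\restr P)$ fails to be sober.

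The proof is essentially direct once the setup is in place; the only subtlety to watch is that $\widetilde x$ must itself be open for the argument to go through, which is why I would record that observation at the start. The non-$T_D$ hypothesis is used precisely to rule out the single ``dangerous'' open $V = \widetilde x \cup \{x\}$ that could otherwise enlarge $j(\widetilde x)$.
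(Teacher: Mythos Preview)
Your argument is correct and shares its core with the paper's proof: both establish that $\widetilde x \in \widetilde P$ by showing that any open $V$ with $V \cap P = \widetilde x \cap P$ must omit $x$ (otherwise $V \setminus \{x\} = \widetilde x$ would be open, contradicting the hypothesis on $x$), whence $j(\widetilde x) \sle \widetilde x$.

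The packaging differs slightly. You invoke Proposition~\ref{prop:frames:sub:tilde} directly: since $x \notin P$ yet $\widetilde x \in \widetilde P$, condition~(\ref{item:frames:sub:tilde:tilde}) fails and the subspace is not sober. The paper instead goes back to Lemma~\ref{lem:frames:sub:soberfilt}, using $\widetilde x \in \widetilde P$ together with Lemma~\ref{lem:frames:sub:widefilt} to obtain $\widetilde\Filt_x \in \pt(\widetilde P)$, and then verifies $\Filt_x = \widetilde\Filt_x \comp j$ explicitly before concluding via Lemma~\ref{lem:frames:sub:quot}. Your route is shorter and more conceptual, since that last verification is precisely what is absorbed into the proof of Proposition~\ref{prop:frames:sub:tilde}. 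One small remark: the statement you cite as ``Lemma~\ref{lem:frames:sub:widetilde}'' is labeled Remark~\ref{rem:frames:sub:widetilde} in the paper, so adjust the reference.
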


\begin{proof}
Let $\incl \colon (P,\Open\restr P) \emb (X,\Open)$ be the subspace inclusion.
Since $x \notin P$, by
By Lemma~\ref{lem:app:frames:sub:base} and
Lemma~\ref{lem:app:frames:sub:soberfilt} (i.e.\ Lemma~\ref{lem:frames:sub:soberfilt}),
it is sufficient to show that
$\Filt_x = \Filtbis \comp \ladj\incl$
for some $\Filtbis \in \pt(\Open\restr P)$.
We appeal to Lemma~\ref{lem:frames:sub:quot},
and instead provide a $\widetilde\Filt \in \pt(\Open\restr P)$
such that $\Filt_x = \widetilde\Filt \comp j$
where $j \colon \Open \to \Open$ is the nucleus induced by $\incl$.
Let $\widetilde\Filt = \widetilde\Filt_x$ be the set of all $U \in \widetilde P$
such that $x \in U$.

Note that for each $U \in \Open$, we have
\[
\begin{array}{l l l}
  j(U)
& =
& \bigcup \left\{
  V \in \Open \mid
  V \setminus \{x\} = U \setminus \{x\}
  \right\}
\end{array}
\]

We first show that $\widetilde x \in \widetilde P$.
This amounts to showing that $j(\widetilde x) \sle \widetilde x$,
i.e.\ that $x \notin j(\widetilde x)$.
Assume toward a contradiction that $x \in j(\widetilde x)$.
Hence there is some $V \in \Open$ such that $x \in V$
and
\(
  V \setminus \{x\}
= (X \setminus \clos{\{x\}}) \setminus \{x\}
\).
But
\(
  (X \setminus \clos{\{x\}}) \setminus \{x\}
= X \setminus \clos{\{x\}}
\)
is open, while $V \setminus \{x\}$ is not,
a contradiction.

We thus get $\widetilde\Filt_x \in \pt(\widetilde P)$ by
Lemma~\ref{lem:app:frames:sub:widefilt} (i.e. Lemma~\ref{lem:frames:sub:widefilt}).
We are left with proving
\[
\begin{array}{l l l}
  \Filt_x
& =
& \widetilde\Filt_x \comp j
\end{array}
\]

\noindent
Given $U \in \Filt_x$,
we have $x \in U \sle j(U)$ and thus $j(U) \in \widetilde\Filt_x$.
Conversely, let $U \in \Open$ such that $x \in j(U)$.
Hence there is some $V \in \Open$ such that $x \in V$
and $V \setminus \{x\} = U \setminus \{x\}$.
Since $(V \setminus \{x\}) \notin \Open$,
we have $(U \setminus \{x\}) \notin \Open$.
Hence $x \in U$ and $U \in \Filt_x$.
\end{proof}

}
\opt{full,long}{
\section{Proofs of~\S\ref{sec:geom} (\nameref{sec:geom})}
\label{sec:app:geom}

\subsection{Proofs of~\S\ref{sec:geom:th} (\nameref{sec:geom:th})}
\label{sec:app:geom:th}

\begin{lemm}[Remark~\ref{rem:geom:conn}]
\label{lem:app:geom:conn}
\hfill
\begin{enumerate}[(1)]
\item
Given $(\varphi_i \mid i \in I)$ with
$\varphi_i = \bigvee \left\{ \gamma_{i,j} \mid j \in J_i \right\}$,
let
\(
  \bigvee_{i \in I} \varphi_i
  \deq
  \bigvee \left\{ \gamma_{i,j} \mid \text{$i \in I$ and $j \in J_i$}\right\}
\).
Then
\[
\begin{array}{l l l}
  \nu \models \bigvee_{i \in I} \varphi_i
& \text{iff}
& \text{there exists $i \in I$ such that $\nu \models \varphi_i$}
\end{array}
\]

\item
Given $\varphi = \bigvee_{i \in I} \gamma_i$
and $\psi = \bigvee_{j \in J} \gamma'_j$,
let
$\varphi \land \psi \deq \bigvee_{(i,j) \in I \times J} \gamma_i \land \gamma'_j$.
Then
\[
\begin{array}{l l l}
  \nu
  \models
  \varphi \land \psi
& \text{iff}
& \text{$\nu \models \varphi$ and $\nu \models \psi$}
\end{array}
\]
\end{enumerate}
\end{lemm}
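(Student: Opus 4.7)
My plan is to prove both parts by direct unfolding of the definitions of satisfaction given earlier for atomic propositions, $\true$, binary $\land$, and $\bigvee \SP$. There is no real obstacle here, as both clauses are essentially bookkeeping about re-indexing indexed unions of conjunctive formulae; the only care needed is to keep the nested quantifiers straight.

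For part~(1), I would simply compute: by definition of the $\bigvee$-satisfaction clause, $\nu \models \bigvee_{i \in I} \varphi_i = \bigvee \{\gamma_{i,j} \mid i \in I,\ j \in J_i\}$ holds iff there exists a pair $(i,j)$ with $i \in I$ and $j \in J_i$ such that $\nu \models \gamma_{i,j}$. This is the same as saying there exists $i \in I$ such that there exists $j \in J_i$ with $\nu \models \gamma_{i,j}$, i.e.\ (again by the $\bigvee$-clause) such that $\nu \models \varphi_i$. The only substitution happening is the re-association of an existential over a dependent sum $\{(i,j) \mid i \in I,\ j \in J_i\}$ as an iterated existential, which is immediate.

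For part~(2), I would proceed analogously, combining both the $\bigvee$- and the $\land$-clauses. Unfolding, $\nu \models \varphi \land \psi = \bigvee\{\gamma_i \land \gamma'_j \mid (i,j) \in I \times J\}$ holds iff there exists $(i,j) \in I \times J$ with $\nu \models \gamma_i \land \gamma'_j$, i.e.\ (by the $\land$-clause for conjunctive formulae) with $\nu \models \gamma_i$ and $\nu \models \gamma'_j$. Since the two conjuncts range over disjoint index sets, the existential over $I \times J$ factors as a conjunction of two independent existentials: there exists $i \in I$ with $\nu \models \gamma_i$, \emph{and} there exists $j \in J$ with $\nu \models \gamma'_j$. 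By the $\bigvee$-clause, this is precisely $\nu \models \varphi$ and $\nu \models \psi$, completing the proof.

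In both cases the argument is a one-line chain of "iff"s, so I would present each as a short equational display of satisfaction conditions rather than a structured multi-step argument.
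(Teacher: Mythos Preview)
Your proposal is correct and matches the paper's own proof essentially line for line: both parts are handled by unfolding the $\bigvee$- and $\land$-satisfaction clauses and chaining the resulting ``iff''s, with the only step being the re-association of the existential over the dependent pair (resp.\ the distribution of $\exists$ over the product $I \times J$). There is nothing to add.
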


\begin{proof}
\hfill
\begin{enumerate}[(1)]
\item
Since
\[
\begin{array}{l l l}
  \nu \models \bigvee_{i \in I} \varphi_i
& \text{iff}
& \text{there are $i \in I$ and $j \in J_i$ such that $\nu \models \gamma_{i,j}$}
\\

& \text{iff}
& \text{there is $i \in I$ such that $\nu \models \varphi_i$}
\end{array}
\]

\item
Since
\[
\begin{array}{l l l}
  \nu \models \varphi \land \psi
& \text{iff}
& \text{there is $(i,j) \in I \times J$ such that $\nu \models \gamma_i \land \gamma'_j$}
\\

& \text{iff}
& \text{there is $(i,j) \in I \times J$ such that
  $\nu \models \gamma_i$ and $\nu \models \gamma'_i$}
\\

& \text{iff}
& (\text{there is $i \in I$ such that $\nu \models \gamma_i$})
  ~\text{and}~
  (\text{there is $j \in J$ such that $\nu \models \gamma'_j$})
\\

& \text{iff}
& \text{$\nu \models \varphi$ and $\nu \models \psi$}
\end{array}
\]
\end{enumerate}
\end{proof}

\subsection{Proofs of~\S\ref{sec:geom:spaces} (\nameref{sec:geom:spaces})}
\label{sec:app:geom:spaces}

Let $X = (X,\leq_X)$ be an algebraic dcpo.
Recall the  geometric theory $\th T(X)$
over $\At = \Fin(X)$, namely
\[
\begin{array}{c}

  d \thesis d'
\quad\text{(if $d' \leq_X d$)}

\qquad\quad

  \thesis \bigvee \Fin(X)

\qquad\quad

  d \land d'
  \thesis
  \bigvee \left\{
  d'' \in \Fin(X) \mid \text{$d \leq_X d''$ and $d' \leq_X d''$}
  \right\}
\end{array}
\]

\noindent
where $d,d' \in \Fin(X)$.

Recall also that given $x \in X$,
$\nu(x) \colon \At \to \two$ is the characteristic function
of $\{d \in \Fin(X) \mid d \leq_X x\}$.

\begin{prop}[Proposition~\ref{prop:geom:algdcpo:scott}]
\label{prop:app:geom:algdcpo:scott}
Let $X$ be an algebraic dcpo.
The bijection $x \mapsto \nu(x)$
of Proposition~\ref{prop:geom:algdcpo} extends to an homeomorphism from
$X$ to $\Mod(\th T(X))$.
\end{prop}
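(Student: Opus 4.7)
The plan is to show that the bijection $\iota \colon X \to \Mod(\th T(X))$, $x \mapsto \nu(x)$, exchanges the Scott-open subsets of $X$ with the opens of $\Mod(\th T(X))$ (as defined through the image of $\gmod_{\Mod(\th T(X))}$). The main observation is that the atomic propositions of $\th T(X)$ are precisely the finite elements of $X$, so the two topologies are generated in a very compatible way.

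First I would compute the pullback of the most basic opens. For each atom $d \in \At = \Fin(X)$, by definition of $\nu(x)$ we have $\nu(x) \models d$ iff $d \leq_X x$, i.e.\ iff $x \in \up d$. Hence $\iota^{-1}(\gmod(d)) = \up d$, which is Scott-open by algebraicity of $X$. Using the formulas of Lemma/Remark~\ref{rem:geom:conn} together with the identities $\gmod(\varphi \land \psi) = \gmod(\varphi) \cap \gmod(\psi)$ and $\gmod(\bigvee_i \varphi_i) = \bigcup_i \gmod(\varphi_i)$ displayed in~\S\ref{sec:geom:spaces}, every set of the form $\gmod_{\Mod(\th T(X))}(\varphi)$ is a union of finite intersections of sets $\gmod(d)$ with $d \in \Fin(X)$. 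Pulling back along $\iota$, every open of $\Mod(\th T(X))$ corresponds to a union of finite intersections of sets $\up d$, which is Scott-open in $X$. This proves continuity of $\iota$.

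For the reverse direction I would use algebraicity of $X$ in its standard form: every Scott-open $U \sle X$ is upward-closed and, by Lemma~\ref{lem:geom:algdcpo} together with the characterization of the Scott topology on an algebraic dcpo (cf.\ Example~\ref{ex:topo:stream}), one has $U = \bigcup \{\up d \mid d \in U \cap \Fin(X)\}$. Let $\varphi_U \deq \bigvee \{d \mid d \in U \cap \Fin(X)\} \in \Geom(\At)$. Then $\nu(x) \models \varphi_U$ iff $d \leq_X x$ for some $d \in U \cap \Fin(X)$, iff $x \in U$. Hence $\iota(U) = \gmod_{\Mod(\th T(X))}(\varphi_U)$, an open of $\Mod(\th T(X))$, so $\iota$ is open.

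Combining the two directions with the bijection of Proposition~\ref{prop:geom:algdcpo} yields that $\iota$ is a homeomorphism. There is no genuine obstacle; the only mildly delicate point is noticing that one really does need algebraicity both to make $\up d$ Scott-open (so that $\iota$ is continuous) and, via the presentation $U = \bigcup \{\up d \mid d \in U \cap \Fin(X)\}$, to exhibit every Scott-open as the image of a geometric formula (so that $\iota$ is open).
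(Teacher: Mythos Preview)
Your argument is correct and follows essentially the same approach as the paper: both match the basic Scott-opens $\up d$ (for $d \in \Fin(X)$) with the basic opens $\gmod_{\Mod(\th T(X))}(d)$, and then extend under unions and finite intersections. The only cosmetic difference is that the paper routes the computation through the order-isomorphism $X \cong \Idl(\Fin(X))$ of Lemma~\ref{lem:geom:algdcpo}, whereas you work directly with the map $x \mapsto \nu(x)$; your presentation is arguably a bit more direct.
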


\begin{proof}
By~\cite[Proposition 1.21(1)]{ac98book},
the finite elements of $\Idl(\Fin(X))$ are the principal ideals
(i.e.\ those of the form $\down_{\Fin(X)} d = \{d' \in \Fin(X) \mid d' \leq_X d\}$).
Hence,
the order-isomorphism $X \cong \Idl(\Fin(X))$
of Lemma~\ref{lem:geom:algdcpo}
takes a basic Scott-open $\up d \in \Open(X)$
to the basic Scott-open
$\Up ( \down_{\Fin(X)} d ) \in \Open(\Idl(\Fin(X)))$,
where
\[
\begin{array}{l l l}
  \Up \left( \down_{\Fin(X)} d \right)
& =
& \left\{
  J \in \Idl(\Fin(X)) \mid \down_{\Fin(X)} d \sle J
  \right\}
\end{array}
\]

Now, writing $\nu \colon \At \to \two$ for the characteristic
function of $J \in \Idl(\Fin(X))$,
we have $\down_{\Fin(X)} d \sle J$ if, and only if,
$\nu \models d$.
In other words, 
under Lemma~\ref{lem:geom:algdcpo},
the basic opens of $X$ correspond exactly to the $\gmod_{\Mod(\th T(X))}(d)$, 
for $d$ an atomic proposition over $\At = \Fin(X)$.
This directly extends to Scott-opens $U \in \Open(X)$
on the one hand, and
opens $\gmod_{\Mod(\th T(X))}(\varphi) \in \Open(\Mod(\th T(X)))$
on the other.
\end{proof}

\subsubsection{Proof of Theorem~\ref{theo:geom:pt}}
We shall now prove Theorem~\ref{theo:geom:pt}.

\begin{theo}[Theorem~\ref{theo:geom:pt}]
\label{theo:app:geom:pt}
Let $\th{T}$ be a geometric theory over $\At$.
The function taking 
$\nu \in \Mod(\th T)$
to
$\{ \class\varphi_{\Mod(\th{T})} \mid \nu \models \varphi \}$
is an homeomorphism
from $\Mod(\th T)$ to $\pt(\Geom(\At)/\Mod(\th{T}))$.
\end{theo}

Fix a geometric theory $\th{T}$ over $\At$.
For notational simplicity, we let $M \deq \Mod(\th{T})$ and $L \deq \Geom(\At)/M$.
But beware that the proof of
Theorem~\ref{theo:app:geom:pt} (i.e.\ Theorem~\ref{theo:geom:pt})
relies on the theory $\th{T}$.

Define
\[
\begin{array}{l l r c l}
  f
& :
& (\At \to \two)
& \longto
& \Po(L)
\\

&
& \nu
& \longmapsto
& \left\{ \class\varphi_M \mid \nu \models \varphi \right\}
\end{array}
\]

\noindent
Note that $f$ is injective since $f(\nu) = f(\nu')$
implies that for all $p \in \At$ we have
$\nu \models p$ if, and only if, $\nu' \models p$.

\begin{lemm}
Given $\nu \colon \At \to \two$, we have
$\nu \in M$ if, and only if, $f(\nu) \in \pt(L)$.
\end{lemm}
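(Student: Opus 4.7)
The plan is to analyze both directions by relating membership in $f(\nu)$ to frame-morphism properties of its characteristic function $\chi\colon L \to \two$, exploiting the concrete description of meets and joins in $L$ from Lemma~\ref{lem:geom:spaces:frame}.

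For the forward direction, assume $\nu \in M$. The first step is to observe the clean reformulation $\class\varphi_M \in f(\nu)$ iff $\nu \models \varphi$: the nontrivial direction holds because a witness $\psi \in \class\varphi_M$ with $\nu \models \psi$ gives $\nu \in \gmod_M(\psi) = \gmod_M(\varphi)$ (using $\nu \in M$), so $\nu \models \varphi$. Once this is in hand, checking that $\chi$ is a frame morphism reduces to the standard semantic clauses: $\chi(\class\true_M) = 1$ since $\nu \models \true$, and the equalities $\class\varphi_M \wedge \class\psi_M = \class{\varphi \land \psi}_M$ and $\bigvee_i \class{\varphi_i}_M = \class{\bigvee_i \varphi_i}_M$ of Lemma~\ref{lem:geom:spaces:frame} translate the meet- and join-preservation conditions directly into the $\land$- and $\bigvee$-clauses of $\models$ recalled in Remark~\ref{rem:geom:conn}.

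For the converse, assume $f(\nu) \in \pt(L)$ and fix a sequent $(\psi \thesis \varphi)$ of $\th{T}$. I would argue that if $\nu \models \psi$ then $\nu \models \varphi$, so that $\nu$ satisfies every sequent of $\th{T}$ and thus lies in $M$. Since every $M$-model satisfies the sequent, $\gmod_M(\psi) \sle \gmod_M(\varphi)$, i.e.\ $\class\psi_M \leq_M \class\varphi_M$ in $L$; from $\nu \models \psi$, the witness $\psi$ gives $\class\psi_M \in f(\nu)$, and up-closure of the point $f(\nu)$ yields $\class\varphi_M \in f(\nu)$. The remaining step is then to convert $\class\varphi_M \in f(\nu)$ back into $\nu \models \varphi$, by induction on the shape of $\varphi$: the $\land$- and $\bigvee$-cases are handled via preservation of meets and complete primality of $\chi$, and the atomic base case must be pinned down by the defining formula of $f(\nu)$.

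The main obstacle is that atomic base case: one needs $\class a_M \in f(\nu)$ iff $\nu \models a$, which is not automatic from the definition, since $\class a_M \in f(\nu)$ only requires \emph{some} formula in $\class a_M$ to be satisfied by $\nu$, not $a$ itself. The frame-morphism structure of $\chi$ has to compel $\nu(a)$ to match $\chi(\class a_M)$; once this is secured, the inductive propagation through $\land$ and $\bigvee$ is routine, and the sequent argument above forces $\nu \in M$.
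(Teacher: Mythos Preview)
Your forward direction is correct and matches the paper. For the converse you correctly isolate the crux—deducing $\nu \models a$ from $\class a_M \in f(\nu)$ at atoms—but your hope that ``the frame-morphism structure of $\chi$ has to compel $\nu(a)$ to match $\chi(\class a_M)$'' is not justified, and in fact the converse of the lemma is \emph{false} as stated. Take $\At = \{p,q\}$ and $\th{T} = \{q \thesis p\}$; then $M$ consists of the three valuations satisfying $\nu(q)=1 \imp \nu(p)=1$, and $L$ is the four-element chain $\class\false_M < \class q_M < \class p_M < \class\true_M$. Let $\nu$ have $\nu(p)=0$, $\nu(q)=1$, so $\nu \notin M$. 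Since $\nu \models p \lor q$ and $\gmod_M(p \lor q) = \gmod_M(p)$, we get $\class p_M \in f(\nu)$; one checks $\class\false_M \notin f(\nu)$, so $f(\nu) = \{\class q_M,\class p_M,\class\true_M\}$, which is a point of the chain $L$. Thus $f(\nu) \in \pt(L)$ yet $\nu \notin M$. (The same example shows $f$ is not injective on all valuations: $f(\nu) = f(\nu')$ where $\nu'(p)=\nu'(q)=1$.)

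The paper's own proof has exactly the gap you flagged: it passes from $\class\psi_M \in f(\nu)$ to $\nu \models \psi$ without argument. The surrounding Theorem~\ref{theo:geom:pt} is nonetheless correct; what is really needed is only that $f$ restricted to $M$ is an injection into $\pt(L)$ (your forward argument gives this, since for $\nu \in M$ one does have $\class\varphi_M \in f(\nu)$ iff $\nu \models \varphi$), together with surjectivity onto $\pt(L)$. Surjectivity follows from the next lemma (Lemma~\ref{lem:app:geom:modpt}): for any $\Filt \in \pt(L)$ the valuation $\nu_\Filt$ satisfies $\class\varphi_M \in \Filt$ iff $\nu_\Filt \models \varphi$, whence each sequent of $\th{T}$ holds for $\nu_\Filt$ by upward-closure of $\Filt$, so $\nu_\Filt \in M$ and $f(\nu_\Filt) = \Filt$.
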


\begin{proof}
Assume first that $\nu \in M$. We show that $f(\nu) \in \pt(L)$.
First, if $\class\varphi_M \leq_M \class\psi_M$ and
$\class\varphi_M \in f(\nu)$,
then $\nu \models \varphi$.
Hence $\nu \models \psi$ since $\gmod_M(\varphi) \sle \gmod_M(\psi)$
and $\nu \in M$.
It follows that $\class\varphi_M \in f(\nu)$.
Second, assume $\class\varphi_M, \class\varphi_M \in f(\nu)$.
Then $\nu \models \varphi$ and $\nu \models \psi$.
It follows that $\nu \models \varphi \land \psi$,
and by Lemma~\ref{lem:geom:spaces:frame} we get
$\class\varphi_M \wedge \class\psi_M \in f(\nu)$.
Finally, assume $\bigvee_{i \in I} \class{\varphi_i}_M \in f(\nu)$.
By Lemma~\ref{lem:geom:spaces:frame} we get that
$\nu \models \bigvee_{i \in I} \varphi_i$.
Hence for some $i \in I$ we have $\nu \models \varphi_i$,
and thus $\class{\varphi_i}_M \in f(\nu)$.

Conversely, assume that $f(\nu) \in \pt(L)$. We show that $\nu \in M$.
Let $\varphi \thesis \psi$ be a sequent of $\th{T}$.
Since $\varphi \preceq_M \psi$, we have
$\class\psi_M \in f(\nu)$ whenever $\class\varphi_M \in f(\nu)$.
Hence, we have $\nu \models \psi$ whenever $\nu \models \varphi$.
\end{proof}

Given $\Filt \in \pt(L)$,
let $\nu_\Filt$ be the valuation which takes $p \in \At$ to $1$
iff $\class p_M \in \Filt$.

\begin{lemm}
\label{lem:app:geom:modpt}
For each $\varphi \in \Geom(\At)$, we have
$\class\varphi_M \in \Filt$ if, and only if,
$\nu_\Filt \models \varphi$.
\end{lemm}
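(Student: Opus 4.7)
The plan is to proceed by structural induction, exploiting the fact that $\Filt \in \pt(L) = \Frm\funct{L,\two}$ is a frame morphism: it preserves all joins and all finite meets, where the frame operations on $L = \Geom(\At)/M$ are the ones explicitly described in Lemma~\ref{lem:geom:spaces:frame}. Since a geometric formula has the form $\bigvee \SP$ with $\SP \sle \Conj(\At)$, I would first establish the statement for conjunctive formulae by induction on their syntax, and then derive the general case from the preservation of joins by $\Filt$.

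For conjunctive formulae $\gamma \in \Conj(\At)$, the three cases are $p \in \At$, $\true$, and $\gamma \land \gamma'$. The case $p$ is immediate from the very definition of $\nu_\Filt$. For $\true$, the formula $\class\true_M$ is the top element of $L$ by Lemma~\ref{lem:geom:spaces:frame}; since $\Filt$ preserves the empty meet, $\class\true_M \in \Filt$, while on the other hand $\nu_\Filt \models \true$ holds unconditionally. For $\gamma \land \gamma'$, Lemma~\ref{lem:geom:spaces:frame} gives $\class{\gamma \land \gamma'}_M = \class\gamma_M \wedge \class{\gamma'}_M$, and since $\Filt$ preserves binary meets, $\class{\gamma \land \gamma'}_M \in \Filt$ iff both $\class\gamma_M \in \Filt$ and $\class{\gamma'}_M \in \Filt$. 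By the induction hypothesis, this is equivalent to $\nu_\Filt \models \gamma$ and $\nu_\Filt \models \gamma'$, i.e.\ to $\nu_\Filt \models \gamma \land \gamma'$.

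For the general case of $\varphi = \bigvee \SP$ with $\SP \sle \Conj(\At)$, Lemma~\ref{lem:geom:spaces:frame} yields $\class{\bigvee \SP}_M = \bigvee\{\class\gamma_M \mid \gamma \in \SP\}$ in $L$. Since $\Filt$ preserves arbitrary joins, $\class\varphi_M \in \Filt$ iff $\class\gamma_M \in \Filt$ for some $\gamma \in \SP$. By the conjunctive case already handled, this holds iff $\nu_\Filt \models \gamma$ for some $\gamma \in \SP$, which is precisely the condition $\nu_\Filt \models \bigvee \SP$.

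I do not foresee a serious obstacle here: the argument is a standard ``points-as-frame-morphisms'' computation, and everything needed, namely the explicit description of finite meets and arbitrary joins in $\Geom(\At)/M$, is already provided by Lemma~\ref{lem:geom:spaces:frame}. The only mild subtlety is to remember that the equivalence classes are taken modulo $\sim_M$ rather than modulo syntactic equality, so one must cite Lemma~\ref{lem:geom:spaces:frame} rather than manipulate representatives; once this is done, the induction is mechanical.
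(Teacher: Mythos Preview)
Your proposal is correct and follows essentially the same approach as the paper: an induction on conjunctive formulae (cases $p$, $\true$, $\gamma\land\gamma'$) using Lemma~\ref{lem:geom:spaces:frame} and the fact that $\Filt$ preserves finite meets, followed by the general case $\bigvee\SP$ using that $\Filt$ preserves arbitrary joins. The only cosmetic difference is that the paper is slightly more explicit about distinguishing the conjunctive formula $\gamma$ from the geometric formula $\bigvee\{\gamma\}$ when applying Lemma~\ref{lem:geom:spaces:frame}.
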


\begin{proof}
We first show by induction on $\gamma \in \Conj(\At)$
that $\nu_{\Filt} \models \gamma$ exactly when
$\class{\bigvee\{\gamma\}}_M \in \Filt$
(thus making explicit that $\bigvee\{\gamma\}$ is
the conjunctive formula $\gamma \in \Conj(\At)$
seen as a geometric formula).
\begin{description}
\item[Case of $p \in \At$.]
Since by definition of $\nu_\Filt$, we have
$\nu_\Filt \models p$ if, and only if,
$\class{p}_M = \class{\bigvee\{p\}}_M \in \Filt$.

\item[Case of $\true$.]
On the one hand, we have $\nu_\Filt \models \true$.
On the other hand, we have $\class{\bigvee\{\true\}}_M \in \Filt$
by Lemma~\ref{lem:geom:spaces:frame}.

\item[Case of $\gamma_1 \land \gamma_2$.]
We have $\nu_\Filt \models \gamma_1 \land \gamma_2$ iff
$\nu_\Filt \models \gamma_1$
and
$\nu_\Filt \models \gamma_2$.
By induction hypothesis, for $i = 1,2$ we have
$\nu_\Filt \models \gamma_i$ iff
$\class{\bigvee\{\gamma_i\}}_M \in \Filt$.
On the other hand,
by Lemma~\ref{lem:geom:spaces:frame}
we have
\[
\begin{array}{l l l}
  \class{\bigvee\{\gamma\}}_M \land \class{\bigvee\{\gamma'\}}_M
& =
& \class{\bigvee\{\gamma \land \gamma' \}}_M
\end{array}
\]

\noindent
Hence $\class{\bigvee\{\gamma \land \gamma' \}}_M \in \Filt$
if, and only if,
$\class{\bigvee\{\gamma\}}_M, \class{\bigvee\{\gamma'\}}_M \in \Filt$.
It follows that $\nu_\Filt \models \gamma_1 \land \gamma_2$
iff $\class{\bigvee\{\gamma \land \gamma' \}}_M \in \Filt$.
\end{description}

We now consider the case of $\varphi = \bigvee_{i \in I} \gamma_i$.
Note that for \emph{every} $\nu$ of $\At$, we have
\[
\begin{array}{l l l}
  \nu \models \varphi
& \text{iff}
& \text{there is $i \in I$ such that $\nu \models \gamma_i$}
\\

& \text{iff}
& \text{there is $i \in I$ such that $\nu \models \bigvee\{\gamma_i\}$}
\\

& \text{iff}
& \nu \models \bigvee_{i \in I} \bigvee\{\gamma_i\}
\end{array}
\]

\noindent
where $\bigvee_{i \in I} \bigvee\{\gamma_i\}$ is the operation on
\emph{geometric} formulae of
Lemma~\ref{lem:app:geom:conn} (i.e.\ Remark~\ref{rem:geom:conn}).
Hence
by Lemma~\ref{lem:geom:spaces:frame} we have
\[
\begin{array}{l l l}
  \class\varphi_M
& =
& \bigvee_{i \in I} \class{\bigvee\{\gamma_i \}}_M
\end{array}
\]

\noindent
We finally conclude with the following, which relies
on the above inductive property on conjunctive formulae:
\[
\begin{array}{l l l}
  \nu_\Filt \models \varphi
& \text{iff}
& \text{there is $i \in I$ such that $\nu_\Filt \models \gamma_i$}
\\

& \text{iff}
& \text{there is $i \in I$ such that $\class{\bigvee\{\gamma_i\}}_M \in \Filt$}
\\

& \text{iff}
& \class{\varphi}_M \in \Filt
\end{array}
\]
\end{proof}

%
%
%
%
%
%
%

Hence, given $\Filt \in \pt(L)$ we have
\[
\begin{array}{l l l}
  f(\nu_\Filt)
& =
& \left\{
  \class\varphi_M \mid
  \nu_\Filt \models \varphi
  \right\}
\\

& =
& \left\{
  \class\varphi_M \mid
  \class\varphi_M \in \Filt
  \right\}
\\

& =
& \Filt
\end{array}
\]

\noindent
It follows that we have a bijection
\[
\begin{array}{l l r c l}
  f
& :
& M
& \longto
& \pt(L)
\\

&
& \nu
& \longmapsto
& \left\{ \class\varphi_M \mid \nu \models \varphi \right\}
\end{array}
\]

\noindent
We can now prove
Theorem~\ref{theo:app:geom:pt} (i.e.\ Theorem~\ref{theo:geom:pt}).

\begin{proof}[Proof Theorem~\ref{theo:app:geom:pt}]
It remains to prove that $f \colon M \to \pt(\Geom(\At)/M)$
is an homeomorphism.

Write $g$ for the frame isomorphism
$(\Geom(\At)/M,\leq_M) \to (\Open(M),\sle)$
induced by 
$\gmod_M \colon \Geom(\At) \to \Open(M)$.
Note that
\[
\begin{array}{*{5}{l}}
  g(\class\varphi_M)
& =
& \gmod_M(\varphi)
& =
& \left\{
  \nu \in M \mid \nu \models \varphi
  \right\}
\end{array}
\]

Recall from~\S\ref{sec:frames:frames}
the unit at $(M,\Open(M))$ of the adjunction $\Open \adj \pt$,
namely
\[
\begin{array}{l l r c l}
  \eta_M
& :
& M
& \longto
& \pt(\Open(M))
\\

&
& \nu
& \longmapsto
& \left\{ U \in \Open(M) \mid \nu \in U \right\}
\end{array}
\]

\noindent
Since $g$ is an isomorphism, we have
\[
\begin{array}{l l l}
  \eta_M(\nu)
& =
& \left\{ g(\class\varphi_M) \mid \nu \in g(\class\varphi_M) \right\}
\\

& =
& \left\{ g(\class\varphi_M) \mid \nu \models \varphi \right\}
\end{array}
\]

The underlying function of $\eta_M$
thus
factors as the composite $\pt(g) \comp f$,
where $\pt(g)$ stands for
the underlying bijection of the homeomorphism
$\pt(g) \colon \pt(\Open(M)) \to \pt(\Geom(\At)/M)$
(recall the contravariant action of $\pt \colon \Frm^\op \to \Top$).
It follows that the underlying function of $\eta_M$ is a bijection
as a composition of two bijections.
But by~\cite[\S II.1.6]{johnstone82book}, $\eta_M$ is then automatically
an homeomorphism.
It follows that $f = \pt(g)^{-1} \comp \eta_M$ is an homeomorphism.
\end{proof}

\subsubsection{Proof of Proposition~\ref{prop:geom:sub}}

We now turn to Proposition~\ref{prop:geom:sub}.

\begin{prop}[Proposition~\ref{prop:geom:sub}]
\label{prop:app:geom:sub}
Given geometric theories $\th{T}$ and $\th{U}$ on $\At$,
the space $\Mod_{\th{T}}(\th{U})$ is 
\emph{equal}
to the subspace induced by the inclusion $\Mod(\th{T} \cup \th{U}) \sle \Mod(\th{T})$.
\end{prop}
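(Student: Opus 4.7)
The plan is to unfold the two topologies on the common carrier set $\Mod(\th T \cup \th U)$ and observe that they consist of literally the same collection of subsets.

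Write $M \deq \Mod(\th T)$ and $N \deq \Mod(\th T \cup \th U)$, so that $N \sle M$ by definition. The space $\Mod_{\th T}(\th U)$ carries the topology $\Open(N)$ given in~\S\ref{sec:geom:spaces}, namely the image of
\[
  \gmod_N \colon \Geom(\At) \to \Po(N)
  ,\quad
  \varphi \mapsto \{\nu \in N \mid \nu \models \varphi\}.
\]
The subspace induced by the inclusion $N \sle M$ carries $\Open(M) \restr N = \{V \cap N \mid V \in \Open(M)\}$, where each $V \in \Open(M)$ is of the form $\gmod_M(\varphi) = \{\nu \in M \mid \nu \models \varphi\}$ for some $\varphi \in \Geom(\At)$.

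First I would take any $U \in \Open(N)$; by construction $U = \gmod_N(\varphi)$ for some geometric formula $\varphi$. Then I would compute
\[
  \gmod_N(\varphi) \;=\; \{\nu \in N \mid \nu \models \varphi\}
  \;=\; \{\nu \in M \mid \nu \models \varphi\} \cap N
  \;=\; \gmod_M(\varphi) \cap N,
\]
exhibiting $U$ as an element of $\Open(M) \restr N$. Conversely, given $W \in \Open(M) \restr N$, write $W = V \cap N$ with $V = \gmod_M(\varphi)$; the same chain of equalities read in reverse shows $W = \gmod_N(\varphi) \in \Open(N)$. This yields $\Open(N) = \Open(M) \restr N$ on the nose (not merely up to homeomorphism), which is what the statement asserts.

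There is essentially no obstacle here: the key point is that the satisfaction relation $\nu \models \varphi$ refers only to the valuation $\nu$ and the formula $\varphi$, so restricting the ambient set of valuations from $M$ to $N$ commutes with taking the ``formula-defined'' subset. I would just be careful to present the argument as an equality of topologies on the \emph{same} set $N$, so that Theorem~\ref{theo:geom:pt} and Corollary~\ref{cor:geom:pt} can immediately be transported to conclude, as used in the subsequent Example~\ref{ex:geom:sub}, that the subspace $(N, \Open(M) \restr N)$ is sober.
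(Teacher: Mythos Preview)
Your proof is correct and follows essentially the same route as the paper's: both arguments reduce to the single observation that $\gmod_N(\varphi) = \gmod_M(\varphi) \cap N$ for every geometric formula $\varphi$, which immediately gives $\Open(N) = \Open(M)\restr N$. The only difference is notational (the paper uses $P$ for your $N$ and writes out the chain of equalities at the level of the whole topology rather than element-by-element).
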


\begin{proof}
Write $\Open$ for the topology $\Open(\Mod(\th{T}))$
and let $P$ be the subset $\Mod(\th{T} \cup \th{U})$ of $\Mod(\th{T})$.
We just have to check that $\Open(\Mod_{\th{T}}({\th{U}}))$
is the subspace topology $\Open \restr P$.
We have
\[
\begin{array}{l l l}
  \Open\restr P
& =
& \left\{ V \cap P \mid V \in \Open \right\}
\\


& =
& \left\{
  \gmod_{\Mod(\th{T})}(\varphi) \cap P
  \mid \varphi \in \Geom(\At)
  \right\}
\\

& =
& \left\{
  \gmod_{\Mod(\th{T})}(\varphi) \cap \Mod(\th{T} \cup \th{U})
  \mid \varphi \in \Geom(\At)
  \right\}
\end{array}
\]

\noindent
On the other hand, for each $\varphi \in \Geom(\At)$,
\[
\begin{array}{l l l}
  \gmod_{\Mod(\th{T})}(\varphi) \cap \Mod(\th{T} \cup \th{U})
& =
& \left\{
  \nu \in \Mod(\th{T}) \mid \nu \models \varphi
  \right\}
  \cap \Mod(\th{T} \cup \th{U})
\\


& =
& \left\{
  \nu \in \Mod(\th{T} \cup \th{U}) \mid \nu \models \varphi
  \right\}
\\

& =
& \gmod_{\Mod(\th{T} \cup \th{U})}(\varphi)
\end{array}
\]

\noindent
It follows that
\[
\begin{array}{l l l}
  \Open\restr P
& =
& \left\{
  \gmod_{\Mod(\th{T})}(\varphi) \cap \Mod(\th{T} \cup \th{U})
  \mid \varphi \in \Geom(\At)
  \right\}
\\

& =
& \left\{
  \gmod_{\Mod(\th{T} \cup \th{U})}(\varphi)
  \mid \varphi \in \Geom(\At)
  \right\}
\\

& =
& \Open(\Mod_{\th{T}}(\th{U}))
\end{array}
\]
\end{proof}
\subsection{Proofs of~\S\ref{sec:geom:op} (\nameref{sec:geom:op})}
\label{sec:app:geom:op}

Let $(\th T_i \mid i \in I)$ be theories, all over $\At$,
with $\th T_i = \{ \psi_{i,j} \thesis \varphi_{i,j} \mid j \in J_i\}$.
\begin{enumerate}[(1)]
\item
If $I$ is finite, we let
\(
  \bigcurlyvee_{i \in I} \th T_i
  \deq
  \left\{
  \bigwedge_{i \in I} \psi_{i,f(i)}
  \thesis
  \bigvee_{i \in I} \varphi_{i,f(i)}
  \mid
  f \in \prod_{i \in I} J_i
  \right\}
\).

\item
If $I$ is infinite, and all $\th T_i$'s are antecedent-free,
we let
\(
  \bigcurlyvee_{i \in I} \th T_i
  \deq
  \left\{
  \thesis
  \bigvee_{i \in I} \varphi_{i,f(i)}
  \mid
  f \in \prod_{i \in I} J_i
  \right\}
\).
\end{enumerate}

\begin{prop}[Proposition~\ref{prop:geom:op}]
\label{prop:app:geom:op}
In both cases above, we have
(using the Axiom of Choice when $I$ is infinite)
\[
\begin{array}{l l l}
  \Mod\left(
  \bigcurlyvee_{i \in I} \th T_i
  \right)
& =
& \bigcup_{i \in I} \Mod(\th T_i)
\end{array}
\]
\end{prop}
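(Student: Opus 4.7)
The proof goes by proving both inclusions directly from the definitions of $\bigcurlyvee_{i \in I} \th T_i$ and of being a model, treating the two cases (finite $I$; or infinite $I$ with all $\th T_i$ antecedent-free) in a uniform way. I would introduce the notation once: write $\th T_i = \{\psi_{i,j} \thesis \varphi_{i,j} \mid j \in J_i\}$ (with each $\psi_{i,j} = \true$ in the infinite case) and, for each $f \in \prod_{i \in I} J_i$, write $S_f$ for the corresponding sequent in $\bigcurlyvee_{i \in I} \th T_i$.

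For the inclusion $\bigcup_{i \in I}\Mod(\th T_i) \sle \Mod(\bigcurlyvee_{i \in I} \th T_i)$, I would fix some $i_0 \in I$, assume $\nu \in \Mod(\th T_{i_0})$, and check that $\nu$ models every $S_f$. Given $f \in \prod_{i \in I} J_i$, suppose $\nu \models \bigwedge_{i} \psi_{i,f(i)}$; then in particular $\nu \models \psi_{i_0,f(i_0)}$, and since the sequent $\psi_{i_0,f(i_0)} \thesis \varphi_{i_0,f(i_0)}$ belongs to $\th T_{i_0}$, we get $\nu \models \varphi_{i_0,f(i_0)}$, hence $\nu \models \bigvee_{i} \varphi_{i,f(i)}$, as needed (using Lemma~\ref{lem:app:geom:conn}). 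In the antecedent-free case, the hypothesis on $\bigwedge$ is trivial.

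For the converse inclusion, I would argue contrapositively. Assume $\nu \notin \bigcup_{i \in I}\Mod(\th T_i)$, so that for each $i \in I$ there is some $j \in J_i$ with $\nu \models \psi_{i,j}$ and $\nu \not\models \varphi_{i,j}$. Using the Axiom of Choice on the family of non-empty sets $\{j \in J_i \mid \nu \models \psi_{i,j}~\text{and}~\nu \not\models \varphi_{i,j}\}$ (which in the finite case is just iterated binary choice, needing no AC), pick $f(i)$ in each. Then $\nu \models \bigwedge_{i} \psi_{i,f(i)}$ but $\nu \not\models \varphi_{i,f(i)}$ for every $i$, hence $\nu \not\models \bigvee_i \varphi_{i,f(i)}$ by Lemma~\ref{lem:app:geom:conn}. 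Thus $\nu$ fails the sequent $S_f$ and so does not belong to $\Mod(\bigcurlyvee_{i \in I} \th T_i)$.

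There is no real obstacle here; the only subtle point is the invocation of AC in the converse inclusion when $I$ is infinite, which is already flagged in the statement. Everything else is a routine unfolding of the definitions of satisfaction for conjunctions and arbitrary disjunctions of geometric formulae.
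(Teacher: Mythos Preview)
Your proof is correct and follows essentially the same approach as the paper: both inclusions are proved directly from the definitions, with the forward one using that a model of some $\th T_{i_0}$ satisfies every $S_f$ via the $i_0$-component, and the converse argued contrapositively by choosing a witnessing $f \in \prod_i J_i$ (finite choice when $I$ is finite, full AC otherwise). The only cosmetic difference is that the paper treats the finite and infinite cases in separate paragraphs while you handle them uniformly.
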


\begin{proof}
We discuss each case separately.
\begin{enumerate}[(1)]
\item
Assume that $I$ is a finite set.

We first show that $\bigcup_{i \in I}\Mod(\th T_i)$ is included in
$\Mod(\bigcurlyvee_{i \in I} \th T_i)$.
Let $\nu \in \Mod(\th T_k)$ for some $k \in I$.
Thus, given $f \in \prod_{i \in I} J_i$,
if $\nu \models \bigwedge_{i \in I} \psi_{i,f(i)}$,
then in particular $\nu \models \psi_{k,f(k)}$.
Hence $\nu \models \varphi_{k,f(k)}$ since $\nu \in \Mod(\th T_k)$.
It follows that $\nu \models \bigvee_{i \in I}\varphi_{i,f(i)}$.

We now show the converse inclusion.
Let $\nu$
such that $\nu \notin \bigcup_{i \in I}\Mod(\th T_i)$.
Hence, for all $i \in I$ we have $\nu \notin \Mod(\th T_i)$.
It follows that for all $i \in I$ there is some $j \in J_i$ such that
$\nu$ is not a model of the sequent
$\psi_{i,j} \thesis \varphi_{i,j}$.
Since $I$ is finite, this yields some $f \in \prod_{i \in I} J_i$
such that for all $i \in I$,
$\nu$ is not a model of the sequent
$\psi_{i,f(i)} \thesis \varphi_{i,f(i)}$
(see e.g.~\cite[\S 5]{jech06book}).
This implies that $\nu$ is not a model of the sequent
$\bigwedge_{i \in I} \psi_{i,f(i)} \thesis \bigvee_{i \in I}\varphi_{i,f(i)}$.
Hence 
$\nu \notin \Mod(\bigcurlyvee_{i \in I} \th T_i)$.

\item
The case when $I$ is infinite (and all $\th T_i$'s are antecedent-free)
is proven similarly, excepted that now, for the inclusion 
$\Mod(\bigcurlyvee_{i \in I} \th T_i) \sle \bigcup_{i \in I}\Mod(\th T_i)$
we use the full Axiom of Choice 
to obtain a suitable $f \in \prod_{i \in I}J_i$
(see e.g.~\cite[\S 5]{jech06book}).
\qedhere
\end{enumerate}
\end{proof}

\subsubsection{Proofs of~\S\ref{sec:geom:trans} (\nameref{sec:geom:trans})}
\label{sec:app:geom:trans}

\begin{lemm}[Lemma~\ref{lem:geom:trans:base}]
\label{lem:app:geom:trans:base}
Fix set $\Base$.
\begin{enumerate}[(1)]
\item
\label{item:app:geom:trans:base:next}
The map $\O\I\Next \colon \Po(\I{\Stream\Base}) \to \Po(\I{\Stream\Base})$
preserves all unions and all intersections.

\item
\label{item:app:geom:trans:base:fix}
Given $\LTL$ formulae $\Phi, \Psi$,
let $H_{\Phi,\Psi} \colon \Po(\I{\Stream\Base}) \to \Po(\I{\Stream\Base})$
take $\SP$ to $\I\Psi \cup (\I\Phi \cap \I\Next(\SP))$.
Then
\[
\begin{array}{l !{=} l !{\qquad\text{and}\qquad} l !{=} l}
  \I{\Phi \Ushort \Psi}
& \bigcup_{n \in \NN} H^n_{\Phi,\Psi}(\I\False)

& \I{\Phi \Wshort \Psi}
& \bigcap_{n \in \NN} H^n_{\Phi,\Psi}(\I\True)  
\end{array}
\]
\end{enumerate}
\end{lemm}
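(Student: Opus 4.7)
For part~(\ref{item:app:geom:trans:base:next}), I would observe that the restriction map
$r \colon \I{\Stream\Base} \to \I{\Stream\Base}$, $\stream \mapsto \stream\restr 1$, satisfies
$\I\Next(\SP) = r^{-1}(\SP)$ by the very definition of $\I\Next$.
Since inverse images along any set-theoretic function preserve arbitrary unions and arbitrary intersections (and complements), this gives the claim immediately.

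For part~(\ref{item:app:geom:trans:base:fix}), the plan is to invoke Kleene's fixpoint theorem and its order-dual. From Remark~\ref{rem:ltl:fix} (made precise in Lemma~\ref{lem:app:ltl:fix}), we already know that $\I{\Phi \Ushort \Psi} = \lfp(H_{\Phi,\Psi})$ and $\I{\Phi \Wshort \Psi} = \gfp(H_{\Phi,\Psi})$ as fixpoints on the complete lattice $(\Po(\I{\Stream\Base}), \sle)$. It therefore suffices to show that $H = H_{\Phi,\Psi}$ preserves sups of increasing $\omega$-chains and infs of decreasing $\omega$-chains, for then Kleene gives
\[
  \lfp(H) = \bigcup_{n \in \NN} H^n(\emptyset) = \bigcup_{n \in \NN} H^n(\I\False)
\]
and dually
\[
  \gfp(H) = \bigcap_{n \in \NN} H^n(\I{\Stream\Base}) = \bigcap_{n \in \NN} H^n(\I\True).
\]

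To verify the two continuity properties, I would use part~(\ref{item:app:geom:trans:base:next}) together with distributivity in the frame $\Po(\I{\Stream\Base})$. Given any non-empty family $(\SP_i)_{i \in I}$ (to be specialized to an $\omega$-chain or a decreasing $\omega$-chain),
\[
  H\left(\bigcup_{i} \SP_i\right)
  = \I\Psi \cup \left(\I\Phi \cap \bigcup_i \I\Next(\SP_i)\right)
  = \bigcup_i \left(\I\Psi \cup (\I\Phi \cap \I\Next(\SP_i))\right)
  = \bigcup_i H(\SP_i),
\]
and symmetrically for intersections (here non-emptiness is used so that $\I\Psi \cup \bigcap_i X_i = \bigcap_i (\I\Psi \cup X_i)$). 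The main delicate point is this order-dual step: while Scott-continuity of $H$ is routine, the fact that $\gfp(H)$ is already reached at stage $\omega$ relies crucially on $H$ preserving decreasing $\omega$-chain meets, which is not a formal consequence of monotonicity but does follow from part~(\ref{item:app:geom:trans:base:next}) and the distributive law above. With these two preservation properties in hand, and noting that $(H^n(\I\False))_n$ is increasing and $(H^n(\I\True))_n$ decreasing by monotonicity of $H$, the two displayed equalities follow from Kleene's theorem and its dual.
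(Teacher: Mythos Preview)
Your proof is correct and follows essentially the same approach as the paper's: both parts rely on the same ingredients (inverse-image preservation for part~(\ref{item:app:geom:trans:base:next}), then Kleene's theorem and its order-dual via the continuity of $H_{\Phi,\Psi}$ inherited from part~(\ref{item:app:geom:trans:base:next})). Your formulation of part~(\ref{item:app:geom:trans:base:next}) via $\I\Next = r^{-1}$ is a slightly cleaner packaging than the paper's direct elementwise check, and your care about non-emptiness when distributing $\I\Psi \cup (-)$ over intersections is a detail the paper glosses over, but the overall strategy is the same.
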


\begin{proof}
Recall that $\I\Next$ takes $\SP \in \Po(\I{\Stream\Base})$ to 
$\{ \stream \mid \stream \restr 1 \in \SP\}$.
\begin{enumerate}[(1)]
\item
Let $(\SP_i \mid i \in I)$ be a set of subsets of $\I{\Stream\Base}$.
Given $\stream \in \I{\Stream\Base}$, we have
\[
\begin{array}{l l l}
  \stream \in \I\Next\left(\bigcup_{i \in I} \SP_i \right)
& \text{iff}
& \stream \restr 1 \in \bigcup_{i \in I} \SP_i
\\

& \text{iff}
& \text{there is $i \in I$ such that $\stream \restr 1 \in \SP_i$}
\\

& \text{iff}
& \text{there is $i \in I$ such that $\stream \in \I\Next \SP_i$}
\\

& \text{iff}
& \stream \in \bigcup_{i \in I} \I\Next \SP_i
\end{array}
\]

\noindent
Preservation of intersections is similar.

\item
Fix $\LTL$ formulae $\Phi$ and $\Psi$.
We have seen in Remark~\ref{rem:ltl:fix} that
$\I{\Phi \Ushort \Psi}$ and $\I{\Phi \Wshort \Psi}$
are respectively the least and the greatest fixpoints of
$H_{\Phi,\Psi} \colon \Po(\I{\Stream\Base}) \to \Po(\I{\Stream\Base})$.

Now, it follows from item~(\ref{item:app:geom:trans:base:next})
that $H_{\Phi,\Psi}$ preserves all unions and all intersections.
In particular, $H_{\Phi,\Psi}$ is Scott-continuous
($(\Po(\I{\Stream\Base}),\sle)$ is a complete lattice and thus in
particular a cpo),
and~\cite[Theorem 8.15]{dp02book} gives
\[
\begin{array}{l l l}
  \I{\Phi \Ushort \Psi}
& =
& \bigcup_{n \in \NN} H^n_{\Phi,\Psi}(\I\False)
\end{array}
\]

\noindent
The case of $\I{\Phi \Wshort \Psi}$ is obtained dually
(since $H_{\Phi,\Psi}$ preserves all intersections, it is
a Scott-continuous endo-function on the cpo $(\I{\Stream\Base},\sge)$).
\qedhere
\end{enumerate}
\end{proof}

\begin{remark}
\label{rem:app:geom:trans:next}
Note that given $d \in \Fin(\I{\Stream\Base})$,
we have $(\bot \cdot d) \leq_{\I{\Stream\Base}} \stream$
if, and only if, $d \leq_{\I{\Stream\Base}} \stream\restr 1$.
Hence $\I\Next(\up d) = \up(\bot \cdot d)$.

Now, recall from Example~\ref{ex:topo:stream}
that Scott-opens $U \sle \I{\Stream\Base}$ are unions of sets
of the form $\up d$ for $d \in \Fin(\I{\Stream\Base})$.
Hence, it follows from
Lemma~\ref{lem:app:geom:trans:base}(\ref{item:app:geom:trans:base:next})
that $\I\Next(U)$ is Scott-open whenever so is $U$.

A second application of
Lemma~\ref{lem:app:geom:trans:base}(\ref{item:app:geom:trans:base:next})
implies that $\I\Next(\SP)$ is a countable intersection of Scott-opens
whenever so is $\SP$.
\end{remark}

\begin{lemm}[Remark~\ref{rem:geom:trans:class}]
\label{rem:app:geom:trans:class}
Fix a set $\Base$.
\hfill
\begin{enumerate}[(1)]
\item
\label{item:app:geom:trans:class:open}
If $\Phi_1 \in G$, then $\I{\Phi_1}$ is Scott-open in $\I{\Stream\Base}$.

\item
If $\Phi_2 \in G_\delta$, then $\I{\Phi_2}$ is a countable intersection
of Scott-opens (i.e.\ a $G_\delta$ subset of $\I{\Stream\Base}$).
\end{enumerate}
\end{lemm}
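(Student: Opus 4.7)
The plan is to prove each item by structural induction on the relevant formula, using the stability of Scott-opens (resp. of countable intersections of Scott-opens) under the basic operations involved, together with the fixpoint representations from Lemma~\ref{lem:app:geom:trans:base}(\ref{item:app:geom:trans:base:fix}).

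For item~(\ref{item:app:geom:trans:class:open}), I would proceed by induction on $\Phi_1 \in G$. The base cases are immediate: $\I\False = \emptyset$ and $\I\True = \I{\Stream\Base}$ are trivially Scott-open, and $\I{a} = \up(a \cdot \bot^\omega)$ is a basic Scott-open by Example~\ref{ex:topo:stream}. The inductive cases for $\lor$ and $\land$ follow because Scott-opens are closed under arbitrary unions and finite intersections, and the case of $\Next\Phi_1$ is handled by Remark~\ref{rem:app:geom:trans:next}. The only non-routine case is $\Phi_1 \Ushort \Psi_1$: by Lemma~\ref{lem:app:geom:trans:base}(\ref{item:app:geom:trans:base:fix}) we have $\I{\Phi_1 \Ushort \Psi_1} = \bigcup_{n \in \NN} H^n_{\Phi_1,\Psi_1}(\I\False)$, where $H_{\Phi_1,\Psi_1}(\SP) = \I{\Psi_1} \cup (\I{\Phi_1} \cap \I\Next(\SP))$. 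A secondary induction on $n$ then shows that each iterate $H^n_{\Phi_1,\Psi_1}(\I\False)$ is Scott-open (using the IH on $\Phi_1,\Psi_1$ and the above closure properties), and the result follows since Scott-opens are stable under arbitrary unions.

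For item~(2), I would do induction on $\Phi_2 \in G_\delta$, after first recording that the class of countable intersections of Scott-opens in $\I{\Stream\Base}$ is closed under: finite unions (via the distributive rewriting $\bigl(\bigcap_n U_n\bigr) \cup \bigl(\bigcap_m V_m\bigr) = \bigcap_{n,m}(U_n \cup V_m)$, where each $U_n \cup V_m$ is Scott-open), countable intersections (trivially), and $\I\Next$ (by Remark~\ref{rem:app:geom:trans:next} combined with Lemma~\ref{lem:app:geom:trans:base}(\ref{item:app:geom:trans:base:next})); and of course it contains every Scott-open. The base case $\Phi_2 = \Phi_1 \in G$ then follows from item~(\ref{item:app:geom:trans:class:open}), and the cases $\lor$, $\land$, $\Next$ follow from the closure properties just listed. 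For $\Phi_2 \Wshort \Psi_2$, I would appeal to the dual representation $\I{\Phi_2 \Wshort \Psi_2} = \bigcap_{n \in \NN} H^n_{\Phi_2,\Psi_2}(\I\True)$ from Lemma~\ref{lem:app:geom:trans:base}(\ref{item:app:geom:trans:base:fix}), show by a secondary induction on $n$ that each iterate $H^n_{\Phi_2,\Psi_2}(\I\True)$ is a countable intersection of Scott-opens, and conclude by countable-intersection closure.

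The main obstacle is bookkeeping in the $\Wshort$ case: one has to make sure that each iterate remains \emph{countably} representable as an intersection of opens, which is why it is essential to verify in advance that the $G_\delta$ class is closed under \emph{finite} unions via the distributive identity above—without this observation, iterating $H_{\Phi_2,\Psi_2}$ would not preserve the countable-intersection-of-opens form.
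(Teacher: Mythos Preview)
Your proposal is correct and follows essentially the same route as the paper: structural induction on the stratified grammar, using the fixpoint representations of $\Ushort$ and $\Wshort$ from Lemma~\ref{lem:app:geom:trans:base}(\ref{item:app:geom:trans:base:fix}) together with a secondary induction on the iterates, and the closure of Scott-opens (resp.\ countable intersections of Scott-opens) under the relevant operations. If anything, you are slightly more explicit than the paper in spelling out the distributive identity that guarantees finite unions of $G_\delta$ sets remain $G_\delta$.
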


\begin{proof}
We handle each case separately.
\begin{enumerate}[(1)]
\item
We reason by induction on $\Phi_1 \in G$.
In the case of the atomic formula $a \in \Base$, note that
$\I{a}$ is the basic open set
with $\up (a \cdot \bot^\omega)$.
For the propositional connectives, use the induction hypothesis
and the stability of open sets under (finite) unions and intersections.
The case of $\Next$ follows from Remark~\ref{rem:app:geom:trans:next}.

It remains to deal with $\Phi_1 \Ushort \Psi_1$.
Assume $\I{\Phi_1}$ and $\I{\Psi_1}$ Scott-open.
By
Lemma~\ref{lem:app:geom:trans:base}(\ref{item:app:geom:trans:base:fix}),
we have
\[
\begin{array}{l l l}
  \I{\Phi \Ushort \Psi}
& =
& \bigcup_{n \in \NN} H^n_{\Phi,\Psi}(\I\False)
\end{array}
\]

\noindent
where $H_{\Phi,\Psi} \colon \Po(\I{\Stream\Base}) \to \Po(\I{\Stream\Base})$
takes $\SP$ to $\I\Psi \cup (\I\Phi \cap \I\Next(\SP))$.
Note that $H_{\Phi,\Psi}(U)$ is Scott-open whenever so is $U$.
Since $\I\False = \emptyset$ is Scott-open,
it follows by induction on $n \in \NN$ that each
$H^n_{\Phi,\Psi}(\I\False)$ is Scott-open.
Hence $\I{\Phi \Ushort \Psi}$ is Scott-open.

\item
We reason by induction on $\Phi_2 \in G_\delta$.
The argument is similar to that of item~(\ref{item:app:geom:trans:class:open})
using that an open set is (trivially) an countable intersection
of opens, and that countable intersections of opens are stable under
finite unions and intersections.

In the case of $\Phi_2 \Wshort \Psi_2$, by 
Lemma~\ref{lem:app:geom:trans:base}(\ref{item:app:geom:trans:base:fix})
we have
\[
\begin{array}{l l l}
  \I{\Phi \Wshort \Psi}
& =
& \bigcap_{n \in \NN} H^n_{\Phi,\Psi}(\I\True)
\end{array}
\]

\noindent
Reasoning similarly as for item~(\ref{item:app:geom:trans:class:open}),
since $\I\True = \I{\Stream\Base}$ is open, we get
that $H^n_{\Phi,\Psi}(\I\True)$
is a countable intersection of opens for all $n \in \NN$.
But a countable intersection of countable intersections is a countable
intersection. Hence $\I{\Phi \Wshort \Psi}$ is a countable intersection of
opens.
\qedhere
\end{enumerate}
\end{proof}


\begin{lemm}[Remark~\ref{rem:geom:trans:equiv}]
\label{lem:app:geom:trans:equiv}
Recall that $\Phi \equiv \Psi$ means $\I\Phi = \I\Psi$.
Given $\LTL$ formulae $\Phi$ and $\Psi$, we have
\[
\begin{array}{l !{\equiv} l !{\qquad} l !{\equiv} l !{\qquad} l !{\equiv} l}
  \Next\False
& \False

& \Next(\Phi \lor \Psi)
& \Next\Phi \lor \Next\Psi

& \Next(\Phi \Ushort \Psi)
& (\Next\Phi) \Ushort (\Next\Psi)
\\

  \Next\True
& \True

& \Next(\Phi \land \Psi)
& \Next\Phi \land \Next \Psi

& \Next(\Phi \Wshort \Psi)
& (\Next\Phi) \Wshort (\Next\Psi)
\end{array}
\]
\end{lemm}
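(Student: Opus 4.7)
The plan is to handle the six equivalences uniformly, leveraging the semantic clause $\I{\Next\Phi} = \I\Next(\I\Phi)$ and the structural properties of $\I\Next$ established earlier.

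For the four ``propositional'' cases, everything follows from Lemma~\ref{lem:geom:trans:base}(\ref{item:geom:trans:base:next}), which says that $\I\Next \colon \Po(\I{\Stream\Base}) \to \Po(\I{\Stream\Base})$ preserves all unions and all intersections. Applying this to the empty union gives $\I\Next(\emptyset) = \emptyset$, hence $\I{\Next\False} = \I\False$; applying it to the empty intersection gives $\I\Next(\I{\Stream\Base}) = \I{\Stream\Base}$, hence $\I{\Next\True} = \I\True$. The distribution of $\I\Next$ over binary union and intersection gives immediately $\Next(\Phi \lor \Psi) \equiv \Next\Phi \lor \Next\Psi$ and $\Next(\Phi \land \Psi) \equiv \Next\Phi \land \Next\Psi$.

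For the two modality cases, I would exploit the fixpoint representation of Lemma~\ref{lem:geom:trans:base}(\ref{item:geom:trans:base:fix}). The key observation is an intertwining identity: since $\I\Next$ preserves unions and binary intersections,
\[
  \I\Next\bigl( H_{\Phi,\Psi}(\SP) \bigr)
  ~=~
  \I\Next\bigl(\I\Psi \cup (\I\Phi \cap \I\Next(\SP))\bigr)
  ~=~
  \I{\Next\Psi} \cup \bigl(\I{\Next\Phi} \cap \I\Next(\I\Next(\SP))\bigr)
  ~=~
  H_{\Next\Phi,\Next\Psi}\bigl(\I\Next(\SP)\bigr).
\]
A straightforward induction on $n$ then upgrades this to $\I\Next \comp H^n_{\Phi,\Psi} = H^n_{\Next\Phi,\Next\Psi} \comp \I\Next$ for every $n \in \NN$. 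Evaluating at $\SP = \I\False$ and using $\I\Next(\emptyset) = \emptyset$ gives $\I\Next(H^n_{\Phi,\Psi}(\I\False)) = H^n_{\Next\Phi,\Next\Psi}(\I\False)$; taking unions and invoking preservation of unions by $\I\Next$ one more time yields
\[
  \I{\Next(\Phi \Ushort \Psi)}
  ~=~
  \I\Next\Bigl( \bigcup_{n \in \NN} H^n_{\Phi,\Psi}(\I\False) \Bigr)
  ~=~
  \bigcup_{n \in \NN} H^n_{\Next\Phi,\Next\Psi}(\I\False)
  ~=~
  \I{(\Next\Phi) \Ushort (\Next\Psi)}.
\]
The case of $\Wshort$ is dual: one uses $\I\Next(\I{\Stream\Base}) = \I{\Stream\Base}$ and preservation of intersections to transport the identity through $\bigcap_n H^n_{\Phi,\Psi}(\I\True)$.

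No step is genuinely hard; the only mild obstacle is keeping track of the nested $\I\Next$'s in the intertwining identity and being explicit that we only need preservation of \emph{binary} meets (which is part of Lemma~\ref{lem:geom:trans:base}(\ref{item:geom:trans:base:next})). An equally valid alternative, which I would mention for cross-checking, is a direct semantic unfolding via the reindexing $(\stream \restr 1) \restr k = \stream \restr (k+1)$: this immediately translates the quantifier $\exists i$ (resp.\ $\forall i$) in the definition of $\Ushort$ (resp.\ $\Wshort$) applied to $\stream \restr 1$ into the corresponding quantifier in the definition of $(\Next\Phi) \Ushort (\Next\Psi)$ (resp.\ $(\Next\Phi) \Wshort (\Next\Psi)$) applied to $\stream$. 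The fixpoint route, however, fits better with the rest of the development in~\S\ref{sec:geom:trans}.
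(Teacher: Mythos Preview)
Your proposal is correct and follows essentially the same argument as the paper: the four propositional cases are handled via Lemma~\ref{lem:geom:trans:base}(\ref{item:geom:trans:base:next}), and the $\Ushort$/$\Wshort$ cases via the intertwining identity $\I\Next \comp H_{\Phi,\Psi} = H_{\Next\Phi,\Next\Psi} \comp \I\Next$, its inductive extension to iterates, and Lemma~\ref{lem:geom:trans:base}(\ref{item:geom:trans:base:fix}).
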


\begin{proof}
First, 
Lemma~\ref{lem:app:geom:trans:base}(\ref{item:app:geom:trans:base:next})
directly yields the laws
\[
\begin{array}{l !{\equiv} l !{\qquad} l !{\equiv} l}
  \Next\False
& \False

& \Next(\Phi \lor \Psi)
& \Next\Phi \lor \Next\Psi
\\

  \Next\True
& \True

& \Next(\Phi \land \Psi)
& \Next\Phi \land \Next \Psi
\end{array}
\]

We discuss
\[
\begin{array}{l !{\equiv} l !{\qquad\text{and}\qquad} l !{\equiv} l}
  \Next(\Phi \Ushort \Psi)
& (\Next\Phi) \Ushort (\Next\Psi)

& \Next(\Phi \Wshort \Psi)
& (\Next\Phi) \Wshort (\Next\Psi)
\end{array}
\]

\noindent
It follows from
Lemma~\ref{lem:app:geom:trans:base}(\ref{item:app:geom:trans:base:next})
that $\I\Next H_{\Phi,\Psi}(\SP) = H_{\Next\Phi,\Next\Psi}(\I\Next\SP)$.
Then, by induction on $n \in \NN$ we obtain
\[
\begin{array}{*{5}{l}}
  \I\Next H^n_{\Phi,\Psi}(\I\False)
& =
& H^n_{\Next\Phi,\Next\Psi}(\I{\Next\False})
& =
& H^n_{\Next\Phi,\Next\Psi}(\I\False)
\\

  \I\Next H^n_{\Phi,\Psi}(\I\True)
& =
& H^n_{\Next\Phi,\Next\Psi}(\I{\Next\True})
& =
& H^n_{\Next\Phi,\Next\Psi}(\I\True)
\end{array}
\]

\noindent
Hence, using
Lemma~\ref{lem:app:geom:trans:base}(\ref{item:app:geom:trans:base:next})
and
Lemma~\ref{lem:app:geom:trans:base}(\ref{item:app:geom:trans:base:fix}),
we conclude that
\[
\begin{array}{*{7}{l}}
  \I{\Next(\Phi \Ushort \Psi)}
& =
& \I\Next \bigcup_{n \in \NN} H^n_{\Phi,\Psi}(\I\False)
& =
& \bigcup_{n \in \NN} H^n_{\Next\Phi,\Next\Psi}(\I\False)
& =
& \I{(\Next\Phi) \Ushort (\Next\Psi)}
\\

  \I{\Next(\Phi \Wshort \Psi)}
& =
& \I\Next \bigcap_{n \in \NN} H^n_{\Phi,\Psi}(\I\True)
& =
& \bigcap_{n \in \NN} H^n_{\Next\Phi,\Next\Psi}(\I\True)
& =
& \I{(\Next\Phi) \Wshort (\Next\Psi)}
\end{array}
\]
\end{proof}

}
\opt{full}{
\section{Proofs of~\S\ref{sec:free} (\nameref{sec:free})}
\label{sec:app:free}

\begin{lemm}[Lemma~\ref{lem:free:frame}]
\label{lem:app:free:frame}
$(\Geom(\At)/\th{T}, \leq_{\th{T}})$ is a frame.
\end{lemm}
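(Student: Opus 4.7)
The plan is to verify the three defining features of a frame for $(\Geom(\At)/\th{T}, \leq_{\th{T}})$: that it is a poset (immediate from the definition as the quotient of a preorder by its symmetric part), that it has all suprema (hence all infima by \cite[Theorem 2.31]{dp02book}), and that it satisfies the frame distributive law. Finite infima will be handled explicitly, both as a sanity check and because this gives the concrete formula for meets that is used elsewhere in the paper. Before anything, one must verify that the formal operations $\bigvee$ and $\land$ on $\Geom(\At)$ from Remark~\ref{rem:geom:conn} descend to well-defined operations on the quotient, i.e.\ are monotone with respect to $\thesis_{\th{T}}$; this is a direct application of the introduction/elimination rules for $\bigvee$ and $\land$ together with \rn{Cut}.

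For suprema, given a family $(\varphi_i)_{i \in I}$ of geometric formulae, I will show that $\class{\bigvee_{i \in I} \varphi_i}_{\th{T}}$ is the least upper bound in $\Geom(\At)/\th{T}$ of $\{\class{\varphi_i}_{\th{T}} \mid i \in I\}$. The upper-bound property is precisely rule $\bigvee$-R; minimality is precisely rule $\bigvee$-L. In particular, $\class{\false}_{\th{T}} = \class{\bigvee \emptyset}_{\th{T}}$ is the bottom element. For finite meets, rule $\true$-R gives that $\class{\true}_{\th{T}}$ is the top element, and rules $\land$-L$_1$, $\land$-L$_2$, $\land$-R together show that $\class{\varphi \land \psi}_{\th{T}}$ is the meet of $\class{\varphi}_{\th{T}}$ and $\class{\psi}_{\th{T}}$.

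For the frame distributive law, I will exploit the fact that, per Remark~\ref{rem:geom:conn}, the formal conjunction of $\varphi = \bigvee_{i \in I} \gamma_i$ with $\bigvee_{j \in J} \psi_j$ (where each $\psi_j = \bigvee_{k \in K_j} \gamma'_{j,k}$) is \emph{defined} as the geometric formula $\bigvee \{ \gamma_i \land \gamma'_{j,k} \mid i \in I,\ j \in J,\ k \in K_j\}$, which is the very same formula as $\bigvee_{j \in J}(\varphi \land \psi_j)$ (up to reindexing the disjuncts, which does not change the formula). Hence the distributive law holds as a literal equality of geometric formulae, and therefore holds in the quotient $a fortiori$ via rule \rn{Ax}. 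This is the sense in which frame distributivity is, as remarked in the paper, ``hardwired'' into the definitions of Remark~\ref{rem:geom:conn}.

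The main potential obstacle is nothing mathematical, just careful bookkeeping: several times I will need to compare a geometric formula obtained by nesting of $\bigvee$'s with the corresponding flattened form, and check that these are either literally equal (reindexing only) or mutually $\thesis_{\th{T}}$-derivable using just $\bigvee$-L, $\bigvee$-R and \rn{Cut}. Once this bookkeeping is made explicit for suprema of suprema, the remainder of the proof reduces to the single-rule verifications described above.
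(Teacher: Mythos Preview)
Your proposal is correct and follows essentially the same approach as the paper: both verify sups and finite infs directly from the rules $\bigvee$-R/$\bigvee$-L, $\land$-L$_1$/$\land$-L$_2$/$\land$-R, $\true$-R, and both obtain frame distributivity as a literal equality of geometric formulae by unfolding the definitions of Remark~\ref{rem:geom:conn}. Your explicit check that the operations descend to the quotient is a welcome bit of hygiene that the paper leaves implicit (it is automatic once one knows the resulting class is the sup/inf, since these are unique).
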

\begin{proof}
The argument is mostly a direct inspection of Figure~\ref{fig:free}.

First, we have that $\class{\bigvee_i \varphi_i}_{\th{T}}$ is an upper bound of
$(\class{\varphi_i}_{\th{T}})_i$
since
$\varphi_i \thesis \bigvee_i \varphi_i$ for all $i$
(rule $\rn{$\bigvee$-R}$).
It is a least upper bound since $\bigvee_i \varphi_i \thesis \psi$
whenever $\varphi_i \thesis \psi$ for all $i$
(rule $\rn{$\bigvee$-L}$).

Similarly, $\class{\varphi \land \psi}_{\th{T}}$
is a lower bound of $\class\varphi_{\th{T}}$ and $\class\psi_{\th{T}}$
since $\varphi \land \psi \thesis \varphi$
and $\varphi \land \psi \thesis \psi$
(rules $\rn{$\land$-L$_1$}$ and $\rn{$\land$-L$_2$}$).
It is a greatest lower bound since $\theta \thesis \varphi \land \psi$
whenever $\theta \thesis \varphi$ and $\theta \thesis \psi$
(rule $\rn{$\land$-R}$).

Moreover, the rule $\rn{$\true$-R}$ yields
that $\class\varphi_{\th{T}} \leq_{\th{T}} \class\true_{\th{T}}$
for all $\varphi$.

Hence,
$(\Geom(\At)/\th{T}, \leq_{\th{T}})$ is a complete lattice
whose sups and binary infs are respectively given by
\[
\begin{array}{r c l !{=} l}
  (\class{\varphi_i}_{\th{T}})_i
& \longmapsto
& \bigvee_i \class{\varphi_i}_{\th{T}}
& \class{\bigvee_i \varphi_i}_{\th{T}}
\\

  \class\varphi_{\th{T}}
  ,~
  \class\psi_{\th{T}}
& \longmapsto
& \class\varphi_{\th{T}} \land \class\psi_{\th{T}}
& \class{\varphi \land \psi}_{\th{T}}
\end{array}
\]

%

It remains to prove frame distributivity,
namely
\[
\begin{array}{l l l}
  \bigvee_{i \in I} \left(
  \class\psi_{\th{T}} \land \class{\varphi_i}_{\th{T}}
  \right)
& =
& \class\psi_{\th{T}} \land \bigvee_{i \in I}\class{\varphi_i}_{\th{T}}
\end{array}
\]

\noindent
We reason on the syntax of geometric formulae.
Assume
$\psi = \bigvee \{\gamma'_k \mid k \in K \}$
and
$\varphi_i = \bigvee \{ \gamma_{i,j} \mid j \in J_i \}$
for each $i \in I$.
Then, unfolding the notations of
Remark~\ref{rem:geom:conn} (i.e.\ Lemma~\ref{lem:app:geom:conn}),
we have
\[
\begin{array}{r c l}
  \psi \land \bigvee_{i \in I} \varphi_i
& =
& \left( \bigvee \left\{ \gamma'_k \mid k \in K \right\} \right)
  \land
  \left(
  \bigvee \left\{
  \gamma_{i,j} \mid \text{$i \in I$ and $j \in J_i$}
  \right\}
  \right)
\\

& =
& \bigvee \left\{
  \gamma'_k \land \gamma_{i,j}
  \mid
  \text{$k \in K$, $i \in I$ and $j \in J_i$}
  \right\}
\end{array}
\]

\noindent
On the other hand
\[
\begin{array}{r c l}
  \psi \land \varphi_i
& =
& \bigvee \left\{
  \gamma'_k \land \gamma_{i,j}
  \mid
  \text{$k \in K$ and $j \in J_i$}
  \right\}
\\

  \bigvee_{i \in I} \left( \psi \land \varphi_i \right)
& =
& \bigvee \left\{
  \gamma'_k \land \gamma_{i,j}
  \mid
  \text{$i \in I$, $k \in K$ and $j \in J_i$}
  \right\}
\end{array}
\]

\noindent
It follows that
\[
\begin{array}{l l l}
  \bigvee_{i \in I} \left(
  \psi \land \varphi_i
  \right)
& =
& \psi \land \bigvee_{i \in I}\varphi_i
\end{array}
\]

\noindent
and we are done.
\end{proof}

For some proofs in this~\S\ref{sec:app:free},
we make explicit that 
$\bigvee\{\gamma\}$ is the conjunctive formula $\gamma$
seen as a conjunctive formula.
The following observation will be useful several times.

\begin{remark}
\label{rem:app:free:geom}
Given a theory $\th{T}$ and
given $\varphi = \bigvee_{i \in I} \gamma_i$,
we have
\[
\begin{array}{l l l}
  \class{\varphi}_{\th{T}}
& =
& \bigvee_{i \in I} \class{\bigvee\{\gamma_i\}}_{\th{T}}
\end{array}
\]
\end{remark}

\begin{proof}
Note that given $\SP \sle \Conj(\At)$,
making explicit the $\bigvee\{\gamma\}$'s
in the rules $\rn{$\bigvee$-L}$ and $\rn{$\bigvee$-R}$
leads to the following instances
\[
\begin{array}{c}

\dfrac{\text{for all $\gamma \in \SP$, $\bigvee\{\gamma\} \thesis_{\th T} \psi$}}
  {\bigvee \SP \thesis_{\th T} \psi}

\qquad\qquad

\dfrac{}
  {\bigvee\{\gamma\} \thesis_{\th T} \bigvee \SP}
~\text{(if $\gamma \in \SP$)}

\end{array}
\]

\noindent
Hence by
Lemma~\ref{lem:app:free:frame} (i.e.\ Lemma~\ref{lem:free:frame})
we have
\[
\begin{array}{l l l}
  \class{\varphi}_{\th{T}}
& =
& \bigvee_{i \in I} \class{\bigvee\{\gamma_i\}}_{\th{T}}
\end{array}
\]
\end{proof}

The following simple property is mentioned in the text of~\S\ref{sec:free}.

\begin{lemm}
\label{lem:app:free:sound}
Let $\th{T}$ be a geometric theory over $\At$ and let $M \deq \Mod(\th{T})$.
Then for all $\varphi,\psi \in \Geom(\At)$, we have
\[
\begin{array}{l l l}
  \psi \thesis_{\th{T}} \varphi
& \longimp
& \psi \preceq_M \varphi
\end{array}
\]
\end{lemm}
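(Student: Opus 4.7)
The plan is a routine induction on the derivation of $\psi \thesis_{\th{T}} \varphi$. Fix a model $\nu \in M$ and, for each rule in Figure~\ref{fig:free}, show that if the premises are sound then so is the conclusion. The core semantic facts needed are: (i)~$\nu \models \psi'$ implies $\nu \models \varphi'$ for every sequent $(\psi' \thesis \varphi')$ in $\th{T}$ (by definition of $M$); (ii)~the satisfaction clauses for the primitive geometric connectives $\bigvee \SP$ and the primitive conjunctive connectives $\true$, $\gamma \land \gamma'$; and (iii)~the derived satisfaction properties for the compound operations of Remark~\ref{rem:geom:conn}.

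I would treat the rules as follows. The rule $\rn{Th}$ is immediate from (i); $\rn{Ax}$ is trivial; $\rn{Cut}$ follows by transitivity of implication. For $\rn{$\land$-L$_1$}$ and $\rn{$\land$-L$_2$}$, writing $\varphi = \bigvee_{i \in I} \gamma_i$ and $\psi = \bigvee_{j \in J} \gamma'_j$ so that $\varphi \land \psi = \bigvee_{(i,j)} \gamma_i \land \gamma'_j$, observe that $\nu \models \varphi \land \psi$ gives some $(i,j)$ with $\nu \models \gamma_i \land \gamma'_j$, hence $\nu \models \gamma_i$ (resp.\ $\gamma'_j$) and thus $\nu \models \varphi$ (resp.\ $\psi$); this is exactly the derived conjunction property of Remark~\ref{rem:geom:conn}. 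For $\rn{$\land$-R}$, suppose $\theta \thesis_{\th T} \varphi$ and $\theta \thesis_{\th T} \psi$ are sound; if $\nu \models \theta$, then by induction $\nu \models \varphi$ and $\nu \models \psi$, which again by Remark~\ref{rem:geom:conn} gives $\nu \models \varphi \land \psi$. The rule $\rn{$\true$-R}$ is immediate since $\nu \models \true$ holds for every $\nu$. For $\rn{$\bigvee$-R}$, if $\nu \models \varphi_{i_0}$ for the distinguished index $i_0$, then by the derived disjunction of Remark~\ref{rem:geom:conn} we have $\nu \models \bigvee_{i \in I} \varphi_i$. For $\rn{$\bigvee$-L}$, if $\nu \models \bigvee_{i \in I} \varphi_i$, then $\nu \models \varphi_i$ for some $i$, and soundness of the premise at that $i$ yields $\nu \models \psi$.

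None of these cases presents a real obstacle; the only minor subtlety is that the syntax uses primitive infinitary disjunctions $\bigvee \SP$ (with $\SP \sle \Conj(\At)$) while the rules $\rn{$\land$-R}$, $\rn{$\bigvee$-L}$, $\rn{$\bigvee$-R}$ are stated with the \emph{derived} connectives of Remark~\ref{rem:geom:conn}. But this is precisely what Remark~\ref{rem:geom:conn} (restated as Lemma~\ref{lem:app:geom:conn}) handles, so the semantic clauses still match the rule shapes, and the induction goes through cleanly.
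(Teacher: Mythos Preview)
Your proposal is correct and follows essentially the same approach as the paper: a routine induction on derivations, using Remark~\ref{rem:geom:conn} (Lemma~\ref{lem:app:geom:conn}) for the logical rules and the definition of $\Mod(\th{T})$ for $\rn{Th}$. The paper's own proof is just a terse two-sentence sketch of exactly this argument, so you have simply spelled out the case analysis that the paper leaves implicit.
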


\begin{proof}
The proof is a simple induction on $\psi \thesis_{\th{T}} \varphi$,
using Remark~\ref{rem:geom:conn} (i.e.\ Lemma~\ref{lem:app:geom:conn})
for the logical rules.
The case of the rule $\rn{Th}$
follows from the fact that
$(\psi \thesis \varphi) \in \th{T}$ implies
$\psi \preceq_M \varphi$.
The cases of $\rn{Ax}$ and $\rn{Cut}$
follow from the fact that
both $\thesis_{\th{T}}$ and $\preceq_M$ are preorders.
\end{proof}

We now turn Lemma~\ref{lem:free:filtmod}.
Let $\th{T}$ be a theory over $\At$.
Let $\Filt$ be a point
of $\Geom(\At)/\th{T}$
such that $\class\psi_{\th{T}} \in \Filt$ and $\class\varphi_{\th{T}} \notin \Filt$.
Let $\nu \colon \At \to \two$ take $p \in \At$ to $1$ iff
$\class{p}_{\th{T}} \in \Filt$.

\begin{lemm}[Lemma~\ref{lem:free:filtmod}]
\label{lem:app:free:filtmod}
Let $\Filt$ and $\nu$ as above.
Then for every $\theta \in \Geom(\At)$,
we have $\nu \models \theta$ if, and only if, $\class\theta_{\th{T}} \in \Filt$.

In particular, $\nu$ is a model of $\th{T}$ with
$\nu \models \psi$ and $\nu \not\models \varphi$.
\end{lemm}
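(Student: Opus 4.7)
The plan is to prove the main equivalence $\nu \models \theta \Leftrightarrow \class\theta_{\th T} \in \Filt$ by induction on the syntactic structure of geometric formulae, following the same pattern as the proof of Lemma~\ref{lem:app:geom:modpt} (for Theorem~\ref{theo:geom:pt}). The key inputs are: the definition of $\nu$ on atoms, the characterization of points of a frame as frame morphisms into $\two$ (so $\Filt$ preserves all sups and finite meets when viewed as a characteristic map), and the calculation of meets and sups in the quotient frame $\Geom(\At)/\th T$ established in Lemma~\ref{lem:app:free:frame} and Remark~\ref{rem:app:free:geom}.

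First I would handle conjunctive formulae $\gamma \in \Conj(\At)$ by induction, showing $\nu \models \gamma$ iff $\class{\bigvee\{\gamma\}}_{\th T} \in \Filt$. For $p \in \At$, this is the very definition of $\nu$. For $\true$, both sides hold because $\nu \models \true$ always, and $\class\true_{\th T}$ is the top of $\Geom(\At)/\th T$, which any point must contain. For $\gamma_1 \land \gamma_2$, I use the induction hypothesis together with the identity $\class{\bigvee\{\gamma_1 \land \gamma_2\}}_{\th T} = \class{\bigvee\{\gamma_1\}}_{\th T} \wedge \class{\bigvee\{\gamma_2\}}_{\th T}$ (from Lemma~\ref{lem:app:free:frame}), and the fact that $\Filt$ preserves binary meets.

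Next I would extend to geometric formulae $\varphi = \bigvee_{i \in I} \gamma_i$. By the definition of $\models$, $\nu \models \varphi$ iff $\nu \models \gamma_i$ for some $i$, and by the conjunctive-formula case this is equivalent to $\class{\bigvee\{\gamma_i\}}_{\th T} \in \Filt$ for some $i$. By Remark~\ref{rem:app:free:geom}, $\class\varphi_{\th T} = \bigvee_i \class{\bigvee\{\gamma_i\}}_{\th T}$, and since $\Filt$ is a point (hence preserves arbitrary sups into $\two$), the latter sup lies in $\Filt$ iff some component does. Chaining these equivalences yields the desired biconditional for $\varphi$.

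For the ``in particular'' part, I would check model-hood of $\th T$ by observing that for every sequent $\chi \thesis \eta$ in $\th T$, the rule $\rn{Th}$ gives $\chi \thesis_{\th T} \eta$, so $\class\chi_{\th T} \leq_{\th T} \class\eta_{\th T}$; since $\Filt$ is upward-closed, $\class\chi_{\th T} \in \Filt$ forces $\class\eta_{\th T} \in \Filt$, and translating via the main equivalence gives $\nu \models \chi \Rightarrow \nu \models \eta$. The statements $\nu \models \psi$ and $\nu \not\models \varphi$ are then immediate from the hypotheses $\class\psi_{\th T} \in \Filt$ and $\class\varphi_{\th T} \notin \Filt$ via the equivalence. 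The only delicate point is keeping track of the distinction between a conjunctive formula $\gamma$ and the singleton disjunction $\bigvee\{\gamma\}$ representing it as a geometric formula; this is a notational subtlety rather than a genuine mathematical obstacle, and once resolved the argument is a routine structural induction.
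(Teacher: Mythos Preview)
Your proposal is correct and follows essentially the same approach as the paper: an induction on conjunctive formulae (cases $p$, $\true$, $\gamma_1 \land \gamma_2$) using the meet structure of $\Geom(\At)/\th T$ and the fact that $\Filt$ is a point, followed by the extension to arbitrary geometric formulae via Remark~\ref{rem:app:free:geom}, and then the ``in particular'' part via rule $\rn{Th}$ and upward-closure of $\Filt$. The paper even makes the same parenthetical remark that the argument mirrors that of Lemma~\ref{lem:app:geom:modpt}, and likewise flags the $\gamma$ versus $\bigvee\{\gamma\}$ bookkeeping.
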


\begin{proof}
We first show that 
for every $\theta \in \Geom(\At)$,
we have $\nu \models \theta$ if, and only if, $\class\theta_{\th{T}} \in \Filt$.
The proof is similar to that of
Lemma~\ref{lem:app:geom:modpt} (\S\ref{sec:app:geom:spaces}).
We first show by induction on $\gamma \in \Conj(\At)$
that $\nu \models \gamma$ if, and only if
$\class{\bigvee\{\gamma\}}_{\th{T}} \in \Filt$
(thus making explicit that $\bigvee\{\gamma\}$ is
the conjunctive formula $\gamma \in \Conj(\At)$
seen as a geometric formula).
\begin{description}
\item[Case of $p \in \At$.]
Since by definition of $\nu$, we have
$\nu \models p$ if, and only if
$\class{p}_{\th{T}} = \class{\bigvee\{p\}}_{\th{T}} \in \Filt$.

\item[Case of $\true$.]
On the one hand, we have $\nu \models \true$.
On the other hand, we have $\class{\bigvee\{\true\}}_{\th{T}} \in \Filt$
by Lemma~\ref{lem:app:free:frame} (i.e.\ Lemma~\ref{lem:free:frame}).

\item[Case of $\gamma_1 \land \gamma_2$.]
We have $\nu \models \gamma_1 \land \gamma_2$ iff
$\nu \models \gamma_1$
and
$\nu \models \gamma_2$.
By induction hypothesis, for $i = 1,2$ we have
$\nu \models \gamma_i$ iff
$\class{\bigvee\{\gamma_i\}}_{\th{T}} \in \Filt$.
On the other hand, by
Lemma~\ref{lem:app:free:frame} (i.e.\ Lemma~\ref{lem:free:frame})
we have
\[
\begin{array}{l l l}
  \class{\bigvee\{\gamma\}}_{\th{T}} \land \class{\bigvee\{\gamma'\}}_{\th{T}}
& =
& \class{\bigvee\{\gamma \land \gamma' \}}_{\th{T}}
\end{array}
\]

\noindent
Hence $\class{\bigvee\{\gamma \land \gamma' \}}_{\th{T}} \in \Filt$
if, and only if,
$\class{\bigvee\{\gamma\}}_{\th{T}}, \class{\bigvee\{\gamma'\}}_{\th{T}} \in \Filt$.
It follows that $\nu \models \gamma_1 \land \gamma_2$
iff $\class{\bigvee\{\gamma \land \gamma' \}}_{\th{T}} \in \Filt$.
\end{description}

We now consider the case of $\varphi = \bigvee_{i \in I} \gamma_i$.
By Remark~\ref{rem:app:free:geom} we have
\[
\begin{array}{l l l}
  \class{\varphi}_{\th{T}}
& =
& \bigvee_{i \in I} \class{\bigvee\{\gamma_i\}}_{\th{T}}
\end{array}
\]

\noindent
Then, using the above inductive property on conjunctive formulae,
we get
\[
\begin{array}{l l l}
  \nu \models \varphi
& \text{iff}
& \text{there is $i \in I$ such that $\nu \models \gamma_i$}
\\

& \text{iff}
& \text{there is $i \in I$ such that $\class{\bigvee\{\gamma_i\}}_{\th{T}} \in \Filt$}
\\

& \text{iff}
& \class{\varphi}_{\th{T}} \in \Filt
\end{array}
\]

For the second part of the statement, given
$(\theta_1 \thesis \theta_2) \in \th{T}$
we have
$\class{\theta_1}_{\th{T}} \leq_{\th{T}} \class{\theta_2}_{\th{T}}$.
Hence, if $\nu \models \theta_1$,
then $\class{\theta_1}_{\th{T}} \in \Filt$,
so that 
$\class{\theta_2}_{\th{T}} \in \Filt$
and $\nu \models \theta_2$.
It follows that $\nu$ is a model of $\th{T}$.

Finally, we get $\nu \models \psi$ and $\nu \not\models \varphi$
since $\class\psi_{\th{T}} \in \Filt$
and
$\class\varphi_{\th{T}} \notin \Filt$
by assumption.
\end{proof}

\subsection{Proof of Proposition~\ref{prop:free:frame:free}}

\begin{prop}[Proposition~\ref{prop:free:frame:free}]
\label{prop:app:free:frame:free}
$(\Geom(\At)/\th{E},\leq_{\th{E}})$
(together with the function $p \in \At \mapsto \class{p}_{\th{T}}$)
is a free frame on $\At$.
\end{prop}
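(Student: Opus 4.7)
The plan is to define the frame morphism $\ladj f \colon \Geom(\At)/\th{E} \to L$ by induction on geometric formulae, check well-definedness using soundness of the deduction rules of Figure~\ref{fig:free}, and then verify that it is the unique extension of $f$.

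First I would extend $f$ to a function $\widetilde f \colon \Geom(\At) \to L$ by setting
\[
  \widetilde f(\true) = \top_L
  ,\quad
  \widetilde f(\gamma \land \gamma') = \widetilde f(\gamma) \wedge_L \widetilde f(\gamma')
  ,\quad
  \widetilde f\!\left(\bigvee \SP\right) = \bigvee\nolimits_L\left\{ \widetilde f(\gamma) \mid \gamma \in \SP \right\}
\]
on conjunctive and then geometric formulae. The key step is a soundness lemma: if $\psi \thesis_{\th{E}} \varphi$ is derivable, then $\widetilde f(\psi) \leq_L \widetilde f(\varphi)$. This follows by induction on derivations. The rules $\rn{Ax}$, $\rn{Cut}$, $\rn{$\land$-L$_i$}$, $\rn{$\land$-R}$, $\rn{$\true$-R}$, $\rn{$\bigvee$-L}$ and $\rn{$\bigvee$-R}$ all translate directly into basic properties of meets and sups in the complete lattice $L$; crucially, the rule $\rn{Th}$ has no instance in the empty theory $\th{E}$, which is why no relations need to be imposed. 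Once soundness is established, $\widetilde f$ factors uniquely through $\Geom(\At)/{\sim_{\th{E}}}$, yielding a monotone function $\ladj f \colon \Geom(\At)/\th{E} \to L$ with $\ladj f(\class p_{\th{E}}) = f(p)$.

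Next I would check that $\ladj f$ is a frame morphism. Preservation of $\top$ and finite meets is immediate from the definition of $\widetilde f$ together with the description of these operations in $\Geom(\At)/\th{E}$ given by Lemma~\ref{lem:free:frame}. For arbitrary sups, given a family $(\class{\varphi_i}_{\th{E}})_{i \in I}$ with $\varphi_i = \bigvee\{\gamma_{i,j} \mid j \in J_i\}$, we have by Lemma~\ref{lem:free:frame} that
\[
  \bigvee\nolimits_{i \in I}\class{\varphi_i}_{\th{E}}
  =
  \class{\bigvee\{\gamma_{i,j} \mid i \in I,\ j \in J_i\}}_{\th{E}}
\]
and $\widetilde f$ sends this class to $\bigvee_L\{\widetilde f(\gamma_{i,j}) \mid i \in I,\, j \in J_i\}$, which by associativity of sups in $L$ equals $\bigvee_L\{\ladj f(\class{\varphi_i}_{\th{E}}) \mid i \in I\}$. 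Frame distributivity of $L$ is not directly needed here, since it is already baked into the construction of $\widetilde f$ on compound formulae.

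Finally, for uniqueness, any frame morphism $g \colon \Geom(\At)/\th{E} \to L$ with $g(\class p_{\th{E}}) = f(p)$ must agree with $\ladj f$ on all classes $\class{\bigvee\{\gamma\}}_{\th{E}}$ by induction on $\gamma$, using preservation of $\top$ and of binary meets, and then on all classes $\class\varphi_{\th{E}}$ with $\varphi = \bigvee_{i \in I}\gamma_i$ by preservation of arbitrary sups together with Remark~\ref{rem:app:free:geom}. The main obstacle is the soundness lemma, which requires careful handling of the infinitary $\rn{$\bigvee$-L}$ rule; but the verification is straightforward once the recursive definition of $\widetilde f$ is in place, because $\widetilde f(\bigvee \SP)$ is by construction the sup in $L$ of the $\widetilde f$-images of $\SP$.
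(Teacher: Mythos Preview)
Your approach is essentially the same as the paper's: define an interpretation $\widetilde f$ on formulae, prove a soundness lemma by induction on derivations, factor through the quotient, and then verify uniqueness by induction on conjunctive formulae followed by an appeal to Remark~\ref{rem:app:free:geom}.

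There is, however, a concrete error in your account of where frame distributivity in $L$ enters. You claim that the soundness of the rules ``translate directly into basic properties of meets and sups in the complete lattice $L$'' and that ``frame distributivity of $L$ is not directly needed here, since it is already baked into the construction of $\widetilde f$''. This is not correct. Consider the soundness case for $\rn{$\land$-R}$: from $\widetilde f(\theta) \leq_L \widetilde f(\varphi)$ and $\widetilde f(\theta) \leq_L \widetilde f(\psi)$ you get $\widetilde f(\theta) \leq_L \widetilde f(\varphi) \wedge_L \widetilde f(\psi)$, but what you need is $\widetilde f(\theta) \leq_L \widetilde f(\varphi \land \psi)$. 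With $\varphi = \bigvee_i \gamma_i$ and $\psi = \bigvee_j \gamma'_j$, the formula-level conjunction $\varphi \land \psi$ is by definition $\bigvee_{i,j}(\gamma_i \land \gamma'_j)$, so $\widetilde f(\varphi \land \psi) = \bigvee_{i,j}\bigl(\widetilde f(\gamma_i) \wedge_L \widetilde f(\gamma'_j)\bigr)$, whereas $\widetilde f(\varphi) \wedge_L \widetilde f(\psi) = \bigl(\bigvee_i \widetilde f(\gamma_i)\bigr) \wedge_L \bigl(\bigvee_j \widetilde f(\gamma'_j)\bigr)$. Identifying these two expressions requires precisely the frame distributive law in $L$, applied twice. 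The paper isolates this step as a separate lemma (showing $g(\varphi \land \psi) = g(\varphi) \wedge_L g(\psi)$ via distributivity) before running the soundness induction. The same remark applies to your claim that preservation of finite meets by $\ladj f$ is ``immediate''. Your argument is otherwise fine, but you should acknowledge that distributivity is genuinely used at this point.
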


Fix a set $\At$.
Consider a frame $L$ and a function $f \colon \At \to \Forget L$.
We have to show that
there is a unique frame morphism
$\ladj f \colon \Geom(\At)/\th{E} \to L$
such that the following commutes.
\[
\begin{tikzcd}[row sep=tiny, column sep=large]
  \At
  \arrow{rr}{f}
  \arrow{dr}[below]{p \mapsto \class{p}_{\th{E}}}
&
& \Forget L
\\
& \Forget (\Geom(\At)/\th{E})
  \arrow[dashed]{ur}[below]{\Forget \ladj f}
\end{tikzcd}
\]

We first extend $f$ to the function
\[
\begin{array}{*{5}{l}}
  g_0
& :
& \Conj(\At)
& \longto
& \Forget L
\end{array}
\]

\noindent
defined by induction on $\gamma \in \Conj(\At)$ as follows:
\[
\begin{array}{r c l}
  g_0(p)
& \deq
& f(p)
\\

  g_0(\true)
& \deq
& \top_L
\\

  g_0(\gamma \land \gamma')
& \deq
& g_0(\gamma) \land_L g_0(\gamma')
\end{array}
\]

\noindent
We then extend $g_0$ to a function
\[
\begin{array}{*{5}{l}}
  g
& :
& \Geom(\At)
& \longto
& \Forget L
\end{array}
\]

\noindent
with
\[
\begin{array}{l l l}
  g\left( \bigvee \SP \right)
& \deq
& \bigvee_L \left\{ g_0(\gamma) \mid \gamma \in \SP\right\}
\end{array}
\]

\noindent
where $\SP \sle \Conj(\At)$.

\begin{lemm}
\label{lem:app:free:frame:free:conn}
With the notation of Remark~\ref{rem:geom:conn} (i.e.\ Lemma~\ref{lem:app:geom:conn}),
we have
\begin{enumerate}[(i)]
\item $g(\true) = \top_L$
\item $g(\varphi \land \psi) = g(\varphi) \land_L g(\psi)$
\item
\(
  g\left( \bigvee \left\{\varphi_i \mid i \in I\right\} \right)
  =
  \bigvee_L \left\{ g(\varphi_i) \mid i \in I \right\}
\)
\end{enumerate}
\end{lemm}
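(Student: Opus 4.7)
The plan is to verify each of the three properties by unfolding the definitions of $\varphi \land \psi$ and $\bigvee\{\varphi_i\}$ given in Remark~\ref{rem:geom:conn} (i.e.\ Lemma~\ref{lem:app:geom:conn}), and then applying the definition of $g$ together with the defining clauses of $g_0$ and the frame structure of $L$. The only non-trivial ingredient will be frame distributivity in~$L$, invoked once for (ii); properties (i) and (iii) are pure bookkeeping.

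For (i), recall that the geometric formula $\true$ is by convention $\bigvee\{\true\}$ where the inner $\true$ is the conjunctive formula. Hence $g(\true) = \bigvee_L\{g_0(\true)\} = g_0(\true) = \top_L$ by definition of $g_0$ on the conjunctive $\true$.

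For (iii), if $\varphi_i = \bigvee\{\gamma_{i,j} \mid j \in J_i\}$, then by the definition from Remark~\ref{rem:geom:conn} we have $\bigvee_{i \in I}\varphi_i = \bigvee\{\gamma_{i,j} \mid i \in I, j \in J_i\}$. Applying $g$ on each side and associativity of sups in $L$, both $g(\bigvee_{i \in I}\varphi_i)$ and $\bigvee_L\{g(\varphi_i) \mid i \in I\}$ equal $\bigvee_L\{g_0(\gamma_{i,j}) \mid i \in I, j \in J_i\}$.

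For (ii), write $\varphi = \bigvee\{\gamma_i \mid i \in I\}$ and $\psi = \bigvee\{\gamma'_j \mid j \in J\}$, so that $\varphi \land \psi = \bigvee\{\gamma_i \land \gamma'_j \mid (i,j) \in I \times J\}$. By the inductive clause for $g_0$, we have $g_0(\gamma_i \land \gamma'_j) = g_0(\gamma_i) \land_L g_0(\gamma'_j)$, and thus $g(\varphi \land \psi) = \bigvee_L\{g_0(\gamma_i) \land_L g_0(\gamma'_j) \mid (i,j) \in I \times J\}$. The key step, which I expect to be the one requiring a small argument, is to rewrite this double-indexed sup as $\bigl(\bigvee_L\{g_0(\gamma_i) \mid i \in I\}\bigr) \land_L \bigl(\bigvee_L\{g_0(\gamma'_j) \mid j \in J\}\bigr)$ using frame distributivity in $L$ applied twice (on the left and on the right factor). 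By definition of $g$, this yields $g(\varphi) \land_L g(\psi)$, completing the proof.
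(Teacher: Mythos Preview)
Your proof is correct and follows essentially the same approach as the paper: unfolding the definitions from Remark~\ref{rem:geom:conn}, using the inductive clause $g_0(\gamma \land \gamma') = g_0(\gamma) \land_L g_0(\gamma')$, and invoking frame distributivity in $L$ twice for part~(ii). The paper's proof is structured identically, down to the remark that distributivity is applied twice.
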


\begin{proof}
We discuss each case separately.
\begin{enumerate}[(i)]
\item
Recall that we write $\true$ for the geometric formula $\bigvee\{\true\}$.
Then we are done since $g(\bigvee\{\true\}) = \bigvee_L\{g_0(\true)\} = \top_L$.

\item
Write $\varphi = \bigvee_{i \in I}\gamma_i$
and $\psi = \bigvee_{j \in J} \gamma'_j$.
We have
\[
\begin{array}{r c l}
  g(\varphi)
& =
& \bigvee_L \left\{ g_0(\gamma_i) \mid i \in I \right\}
\\

  g(\psi)
& =
& \bigvee_L \left\{ g_0(\gamma'_j) \mid j \in J \right\}
\end{array}
\]

\noindent
and
\[
\begin{array}{r c l}
  \varphi \land \psi
& =
& \bigvee \left\{
    \gamma_i \land \gamma'_j \mid \text{$i \in I$ and $j \in J$} \right\}
\\

  g(\varphi \land \psi)
& =
& \bigvee_L \left\{
    g_0(\gamma_i) \land_L g_0(\gamma'_j) \mid \text{$i \in I$ and $j \in J$} \right\}
\end{array}
\]

\noindent
Hence, using frame distributivity in $L$ twice we get
\[
\begin{array}{l l l}
  g(\varphi \land \psi)
& =
& g(\varphi) \land_L g(\psi)
\end{array}
\]

\item
For each $i \in I$,
write $\varphi_i = \bigvee \{\gamma_{i,j} \mid \text{$i \in I$ and $j \in J_i$}\}$.
Then we are done since
\[
\begin{array}{r c l}
  g(\varphi_i)
& =
& \bigvee_L \left\{
  g_0(\gamma_{i,j}) \mid j \in J_i
  \right\}
\\

  g\left( \bigvee_{i \in I} \varphi_i \right)
& =
& \bigvee_L \left\{
  g_0(\gamma_{i,j}) \mid \text{$i \in I$ and $j \in J_i$}
  \right\}
\end{array}
\]
\qedhere
\end{enumerate}
\end{proof}

Using Lemma~\ref{lem:app:free:frame:free:conn},
a straightforward induction on derivations in $\thesis_{\th{E}}$
shows that
\[
\begin{array}{l l l}
  \psi \thesis_{\th{E}} \varphi
& \longimp
& g(\psi) \leq_L g(\varphi)
\end{array}
\]

\noindent
It follows that $\class{\varphi}_{\th{E}} = \class{\psi}_{\th{E}}$
implies $g(\varphi) = g(\psi)$.
This yields our function
\[
\begin{array}{l l r c l}
  \ladj f
& :
& \Geom(\At)/\th{E}
& \longto
& L
\\

&
& \class\varphi_{\th{E}}
& \longmapsto
& g(\varphi)
\end{array}
\]

\noindent
Lemma~\ref{lem:app:free:frame:free:conn}
implies that $\ladj f$ is a frame morphism.
Moreover, given $p \in \At$ we have
\[
\begin{array}{l l l}
  f(\class{p}_{\th E})
& =
& g(p)
\\

& =
& g(\bigvee\{p\})
\\

& =
& g_0(p)
\\

& =
& f(p)
\end{array}
\]

We can now conclude the proof of Proposition~\ref{prop:app:free:frame:free}
(i.e.\ Proposition~\ref{prop:free:frame:free}).

\begin{proof}[Proof of Proposition~\ref{prop:app:free:frame:free}]
It remains to show that $\ladj f$ is the unique frame morphism
such that $\ladj f(\class{p}_{\th{E}}) = f(p)$ for all $p \in \At$.
Let $h \colon \Geom(\At)/\th{E} \to L$ be a frame morphism such that
$h(\class{p}_{\th{E}}) = f(p)$ for all $p \in \At$.

We show that $h = f$.
We make explicit that $\bigvee\{\gamma\}$ is
the conjunctive formula $\gamma \in \Conj(\At)$
seen as a geometric formula.

We first show
by induction on $\gamma \in \Conj(\At)$ that
\[
\begin{array}{l l l}
  h(\class{\bigvee\{\gamma\}}_{\th{E}})
& =
& g_0(\gamma)
\end{array}
\]
\begin{description}
\item[Case of $p \in \At$.]

Since $h(\class{\bigvee\{p\}}_{\th E}) = f(p)$ by assumption on $h$.

\item[Case of $\true$.]
Since $h(\class{\bigvee\{\true\}}_{\th E}) = \top_L$
as $h$ is a frame morphism.

\item[Case of $\gamma \land \gamma'$.]
Note that we have
\[
\begin{array}{l l l}
  \bigvee\{\gamma\} \land \bigvee\{\gamma'\}
& =
& \bigvee\{\gamma \land \gamma'\}
\end{array}
\]

\noindent
Hence, the result follows from the induction hypothesis
and the fact that $h$ is a frame morphism.
\end{description}

\noindent
Now, given $\varphi = \bigvee_i \gamma_i$,
since $h$ is a frame morphism,
it follows from Remark~\ref{rem:app:free:geom}
that 
\[
\begin{array}{l l l}
  h(\class\varphi_{\th{E}})
& =
& h\left(
  \bigvee \left\{ \class{\bigvee\{\gamma_i\}}_{\th{E}} \mid i \in I\right\}
  \right)
\\

& =
& \bigvee_L
  \left\{
  h\left( \class{\bigvee\{\gamma_i\}}_{\th{E}} \right) \mid i \in I
  \right\}
\\

& =
& \bigvee_L
  \left\{ g_0(\gamma_i) \mid i \in I \right\}
\\

& =
& g\left( \varphi \right)
\\

& =
& f\left( \class\varphi_{\th{E}} \right)
\end{array}
\]
\end{proof}
\subsection{Proof of Proposition~\ref{prop:free:quot}}

\begin{prop}[Proposition~\ref{prop:free:quot}]
\label{prop:app:free:quot}
Given geometric theories $\th{T}, \th{U}$ over $\At$,
let
\(
  R
  \deq
  \{ (\class\varphi_{\th{T}} , \class\psi_{\th{T}}) 
  \mid (\varphi \thesis \psi) \in \th{T} \cup \th{U}\}
\).
Then $\Geom(\At)/(\th{T} \cup \th{U})$
is isomorphic to the quotient of $\Geom(\At)/\th{T}$
by $\sim_{\clos R}$.
\end{prop}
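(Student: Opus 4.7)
The plan is to identify both frames, via universal properties, as the universal quotient of $\Geom(\At)$ making every sequent of $\th{T}\cup\th{U}$ valid. Specifically, I claim each of the two frames is initial among pairs $(M,f)$ consisting of a frame $M$ and a frame morphism $f \colon \Geom(\At) \to M$ such that $f(\varphi) \leq_M f(\psi)$ for every $(\varphi \thesis \psi) \in \th{T} \cup \th{U}$; the proposition will then follow from uniqueness of universal objects, and the induced isomorphism will send $\class{\varphi}_{\th{T}\cup\th{U}}$ to the $\sim_{\clos R}$-class of $\class{\varphi}_{\th{T}}$.

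For $\Geom(\At)/(\th{T}\cup\th{U})$ this is almost immediate. The canonical map $[-]_{\th{T} \cup \th{U}}$ has the required property by rule $\rn{Th}$ of Figure~\ref{fig:free}. Conversely, given any such $(M,f)$, a routine induction on derivations in $\thesis_{\th{T}\cup\th{U}}$ (in the style of the soundness observation preceding Lemma~\ref{lem:free:filtmod}) shows that $\varphi \thesis_{\th{T}\cup\th{U}} \psi$ forces $f(\varphi) \leq_M f(\psi)$; every rule of Figure~\ref{fig:free} is validated in any frame, using the description of joins and finite meets in $\Geom(\At)$ from Remark~\ref{rem:geom:conn}. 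Hence $f$ descends to a unique frame morphism out of $\Geom(\At)/(\th{T}\cup\th{U})$.

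For $(\Geom(\At)/\th{T})/{\sim_{\clos R}}$, I would invoke the standard presentation-by-generators-and-relations machinery for frames of \cite[\S 3.4]{heckmann15mscs}: the quotient $\Geom(\At)/\th{T} \twoheadrightarrow (\Geom(\At)/\th{T})/{\sim_{\clos R}}$ is universal among frame morphisms out of $\Geom(\At)/\th{T}$ that send every pair of $R$ to a comparable pair. Composing with the quotient $\Geom(\At) \twoheadrightarrow \Geom(\At)/\th{T}$ (which, by the previous paragraph applied to $\th{T}$ alone, is universal among frame morphisms $f \colon \Geom(\At) \to M$ satisfying $f(\varphi) \leq_M f(\psi)$ for $(\varphi\thesis\psi) \in \th{T}$), one obtains a universal property for $(\Geom(\At)/\th{T})/{\sim_{\clos R}}$: it is initial among pairs $(M,f)$ satisfying both the conditions from $\th{T}$ (needed for factorisation through $\Geom(\At)/\th{T}$) and those from $\th{T} \cup \th{U}$ (by definition of the generators of $R$), i.e.\ exactly those validating $\th{T} \cup \th{U}$. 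The main subtlety is the correct matching of the characterisation of $\clos R$ as the least congruence preorder containing $R$ with a universal property in $\Frm$, which is precisely the content of \cite[\S 3.4]{heckmann15mscs}; once this is in hand, the two frames satisfy identical universal properties and are therefore canonically isomorphic.
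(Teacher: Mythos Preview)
Your argument is correct, but it takes a genuinely different route from the paper's. The paper works concretely: it introduces the relation $\widetilde R = \{(\class\varphi_{\th T},\class\psi_{\th T}) \mid \varphi \thesis_{\th T\cup\th U}\psi\}$ and proves \emph{directly} that $\widetilde R = \clos R$, by first checking that $\widetilde R$ is a congruence preorder on $\Geom(\At)/\th T$ (reflexivity, transitivity, compatibility with sups and finite infs) and then, by induction on derivations in $\thesis_{\th T\cup\th U}$, that any congruence preorder containing $R$ must contain $\widetilde R$. The isomorphism then falls out because $\class\varphi_{\th T}\sim_{\clos R}\class\psi_{\th T}$ iff $\class\varphi_{\th T\cup\th U}=\class\psi_{\th T\cup\th U}$.

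Your approach instead packages both quotients as solutions to the same universal problem in $\Frm$ and concludes by uniqueness. This is cleaner conceptually and makes the two-step quotient structure transparent, but it outsources the technical core (that the quotient by $\sim_{\clos R}$ really has the expected universal property) to \cite[\S 3.4]{heckmann15mscs}, whereas the paper's argument is self-contained. One small wrinkle: you speak of ``frame morphisms $f\colon\Geom(\At)\to M$'', but $\Geom(\At)$ is only a set of formulae, not a frame; what you mean is frame morphisms out of the free frame $\Geom(\At)/\th E$ (Proposition~\ref{prop:free:frame:free}), or equivalently functions preserving the operations of Remark~\ref{rem:geom:conn}. This is harmless once made explicit.
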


Fix geometric theories $\th{T}$ and $\th{U}$ over $\At$,
and let $R$ be as in the statement.
Consider
\[
\begin{array}{l l l}
  \widetilde R
& \deq
& \left\{
  (\class\varphi_{\th{T}} , \class\psi_{\th{T}}) 
  \mid
  \varphi \thesis_{\th{T} \cup \th{U}} \psi
  \right\}
\end{array}
\]

We are going to show that $\widetilde R = \clos R$,
i.e.\ that $\widetilde R$ is the least congruence preorder containing $R$.
It is trivial that $R \sle \widetilde R$.

\begin{lemm}
$\widetilde R$ is a congruence preorder on $\Geom(\At)/\th{T}$.
\end{lemm}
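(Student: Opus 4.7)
The plan is to check that $\widetilde R$ satisfies each of the four conditions making up the definition of a congruence preorder on the frame $\Geom(\At)/\th{T}$: (a) it is a preorder, (b) it contains $\leq_{\th T}$, (c) it respects arbitrary sups in the antecedent, and (d) it respects finite infs in the succedent. Before any of this, I would verify that $\widetilde R$ is well-defined on $\sim_{\th T}$-classes, i.e.\ that $\varphi \thesis_{\th T \cup \th U} \psi$ depends only on $\class\varphi_{\th T}$ and $\class\psi_{\th T}$: if $\varphi \sim_{\th T} \varphi'$ and $\psi \sim_{\th T} \psi'$, then $\varphi' \thesis_{\th T} \varphi \thesis_{\th T \cup \th U} \psi \thesis_{\th T} \psi'$, so two applications of $\rn{Cut}$ in $\thesis_{\th T \cup \th U}$ (using that every $\thesis_{\th T}$-derivation is also a $\thesis_{\th T \cup \th U}$-derivation, by monotonicity in the theory) give $\varphi' \thesis_{\th T \cup \th U} \psi'$.

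For (a), reflexivity is rule $\rn{Ax}$ and transitivity is rule $\rn{Cut}$ applied in $\thesis_{\th T \cup \th U}$. For (b), if $\class\varphi_{\th T} \leq_{\th T} \class\psi_{\th T}$, then $\varphi \thesis_{\th T} \psi$ is derivable; since every such derivation lives inside $\thesis_{\th T \cup \th U}$ as well (as $\th T \sle \th T \cup \th U$), we get $(\class\varphi_{\th T}, \class\psi_{\th T}) \in \widetilde R$.

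For (c), assume $\varphi_i \thesis_{\th T \cup \th U} \psi$ for every $i \in I$ with $\varphi_i = \bigvee \{\gamma_{i,j} \mid j \in J_i\}$. Writing $\bigvee_{i\in I}\varphi_i = \bigvee\{\gamma_{i,j} \mid i \in I,\, j \in J_i\}$ as in Remark~\ref{rem:geom:conn}, the rule $\rn{\bigvee\text{-L}}$ (whose premises are indexed by $\{(i,j) \mid i \in I,\, j \in J_i\}$, each of which is handled by combining $\rn{\bigvee\text{-R}}$ with the hypothesis $\varphi_i \thesis_{\th T \cup \th U} \psi$ via $\rn{Cut}$) yields $\bigvee_{i\in I}\varphi_i \thesis_{\th T \cup \th U} \psi$; this is exactly $(\bigvee_i \class{\varphi_i}_{\th T}, \class\psi_{\th T}) \in \widetilde R$, since sups in $\Geom(\At)/\th T$ are computed by the formal $\bigvee$ construct (Lemma~\ref{lem:free:frame}). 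For (d), assuming $\psi \thesis_{\th T \cup \th U} \varphi_i$ for a finite family, iterate rule $\rn{\land\text{-R}}$ to get $\psi \thesis_{\th T \cup \th U} \bigwedge_i \varphi_i$; again binary infs in $\Geom(\At)/\th T$ are given by $\land$ on representatives, and they extend to finite ones.

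I do not expect a genuine obstacle here, only the small bookkeeping point that the congruence-preorder axioms are phrased on frame elements whereas $\thesis_{\th T \cup \th U}$ is phrased on formulae, so one must be careful to choose representatives and use Remark~\ref{rem:geom:conn} to identify $\bigvee_{i}\class{\varphi_i}_{\th T}$ and $\bigwedge_i \class{\varphi_i}_{\th T}$ with the formulae $\bigvee_i \varphi_i$ and $\bigwedge_i \varphi_i$, respectively. Each of the four clauses then reduces to a single rule from Figure~\ref{fig:free}, possibly combined with $\rn{Cut}$.
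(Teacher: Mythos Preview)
Your proposal is correct and follows essentially the same approach as the paper. The only organizational difference is that you isolate well-definedness on $\sim_{\th T}$-classes as a preliminary step, whereas the paper folds this representative-independence issue into the transitivity argument (choosing possibly distinct representatives $\theta,\theta'$ of the middle class and bridging them via $\theta \thesis_{\th T} \theta'$); both amount to the same use of $\rn{Cut}$ together with $\thesis_{\th T} \subseteq \thesis_{\th T\cup\th U}$.
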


\begin{proof}
We first check that $\widetilde R$ is a preorder.
Its reflexive since $\varphi \thesis_{\th{T} \cup \th{U}} \varphi$.
We now prove that $\widetilde R$ is transitive.
Let $C \mathrel{\widetilde R} D$
and $D \mathrel{\widetilde R} E$.
Hence there are $\varphi,\theta$ such that
$C = \class\varphi_{\th{T}}$,
$D = \class\theta_{\th{T}}$
and $\varphi \thesis_{\th{T} \cup \th{U}} \theta$.
Similarly, there are $\theta',\psi$ such that
$D = \class{\theta'}_{\th{T}}$,
$E = \class\psi_{\th{T}}$
and
$\theta' \thesis_{\th{T} \cup \th{U}} \psi$.
Since $\class\theta_{\th{T}} = \class{\theta'}_{\th{T}}$,
we get (say) $\theta \thesis_{\th{T}} \theta'$.
Hence
$\theta \thesis_{\th{T} \cup \th{U}} \psi$
and thus
$C \mathrel{\widetilde R} E$.

Moreover, if $\class\varphi_{\th{T}} \leq_{\th{T}} \class\psi_{\th{T}}$,
then $\varphi \thesis_{\th{T}} \psi$,
so that $\varphi \thesis_{\th{T} \cup \th{U}} \psi$
and
$\class\varphi_{\th{T}} \mathrel{\widetilde R} \class\psi_{\th{T}}$.

Let $\psi$ and $(\varphi_i)_{i \in I}$
such that for all $i \in I$, we have
\[
\begin{array}{l l l}
  \class{\varphi_i}_{\th{T}}
& \mathrel{\widetilde R}
& \class\psi_{\th{T}}
\end{array}
\]

Since ${\thesis_{\th{T}}} \sle {\thesis_{\th{T} \cup \th{U}}}$,
this implies that for all $i \in I$, we have
$\varphi_i \thesis_{\th{T} \cup \th{U}} \psi$.
Hence using the rule $\rn{$\bigvee$-L}$, we get
$\bigvee_{i \in I}\varphi_i \thesis_{\th{T} \cup \th{U}} \psi$,
and it follows that
\[
\begin{array}{l l l}
  \bigvee_{i \in I}\class{\varphi_i}_{\th{T}}
& \mathrel{\widetilde R}
& \class\psi_{\th{T}}
\end{array}
\]

Note that for every $\varphi$, we have $\varphi \thesis_{\th{T}} \true$,
and thus
\[
\begin{array}{l l l}
  \class{\varphi}_{\th{T}}
& \mathrel{\widetilde R}
& \class\true_{\th{T}}
\end{array}
\]

Finally, assume
\[
\begin{array}{l l l !{\qquad\text{and}\qquad} l l l}
  \class{\theta}_{\th{T}}
& \mathrel{\widetilde R}
& \class\varphi_{\th{T}}

& \class{\theta}_{\th{T}}
& \mathrel{\widetilde R}
& \class\psi_{\th{T}}
\end{array}
\]

\noindent
Again since ${\thesis_{\th{T}}} \sle {\thesis_{\th{T} \cup \th{U}}}$,
this implies
$\theta \thesis_{\th{T} \cup \th{U}} \varphi$
and
$\theta \thesis_{\th{T} \cup \th{U}} \psi$.
Hence
with the rule $\rn{$\land$-R}$ we get
$\theta \thesis_{\th{T} \cup \th{U}} \varphi \land \psi$
and thus
\[
\begin{array}{l l l}
  \class{\theta}_{\th{T}}
& \mathrel{\widetilde R}
& \class{\varphi}_{\th{T}}
  \land
  \class{\psi}_{\th{T}}
\end{array}
\]
\end{proof}

\begin{lemm}
Let $Q$ be a congruence preorder on $\Geom(\At)/\th{T}$
such that $R \sle Q$.
Then $\widetilde R \sle Q$.
\end{lemm}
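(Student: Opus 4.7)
The plan is to proceed by induction on the derivation of $\varphi \thesis_{\th T \cup \th U} \psi$, using the deduction rules of Figure~\ref{fig:free}. At each step we show that the pair $(\class\varphi_{\th T}, \class\psi_{\th T})$ belongs to $Q$. A useful preliminary observation is that since $Q$ is a congruence preorder on $\Geom(\At)/\th T$, it contains $\leq_{\th T}$ as a subrelation; so whenever $\varphi \thesis_{\th T} \psi$, we already have $\class\varphi_{\th T} \mathrel{Q} \class\psi_{\th T}$.

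The structural rules are immediate. For $\rn{Th}$, the hypothesis $(\varphi \thesis \psi) \in \th T \cup \th U$ gives $(\class\varphi_{\th T}, \class\psi_{\th T}) \in R \sle Q$. For $\rn{Ax}$ and $\rn{Cut}$, use that $Q$ is a preorder (reflexivity and transitivity). For the propositional rules $\rn{$\land$-L$_i$}$, $\rn{$\true$-R}$ and $\rn{$\bigvee$-R}$, the corresponding inclusions already hold in $\leq_{\th T}$ (by Lemma~\ref{lem:free:frame}), and the preliminary observation finishes them.

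The two genuinely interesting cases are the rules $\rn{$\land$-R}$ and $\rn{$\bigvee$-L}$, where we must invoke the defining properties of a congruence preorder. For $\rn{$\land$-R}$, the induction hypothesis provides $\class\theta_{\th T} \mathrel{Q} \class\varphi_{\th T}$ and $\class\theta_{\th T} \mathrel{Q} \class\psi_{\th T}$; the finite-inf clause in the definition of congruence preorder then yields $\class\theta_{\th T} \mathrel{Q} \class\varphi_{\th T} \land \class\psi_{\th T}$, and by Lemma~\ref{lem:free:frame} this last meet equals $\class{\varphi \land \psi}_{\th T}$. For $\rn{$\bigvee$-L}$, the induction hypothesis gives $\class{\varphi_i}_{\th T} \mathrel{Q} \class\psi_{\th T}$ for every $i \in I$; the arbitrary-sup clause then yields $\bigvee_{i \in I} \class{\varphi_i}_{\th T} \mathrel{Q} \class\psi_{\th T}$, and again by Lemma~\ref{lem:free:frame} this sup equals $\class{\bigvee_{i \in I} \varphi_i}_{\th T}$.

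I do not foresee a serious obstacle: each case is a direct translation of one clause of the deduction system into one clause of the definition of congruence preorder, mediated by the description of finite infs and arbitrary sups in $\Geom(\At)/\th T$ given by Lemma~\ref{lem:free:frame}. The only point requiring mild care is making sure that, when reading the induction hypothesis, the equivalence classes on either side of $Q$ are well-defined independently of the representative chosen — but this is automatic since the preliminary observation ${\leq_{\th T}} \sle {Q}$ implies $\sim_{\th T} \sle Q \cap Q^{-1}$, so $Q$ factors through $\Geom(\At)/\th T$.
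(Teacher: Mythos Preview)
Your proposal is correct and follows essentially the same approach as the paper's own proof: induction on the derivation of $\varphi \thesis_{\th T \cup \th U} \psi$, with the axiom-like rules handled via ${\leq_{\th T}} \sle Q$ and the rules $\rn{$\land$-R}$ and $\rn{$\bigvee$-L}$ handled via the two closure clauses in the definition of congruence preorder. Your added remark about well-definedness of the equivalence classes (using $\sim_{\th T} \sle Q \cap Q^{-1}$) is a nice touch that the paper leaves implicit.
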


\begin{proof}
We show that if $\varphi \thesis_{\th{T} \cup \th{U}} \psi$ then
$(\class\varphi_{\th{T}}, \class\psi_{\th{T}}) \in Q$.
We reason by induction on the derivation of
$\varphi \thesis_{\th{T}\cup \th{U}} \psi$.
\begin{description}
\item[Case of $\rn{Th}$.]

Since $R \sle Q$.

\item[Cases of $\rn{Ax}$ and $\rn{Cut}$]

Since $Q$ is a preorder.

\item[Case of $\rn{$\true$-R}$.]

Since $Q$ is a preorder containing $\leq_{\th{T}}$.

\item[Case of $\rn{$\land$-L$_1$}$ and $\rn{$\land$-L$_2$}$.]

Since
\(
  \class{\varphi_1 \land \varphi_2}_{\th{T}}
  \leq_{\th{T}}
  \class{\varphi_i}_{\th{T}}
\)
while $Q$ contains $\leq_{\th{T}}$.

\item[Case of $\rn{$\bigvee$-L}$.]

Similar.

\item[Case of $\rn{$\land$-R}$.]

Since $Q$ is a congruence preorder, we have
\[
\begin{array}{l l l}
  \class{\theta}_{\th{T}}
& Q
& \class{\varphi}_{\th{T}}
  \land
  \class{\psi}_{\th{T}}
\end{array}
\]

\noindent
whenever
\[
\begin{array}{l l l !{\qquad\text{and}\qquad} l l l}
  \class{\theta}_{\th{T}}
& Q
& \class\varphi_{\th{T}}

& \class{\theta}_{\th{T}}
& Q
& \class\psi_{\th{T}}
\end{array}
\]

\noindent
Then conclude with the induction hypothesis.

\item[Case of $\rn{$\bigvee$-L}$.]

Similar.

\item[Case of $\rn{Dist}$.]

Since by frame distributivity in $\Geom(\At)/\th{T}$,
we have
\[
\begin{array}{l l l}
  \class{\psi \land \bigvee_{i \in I}\varphi_i}_{\th{T}}
& =
& \class{\bigvee_{i \in I}(\psi \land \varphi_i)}_{\th{T}}
\end{array}
\]
\qedhere
\end{description}
\end{proof}

Hence $\widetilde R = \clos R$, the least congruence preorder
containing $R$.
We can now conclude the proof of Proposition~\ref{prop:app:free:quot},
(i.e.\ Proposition~\ref{prop:free:quot}).

\begin{proof}[Proof of Proposition~\ref{prop:app:free:quot}]
Let $\sim$ be the equivalence relation induced by $\widetilde R$.
We have to show that $\Geom(\At)/(\th{T} \cup \th{U})$ is isomorphic
to the quotient of $\Geom(\At)/\th{T}$ by $\sim$.

Recall that ${\thesis_{\th{T}}} \sle {\thesis_{\th{T} \cup \th{U}}}$.
Note that given $\varphi,\psi \in \Geom(\At)$,
we have $\class\varphi_{\th{T}} \sim \class\psi_{\th{T}}$
precisely when $\varphi \thesis_{\th{T} \cup \th{U}} \psi$ and
$\psi \thesis_{\th{T} \cup \th{U}} \varphi$.
In other words, for all $\varphi,\psi \in \Geom(\At)$,
we have
\[
\begin{array}{l l l}
  \class\varphi_{\th{T}} \sim \class\varphi_{\th{T}}
& \text{if, and only if,}
& \class\varphi_{\th{T} \cup \th{U}} = \class\varphi_{\th{T} \cup \th{U}}
\end{array}
\]

\noindent
and we are done.
\end{proof}

}	
\opt{full,long}{
\section{Proofs of~\S\ref{sec:scott} (\nameref{sec:scott})}
\label{sec:app:scott}

Fix a finite set $\Base$ and a Scott-continuous
$p \colon \I\Base \to \I\Bool$
with $p(a) \neq \bot_{\I\Bool}$ if $a \in \Base$.
Let $\Psi = \Psi_p$ as in Example~\ref{ex:ltl:nf}
and let $\Phi \deq \bigvee_{a \in \Base} a$.

Recall from Remark~\ref{rem:prelim:filter}
that
$\I\filter p = \term{Y}(f_p) = \bigvee_{n \in \NN} f_p^n(\bot_X)$
where
\[
\begin{array}{*{7}{l}}
  f_p
& \deq
& \lambda g.\lambda \stream.~
  \term{if}~ p (\stream(0))
  ~\term{then}~ \stream(0) \cdot g(\stream \restr 1)
  ~\term{else}~ g(\stream \restr 1)
& :
& X
& \longto_{\CPO}
& X
\end{array}
\]

\noindent
and
where $X$ is the cpo
$\I{\Stream\Base} \to_{\CPO} \I{\Stream\Base}$.

Recall also the geometric formulae
\[
\begin{array}{l !{\qquad} r c l}
& \psi_{n,k}
& \deq
& \bigvee
  \left\{
  \bigwedge_{1 \leq j \leq k}
  \Next^{i_j}\th F\I\Psi
  \mid
  0 \leq i_1 < \dots < i_k < n
  \right\}
\\

  \text{and}
& \varphi_k
& \deq
& \bigwedge_{m < k} \Next^m \th F\I\Phi
\end{array}
\]
\noindent
where $k \leq n$.

We begin with the following property, which is stated in the text
of~\S\ref{sec:scott}.
Let $\stream \in \I{\Stream\Base}$.

\begin{lemm}
\label{lem:app:scott:mod}
We have
$\nu(\stream) \in \Mod\left(\th T\I{\Box\Diam\Psi} \right)$
if, and only if,
$(\forall k \in \NN)(\exists n \geq k)(\nu(\stream) \models \psi_{n,k})$.
\end{lemm}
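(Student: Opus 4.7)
The plan is to combine Theorem~\ref{thm:geom:trans:cor} with a direct semantic unfolding of both sides. First, I would translate the left-hand side: by Theorem~\ref{thm:geom:trans:cor}, $\nu(\stream) \in \Mod(\th T\I{\Box\Diam\Psi})$ holds if and only if $\stream \in \I{\Box\Diam\Psi}$. Unfolding the semantics of $\Box$ and $\Diam$ (as in~\S\ref{sec:prelim:ltl}), this is equivalent to saying that the set
\[
  S \;\deq\; \{\, i \in \NN \mid \stream \restr i \in \I\Psi \,\}
\]
is infinite.

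Next, I would unfold the right-hand side using Lemmas~\ref{lem:geom:trans:nextgeom} and~\ref{lem:geom:trans:cor}. Applying them to $\Phi_1 = \Psi \in G$, we have $\nu(\stream) \models \Next^i \th F\I\Psi$ if and only if $\nu(\stream \restr i) \models \th F\I\Psi$, i.e.\ $i \in S$. Then, by the definition of $\psi_{n,k}$ together with the clauses for $\bigvee$ and $\bigwedge$ on geometric formulae (Remark~\ref{rem:geom:conn}), $\nu(\stream) \models \psi_{n,k}$ is equivalent to the existence of indices $0 \leq i_1 < \dots < i_k < n$ with $i_j \in S$ for all $1 \leq j \leq k$. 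Consequently, $(\exists n \geq k)(\nu(\stream) \models \psi_{n,k})$ is equivalent to $S$ containing at least $k$ elements.

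Finally, I would close the equivalence by noting that $S$ is infinite if and only if $|S| \geq k$ for every $k \in \NN$, which is exactly the condition $(\forall k \in \NN)(\exists n \geq k)(\nu(\stream) \models \psi_{n,k})$. No step is really an obstacle here; the only mild subtlety is being careful that in the definition of $\psi_{n,k}$ the condition $k \leq n$ is assumed, so that when $S$ contains $k$ elements one may always pick $n$ to be any natural number strictly greater than the $k$th element of $S$ and at least $k$, which causes no issue.
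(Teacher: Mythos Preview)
Your argument is correct and is in fact cleaner than the paper's own proof. You go through the semantic level: invoke Theorem~\ref{thm:geom:trans:cor} to replace the left-hand side by $\stream \in \I{\Box\Diam\Psi}$, identify this with the infinitude of $S = \{ i \mid \stream\restr i \in \I\Psi\}$, and then unfold $\psi_{n,k}$ via Lemmas~\ref{lem:geom:trans:nextgeom} and~\ref{lem:geom:trans:cor} to see that the right-hand side says exactly $|S| \geq k$ for every $k$. The paper instead stays at the level of the geometric theory: it takes the explicit description of $\th T\I{\Box\Diam\Psi}$ from Example~\ref{ex:geom:trans:infty} (and the simplified form in Example~\ref{ex:geom:trans:uall}) and argues each direction by hand, iteratively extracting indices $i_1 < \dots < i_k$ from the sequents of the theory. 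Your route is shorter and leverages results already proved; the paper's route is more self-contained and makes visible how the concrete shape of the theory $\th T\I{\Box\Diam\Psi}$ encodes the approximations, which is closer to the message of~\S\ref{sec:geom:trans}. Either is fine here.
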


\begin{proof}
Let $\stream \in \I{\Stream\Base}$ and write $\nu$ for $\nu(\stream)$.
Recall from Example~\ref{ex:geom:trans:infty}
that 
\[
\begin{array}{l l l}
  \th T\I{\Box\Diam\Psi}
& =
& \left\{
  \thesis
  \bigwedge_{n \leq N}
  \bigvee_{m \in \NN}
  \Next^n \th F\I{\Psi}
  \lor
  \Next^{n+1} \th F\I{\Psi}
  \lor
  \dots
  \lor
  \Next^{n+m} \th F\I{\Psi}
  \mid
  N \in \NN
  \right\}
\end{array}
\]

Assume first 
$\nu \in \Mod\left(\th T\I{\Box\Diam\Psi} \right)$
and let $k \in \NN$.
Since
\[
\begin{array}{l l l}
  \nu
& \models
& \bigvee_{m \in \NN}
  \th F\I{\Psi}
  \lor
  \Next \th F\I{\Psi}
  \lor
  \dots
  \lor
  \Next^{m} \th F\I{\Psi}
\end{array}
\]

\noindent
we get some $i_1 \geq 0$ such that
$\nu \models \Next^{i_1} \th F\I{\Psi}$.
Since
\[
\begin{array}{l l l}
  \nu
& \models
& \bigvee_{m \in \NN}
  \Next^{i_1+1}\th F\I{\Psi}
  \lor
  \Next^{i_1+2} \th F\I{\Psi}
  \lor
  \dots
  \lor
  \Next^{i_1+m+1} \th F\I{\Psi}
\end{array}
\]

\noindent
we get some $i_2 > i_1$ such that
$\nu \models \Next^{i_2} \th F\I{\Psi}$.
Iterating this up to $i_k > \dots > i_2 > i_1$ yields the result
with $n \deq i_k+1$.

For the converse, using the simplifications
mentioned in Example~\ref{ex:geom:trans:uall},
it is sufficient to show that $\nu$ is a model of
\[
  \left\{
  \thesis
  \bigvee_{m \in \NN}
  \Next^N \th F\I{\Psi}
  \lor
  \Next^{N+1} \th F\I{\Psi}
  \lor
  \dots
  \lor
  \Next^{N+m} \th F\I{\Psi}
  \mid
  N \in \NN
  \right\}
\]

\noindent
Let $N \in \NN$.
Given $k > N$, by assumption there is some $n \geq k$
such that $\nu \models \psi_{n,k}$.
Hence, there are $0 \leq i_1 < \dots < i_k < n$
such that $\nu \models \Next^{i_j} \th F\I{\Psi}$
for all $j=1,\dots,k$.
In particular, we have $\nu \models \Next^{i_k} \th F\I{\Psi}$.
But since $0 \leq i_1 < \dots < i_k$,
we necessarily have $i_k \geq k-1$, and
since $k > N$ we get $k-1 \geq N$.
It follows that $i_k \geq N$, and we are done.
\end{proof}

Let us now explain how
\begin{equation}
\tag{\ref{eq:scott:ltl}}
  \forall \stream \in \I{\Stream\Base},~
  \text{$\stream$ total},
  \qquad
  \stream \forces \Box\Diam \Psi
  \quad\longimp\quad
  \I\filter\ p\ \stream \forces \Box \Phi
\end{equation}

\noindent
can be obtained from
\begin{equation}
\tag{\ref{eq:scott:geom}}
  \forall \stream \in \I{\Stream\Base},~
  \text{$\stream$ total},~
  \forall k \in \NN,~
  \forall n \geq k,
  \quad
  \nu(\stream) \models \psi_{n,k}
  \quad\longimp\quad
  \nu(\I\filter\ p\ \stream) \models \varphi_k
\end{equation}

\noindent
First, it follows from
Lemma~\ref{lem:geom:trans:cor}
and
Theorem~\ref{thm:geom:trans:cor}
that
condition \eqref{eq:scott:ltl}
amounts to
\[
  \forall \stream \in \I{\Stream\Base},~
  \text{$\stream$ total},
  \qquad
  \nu(\stream) \in \Mod(\th T\I{\Box\Diam \Psi})
  \quad\longimp\quad
  (\forall k \in \NN)\left(\nu(\I\filter\ p\ \stream) \models \varphi_k \right)
\]

\noindent
that is
\[
  \forall \stream \in \I{\Stream\Base},~
  \text{$\stream$ total},~
  \forall k \in \NN,
  \qquad
  \nu(\stream) \in \Mod(\th T\I{\Box\Diam \Psi})
  \quad\longimp\quad
  \nu(\I\filter\ p\ \stream) \models \varphi_k
\]

\noindent
Now, if
$\nu(\stream) \in \Mod(\th T\I{\Box\Diam \Psi})$,
then by 
Lemma~\ref{lem:app:scott:mod},
for all $k \in \NN$ we have
$(\exists n \geq k)(\nu(\stream) \models \psi_{n,k})$.
Hence, condition~\eqref{eq:scott:ltl}
follows from
\[
  \forall \stream \in \I{\Stream\Base},~
  \text{$\stream$ total},~
  \forall k \in \NN,
  \qquad
  (\exists n\geq k)\left(\nu(\stream) \models \psi_{n,k} \right)
  \quad\longimp\quad
  \nu(\I\filter\ p\ \stream) \models \varphi_k
\]

\noindent
and the latter is equivalent to
condition~\eqref{eq:scott:geom}.
Condition~\eqref{eq:scott:geom}
is a direct consequence of the following.

\begin{lemm}[Lemma~\ref{lem:scott}]
\label{lem:app:scott}
Write $g_n$ for $f_p^n(\bot_X) \colon \I{\Stream\Base} \to_\CPO \I{\Stream\Base}$.
Let $\stream \in \I{\Stream\Base}$
be a total stream.
\begin{enumerate}[(1)]
\item
Assume $k \leq n$.
If $\nu(\stream) \models \psi_{n,k}$,
then $\nu(g_n(\stream)) \models \varphi_k$.

\item
Let $n,k \in \NN$.
If $\nu(g_n(\stream)) \models \varphi_k$,
then $\nu(\I\filter\ p\ \stream) \models \varphi_k$.
\end{enumerate}
\end{lemm}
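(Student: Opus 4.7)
The plan is to prove Part 1 by induction on $n$, exploiting the recursive structure of $f_p$, and to prove Part 2 by a monotonicity argument combined with the upward-closedness of the subset of $\I{\Stream\Base}$ defined by $\varphi_k$. For Part 1, I would carry out the induction with $k$ and the total stream $\stream$ universally quantified at each step, unfolding $g_{n+1}(\stream) = f_p(g_n)(\stream)$ according to whether $p(\stream(0)) = \term{tt}$. Since $\stream$ is total, $\stream(0) \in \Base$ and the hypothesis on $p$ gives $p(\stream(0)) \in \{\term{tt},\term{ff}\}$, so exactly one of the two conditional branches in $f_p$ fires.

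The base case $n = 0$ forces $k = 0$, in which case $\varphi_0$ is the empty conjunction $\true$ and the conclusion is vacuous. For the step, assuming the claim at level $n$, let $\nu(\stream) \models \psi_{n+1,k}$ with $1 \leq k \leq n+1$, witnessed by some $0 \leq i_1 < \dots < i_k \leq n$ with $\stream\restr i_j \forces \Psi$ for each $j$. I would then split on $i_1$. If $i_1 = 0$, then $\stream \forces \Psi$, so $g_{n+1}(\stream) = \stream(0) \cdot g_n(\stream\restr 1)$; the shifted indices $i_2 - 1 < \dots < i_k - 1$ witness $\nu(\stream\restr 1) \models \psi_{n,k-1}$, and since $k - 1 \leq n$, the induction hypothesis yields $\nu(g_n(\stream\restr 1)) \models \varphi_{k-1}$; prepending $\stream(0) \in \Base$ (which satisfies $\th F\I\Phi$) gives $\nu(g_{n+1}(\stream)) \models \varphi_k$. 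If $i_1 \geq 1$, then all $k$ indices lie in $\{1,\dots,n\}$, forcing $k \leq n$; the shifts $i_j - 1$ witness $\nu(\stream\restr 1) \models \psi_{n,k}$, so the induction hypothesis gives $\nu(g_n(\stream\restr 1)) \models \varphi_k$, and this suffices whether $g_{n+1}(\stream)$ equals $g_n(\stream\restr 1)$ or $\stream(0) \cdot g_n(\stream\restr 1)$ (in the latter case one even obtains $\varphi_{k+1}$, which implies $\varphi_k$).

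For Part 2, the sequence $(g_n)_{n \in \NN}$ is the canonical $\omega$-chain in $X$ whose supremum is $\I\filter\ p = \term{Y}(f_p)$, so evaluating at $\stream$ gives $g_n(\stream) \leq_{\I{\Stream\Base}} (\I\filter\ p)(\stream)$ by Scott-continuity of the evaluation map. The set $\{\streambis \in \I{\Stream\Base} \mid \nu(\streambis) \models \varphi_k\}$ is a finite intersection of the sets $\{\streambis \mid \nu(\streambis) \models \Next^m \th F\I\Phi\} = \bigcup_{a \in \Base} \up(\bot^m \cdot a \cdot \bot^\omega)$, each of which is upward-closed in $\I{\Stream\Base}$; hence the whole set is upward-closed, and $\nu(g_n(\stream)) \models \varphi_k$ transfers upward to $(\I\filter\ p)(\stream)$. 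The main obstacle is the bookkeeping in Part 1: one must correctly track how the witness of $\psi_{n+1,k}$ for $\stream$ becomes a witness of $\psi_{n,k-1}$ or $\psi_{n,k}$ for $\stream\restr 1$, depending on whether the head position is among the $\Psi$-witnessing indices, but no deeper ideas are required beyond this clean case split.
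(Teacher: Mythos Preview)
Your proposal is correct and follows essentially the same approach as the paper's proof: induction on $n$ for Part~(1) with the same case split on whether $i_1 = 0$ or $i_1 > 0$, and the monotonicity-plus-upward-closedness argument for Part~(2). Your treatment is arguably slightly more explicit than the paper's in two places: you make the constraint $k \leq n$ in the $i_1 \geq 1$ subcase explicit (the paper uses it silently), and you unfold the upward-closedness of $\{\streambis \mid \nu(\streambis) \models \varphi_k\}$ directly rather than invoking the homeomorphism with the space of models.
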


\begin{proof}
\hfill
\begin{enumerate}[(1)]
\item
We show by induction on $n \in \NN$ that for all
$\stream \in \I{\Stream\Base}$ and all $k \leq n$,
we have
\[
\begin{array}{l l l}
  \nu(\stream) \models \psi_{n,k}
& \longimp
& \nu(g_n(\stream)) \models \varphi_k
\end{array}
\]

\begin{description}
\item[Base case $n = 0$.]
In this case we have also $k = 0$.
But $\varphi_0 = \true$ and we are done.

\item[Induction step.]
Note that
\[
\begin{array}{*{7}{l}}
  g_{n+1}(\stream)
& =
& \term{if}~ p (\stream(0))
  ~\term{then}~ \stream(0) \cdot g_n(\stream \restr 1)
  ~\term{else}~ g_n(\stream \restr 1)
\end{array}
\]

Let $k \leq n+1$ and assume
$\nu(\stream) \models \psi_{n+1,k}$.
Hence there are $0 \leq i_1 < \dots < i_k < n+1$
such that $\nu(\stream) \models \Next^{i_j}\th F\I\Psi$
for all $j = 1,\dots,k$.

If $i_1 = 0$, then we have $p(\stream(0)) = \term{tt}$
and $g_{n+1}(\stream) = \stream(0) \cdot g_n(\stream \restr 1)$.
Moreover, since $\stream \restr 1 \models \psi_{n,k-1}$,
the induction hypothesis gives
$\nu(g_n(\stream\restr 1)) \models \varphi_{k-1}$.
Since $\stream\restr 1 \models \Phi$,
we obtain
$\nu(\stream(0) \cdot g_n(\stream\restr 1)) \models \varphi_{k}$
and we are done.

Otherwise, we have $i_1 > 0$.
Hence $\nu(\stream \restr 1) \models \psi_{n,k}$,
so that $\nu(g_n(\stream\restr 1)) \models \varphi_{k}$
by induction hypothesis.

Then if $p(\stream(0)) = \term{tt}$,
we have $g_{n+1}(\stream) = \stream(0) \cdot g_n(\stream \restr 1)$.
Since $\stream\restr 1 \models \Phi$,
we obtain
$g_{n+1}(\stream) \models \varphi_{k+1}$.
In particular, 
$\nu(g_{n+1}(\stream)) \models \varphi_k$
and we are done.

If $p(\stream(0)) = \term{ff}$,
then
$g_{n+1}(\stream) = g_n(\stream \restr 1)$
and we are done.

Note that the case of $p(\stream(0)) = \bot_{\I\Bool}$
cannot happen since $\stream$ is total and since
we assumed $p(a) \neq \bot_{\I\Bool}$ for all $a \in \Base$.
\end{description}

\item
Recall that $\I\filter p$ is the sup of the chain
\(
  \bot_X
  \leq_X
  f_p(\bot_X)
  \leq_X
  \cdots
  \leq_X
  f_p^n(\bot_X)
  \leq
  \cdots
\),
so that $g_n = f_p^n(\bot_X) \leq_X \I\filter p$.
Hence $g_n(\stream) \leq_{\I{\Stream\Base}} \I\filter\ p\ \stream$.

On the other hand,
it follows from
Proposition~\ref{prop:geom:algdcpo:scott}
(i.e.\ Proposition~\ref{prop:app:geom:algdcpo:scott})
that the set of all $\stream \in \I{\Stream\Base}$
such that $\nu(\stream) \models \varphi_k$
is Scott-open
and thus upward-closed.
Hence
$\nu(\I\filter\ p\ \stream)$ is a model of $\varphi_k$
whenever so is $\nu(g_n(\stream))$.
\qedhere
\end{enumerate}
\end{proof}

}
\opt{long}{}	

\opt{draft}{\input{draftnotes}}

\opt{full,long}{\newpage}
\opt{full,long}{\tableofcontents}

\end{document}